\begin{document}
\RRNo{9603}
\hypersetup{pageanchor=false}


\newcommand{\todottj}[1]{\todo{TTJ: #1}}
\newcommand{\boldit}[1]{{\bf \texttt{#1}}}
\newcommand{\St}[1]{{\bf St}$_{#1}$}
\newcommand{\shadow}[1]{\mathsf{Sh\!}\left(#1\right)}
\newcommand{\Imporders}{\mathfrak{I}}
\newcommand{\imporder}{\mathfrak{i}}
\newcommand{\dummy}[1]{{\sf dum}(#1)}

\newcommand{\smagorder}[1]{\|#1\|}
\newcommand{\magorder}[1]{\left\|\right.\!#1\!\left.\right\|}
\newcommand{\ttail}[1]{#1^{\sf tail}}
\newcommand{\hhead}[1]{#1^{\sf head}}
\newcommand{\disable}[1]{#1^{\sf dis}}
\newcommand{\enable}[1]{#1^{\sf ena}}
\newcommand{\polyn}{f}
\newcommand{\qpolyn}{\mathbf{q}}
\newcommand{\monom}{\mathbf{m}}
\newcommand{\Nonlin}[1]{\mathsf{Nonlin}\left(#1\right)}
\newcommand{\Polyns}[1]{\mathsf{Pol}\left(#1\right)}
\newcommand{\Monoms}[1]{\mathsf{Mon}\left(#1\right)}
\newcommand{\equivsim}[1]{\widehat{#1}}
\newcommand{\simquotient}[1]{\equivsim{#1}}
\newcommand{\sizesim}[1]{{\sem{#1}}}
\newcommand{\simclosure}[1]{\widetilde{#1}}
\newcommand{\simextension}[1]{\widehat{#1}}





\newcommand{\Cons}[1]{\mathbf{Cons}\ssem{#1}}
\newcommand{\Argmax}[1]{A_{12}^{#1}}
\newcommand{\conflict}[1]{#1^{\sf c}}
\newcommand{\lbound}[1]{\delta(#1)}
\newcommand{\restart}[1]{#1^{\sf init}}
\newcommand{\Restart}[1]{\Lambda(#1)}
\newcommand{\RRestart}[1]{\Lambda(#1)}
\newcommand{\restartsys}[2]{\mathbf{R}_{#2,#1}}
\newcommand{\preorders}[1]{{\sf PO}^{#1}}
\newcommand{\bexp}{b}
\newcommand{\muLabels}{\Lambda}
\newcommand{\mulabel}{\lambda}
\newcommand{\uun}{\mathbf{1}}
\newcommand{\matching}{\mathcal{M}}
\newcommand{\starf}{f_\star}
\newcommand{\starx}{x_\star}
\newcommand{\starr}{r_\star}
\newcommand{\muvar}{\sigma}
\newcommand{\muval}{\nu}
\newcommand{\ssequiv}{\sequiv}
\newcommand{\wwequiv}{\wequiv}
\newcommand{\flow}{\varphi}
\newcommand{\monomial}{\mathbf{m}}
\newcommand{\Monomials}{\mathcal{M}}
\newcommand{\False}{\operatorname{False}}
\newcommand{\True}{\operatorname{True}}
\newcommand{\tick}{\operatorname{Tick}}
\newcommand{\soleqs}{\operatorname{Solve}}
\newcommand{\sigmamethod}{$\Sigma$-method}
\newcommand{\structureimpulse}{structure{\&}impulse}

\newcommand{\flip}[1]{\scalebox{-1}[1]{\rotatebox[origin = c]{180}{#1}}}
\newcommand{\tnode}{t}
\newcommand{\birth}{\beta}
\newcommand{\bbI}{\mathbb{I}}
\newcommand{\coPi}{\flip{$\Pi$}}
\newcommand{\sspan}{{\sf span}}
\newcommand{\ttuple}[1]{\Pi\left(#1\right)}
\newcommand{\stuple}[1]{\Pi(#1)}
\newcommand{\supp}[1]{{\sf supp}_#1}
\newcommand{\RSA}[1]{\Phi_{\sf #1}}
\newcommand{\RSAMP}[1]{\Psi_{\sf #1}}
\newcommand{\store}[1]{\mathsf{store}_{\sf #1}}
\newcommand{\ddepends}{\hookrightarrow}

\newcommand{\Lin}[1]{\cL_{\!#1\!}}
\newcommand{\LinM}[2]{\cL_{\!#1{\downarrow}{#2}}}
\newcommand{\NonLin}[1]{{\sf NonLin}_{\!#1\!}}
\newcommand{\rescaling}[1]{\mu_{#1}}
\newcommand{\rescaled}[1]{#1^{\downarrow}}
\newcommand{\Parts}[1]{\mathcal{P}(#1)}
\newcommand{\sInterfaces}{\mathbb{I}}
\newcommand{\sCointerfaces}{\mathbb{K}}
\newcommand{\sJinterfaces}[1]{\mathbb{J}[#1]}
\newcommand{\baseFunctions}{\mathbb{F}}
\newcommand{\allFunctionsECC}{\mathbb{F}}
\newcommand{\allFunctions}{\pprimeppostset{}{}{\baseFunctions}}
\newcommand{\allSystems}{\mathbb{S}}
\newcommand{\basearchis}{\mathbb{S}}
\newcommand{\sSystems}[1]{\mathbb{S}[#1]}
\newcommand{\mutex}[1]{#1_{\sf mu}}
\newcommand{\mirror}[1]{\neg{#1}}
\newcommand{\binomial}[2]{\left(\!\!\!\bea{c}#1 \\ #2\eea\!\!\!\right)}
\newcommand{\proj}[2]{\Pi_{#1}(#2)}
\newcommand{\bigproj}[2]{\Pi_{#1}\left(#2\right)}
\newcommand{\invproj}[2]{\Pi^{-1}_{#1}(#2)}
\newcommand{\amax}[2]{\alpha^{#1}_{#2}}
\newcommand{\vinterface}[1]{\prog{interface\;of}\;{#1}}
\newcommand{\vcointerface}[1]{\prog{cointerface\;of}\;{#1}}
\newcommand{\vhide}[2]{\prog{hide}\;{#1}\;\prog{in}\;{#2}\,}
\newcommand{\vsmallproj}[2]{\prog{project}\;{#2}\;\prog{to}\;{#1}\,}
\newcommand{\vrestrict}[2]{\prog{restrict}\;{#2}\;\prog{to}\;{#1}\,}
\newcommand{\send}[3]{\prog{send}\;{#1}\;\prog{from}\;{#2}\;\prog{to}\;{#3}\,}
\newcommand{\vwiden}[3]{\prog{widen}\;{#2}\;\prog{to}\;{#1}\;\prog{using}\;{#3}\,}
\newcommand{\widen}[3]{{#2}^{\uparrow{#1}}_{#3}}
\newcommand{\extend}[3]{{#2}^{\uparrow{#1}}_{#3}}
\newcommand{\yfun}[1]{{\sf y}_{#1}}
\newcommand{\gfun}[1]{{\sf g}_{#1}}
\newcommand{\kfun}[1]{{\sf k}_{#1}}
\newcommand{\mfun}[1]{{\sf m}_{#1}}
\newcommand{\llength}[1]{\ell_{\!#1}}
\newcommand{\bcontrol}[1]{\beta_{\!#1}}
\newcommand{\aactive}[1]{{\rm Act}_{#1}}
\newcommand{\vextend}[3]{\prog{extend}~{#2}\;\prog{to}\;{#1}\;\prog{using}\;{#3}\,}
\newcommand{\vminextend}[2]{\prog{min-extend}~{#2}\;\prog{to}\;{#1}\,}
\newcommand{\xvminextend}[3]{\prog{min-extend}~{#2}\;\prog{to}\;{#1}\;\prog{using}\;{#3}\,}
\newcommand{\hhide}[2]{#2_{\downarrow#1}}
\newcommand{\rename}[3]{#3[#2\!/#1]}
\newcommand{\vrename}[3]{\prog{rename}\;{#1}\;\prog{to}\;{#2}\;\prog{in}\;{#3}\,}
\newcommand{\phide}[1]{\cI^{\sf p}_{{\downarrow}#1}}
\newcommand{\dhide}[1]{\cI^{\sf d}_{{\downarrow}#1}}
\newcommand{\smallproj}[2]{\mathbf{proj}_{#1}(#2)}
\newcommand{\coproj}[2]{\coPi_{#1\!}(#2)}
\newcommand{\wcoproj}[2]{\coPi_{#1}\!\left[#2\right]}
\newcommand{\smallfib}[2]{\mathbf{fib}_{#1}(#2)}
\newcommand{\margin}[2]{#2^{\downarrow#1}}
\newcommand{\cond}[2]{#2^{\uparrow#1}}
\newcommand{\euler}[2]{\mathcal{E}_{#1#2}}
\newcommand{\neuler}[3]{\mathcal{E}^{#1}_{#2#3}}
\newcommand{\seuler}[2]{S^{\mathcal{E}}_{#1#2}}
\newcommand{\solution}[1]{\mu(#1)}
\newcommand{\Solutions}[1]{{\sf sol}(#1)}
\newcommand{\Allsolutions}{\mathbb{S}{\sf ol}}
\newcommand{\minsolution}[1]{\mu^\star(#1)}
\newcommand{\minextension}[2]{\mu^\star(#1|\,#2)}
\newcommand{\aaltpath}[3]{#1\,\block^{#3}#2}
\newcommand{\altpath}[3]{\block^{#3}_{#1#2}}
\newcommand{\altpathh}[4]{\block^{#4#3}_{#1#2}}
\newcommand{\projaltpath}[4]{#1 \ra^{\!*{#4}}_{#3} #2}
\newcommand{\fmessage}[2]{m^{\sf f}_{#1{\ra}#2}}
\newcommand{\bmessage}[2]{m^{\sf b}_{#1{\ra}#2}}
\newcommand{\Offsets}{\mathbb{M}}
\newcommand{\Rescalingdiag}{\mathbb{D}}
\newcommand{\rref}[2][]{\prettyref{#2}}

\newenvironment{proof}{\paragraph{\it Proof}}{\eproof \\ }

\def\deff{\,\overset{\textup{def}}{=}\,}
\def\Jacobian{\mathbf{J}}
\def\ssuccessful{complete}
\def\J{\,\mathcal{J}\,}
\def\M{\,\mathcal{M}\,}
\def\bfm{\mathbf{m}}
\def\bfR{\mathbf{R}}
\def\bfc{\mathbf{c}}
\def\bfe{\mathbf{e}}
\def\bfd{\mathbf{d}}
\def\bfi{\mathbf{i}}
\def\bfD{\mathbf{D}}
\def\bfw{\mathbf{w}}
\def\bff{\mathbf{f}}
\def\bfF{\mathbf{F}}
\def\bfG{\mathbf{G}}
\def\bfB{\mathbf{B}}
\def\bfJ{\mathbf{J}}
\def\bfC{\mathbf{C}}
\def\bfL{\mathbf{L}}
\def\bfB{\mathbf{B}}
\def\bfH{\mathbf{H}}
\def\bfK{\mathbf{K}}
\def\bfW{\mathbf{W}}
\def\bfV{\mathbf{V}}
\def\bbV{\mathbb{V}}
\def\bbE{\mathbb{E}}
\def\bM{\mathbb{M}}
\def\Ld{{\mathop{Ld}}}
\def\path{\pi}
\def\bbr{{\mathbb R}}
\def\bbn{{\mathbb N}}
\def\bbz{{\mathbb Z}}
\def\reals{\bbr}
\def\rationals{{\mathbb Q}}
\def\bbc{{\mathbb C}}
\def\orbit{{\mathcal{O}(\vec{x}_\iota)}}
\def\Iorbit{I(\orbit)}
\def\borbit{{\bar{\mathcal{O}}(\vec4{x}_\iota)}}
\def\deg{{\operatorname{deg}}}
\def\argmax{{\operatorname{argmax}}}
\def\ker{{\operatorname{ker}}}
\def\sfd{{\operatorname{SF}}}
\def\vars{{\operatorname{vars}}}
\def\V{V_\bbr}
\def\diffradI{\negthinspace\sqrt[\leftroot{-3}\uproot{3}\scriptscriptstyle \lie]{I}}
\def\DM{{\rm DM}}
\def\SD{{\rm SD}}

\def\vbeta{{\vec{\beta}}}
\def\valpha{{\vec{\alpha}}}
\def\vgamma{{\vec{\gamma}}}
\def\ti{{t_\iota}}
\def\sriram{Sankaranarayanan}
\def\IsamDAE{IsamDAE}
\def\straight{k}
\def\dualscore{K}
\def\score{J}
\def\interfscore{\mathsf{J}}
\def\optscore{\mathsf{J}}


\def\sfT{{\sf T}}
\newcommand{\interior}[1]{{#1}^{\circ}}
\newcommand{\ctree}[1]{\sfT_{\!#1}}
\newcommand{\inctree}[1]{{\sf T}^0_{\!#1}}
\newcommand{\eqqbf}[1]{\bf #1}
\newcommand{\prog}[1]{{\small\texttt{\textbf{#1}}}}
\newcommand{\imp}[1]{\widehat{#1}}
\newcommand{\compl}[1]{#1^{\sf i}}
\newcommand{\impuls}[1]{#1^{\sf i}}
\newcommand{\ancestor}[1]{#1^{\uparrow}}
\newcommand{\successors}[1]{#1^{\Downarrow}}
\newcommand{\ancestors}[1]{#1^{\Uparrow}}
\newcommand{\prev}[1]{#1^{\!<}}
\newcommand{\facts}[1]{\Phi_{\!#1}}
\newcommand{\nonimpuls}[1]{#1}
\newcommand{\wemph}[1]{\mbox{{\color{white}$#1$}}}
\newcommand{\pmatch}[1]{{\color{blue}#1}}
\newcommand{\prevmode}[1]{{\color{gray}#1}}
\newcommand{\remph}[1]{{\color{red}#1}}
\newcommand{\gremph}[1]{{\color{green}#1}}
\newcommand{\memph}[1]{{\color{magenta}#1}}
\newcommand{\gemph}[1]{\mbox{{\color{gray}$#1$}}}
\newcommand{\bemph}[1]{\mbox{{\color{blue}$#1$}}}
\newcommand{\blemph}[1]{\mbox{{\color{black}$#1$}}}
\newcommand{\llong}{long}
\newcommand{\transient}{transient}

\newtheorem{nochange}{\bemph{\textbf{Reused}}}
\newtheorem{exec}{Constructive Semantics}
\newtheorem{principle}{Principle}
\newtheorem{conjecture}{Conjecture}
\newtheorem{definition}{Definition}
\newtheorem{proposition}{Proposition}
\newtheorem{procedure}{Procedure}
\newtheorem{property}{Property}
\newtheorem{assumption}{Assumption}
\newtheorem{corollary}{Corollary}
\newtheorem{protocol}{Protocol}
\newtheorem{strategy}{Strategy}
\newtheorem{ccomment}{Comment}
\newtheorem{ffinding}{Finding}
\newtheorem{rrule}{Rule}
\newtheorem{require}{Requirement}
\newtheorem{convention}{\textbf{Convention}}
\newtheorem{notation}{Notations}
\newtheorem{example}{Example}
\newtheorem{remark}{Remark}
\newtheorem{pprinciple}{Principle}
\newtheorem{problem}{Problem}

\newcommand{\parconsistency}[1]{{\!\!}\wemph{(}\overline{#1}\wemph{)}{\!\!}}
\newcommand{\body}[1]{{[\![}#1{]\!]}}
\newcommand{\der}[1]{#1^{\sf der}}
\newcommand{\seloptscore}[2]{\interfscore_{#1}^{#2}}
\newcommand{\argseloptscore}[2]{\Xi_{#1}^{#2}}
\newcommand{\adot}[2]{^\mathbf{#1}#2'}
\newcommand{\dder}[2]{#2^{\prime#1}}
\renewcommand{\dot}[1]{#1'}
\newcommand{\dummydot}[1]{#1^{\prime{d}}}
\renewcommand{\ddot}[1]{#1''}

\newcommand{\tension}{{\lambda}}

\newcommand{\myparagraph}[1]{\smallskip\noindent\textit{{#1.}}}
\newcommand{\bea}{\begin{array}}
\newcommand{\eea}{\end{array}}
\newcommand{\negesp}{\!\!\!}
\newcommand{\beq}{\begin{eqnarray}}
\newcommand{\eeq}{\end{eqnarray}}
\newcommand{\beqq}{\begin{eqnarray*}}
\newcommand{\eeqq}{\end{eqnarray*}}
\newcommand{\status}{\sigma}
\newcommand{\sstatus}{\mathbf{s}}
\newcommand{\ttimes}{}
\newcommand{\alphabet}{{{\sf SD}}}
\newcommand{\pproperty}{P}
\newcommand{\Ra}{\Rightarrow}
\newcommand{\ra}{\rightarrow}
\newcommand{\la}{\leftarrow}
\newcommand{\Un}{\mathbf{1}}
\newcommand{\branch}{\pi}
\newcommand{\base}[1]{\lfloor#1\rfloor}
\newcommand{\revert}[1]{\stackrel{\mbox{\tiny\raisebox{-1mm}{$\leftarrow$}}}{#1}}
\newcommand{\srevert}[1]{\bar{#1}}
\newcommand{\wh}[1]{{\color{white}#1}}
\newcommand{\change}[1]{{\sf\color{blue}#1}}

\newcommand{\vareq}[2]{#1(#2)}
\newcommand{\drop}{\mathbf{d}}
\newcommand{\condgraph}[1]{\,\frac{~~}{}\![#1]\!\frac{~~}{}\,}
\newcommand{\dircondgraph}[1]{\condgraph{#1}}
\newcommand{\condbranch}[3]{{#1}\condgraph{#2}\,{#3}}
\newcommand{\revcondbranch}[3]{{#1}\condgraph{#2}{{#3}}}
\newcommand{\dircondbranch}[4]{{#1}\dircondgraph{#2}{\puple{#3\,|\,#4}}}
\def\defined#1{#1^{\!\top}}
\def\unique#1{\bigtriangleup{#1}}
\def\free#1{\bigtriangledown{#1}}

\newcommand{\ttrans}[4]{
\underbrace{
\mbox{\fbox{~~$#1$~~}}
}_{{\rm location}}
~~
\underbrace{
\frac{\hspace{0.4cm} #3
\hspace{0.4cm}}{\hspace{0.4cm} #4
\hspace{0.4cm}}\!\!\!\raisebox{-0.1mm}{\mbox{{\tt >}}}
}_{{\rm transition}} ~
\underbrace{
\mbox{\fbox{~~$#2$~~}}
}_{{\rm new ~location}}
}

\newcommand{\trans}[3]{#1\;\frac{\hspace{2mm} #3
\hspace{3mm}}{}\!\!\!\raisebox{-0.07mm}{\mbox{{\tt >}}}\; #2}

\newcommand{\shorttrans}[3]{#1\,{\stackrel{#2}{\longrightarrow}}\,#3}
\newcommand{\rshorttrans}[3]{#1{\stackrel{#2}{\remph{\ra}}}#3}

\newcommand{\transition}[2]{
\frac{\hspace{0.4cm} #1
\hspace{0.4cm}}{\hspace{0.4cm} #2
\hspace{0.4cm}}\!\!\!\raisebox{-0.1mm}{\mbox{{\tt >}}}
}

\newcommand{\cConefun}{\cC_{\rm 1fun}}
\newcommand{\cCfun}{\cC_{\rm fun}}
\newcommand{\cdim}{\cC_{\rm dim}}
\newcommand{\nil}{\it nil}
\newcommand{\fun}{{\it fun}}
\newcommand{\cK}{\mathcal{K}}
\newcommand{\cJ}{\mathcal{J}}
\newcommand{\cR}{\mathcal{R}}
\newcommand{\mplus}{\oplus}
\newcommand{\notX}{\bar{X}}
\newcommand{\val}{\nu}
\newcommand{\bfval}[1]{\bfV_{\!#1}}
\newcommand{\vval}[1]{\val_{#1}}
\newcommand{\notvval}[1]{w_{\!\bar{#1}}}
\newcommand{\cmutex}[4]{
\neg[\priv{#1}{#2}\wedge\priv{#1}{#3}] \mbox{ and }
	\priv{#1}{#4}=\priv{#1}{#2}\vee\priv{#1}{#3}
	}
\newcommand{\Cmutex}[3]{C_\muvar^{#1,#2,#3}}

\newcommand{\mathbox}[1]{\fbox{\mbox{$#1$}}}
\newcommand{\dom}[1]{\mathbf{dom}\!\left(#1\right)}
\newcommand{\eeval}[1]{\mathbf{eval}[#1]}
\newcommand{\zzero}{\mathbf{0}}
\newcommand{\ScottVar}{S-variable}
\newcommand{\ScottVars}{\ScottVar s}
\newcommand{\Scott}{\mathscr{S}}
\newcommand{\cons}{{\it constr}}
\newcommand{\consist}{{\it coherent}}
\newcommand{\predeval}{predicate evaluation}
\newcommand{\systeq}{systems of equations}
\newcommand{\any}{-}
\newcommand{\ftop}{\top_{\!\!f}}
\newcommand{\btop}{\top_{\!b}}
\newcommand{\ctop}{\top_{\!c}}
\newcommand{\ccC}{\overline{\mathcal{C}}}
\newcommand{\numX}{\bX^{\prog{num}}}
\newcommand{\numE}{\bE^{\prog{num}}}
\newcommand{\cC}{\mathcal{C}}
\newcommand{\cD}{\mathcal{D}}
\newcommand{\sfD}{\mathsf{D}}
\newcommand{\cI}{\mathcal{I}}
\newcommand{\cV}{\mathcal{V}}
\newcommand{\cW}{\mathcal{W}}
\newcommand{\cG}{\mathcal{G}}
\newcommand{\vecc}[1]{\vec{\,#1}}
\newcommand{\veccG}{\vecc{\mathcal{G}}}
\newcommand{\un}{{1\!\!{\rm I}}}
\newcommand{\vcG}{\vec{\cG}}
\newcommand{\cGg}{\mathcal{G}_{{\bG}}}
\newcommand{\cGc}{\mathcal{G}_{{\bC}}}
\newcommand{\cZ}{\mathcal{Z}}
\newcommand{\success}{b}
\newcommand{\Id}{I}
\newcommand{\ldot}[1]{#1'}
\newcommand{\nsldot}[1]{\frac{#1-{\preset{#1}}}{\vsmall}}
\newcommand{\eqq}{e}
\newcommand{\Eqqs}{\mathbb{E}}
\newcommand{\bfEqqs}{\mathbf{E}}
\newcommand{\FEqqs}{\mathbb{F}}
\newcommand{\rouge}[1]{{\color{red}#1}}
\newcommand{\bleu}[1]{{\color{blue}#1}}
\newcommand{\unf}{1,{\rm f}}
\newcommand{\unc}{1,{\rm c}}
\newcommand{\und}{1,{\rm d}}
\newcommand{\ODE}{{{\rm ODE}}}
\newcommand{\Blocks}[1]{\mathbb{B}_{#1}}
\newcommand{\block}{\beta}
\newcommand{\noDAE}{{{\rm noDAE}}}
\newcommand{\DAE}{{{\rm DAE}}}
\newcommand{\noODE}{{{\rm noODE}}}
\newcommand{\privede}[1]{_{\downarrow{#1}}}
\newcommand{\wnotafter}{\sched}
\newcommand{\wnotbefore}{\lsched}
\newcommand{\notafter}{\bbind}
\newcommand{\notbefore}{\llind}
\newcommand{\atomic}{\dbind}
\newcommand{\shortatomic}{\!\!\sdbind\!\!}
\newcommand{\causes}{{\,<\!\!\!\!\!\lhd\,}}
\newcommand{\selfcauses}{{\,<\!\!\!\!\!\lhd\!\rhd\!\!\!\!\!>\,}}
\newcommand{\atomiceq}{\mbox{$\unlhd\!\unrhd$}}
\newcommand{\notaftereq}{\unrhd}
\newcommand{\albert}[1]{\textsf{\small\color{blue}Albert: {#1}}}
\newcommand{\benoit}[1]{\textsf{\small\color{blue}Beno\^{\i}t: {#1}}}
\newcommand{\proofof}[1]{\paragraph*{\textsc{Proof of #1}}}
\newcommand{\conn}[1]{{\it Con}\left(#1\right)}
\newcommand{\eq}[1]{{\it Eq}_{#1}}
\newcommand{\solve}[1]{\;\mathbf{out}\;#1}
\newcommand{\tuple}[1]{[#1]}
\newcommand{\puple}[1]{(#1)}
\newcommand{\const}[1]{[\![#1]\!]}
\newcommand{\Clocks}[1]{{\it Clocks}(#1)}
\newcommand{\pant}[1]{\hat{#1}}
\newcommand{\assert}{{\bf assert}\;}
\newcommand{\every}{\,{\bf every}\,}
\newcommand{\reset}{\,{\bf reset}\,}
\newcommand{\eval}{\prog{eval}}
\newcommand{\bblock}{\prog{block}}
\newcommand{\rread}{\prog{read}}
\newcommand{\sskip}{\prog{skip}}
\newcommand{\hybrid}{{\sc Hybrid}}
\newcommand{\daehybrid}{{\sc DaeHybrid}}
\newcommand{\bhybrid}{{\bf Hybrid}}
\newcommand{\shybrid}{{\sc SimpleHybrid}}
\newcommand{\lhybrid}{{\sc TryHybrid}}
\newcommand{\barF}{\bar{F}}
\newcommand{\dependent}[1]{#1^{\sf dep}}
\newcommand{\ffree}[1]{#1^{\sf free}}
\newcommand{\bbot}[1]{#1^\bot}
\newcommand{\bbbot}[1]{{\bf #1}^\bot}
\newcommand{\previous}[1]{#1^-}
\newcommand{\preset}[1]{{^{\bullet\!}{#1}}}
\newcommand{\wpostset}[1]{#1^\circ}
\newcommand{\wpreset}[1]{^{\circ\,}\!#1}
\newcommand{\postset}[1]{#1^\bullet}
\newcommand{\pplus}[1]{#1^+}
\newcommand{\mmoins}[1]{#1^-}
\newcommand{\spreset}[1]{#1^-}
\newcommand{\spostset}[1]{\dot{#1}}
\newcommand{\dont}[3]{D_{[#1,#2,#3]}}
\newcommand{\dep}[2]{#1{\,\ra\,}#2}
\newcommand{\adep}[2]{#1{-\!\!\!-}#2}
\newcommand{\ddep}[3]{#1{\ra}#2{\ra}#3}
\newcommand{\cont}[2]{C_{[#1,#2]}}
\newcommand{\controlled}[2]{#2_{#1}}
\newcommand{\contt}[3]{C_{#1,[#2,#3]}}
\newcommand{\dontt}[4]{D_{#1,[#2,#3,#4]}}
\newcommand{\ppreset}[2]{^{\bullet#1}#2}
\newcommand{\ppostset}[2]{#2^{\bullet#1}}
\newcommand{\pprime}[2]{#2^{\prime#1}}
\newcommand{\pprimeppostset}[3]{{\pprime{#1}{#3}}{\ppostset{#2}{}}}
\newcommand{\pprimeppreset}[3]{{\ppreset{#2}{}}{\pprime{#1}{#3}}}
\newcommand{\wppostset}[2]{#2^{\circ#1}}
\newcommand{\scalar}[2]{\left\langle{#1},{#2}\right\rangle}
\newcommand{\paths}[1]{\cGc^{#1}}
\newcommand{\Paths}{\Pi}
\newcommand{\kernel}{K}
\newcommand{\Kernels}{\mathbb{K}}
\newcommand{\ppath}{\pi}
\newcommand{\vpath}[1]{V\!(#1)}
\newcommand{\nst}[1]{{^{\star{\!}}#1}}
\newcommand{\anst}[1]{{^{\mathbf{a}\star\!}#1}}
\newcommand{\Nanst}[1]{{^{N\mathbf{a}\star\!}#1}}
\newcommand{\rond}{\circ}
\newcommand{\xiz}{\xi}
\newcommand{\xit}{\tau}
\newcommand{\leftlim}[1]{\stackrel{\leftarrow}{#1}}
\newcommand{\rightlim}[1]{\stackrel{\rightarrow}{#1}}
\newcommand{\wcause}{\,\mbox{-\,-\,-$\!>$}\,}
\newcommand{\rightgraph}[3]{#1\notafter #2}

\newcommand{\leftrightgraph}[3]{#1 ~  \raisebox{-0.1mm}{\mbox{{\tt
<}}}\!\!\!\frac{\hspace{0.4cm} #3
\hspace{0.4cm}}{}\!\!\!\raisebox{-0.1mm}{\mbox{{\tt >}}} ~ #2}

\newtheorem{theorem}{Theorem}
\newtheorem{lemma}{Lemma}
\newtheorem{approach}{Approach}
\newcommand{\eproof}{\hfill$\Box$}
\newcommand{\vsmall}{\varepsilon}
\newcommand{\dvsmall}{{\vsmall^2}}
\newcommand{\st}[1]{{\mathsf{st}\!\left(#1\right)}}
\newcommand{\vX}{\mbox{\tt X}}
\newcommand{\vY}{\mbox{\tt Y}}
\newcommand{\vZ}{\mbox{\tt Z}}
\newcommand{\guarded}[2]{\mathbf{if}\;{#1}\;\mathbf{then}\;{#2}}
\newcommand{\ifteguarded}[3]{\mathbf{if}\;{#1}\;\mathbf{then}\;{#2}\;\mathbf{else}\;{#3}}
\newcommand{\regularU}[2]{U_{\!#1\!}\left(#2\right)}
\newcommand{\regularV}[2]{V_{\!#1\!}\left(#2\right)}
\newcommand{\negguarded}[2]{\mathbf{if}\;{#1}\;\mathbf{else}\;{#2}}
\newcommand{\bguarded}[2]{{\bf if}\;{#1}\;{\bf then}\;{#2}}
\newcommand{\beaguarded}[2]{#1&{\vdash}&#2}
\newcommand{\po}[2]{D^{\mbox{{\footnotesize #2}}}_{#1}}
\newcommand{\eqdef}{=_{{\rm def}}}
\newcommand{\depends}{\preceq}
\newcommand{\out}[1]{[#1]_{\rm out}}
\newcommand{\height}{h}
\newcommand{\onedef}{\,{\raisebox{-0.25mm}{\mbox{\large $\mapsto\!\!\!\!\!\ra$}}}\,}
\newcommand{\bbind}{\,{\raisebox{-0.25mm}{\mbox{\large $\ra\!\!\!\!\!\ra$}}}\,}
\newcommand{\llind}{\,{\raisebox{-0.25mm}{\mbox{\large $\la\!\!\!\!\!\la$}}}\,}
\newcommand{\dbind}{\,{\raisebox{-0.25mm}{\mbox{\large $\leftrightarrow\!\!\!\!\!\leftrightarrow$}}}\,}
\newcommand{\sbbind}{\,\ra\!\!\!\!\!\ra\,}
\newcommand{\sdbind}{\,\leftrightarrow\!\!\!\!\!\leftrightarrow\,}
\newcommand{\sched}{\,{\raisebox{-0.25mm}{\mbox{\large $\ra$}}}\,}
\newcommand{\lsched}{\,{\raisebox{-0.25mm}{\mbox{\large $\la$}}}\,}
\newcommand{\dsched}{\,{\raisebox{-0.25mm}{\mbox{\large $\leftrightarrow$}}}\,}
\newcommand{\inT}{{\cvar\!\!=\!\ttt}}
\newcommand{\notinT}{{\cvar\!\!=\!\fff}}
\newcommand{\comp}{\mbox{{\large \bf $\|$}} ~}
\newcommand{\leftcomp}{\mbox{{\Large \bf $($}}}
\newcommand{\rightcomp}{\mbox{{\Large \bf $)$}}}
\newcommand{\vecbranch}[1]{({#1})^\ra}
\newcommand{\pre}[1]{{\bf pre}\left(#1\right)}
\newcommand{\last}[1]{{\bf last}\left(#1\right)}
\newcommand{\sundae}{\textsc{Sundae}}
\newcommand{\lift}[2]{{#1}^{\uparrow{#2}}}
\newcommand{\restrict}[2]{#2_{\left|#1\right.}}
\newcommand{\allVars}{\pprimeppostset{}{}{\BaseVars}}
\newcommand{\allVarsVsmall}{{\BaseVars}_\vsmall^{\prime\bullet}}
\newcommand{\cVars}{\cT}
\newcommand{\cvars}{T}
\newcommand{\clock}{\tau}
\newcommand{\clk}[1]{h_{#1}}
\newcommand{\lead}[1]{#1^\uparrow}
\newcommand{\leading}[1]{#1^{\sf lead}}
\newcommand{\state}[1]{#1^{\sf state}}
\newcommand{\other}[1]{#1^\downarrow}
\newcommand{\mode}{m}
\newcommand{\Modes}{M}
\newcommand{\guard}{\gamma}
\newcommand{\Guards}{\Gamma}
\newcommand{\cvar}{\tau}
\newcommand{\zero}[1]{{\it zero}(#1)}
\newcommand{\xcvar}{\xi}
\newcommand{\nstimes}{\mathfrak{s}}
\newcommand{\changetime}{t_*}
\newcommand{\nstime}{\changetime}
\newcommand{\nstimeeoolt}{{t}}

\newcommand{\snstime}{\sigma}
\newcommand{\nstr}{\nst{\,\bR}}
\newcommand{\nstz}{\nst{\,\bZ}}
\newcommand{\nstR}{\nst{\,\bR}}
\newcommand{\nstn}{\nst{\,\bN}}
\newcommand{\nstN}{\nst{\,\bN}}
\newcommand{\continuation}[1]{\mathsf{Continuations}\left(#1\right)}
\providecommand{\vabs}[1]{\lvert#1\rvert}
\newcommand{\abs}{\ensuremath{\text{\scriptsize{\ding{54}}}}}
\newcommand{\squared}{\mathfrak{E}}
\newcommand{\underapprox}{\mathfrak{U}}
\newcommand{\overapprox}{\mathfrak{O}}
\newcommand{\irrelevant}{\mbox{\sc i}}
\newcommand{\dt}{dt}
\newcommand{\du}{du}
\newcommand{\zcvar}{\zeta}
\newcommand{\OK}{{\,|\!\!\models\,}}
\newcommand{\enabled}[2]{{#1\,{:}\;!#2}}
\newcommand{\eend}{{\sf end}}
\newcommand{\zc}{{\sc zc}}
\newcommand{\cB}{{\cal B}}
\newcommand{\fullB}{\overline{\cB}}
\newcommand{\cX}{{\cal X}}
\newcommand{\fullX}{\mathbb{X}}
\newcommand{\cF}{{\cal F}}
\newcommand{\archis}{\mathbb{A}}
\newcommand{\archi}{\mathcal{A}}
\newcommand{\cA}{{\cal A}}
\newcommand{\cM}{{\cal M}}
\newcommand{\cN}{{\cal N}}
\newcommand{\cY}{{\cal Y}}
\newcommand{\fullY}{\overline{\cY}}
\newcommand{\cE}{\mathbb{E}}
\newcommand{\EE}{\mathbf{E}}
\newcommand{\cT}{{\cal T}}
\newcommand{\cU}{{\cal U}}
\newcommand{\cS}{{\cal S}}
\newcommand{\CS}{\mathbf{ER}}
\newcommand{\cQ}{{\cal Q}}
\newcommand{\cP}{{\cal P}}
\newcommand{\cH}{{\cal H}}
\newcommand{\scott}{\mathbb{D}}
\newcommand{\useful}[2]{\lfloor#1\rfloor_{#2}}
\newcommand{\when}{\prog{if}}
\newcommand{\doo}{\prog{then}}
\newcommand{\mdAE}{mdAE}
\newcommand{\deltaAE}{$\vsmall\!${A\!E}}
\newcommand{\bdae}{mdAE}
\newcommand{\mDAE}{mDAE}
\newcommand{\hDAE}{hDAE}
\newcommand{\dae}{\mathrm{dAE}}
\newcommand{\bE}{\mathbf{E}}
\newcommand{\bI}{\mathbb{I}}
\newcommand{\bR}{\mathbb{R}}
\newcommand{\bT}{\mathbb{T}}
\newcommand{\bfT}{\mathbf{T}}
\newcommand{\bfM}{\mathbb{M}}
\newcommand{\bfb}{\mathbf{b}}
\newcommand{\bfx}{\mathbf{x}}
\newcommand{\bfu}{\mathbf{u}}
\newcommand{\bfv}{\mathbf{v}}
\newcommand{\bfy}{\mathbf{y}}
\newcommand{\bfa}{\mathbf{a}}
\newcommand{\bfA}{\mathbf{A}}
\newcommand{\cbT}{\overline{\bT}}
\newcommand{\sfj}{{\sf j}}
\newcommand{\sfA}{{\sf A}}
\newcommand{\bJ}{{\bf J}}
\newcommand{\bj}{{\bf j}}
\newcommand{\bS}{{\bf S}}
\newcommand{\bSs}{\mathcal{S}}
\newcommand{\bY}{{\bf Y}}
\newcommand{\bN}{\mathbb{N}}
\newcommand{\bB}{\mathbb{B}}
\newcommand{\bbS}{\mathbb{S}}
\newcommand{\bX}{\mathbb{X}}
\newcommand{\bU}{\mathbf{U}}
\newcommand{\bG}{\mathbf{G}}
\newcommand{\bC}{\mathbf{C}}
\newcommand{\bV}{\mathbf{V}}
\newcommand{\bZ}{\mathbb{Z}}
\newcommand{\bfZ}{\mathbf{Z}}
\newcommand{\bQ}{\mathbb{Q}}
\newcommand{\bD}{\mathsf{D}}
\newcommand{\bbD}{\mathbf{D}_b}
\newcommand{\barR}{\overline{{\bR}}}
\newcommand{\iindex}{{\it Ind}}
\newcommand{\oon}{\mbox{$\bf on$ }}
\newcommand{\iinit}[1]{#1_{\sf init}}
\newcommand{\iiinit}[3]{#1^{\sf init}_{\controlled{#3}{#2}}}
\newcommand{\ttt}{\mbox{\sc t}}
\newcommand{\fff}{\mbox{\sc f}}
\newcommand{\win}{\in_{\!_{_{+}}}\!\!}
\newcommand{\consistentcausality}{consistent causality}
\newcommand{\Consistentcausality}{Consistent causality}
\newcommand{\precond}{\mbox{pre-condition}}
\newcommand{\postcond}{\mbox{post-condition}}
\renewcommand{\time}{\partial}
\newcommand{\true}{{\bf when}}
\newcommand{\Bool}{{\sf Bool}}
\newcommand{\boolexp}{{\it boolexp}}
\newcommand{\zerocross}{{\it zero}}
\newcommand{\cchange}{{\bf change}}
\newcommand{\uup}{{\bf up}}
\newcommand{\aand}{\mathtt{and}}
\newcommand{\nneg}{\mathtt{not}}
\newcommand{\nnot}{{\bf not}}
\newcommand{\durtype}{\Delta}
\newcommand{\evt}{\mathbf{evt}\,}
\newcommand{\dure}{\mathbf{con}}
\newcommand{\system}{S}
\newcommand{\SOM}{\textsc{som}}
\newcommand{\SOW}{\textsc{sow}}
\newcommand{\Terms}{\mathcal{T}}
\newcommand{\false}{{\bf whennot}}
\newcommand{\Esterel}{{\sc Esterel}}
\newcommand{\esterel}{{\sc Esterel}}
\newcommand{\lustre}{{\sc Lustre}}
\newcommand{\lucy}{{\sc Lucid Synchrone}}
\newcommand{\signal}{{\sc Signal}}
\newcommand{\indexpara}[1]{{\|}_{_{#1}}}
\newcommand{\para}{{\,\|\,}}
\newcommand{\compat}{{\;\bowtie\;}}
\newcommand{\join}{{\;\sqcup\;}}
\newcommand{\vjoin}{\prog{join}}
\newcommand{\spara}{{\|\,}}
\newcommand{\abspara}{{|}}
\newcommand{\mutexpara}{{\|_\mu\,}}
\newcommand{\dpara}{{\!\vec{\,\|}\,}}
\newcommand{\paracs}{\,\|_{_{\rm CS}}}
\newcommand{\good}{\xi}
\newcommand{\nsx}{\bar{x}}
\newcommand{\nsv}{\bar{v}}
\newcommand{\nsw}{\bar{w}}
\newcommand{\involved}{\,\overline{\in}\,}
\newcommand{\vecinvolved}{\,{\in}_{\rm out}\,}
\newcommand{\notvecinvolved}{\,{\not\in}_{\rm out}\,}
\newcommand{\notinvolved}{\,\overline{\not\in}\,}
\newcommand{\involvedg}[1]{\,\overline{\in}_{#1}\,}
\newcommand{\Out}{{\bf IsOut}}
\newcommand{\invar}[1]{#1^{\sf in}}
\newcommand{\outvar}[1]{#1^{\sf out}}
\newcommand{\svar}{s}
\newcommand{\scalarp}[2]{\langle{#1},{#2}\rangle}
\newcommand{\entail}[2]{\frac{#1}{#2}}
\newcommand{\optw}[2]{{\sf J}^{#2}_{#1}}
\newcommand{\murule}[2]{\displaystyle\frac{#1}{#2}}

\newcommand{\Marc}[1]{{\sf Marc: #1}}

\newcommand{\ifequation}{{\tt if}-equation}
\newcommand{\whenequation}{{\tt when}-equation}
\newcommand{\ifGuards}{\Gamma^{\tt if}}
\newcommand{\whenGuards}{\Gamma^{\tt wh}}
\newcommand{\deronde}{\partial}
\newcommand{\gtype}{\tau}
\newcommand{\head}{{\sf h}}
\newcommand{\tail}{{\sf t}}
\newcommand{\followup}{+}
\newcommand{\lra}{\leftrightarrow}

\newcommand{\Vars}{\mathcal{X}}
\newcommand{\Scottvars}{\mathcal{S}}

\newcommand{\atomicact}[1]{\mathsf{#1}}
\newcommand{\asgnt}[1]{{#1}}
\newcommand{\asgnf}[1]{\overline{#1}}
\newcommand{\asgnd}[1]{\asgnf{#1}}
\newcommand{\asgnw}[1]{\sharp{#1}}
\newcommand{\nxt}[1]{\postset{#1}}
\newcommand{\rnd}[1]{#1^{\deronde}}
\newcommand{\prm}[1]{#1'}
\newcommand{\priv}[2]{#1_{\downarrow{#2}}}

\newcommand{\nan}{\mbox{NaN}}
\newcommand{\mlast}[1]{#1^-}
\newcommand{\shift}[1]{\mbox{FS}({#1})}
\newcommand{\latent}[1]{\mbox{LE}({#1})}
\newcommand{\REMOVE}[1]{}
\newcommand{\redundent}[1]{\Delta\gets{#1}}


\newcommand{\syntax}[1]{\mathtt{#1}}
\newcommand{\enbld}[2]{\mathit{Enab}({#2})}
\newcommand{\disbld}[2]{\mathit{Disab}({#2})}
\newcommand{\ndef}[1]{\mathit{Undef}({#1})}
\newcommand{\neval}[1]{\mathit{Uneval}({#1})}

\newcommand{\cnfa}{$\asgnt{\omega_1},\asgnt{\omega_2}$}
\newcommand{\cnfb}{$\begin{array}{c}\asgnf{\gamma},\asgnt{\omega_1},\asgnt{\omega_2}, \\
    \asgnd{e_3},\asgnd{e_4}\end{array}$}
\newcommand{\cnfc}{$\begin{array}{c}\asgnf{\gamma},\asgnt{\omega_1},\asgnt{\omega_2}, \\
    \asgnt{\tau_1},\asgnt{\tau_2},\asgnt{\nxt{\omega_1}},\asgnt{\nxt{\omega_2}}, \\
    \asgnt{\rnd{e_1}},\asgnt{\rnd{e_2}},\asgnd{e_3}, \\
    \asgnd{e_4},\asgnt{e_5},\asgnt{e_6}\end{array}$}
\newcommand{\cnfd}{$\begin{array}{c}\asgnt{\gamma},\asgnt{\omega_1},\asgnt{\omega_2}, \\
    \asgnd{e_5},\asgnd{e_6} \end{array}$}
\newcommand{\cnfe}{$\begin{array}{c}\asgnt{\gamma},\asgnt{\omega_1},\asgnt{\omega_2}, \\
    \asgnt{\tau_1},\asgnt{\tau_2},\asgnt{\nxt{\omega_1}},\asgnt{\nxt{\omega_2}}, \\
    \asgnt{\rnd{e_1}},\asgnt{\rnd{e_2}},\asgnt{\nxt{e_3}}, \\
    \asgnt{e_4},\asgnd{e_5},\asgnd{e_6} \end{array}$}
\newcommand{\cnff}{$\asgnt{\omega_1},\asgnt{\omega_2},\asgnw{e_3}$}
\newcommand{\cnfg}{$\begin{array}{c}\asgnf{\gamma},\asgnt{\omega_1},\asgnt{\omega_2}, \\
    \asgnd{e_3},\asgnd{e_4}\end{array}$}
\newcommand{\cnfh}{$\begin{array}{c}\asgnf{\gamma},\asgnt{\omega_1},\asgnt{\omega_2}, \\
    \asgnt{\tau_1},\asgnt{\tau_2},\asgnt{\nxt{\omega_1}},\asgnt{\nxt{\omega_2}}, \\
    \asgnt{\rnd{e_1}},\asgnt{\rnd{e_2}},\asgnd{e_3}, \\
    \asgnd{e_4},\asgnt{e_5},\asgnt{e_6}\end{array}$}
\newcommand{\cnfi}{$\begin{array}{c}\asgnt{\gamma},\asgnt{\omega_1},\asgnt{\omega_2}, \\
    \asgnt{e_3},\asgnd{e_5},\asgnd{e_6} \end{array}$}
\newcommand{\cnfj}{$\begin{array}{c}\asgnt{\gamma},\asgnt{\omega_1},\asgnt{\omega_2}, \\
    \asgnt{\tau_1},\asgnt{\tau_2},\asgnt{\nxt{\omega_1}},\asgnt{\nxt{\omega_2}}, \\
    \asgnt{\rnd{e_1}},\asgnt{\rnd{e_2}},\asgnt{e_3},\asgnt{\nxt{e_3}}, \\
    \asgnt{e_4},\asgnd{e_5},\asgnd{e_6} \end{array}$}

\newcommand{\lbla}{$\asgnf{\gamma};\asgnd{e_3};\asgnd{e_4}$}
\newcommand{\lblb}{$\asgnt{\gamma};\asgnd{e_5};\asgnd{e_6}$}
\newcommand{\lblc}{$\begin{array}{c}\asgnt{e_5};\asgnt{e_6}; \\ \asgnt{\rnd{e_1}};\asgnt{\rnd{e_2}}\end{array}$}
\newcommand{\lbld}{$\asgnt{\rnd{e_1}}+\asgnt{\rnd{e_2}}+\asgnt{\nxt{e_3}}+\asgnt{e_4}$}
\newcommand{\lble}{$\asgnf{\gamma};\asgnd{e_3};\asgnd{e_4}$}
\newcommand{\lblf}{$\asgnt{\gamma};\asgnd{e_5};\asgnd{e_6};\redundent{e_3}$}
\newcommand{\lblg}{$\asgnt{e_5};\asgnt{e_6};\asgnt{\rnd{e_1}};\asgnt{\rnd{e_2}}$}
\newcommand{\lblh}{$\begin{array}{c}\asgnt{\rnd{e_1}}+\asgnt{\rnd{e_2}}+ \\ \asgnt{\nxt{e_3}}+\asgnt{e_4}\end{array}$}


\newcommand{\gencodea}{\begin{array}{l}\tau_1 = 0; \; \tau_2 = 0; \\ \omega_1' = a_1(\omega_1) + b_1(\omega_1) \tau_1; \\ \omega_2' = a_2(\omega_2) + b_2(\omega_2) \tau_2\end{array}}

\newcommand{\gencodeb}{\begin{array}{l}\tau_1 = \nan; \; \tau_2 = \nan; \\ \omega_1^+ = \frac{b_2(\omega_2^-) \mlast{\omega_1} + b_1(\omega_1^-) \mlast{\omega_2}}{b_1(\omega_1^-)+b_2(\omega_2^-)}; \\ \omega_2^+ = \omega_1^+\end{array}}

\newcommand{\gencodec}{\begin{array}{l}\tau_1 = (a_2(\omega_2) - a_1 (\omega_1)) / (b_1(\omega_1) + b_2(\omega_2)); \tau_2 = - \tau_1; \\ \omega_1' = a_1(\omega_1) + b_1(\omega_1) \tau_1; \; \omega_2' = a_2(\omega_2) + b_2(\omega_2) \tau_2;  \\ \mbox{\textbf{constraint}} \; \omega_1 - \omega_2 = 0\end{array}}

\newcommand{\gencoded}{\begin{array}{l}\tau_1 = 0; \; \tau_2 = 0; \\ \omega_1^+ = \mlast{\omega_1}; \\ \omega_2^+ = \mlast{\omega_2}\end{array}}


\newcommand{\ttrue}{\mbox{\sc t}}
\newcommand{\ffalse}{\mbox{\sc f}}
\newcommand{\cstar}{\centering $\star$}
\newcommand{\rel}{\rightsquigarrow}


\newcommand{\bNinfty}{\underline{\bN}}
\newcommand{\bZinfty}{\overline{\bZ}}
\newcommand{\Vvars}{\mathcal{V}}
\newcommand{\iset}{\bar{X}}
\newcommand{\jset}{\bar{Y}}
\newcommand{\vvar}{v}
\newcommand{\wvar}{w}
\newcommand{\Ooptscore}[1]{\optscore_{#1}}
\newcommand{\aargmax}[1]{{\sf A}_{#1}}
\newcommand{\bL}{\mathbb{L}}
\newcommand{\opp}{\diamond}

\newcommand{\scoproj}[2]{\coPi^s_{#1}\!\left[#2\right]}




%
%
%
%
%
%

\newcommand{\skernel}{io\sinterface}
\newcommand{\smethod}{$\Sigma$-method}
\newcommand{\ssystem}{$\Sigma$-system}
\newcommand{\scomponent}{$\Sigma$-component}
\newcommand{\ssolution}{$\Sigma$-solution}
\newcommand{\smessage}{$\Sigma$-message}
\newcommand{\sinterface}{$\Sigma$-interface}
\newcommand{\spreinterface}{$\Sigma$-abstraction}
\newcommand{\scointerface}{$\Sigma$-cointerface}
\newcommand{\sdecomposition}{$\Sigma$-decomposition}
\newcommand{\sproblem}{$\Sigma$-problem}
\newcommand{\smatrix}{$\Sigma$-matrix}
\newcommand{\smatrices}{$\Sigma$-matrices}
\newcommand{\sstructure}{$\Sigma$-structure}
\newcommand{\bool}{\mathbb{B}}
\newcommand{\nat}{\mathbb{N}_{{-}\infty}}
\newcommand{\FF}{\mathbb{F}}
\newcommand{\XX}{\mathbb{X}}
\newcommand{\Selectors}{{\sf s}}
\newcommand{\Xselectors}{{\sf X}}
\newcommand{\Allselectors}[1]{\Selectors[#1]}
\newcommand{\Interfaces}[1]{{\sf Interfaces}(#1)}
\newcommand{\dbmof}[1]{\LPD{}[#1]}
\newcommand{\YY}{\mathbb{Y}}
\newcommand{\precomp}{\,{|\!|\!|}\,}
\newcommand{\edge}{{f\!,x}}
\newcommand{\zedge}{{f\!,z}}
\newcommand{\gedge}{{g,x}}
\newcommand{\aalpha}[1]{\alpha_{\!#1}}
\newcommand{\bbeta}[1]{\vartheta_{\!#1}}
\newcommand{\hatedge}{{f\!,\equivsim{x}}}
\newcommand{\fshiftdiff}[3]{f_{#1,#2,#3}}
\newcommand{\yedge}{{f\!,y}}
\newcommand{\redge}{{x,f}}
\newcommand{\LP}[1]{\mathcal{L}_{#1}}
\newcommand{\LPsel}[2]{\mathcal{L}^{#2}_{#1}}
\newcommand{\CPP}[1]{{C}^{\sf p}_{#1}}
\newcommand{\CPPsel}[2]{{C}^{\sf p{#2}}_{#1}}
\newcommand{\CPD}[1]{{C}^{\sf d}_{#1}}
\newcommand{\LPP}[1]{\mathcal{L}^{\sf p}_{#1}}
\newcommand{\LPPsel}[2]{\mathcal{L}^{{#1}{\sf p}}_{#2}}
\newcommand{\LPDsel}[2]{\mathcal{L}^{{#1}{\sf d}}_{#2}}
\newcommand{\LPD}[1]{\mathcal{L}^{\sf d}_{#1}}
\newcommand{\ILPD}[1]{\widehat{\mathcal{L}}^{\sf d}_{#1}}
\newcommand{\LPDmatch}[2]{\mathcal{L}^{{\sf d}#2}_{#1}}
\newcommand{\loc}[1]{#1^{\!\ell}}
\newcommand{\vis}[1]{#1^{\!s}}
\newcommand{\rrows}[1]{{\sf rows}(#1)}
\newcommand{\impulsion}[1]{\vec{#1}}
\newcommand{\indexreduced}[1]{#1_\Downarrow}
\newcommand{\consistency}[1]{#1_\Uparrow}
\newcommand{\complete}[1]{#1_\Updownarrow}
\newcommand{\newmode}[1]{#1^+_\Downarrow}
\newcommand{\ssem}[1]{{\left\llbracket{#1}\right\rrbracket_s}}
\newcommand{\wsem}[1]{{\left\llbracket{#1}\right\rrbracket_w}}
\newcommand{\sequiv}{\equiv_s}
\newcommand{\wequiv}{\equiv}
\newcommand{\locprim}[1]{#1^{\prime\,\sf loc}}
\newcommand{\visprim}[1]{#1^{\prime\,\sf vis}}
\newcommand{\preinterf}[1]{\mathbf{A}_{#1}}
\newcommand{\interf}[1]{\mathbf{I}_{#1}}
\newcommand{\pinterf}[1]{\mathbf{I}^{\sf p}_{#1}}
\newcommand{\dinterf}[1]{\mathbf{I}^{\sf d}_{#1}}
\newcommand{\btfinterf}[1]{\mathbf{I}^{\sf btf}_{#1}}
\newcommand{\ppreinterf}[1]{\mathbf{A}^{\sf p}_{#1}}
\newcommand{\dpreinterf}[1]{\mathbf{A}^{\sf d}_{#1}}
\newcommand{\btfpreinterf}[1]{\mathbf{A}^{\sf btf}_{#1}}
\newcommand{\cointerf}[1]{\mathbf{K}_{#1}}
\newcommand{\decomp}[2]{\mathbf{D}_{#1,#2}}
\newcommand{\jinterf}[1]{\mathbf{J}_{#1}}
\newcommand{\IP}[1]{\mathit{IP}_{\!#1}}
\newcommand{\Edges}{E}
\newcommand{\vecEdges}{\vec{E}}
\newcommand{\Edge}{{F\!,X}}
\newcommand{\bfEdges}{\mathbf{E}}
\newcommand{\bEdges}{\mathbb{F}{\times}\mathbb{X}}
\newcommand{\bfedge}{\mathbf{e}}
\newcommand{\weight}[1]{\sigma_{\!#1}}
\newcommand{\mcweight}[1]{\impulsion{\sigma}_{\!#1}}
\newcommand{\bfun}{\mathbf{1}}
\newcommand{\bfzero}{\mathbf{0}}
\newcommand{\selector}{Y}
\newcommand{\tselector}{Y}
\newcommand{\openst}{\mathbf{S}}
\newcommand{\selectortriv}{\sigma_{\rm triv}}
\newcommand{\II}{\mathcal{I}}
\newcommand{\cL}{\mathcal{L}}
\newcommand{\sigmamatrix}{\Delta}
\newcommand{\bNinf}{\bN_{-\infty}}
\newcommand{\bZinf}{\bZ_{-\infty}}
\newcommand{\parts}[1]{\cP(#1)}
\newcommand{\with}[2]{#2^{\wedge#1}}

\newcommand{\Block}{\beta}
\newcommand{\neint}{\!\!\!}
\newcommand{\gguard}{g}
\newcommand{\ffun}{\mathit{exp}}
\newcommand{\numexp}{{\it numexp}}
\newcommand{\BaseVars}{\mathbf{\Xi}}
\newcommand{\YVars}{\mathcal{Y}}
\newcommand{\Vals}{\mathcal{V}}
\newcommand{\sem}[1]{[\![#1]\!]}
\newcommand{\semantique}[1]{{\sf Sem}(#1)}
\newcommand{\bigsemantique}[1]{{\sf Sem}\bigl(#1\bigr)}
\newcommand{\modetraj}{\vec{\mode}}
\newcommand{\prelead}[1]{#1^-}
\newcommand{\postlead}[1]{#1^+}
\newcommand{\Subarray}[1]{\widehat{#1}}
\newcommand{\Impulsives}[2]{\mathbf{Impuls}^{#2}(#1)}
\newcommand{\impulsorder}[1]{\iota\!\left(#1\right)}
\newcommand{\impulsecause}{\Ra_\iota}

\newcommand{\hookdownarrow}{\mathrel{\rotatebox[origin=t]{90}{\reflectbox{$\hookrightarrow$}}}}
\def\Circlearrowleft{\ensuremath{\rotatebox[origin=c]{180}{$\circlearrowleft$}}}

\begin{titlepage}
\makeRR
\end{titlepage}
\hypersetup{pageanchor=true}
\pagenumbering{arabic}
\clearpage
{
	\hypersetup{linkcolor=black}
	\tableofcontents
}
\clearpage
\section{Introduction}
\subsubsection*{Motivation and related work}
In this paper, we consider \emph{multimode DAE systems}, i.e., collections of finitely many (possibly nonlinear) equations of the form
\beq
\prog{if}\;b\;\prog{then}\;f(\mbox{the } x_i \mbox{'s and their derivatives})=0 \,, \label{htrgfshytngf}
\eeq
where $x_i,i{=}1,\dots,n$ are (time-dependent) numerical variables, $f$ is a sufficiently differentiable numerical function,\footnote{What ``sufficiently'' means will be made precise in Section~\ref{nhmgfcvkjh}. \label{skdjhsgksujg}} and $b$ is a Boolean expression of predicates  over system variables and time. Boolean $b$ guards equation $f=0$, meaning that this equation is enabled when $b$ evaluates to $\ttrue$ (the constant ``true''), disabled otherwise.
We assume that each predicate takes the form $g(x^-){\geq} 0$, where
 $g$ is a smooth function of system variables, and $x^-(t)=\lim_{s\nearrow{t}}x(s)$ is the left-limit of $x$ at instant $t$. Function $g$ is often called a \emph{zero-crossing} function: its successive switches from negative to positive values trigger the onsets of the predicate.

Multimode DAE systems constitute a generic form for the hybrid systems specified by Modelica~\cite{MODELICA12,dlr11508}, a popular DAE-based modeling language dedicated to the modeling of multiphysical systems.\footnote{\href{https://modelica.org/}{Modelica} is the most well-known instance of such languages; \href{https://plm.sw.siemens.com/fr-FR/simcenter/systems-simulation/amesim/}{Amesim} and \href{https://fr.mathworks.com/products/simscape.html}{Simscape} are other languages used in the industry.} Modelica  is by itself physics-agnostic, which allows the user to include in the model a specification of the software code controlling the system.\footnote{This is in contrast to physics-oriented modeling methodologies such as bond graphs~\cite{thoma1975introduction} or port-Hamiltonian modeling~\cite{vanderSchaft2014PortHamiltonian}.} It is  important to establish such modeling languages on solid mathematical grounds. 

Multimode DAE systems, however, exhibit significant difficulties. First, there is no reference mathematical definition of the solution of a multimode DAE in general, but only for specific subclasses of physics, e.g., contact mechanics~\cite{brogliato2012nonsmooth}. Second, modes can be numerous---the number of modes typically grows exponentially with the number of subsystems. Due to these difficulties, existing tools do not implement mode-aware compilation:
some models, although clearly valid, fail to get correctly simulated. Such pathological models are by no means exceptional, nor are they difficult to exhibit~\cite{electronics11172755}. This work addresses the first difficulty by: 
\begin{enumerate}
	\item giving a mathematically sound definition of mode change events in multimode DAEs, and 
	\item proposing an effective algorithm computing the hot restart after a mode change. 
\end{enumerate}

\paragraph*{\normalsize Related work:}
An extensive review of the litterature on multimode DAEs is available in~\cite{DBLP:series/lncs/BenvenisteCEGOP19};
in this brief discussion, we only collect its major findings.
For selected physics, e.g., multi-body systems with contacts and electrical circuits with idealized switches and diodes, dedicated methods are proposed to handle possible impulses~\cite{PfeifferGlocker2008,Pfeiffer2012,Schoeder2013,Heemels2002,Barela2016}. It is not clear if the special methods used in these areas extend to general models.
Mehrmann et al.~\cite{HAMANN2008693} propose numerical techniques to handle chattering between modes. In Zimmer's PhD thesis~\cite{Zimmer2010}, multimode DAEs are considered with varying structure and index; however, impulsive behaviors are not supported. Both references assume that consistent reset values are explicitly given for each mode: this assumption is not suited to a compositional framework where one wants to assemble predefined physical components.
Elmqvist et al.~\cite{MultiMode2014,VaryingIndex2015} propose a high level description of multimode models as an extension to the Modelica 3.3 state machines. However, mode changes with impulsive behavior are not supported and not all types of multimode systems can be handled,
as mentioned above.

In our previous work~\cite{DBLP:series/lncs/BenvenisteCEGOP19}, we proposed for the first time an alternative approach for handling multimode DAEs. A restricted class of multimode DAEs, possibly involving impulsive behaviors, was considered (since then known as \emph{semi-linear} multimode DAEs~\cite{DBLP:journals/arc/BenvenisteCM20}). For this subclass, two alternative approaches were investigated and shown to be equivalent: a Gear-Gupta-Leimkuhler method~\cite{gear1985automatic}, implemented in the Julia package \href{https://modiasim.github.io/Modia.jl/stable/}{Modia}, and a novel approach relying on nonstandard analysis~\cite{Lindstrom}. The latter approach is the root of the one presented in this paper.

\paragraph*{\normalsize Zooming in:}
Key references for our work are the important contributions by Stephan Trenn and coworkers~\cite{TrennPhD2009,Tren09b,LibeTren12,KausTren17b,ChenTren23}. 
The pioneering work~\cite{TrennPhD2009} investigates the inconsistent DAE initialization problem: a regular linear DAE $E\dot{x}{=}Ax{+}f$ is considered for $t{\geq}{0}$, with an inconsistent initialization $x_{(-\infty,0)}$, meaning that the left limit $\lim_{s\nearrow 0}x(s)$ does not satisfy the consistency constraints imposed by the DAE. This possibly causes impulsive behavior for some variables at instant 0. The classical theory of distributions over the timeline $\bR$ is inadequate to handle this problem, as the value of a distribution at a given instant is undefined in general. The use of \emph{piecewise-smooth distributions} was proposed in~\cite[Chapter 2]{TrennPhD2009} to give an explicit solution for the inconsistent initialization of a linear DAE, by which the behavior of the impulsive variables is described as a linear combination of Dirac derivatives. A key pillar in computing this solution is the \emph{quasi-Weierstrass form} for a linear DAE, which consists in constructing a state basis and recombining equations, such that the impulsive order (differentiation degree of the Dirac function $\delta$) of each variable is well identified, and pure constraints are separated from the ODE part.

An interesting step forward is~\cite{KausTren17b}, in which the inconsistent initialization problem for DAE $E\dot{x}{=}Ax{+}g(x){+}f$ is investigated, where $g$ is a smooth nonlinear function. The authors assume that the nonlinear part $g(x)$ remains ``foreign to impulsions'' (see~\cite{KausTren17b} for a formalization)\,---\,the reason being that, for $x$ impulsive and $g$ nonlinear, $g(x)$ is undefined in general ($g(\delta)$ being an example).
Under this assumption, an appropriate time-invariant change of coordinates and equations allows the authors to statically decompose the DAE system into three subsystems: 1) a nonlinear ODE, 2) a nonlinear static constraint, and 3) a linear DAE carrying the impulsive part of the system. Using this decomposition and reusing the background from linear DAEs, the authors propose an explicit solution to the  inconsistent initialization problem. 
These results comply with the linear control systems vision: states are internal and are only defined up to a change of basis.

\subsubsection*{Objectives and approach}
Our work differs from the works of Trenn \emph{et al.} in a number of aspects. \

First, a key difference is our overall objective, namely: \emph{to provide mathematical soundness for the compilation of DAE-based modeling languages such as Modelica.} In Modelica models, parameters occurring in models are entered right before launching a simulation, i.e., after the simulation code was compiled. Therefore, compilation involves the structure of the model, not the actual values of its parameters. 
In particular, a key transformation of a DAE model is the \emph{structural analysis,} mainly consisting of the \emph{index reduction,} which transforms the DAE into an ODE-like system by suitably differentiating selected equations. The technique used in Modelica tools to perform index reduction uses the model structure, not the numerical details of the model. 
Our work complies with this philosophy: 
\begin{enumerate}
	\item 
\emph{Changes of state basis are prohibited}\footnote{This prohibits numerical steps such as the construction of the above mentioned quasi-Weierstrass form. }\,---\,such changes are little relevant in the context of physical modeling (unlike in black-box modeling).
\item
Whereas in classical linear systems theory, regularity (of matrices or linear pencils) plays a central role, \emph{we give up numerical regularity and replace it by {structural regularity}} (also called structural nonsingularity). 
\end{enumerate}
A square matrix is \emph{structurally nonsingular} if all of its diagonal entries can be made nonzero by pre- and post-multiplying it by permutation matrices. A static system of nonlinear equations is \emph{structurally nonsingular} if so is its Jacobian around its solution. Structural nonsingularity holds if and only if a one-to-one mapping exists between equations and dependent variables, i.e., a \emph{perfect matching}.
Structural nonsingularity is necessary, and generically sufficient,\footnote{Generically means that the matrix remains nonsingular almost everywhere when its non-zero entries vary over some neighborhood. See~\cite{benveniste:hal-03104030} for a short tutorial on structural methods.} for exact regularity of a static system. 
Structural regularity is checked on the incidence graph of the system (an abstraction of its structure), relying on graph-based algorithms that can scale up much better than numerical ones: those are used in high-performance computing, and in all the compilers of DAE-based modeling tools~\cite{Fritzson_ModelicaStructAna_02}. 
Moreover, as structural methods do not rely on parameter values, they allow us to identify, prior to simulation, if the model is over-, under-, or well-determined, providing useful information to model designers.

The second difference is that we do not describe the full trajectories of all the variables, including impulsion events. Instead, \emph{we detect impulsive behaviors, without identifying their exact nature,\footnote{That is, without describing them as a suitable linear combination of Dirac measures. Recall that the occurrence of $g(\delta)$, for $g$ nonlinear, prevents us from providing a full definition of solutions in this case.} and rescale impulsive variables to make them non-impulsive.} Our rescaling algorithm bears similarities with J. Pryce's \sigmamethod\ for the structural analysis of (single-mode) DAEs~\cite{Pryce01}.

\subsubsection*{Contribution}
Based on the above approach, our contributions are the following:
\begin{itemize}
\item {A general definition of \emph{hot restart}} (Problem~\ref{dsjkchgfckajycgk});
\item {A system of \emph{rescaling equations}} (Problem~\ref{likgusefhdrlpouhiouh});
\item {A procedure to generate hot restarts} (Procedure~\ref{skedfjuyghwsaliukgh});
\item {A proof of correctness of Procedure~\ref{skedfjuyghwsaliukgh}} (Theorem~\ref{skldjhfgslkdufghikl});
\item {Bounds for design parameters of Problem~\ref{likgusefhdrlpouhiouh}} (Theorem~\ref{sekdfuysageloiug});
\item An effective algorithm for solving rescaling equations.
\end{itemize}
Procedure~\ref{skedfjuyghwsaliukgh} generates the system of equations by which restart values for the states are determined from the values just before the mode change. Uniqueness of the restart system of equations is guaranteed, when it exists, by Theorem~\ref{skldjhfgslkdufghikl}, and its non-existence typically corresponds to a lack of determinism, expressing that the model is insufficiently specified for hot restart.

In our previous work~\cite{DBLP:journals/arc/BenvenisteCM20}, the same problem was addressed, by building on top of nonstandard analysis~\cite{Lindstrom}. The resulting algorithm, however, was not completely specified and was difficult to analyze. The method introduced in the present paper fixes both issues. In particular:
\begin{itemize}
	\item 
It handles general nonlinear systems; mode changes can be state-based, not only time-based.\footnote{Consequently, all multimode DAE models are supported by our compilation method.} Yet, hot restart succeeds only if impulsive variables are involved \emph{linearly} in the system model---this restriction mirrors the one formulated in~\cite{KausTren17b}, where it was formulated in Definition 6 as an extra and complex ``condition (G$_p$)''. In our case, there is no need for a separate check: our algorithm discovers by itself if this condition is satisfied or violated.
\item 
It is physics-agnostic, yet is able to reveal hidden physical invariants: in the cup-and-ball example of Section~\ref{jhtgfkuytjrsdfghk}, the hot restart we generate preserves angular momentum, although no such law was explicitly stated in the model. 
\item 
By being graph-based, it has the potential to scale up to very large systems, unlike previous approaches.
\end{itemize}
{This report is organized as follows.} Section~\ref{jhtgfkuytjrsdfghk} introduces our approach by means of an illustrative example. This example is simple; yet, state-of-the-art DAE-based modeling tools fail to simulate it correctly---they actually crash when reaching mode change events. Background material is recalled in Section~\ref{jhgfvkjhhtrgfdhgfedbsv}: Sections~\ref{jhtgfkjhseghfaerv} and~\ref{nhmgfcvkjh} focus on the structural analysis of both static systems of equations and (single-mode) DAE systems, and Section~\ref{skdjfhsgkldjfgh} states a slight reformulation of the implicit function theorem. The core problem addressed by this paper, namely, the hot restart problem, is stated in Section~\ref{jashcgfackjhamgfjh}. Sections~\ref{hgrfdkhgfcvkjmh} and~\ref{dfkjshlksgfskdjfmg} develop our key result, namely the rescaling analysis. Main theorems regarding our approach are collected in Section~\ref{hgrfdjhtrsbdgfb}, and details on the resulting algorithm are developed in Section~\ref{jsdhgcfvsdjhcg}. Finally, a mathematical characterization of the hot restarts generated by our method is presented in Section~\ref{kaerjfhgkaewrjyfhg}. Main proofs are collected in Section~\ref{skduhvgskjvghkjh}. 

Additional proofs are collected in Appendices~\ref{hjgfgkkjmhnm}, \ref{kcsjgvcsldjghkj}, and~\ref{slviwsjrhfvluikvhk}. Additional examples illustrating the features of our approach are presented in Appendices~\ref{ksdfiygskdyjgf} and~\ref{hgrdfhreagfdsuj}. Finally, a comparison with the approach by Trenn et al. is discussed in Appendix~\ref{sldkvjshdblvckujsh}.

In a first version of this report, we only considered mode changes between two successive ``long modes'' (lasting for a positive duration with the same DAE dynamics), with no additional hot restart constraint. The case of finite cascades of successive transient modes (of zero duration) is now supported in Section~\ref{sdlkjhfcklsujdg} of this revised version, with limited objectives. This is achieved by considering additional hot restart constraints.

\section{A cup-and-ball example}
\label{jhtgfkuytjrsdfghk}
We develop a cup-and-ball game example to give an intuitive presentation of our approach; missing background is introduced informally and will be developed in Section~\ref{jhtgfkjhseghfaerv}. This example illustrates the main challenges to be addressed: mode changes are state-based, and they involve impulsive behaviors. As a matter of fact, major tools fail to simulate this model.

A ball, modeled by a point mass, is attached to one end of a rope, while the other end of the rope is fixed to the origin of the plane in the model. The ball is subject to the unilateral constraint set by the rope, but moves freely while the distance between the ball and the origin is less than its length. A model for a 2D version of this example is:
\beq
\mbox{\raisebox{-10mm}{\includegraphics[width=0.1\textwidth]{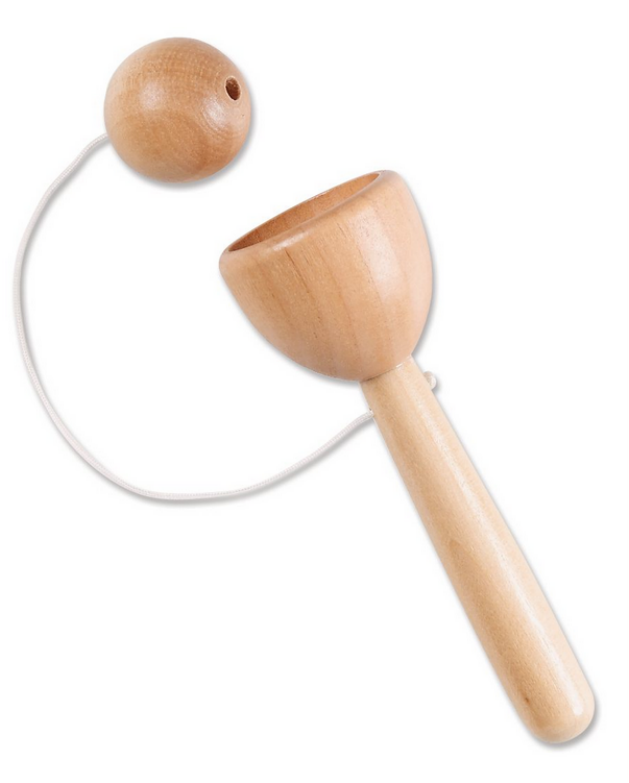}}}&\hspace*{2cm}
\left\{\bea{lll}
 0= \ddot{x}+{\tension}x && (\eqq_1) \\
 0= \ddot{y}+{\tension}y+g  && (\eqq_2) \\
0\leq{L^2}{-}(x^2{+}y^2)    && (\kappa_1) \\
0\leq\tension   && (\kappa_2) \\
0=\left[{L^2}{-}(x^2{+}y^2)\right]\times\tension   && (\kappa_3) 
\eea\right.
\label{reiutyekrtyk}
\eeq
where the {dependent variables} are the position $(x,y)$ of the ball in Cartesian coordinates and the rope tension $\tension$.

Subsystem $(\kappa_1,\kappa_2,\kappa_3)$ expresses that the tension is nonnegative, the distance between the ball and the origin is less than or equal to $L$, and one cannot have a nonzero tension and a
distance less than $L$ at the same time: (\ref{reiutyekrtyk}) is a \emph{complementarity system}, and such systems are key in \emph{non-smooth mechanics}~\cite{Pfeiffer2012}. This model is not of the form~(\ref{htrgfshytngf}) yet because of unilateral constraints $(\kappa_1,\kappa_2)$. Using the technique presented in~\cite{dlr11508}, we redefine the graph of $(\kappa_1,\kappa_2,\kappa_3)$ as a parametric curve, represented by the following three equations: 
\beq
\left\{\bea{rcl} 
\guard &=& [s\leq 0] 
\\
0 &=& \prog{if}\;\guard\;\prog{then} \;{L^2}{-}(x^2{+}y^2) \; \prog{else} \; \tension
\\
s &=& \prog{if}\; \guard \;\prog{then} -\tension \;\prog{else} \;{L^2}{-}(x^2{+}y^2)
\eea \right.
\label{dhwsatdrfshtr} 
\eeq
The model now constitutes a logico-numerical fixpoint equation with dependent variables $\ddot{x},\ddot{y},\tension,\guard,s$. Such equation can have zero, one, or infinitely many solutions. No characterization exists that could serve as a basis for a structural analysis. \emph{We thus decide to refuse solving such mixed logico-numerical systems.}

To break the fixpoint equation defining $\guard$, we choose to base guards on left-limits of signals. This yields the modified model (\ref{loeifuhpwoui}), where the modification is highlighted in {\color{red}red}. For convenience, we also grouped the equations that are only active in modes $\guard=\ttt$ and $\guard=\fff$, respectively:
\beq\left\{\bea{rll}
& 0= \ddot{x}+{\tension}x & (\eqq_1) \\
& 0= \ddot{y}+{\tension}y+g  & (\eqq_2) \\
& \remph{{\guard}= [s^-\leq{0}]; \guard(0)=\fff}  & \remph{(\straight_0)} \\
\when \; \guard\; \doo& 0={L^2}{-}(x^2{+}y^2)   & (\straight_1) \\
\prog{and}& 0=\tension+s   & (\straight_2) \\
\prog{else}& 0=\tension   & (\straight_3) \\
\prog{and}& 0=({L^2}{-}(x^2{+}y^2))-s   & (\straight_4) \\
\eea\right.
\label{loeifuhpwoui}
\eeq
where ``$\prog{else}$'' is a shorthand for ``$\when\;\prog{not}\;\guard\;\doo$''.
Let us assume that the ball is initially under free motion ($\guard{\,=\,}\fff$). Eventually, the rope will get straight ($\guard:\fff{\ra}\ttt$), with two possible continuations, depending on whether the impact is assumed \emph{elastic} (the ball bounces inward when the rope gets straight) or \emph{inelastic} (the rope remains straight).
\begin{itemize}
	\item If the impact is inelastic, then the radial velocity of the ball becomes zero, hence the rope gets straight ($\guard{\,=\,}\ttt$) until the gravity pushes the ball back to free motion ($\guard{\,=\,}\fff$). 
	\item If the impact is elastic, then the radial velocity changes its sign (the simplest model is that it gets opposite and equal in magnitude) and the ball bounces back to free motion ($\guard{\,=\,}\fff$); in this case, the ball spends zero time in $\guard{\,=\,}\ttt$ mode, we call it \emph{transient}. 
\end{itemize}
The respective timing views are the following:
\beq\bea{rcccc}
\mbox{inelastic}&:&\underbrace{\mbox{free motion}}_{\rm long} & \underbrace{{\mbox{straight rope}}}_{\rm long} & \underbrace{\mbox{free motion}}_{\rm long} 
\\ [6mm]
\mbox{elastic}&:&\underbrace{\mbox{free motion}}_{\rm long} & \underbrace{{\mbox{straight rope}}}_{\rm transient} & \underbrace{\mbox{free motion}}_{\rm long} 
\eea  \label{hythfjnhgfedshgf}
\eeq
Note that subsystem $(\kappa_1,\kappa_2,\kappa_3)$ leaves the impact law at mode change insufficiently specified: the model does not state whether the impact is elastic or inelastic..
We focus here on the inelastic case, where only long modes occur. An impulsive behavior is expected at the mode change from free motion to {straight rope}, $\guard:\fff{\,\ra\,}\ttt$; we will illustrate our method on this mode change.

We first consider each mode separately in (\ref{loeifuhpwoui}). Under the mode $\guard=\fff$, free motion occurs, which yields an ODE. In contrast, under $\guard=\ttt$, the rope is straight, thus enabling the algebraic constraint $(k_1)$: this yields a DAE. Index reduction is performed~\cite{CampbellGear1995,Pantelides1988}, which consists here in adding to the model the two \emph{latent equations} $(\dot{k_1},\ddot{k_1})$ obtained by successive differentiations of $(k_1)$. The resulting model splits into two parts~\cite{Pantelides1988}:
\beq\bea{rcl}
\bea{r} \mbox{leading}\\ \mbox{equations}\eea&\negesp:\negesp&\left\{\bea{lc}
0=\pmatch{\ddot{x}}+\tension x & (e_1) \\
0={\ddot{y}}+\pmatch{\tension} y & (e_2) \\
0=x\ddot{x}{+}\dot{x}^2{+}\dot{y}^2{+}y\pmatch{\ddot{y}} & (\ddot{k_1})
\eea\right.
\\ [5mm]
\bea{r} \mbox{consistency}\\ \mbox{equations}\eea&\negesp:\negesp&\,\left\{\bea{lc}
0={L^2}{-}(x^2{+}y^2) & \hspace*{9.2mm} ({k_1}) 
\\ 0=x\dot{x}+y\dot{y} & \hspace*{9.2mm} (\dot{k_1})
\eea\right.
\eea
\label{htdfkuytdshgd}
\eeq
In (\ref{htdfkuytdshgd}) and in the sequel, we highlight in {\color{blue}blue} an injective \emph{matching} of equations to variables. It is one-to-one for the leading equations, showing structural nonsingularity\,---\,see the discussion on structural nonsingularity in the introduction, and Section~\ref{jhtgfkjhseghfaerv} for a formal presentation. Indeed, the leading equations generically determine the leading variables $\ddot{x},\ddot{y},\tension$, while the consistency equation constraints the state variables $x,y,\dot{x},\dot{y}$ at initialization, leaving two degrees of freedom.

Focus next on mode change $\guard:\fff{\ra}\ttt$, and let $t$ be the instant when it occurs. We wish to identify how the previous mode $\guard{=}\fff$ influences the restart of the new mode $\guard{=}\ttt$. 
\begin{approach}\rm
		\label{isduyfsegdiksuygu} Our approach unfolds as follows. In a neighborhood of the mode change instant:
		\begin{enumerate}
			\item  \label{jhgfdytrhgf}    
			Each derivative is discretized using a simple forward Euler scheme with time step $\vsmall$, i.e., any derivative $\dot{x}$ is \emph{identified} with $\vsmall^{-1}({x}(t{+}\vsmall){-}{x}(t))$; we call the added equation $\dot{x}=\vsmall^{-1}({x}(t{+}\vsmall){-}{x}(t))$ an \emph{Euler identity};
			\item    \label{jytrgdhtsgdyntgfv} Using this interpretation of derivatives, we stack finitely many successive transitions of the resulting discrete-time system around the mode change instant, thus building an array of equations called \emph{mode change array}\,---\,the last transition stacked in this array will serve to initialize the next mode. The identification and compensation of impulsive behaviors is performed on this array, leading to its \emph{rescaling};
			\item \label{hgrfdjtrgdhtfjhgsdrf} Finally, we let $\vsmall{\,:=\,}0$ in the rescaled array.\footnote{Throughout this paper, symbol $:=$ means the assignment of a value to a variable or parameter.\label{iskdufygseukyg}} As a result, with the expansion of $\dot{x}(t)$ as $\vsmall^{-1}({x}(t{+}\vsmall){-}{x}(t))$, we retrieve the derivative of $x$ at $t$.	The resulting \emph{restart system} relates the latest state variables of this system, to the right limits of the trajectories of the previous mode; it is used for hot restart.\eproof
		\end{enumerate}
	\end{approach}
The approximations errors introduced by enforcing the Euler identity in step~\ref{jhgfdytrhgf} are handled in step~\ref{hgrfdjtrgdhtfjhgsdrf}; we prove later that, under certain conditions, the limit exists when $\vsmall\searrow{0}$ and is indeed obtained by setting $\vsmall{:=}0$.
To be properly used for the hot restart of the new mode, this restart system should satisfy the following requirements:
\begin{require}[for the restart system]
	\label{ksdjyhfgaskjfhgfkj} \
	\begin{enumerate}
		\item \label{ksjdfhsgkfjhg} It should be \emph{deterministic,} i.e., it should uniquely determine the restart state variables of the new mode;
		\item \label{mdfsjhfgvksjmhdgkj} The restart state variables should satisfy the consistency conditions of the new mode.
	\end{enumerate}
\end{require}
Requirement~\ref{ksdjyhfgaskjfhgfkj}.\ref{ksjdfhsgkfjhg} means that the mode change is determined. Requirement~\ref{ksdjyhfgaskjfhgfkj}.\ref{mdfsjhfgvksjmhdgkj} expresses that a hot restart should at least satisfy the conditions for being a consistent start.

Performing Step~\ref{jhgfdytrhgf} of Approach~\ref{isduyfsegdiksuygu} results in a discretization of the dynamics of the clutch in a neighborhood of the mode change. We focus on the neighborhood of the mode change, and we detail Step~\ref{jytrgdhtsgdyntgfv}. To simplify the writing, we use the following notation using the instant $t$ of mode change as a reference: for any variable $x$, let
\[
\preset{x}\equiv{x(t{-}\vsmall)}\,,\;
x\equiv{x(t)} \mbox{ and } 
\postset{x}\equiv{x(t{+}\vsmall)}\,.
\]
Using this notation, we display in  Fig.\,\ref{fdkghldkjhkjj} the dynamics of the new mode at instant $t$, as completed following (\ref{htdfkuytdshgd}). We regard it as a static system of equations $A_0$ and call it \emph{mode change array}. Variables $x,y,\dot{x},\dot{y}$ are fully determined by the previous mode, due to the Euler identities $x=\preset{x}+\vsmall{\times}{\preset{\dot{x}}}$ and $\dot{x}=\preset{\dot{x}}+\vsmall{\times}\preset{\ddot{x}}$ (and similarly for $y$): we call them \emph{past variables}. The dependent variables of $A_0$ are $\ddot{x},\ddot{y},\tension$ (past variables are excluded). 

\begin{figure}[!ht]

 \beqq
A_0:\left\{\bea{cclcll}
0&\!\!\!\!=\!\!\!\!& {\pmatch{\ddot{x}}+{\tension}x} 
&\hspace*{10mm}  ({\eqq_1}) & \guard{=}\ttt,{{\nstimeeoolt}}
\\
 0&\!\!\!\!=\!\!\!\!& {\ddot{y}+\pmatch{\tension}y+g}
&\hspace*{10mm}  ({\eqq_2}) &  
\\
\gremph{0} &\!\!\!\!\gremph{=}\!\!\!\!& \gremph{{L^2}{-}(x^2{+}y^2)} 
& \hspace*{10mm}\gremph{(\straight_1)}  &   \mbox{\gremph{fact}}
\\
\remph{0} &\!\!\!\!\remph{=}\!\!\!\!& \remph{x\dot{x}{+}y\dot{y}} 
& \hspace*{10mm}\remph{(\dot{\straight_1})}  &   \mbox{\remph{disabled}}
\\
{0} &\!\!\!\!{=}\!\!\!\!& x\ddot{x}{+}\dot{x}^2{+}\dot{y}^2{+}y\pmatch{\ddot{y}} \hspace*{-10mm}& \hspace*{10mm}{(\ddot{\straight_1})}
\eea\right.
\eeqq
\caption{\sf \textbf{Cup-and-ball example:} Mode change array $A_0$. In the last column we point the \gremph{{facts}} and the \remph{disabled} conflicting equations; black equations are enabled. Triple $(e_1,e_2,\ddot{\straight_1})$ is structurally nonsingular, with a one-to-one matching $\pmatch{\cM}=\{(e_1,\pmatch{\ddot{x}}),(e_2,\pmatch{\tension}),(\ddot{k_1},\pmatch{\ddot{y}})\}$ between enabled equations and variables. 
}
\label{fdkghldkjhkjj}

 \beqq
A_1:\left\{\bea{rclcll}
 0&\!\!\!\!=\!\!\!\!& \pmatch{\ddot{x}}+{\tension}x 
& (\eqq_1) 
& \mbox{instant }\nstimeeoolt
\\
 0&\!\!\!\!=\!\!\!\!& \ddot{y}+\pmatch{\tension}y+g 
& (\eqq_2)  & 
\\
\gremph{0} &\!\!\!\!\gremph{=}\!\!\!\!& \gremph{{L^2}{-}(x^2{+}y^2)} 
& \gremph{(\straight_1)}  
& \mbox{\gremph{fact}}
\\
\remph{0} &\!\!\!\!\remph{=}\!\!\!\!& \remph{x\dot{x}{+}y\dot{y}} 
& \remph{(\dot{\straight_1})}    
& \mbox{\remph{disabled}}
\\
\remph{0} &\!\!\!\!\remph{=}\!\!\!\!&  \remph{x\ddot{x}{+}\dot{x}^2{+}\dot{y}^2{+}y\ddot{y}}
& \remph{(\ddot{\straight_1})}    
& \mbox{\remph{disabled}}
\\ 
[2mm]
0 &\!\!\!\!{=}\!\!\!\!&  \ddot{x}-\vsmall^{-1}(\pmatch{\postset{\dot{x}}}-\dot{x}) & (\euler{\ddot{x}}{})
\\ 
0 &\!\!\!\!{=}\!\!\!\!&  \pmatch{\ddot{y}}-\vsmall^{-1}({\postset{\dot{y}}}-\dot{y}) & (\euler{\ddot{y}}{})
\\ [2mm]
0&\!\!\!\!=\!\!\!\!& \postset{(\pmatch{\ddot{x}}+{\tension}x)}
& (\postset{\eqq_1}) 
& \mbox{instant }\nstimeeoolt{+}\vsmall
\\
 0&\!\!\!\!=\!\!\!\!& \postset{(\ddot{y}+\pmatch{\tension}y+g)}
& (\postset{\eqq_2})  & 
\\
\gremph{0} &\!\!\!\!\gremph{=}\!\!\!\!& \gremph{\postset{({L^2}{-}(x^2{+}y^2))} }
& \gremph{(\postset{\straight_1})}     
& \mbox{\gremph{fact}}
\\
{0} &\!\!\!\!{=}\!\!\!\!& {\postset{{(x\dot{x}{+}y\pmatch{\dot{y}})}}}
& {(\postset{\dot{\straight_1}})}  & 
\\
{0} &\!\!\!\!{=}\!\!\!\!&  \postset{({x\ddot{x}{+}\dot{x}^2{+}\dot{y}^2{+}y\pmatch{\ddot{y}}})}
& {(\postset{\ddot{\straight_1}})}   
\eea\right.
\eeqq
\caption{\sf \textbf{Cup-and-ball example:} Mode change array $A_1$. \gremph{{Facts}} and \remph{disabled} conflicting equations are pointed. Euler identities $(\euler{\ddot{x}}{}),(\euler{\ddot{y}}{})$, relating second derivatives to shifts of first derivatives, were added. The enabled subsystem (in black) is structurally nonsingular, with a one-to-one matching $\pmatch{\cM}$ between equations and variables, highlighted in \pmatch{blue}.
}
\label{kefygfkjfgjhgf}
 \[
A_1:\left\{\bea{rclc}
 0&\!\!\!\!=\!\!\!\!& \pmatch{\ddot{x}}+{\tension}x 
& (f_1) 
\\
 0&\!\!\!\!=\!\!\!\!& \ddot{y}+\pmatch{\tension}y+g 
& (f_2) 
\\ [2mm]
{0} &\!\!\!\!{=}\!\!\!\!& {\postset{{(x\dot{x}{+}y\pmatch{\dot{y}})}}}
& {(f_3)}  
\\ [2mm]
0 &\!\!\!\!{=}\!\!\!\!&  \ddot{x}-\vsmall^{-1}(\pmatch{\postset{\dot{x}}}-\dot{x}) & (f_4)
\\ 
0 &\!\!\!\!{=}\!\!\!\!&  \pmatch{\ddot{y}}-\vsmall^{-1}({\postset{\dot{y}}}-\dot{y}) & (f_5)
\eea\right.
\]
\caption{\sf \textbf{Cup-and-ball example:} The black subsystem of Fig.\,\ref{kefygfkjfgjhgf}, in which leading equations of instant $\postset{t}=t{+}\vsmall$ were omitted, and equations renumbered for convenience.}
\label{hgrtduhtraytrgfeads}
\end{figure}

\begin{ffinding}[facts]\rm
	\label{jdcfghafmhayfg} The equation $\gremph{(k_1)}$ is highlighted in \gremph{green} to indicate that it is satisfied up to $O(\vsmall)$, despite it involving only past variables; the reason for this is that the mode change was detected at instant $\preset{\nstimeeoolt}$ by the \emph{zero-crossing} event $0{=}\preset{(x^2{+}y^2{-}L^2)}$, hence, ${x^2{+}y^2{-}L^2}{\,=\,}O(\vsmall)$.	We call \emph{fact} an equation that involves past variables only and is satisfied up to an $O(\vsmall)$. Being nearly satisfied and involving no dependent variable, \emph{\textbf{facts can be ignored}}.\footnote{ \,The reader is kindly advised against inferring anything about the human world from this statement.}\eproof
\end{ffinding}
This is in contrast to the next equation $\remph{(\dot{k_1})}$: it also involves only past variables but it is not satisfied. \emph{We decide to disable it at this instant}, which amounts to postponing its consideration for a while. We are left with the three equations $(e_1,e_2,\ddot{k_1})$, which determine the leading variables $\ddot{x},\ddot{y},\tension$. Hence, array $A_0$ uniquely determines the leading variables from the values of past variables: Requirement~\ref{ksdjyhfgaskjfhgfkj}.\ref{ksjdfhsgkfjhg} is met. In contrast, Requirement~\ref{ksdjyhfgaskjfhgfkj}.\ref{mdfsjhfgvksjmhdgkj} is not met since consistency equation $\remph{(\dot{k_1})}$ was violated.

Let us extend array $A_0$ by one more instant, by considering array $A_1$, shown in Fig.\,\ref{kefygfkjfgjhgf}. Euler identities $(\euler{\ddot{x}}{})$ and $(\euler{\ddot{y}}{})$ were added, to make explicit the relations between $\ddot{x}$ and $\postset{\dot{x}}$, and  $\ddot{y}$ and $\postset{\dot{y}}$. Variables $x,y,\dot{x},\dot{y}$ are still determined by the previous mode. Equation $\gremph{(k_1)}$ remains a fact, and so does $\gremph{(\postset{k_1})}$, for the same reasons. Let us disable the conflicting equations $\remph{(\dot{k_1},\ddot{k_1})}$. The remaining system is structurally nonsingular, as evidenced by the perfect matching highlighted in \pmatch{blue}. Array $A_1$ determines the values of all its dependent variables knowing the past variables: Requirement~\ref{ksdjyhfgaskjfhgfkj}.\ref{ksjdfhsgkfjhg} is met. In addition, Requirement~\ref{ksdjyhfgaskjfhgfkj}.\ref{mdfsjhfgvksjmhdgkj} is now satisfied since all the consistency equations at instant $\postset{\nstimeeoolt}$ are enabled. 
\begin{ffinding}\rm
	 \label{jsdhcgfanhgfnh}
	We solved the hot restart of the new mode, in discretized setting, i.e. Step~\ref{jytrgdhtsgdyntgfv} of Approach~\ref{isduyfsegdiksuygu} was successful when using $A_1$.\eproof
\end{ffinding}	
 What happens when performing Step~\ref{hgrfdjtrgdhtfjhgsdrf}, i.e., when letting $\vsmall{\,:=\,}{0}$? Here, we only consider the black subsystem of Fig.\,\ref{kefygfkjfgjhgf}, which yields the system shown in Fig.\,\ref{hgrtduhtraytrgfeads}, where the same perfect matching is highlighted in \pmatch{blue}. 
\begin{ffinding}[trying $\vsmall{\,:=\,}{0}$]\rm
	\label{djhdgafkjhgf} Letting $\vsmall{\,:=\,}{0}$ in $A_1$ causes trouble, due to the occurrence of $\vsmall^{-1}$ in equations $(f_4,f_5)$. Multiplying these two equations by $\vsmall$ does not solve the problem either, since letting $\vsmall{\,:=\,}{0}$ in this case erases $\pmatch{\ddot{y}}$ from equation $(f_5)$, which makes $A_1$ singular.\eproof
\end{ffinding} 
Only one action can get rid of this difficulty: properly identifying impulsive variables and rescaling them.
\begin{notation}\rm	\label{kducjhgskjsyhg} 
In the sequel, we use the convention that $(f_i)$ denotes an equation of array $A_1$, whereas $f_i$ shall denote the corresponding function defined by its right-hand side.\eproof
\end{notation}
We propose a simple approach to identify and quantify impulsive behaviors:\footnote{What follows is an informal definition; the formalization will be given in Section~\ref{dfkjshlksgfskdjfmg}.}
\beq
\mbox{
\begin{minipage}{12cm}	
	\begin{enumerate}
		\item Say that variable $x$ has \emph{rescaling offset} $k$, written $\rescaling{x}{=}k$, if its value in the solution of system $A_1$ is $0(\vsmall^{-k})$; 
		\item \label{klsdjfvhsgklijug}
	$\rescaling{x}{=}k$ results in the rescaling $\rescaled{x}\eqdef\vsmall^k{\times}x$.
\end{enumerate}
\end{minipage}
}
\label{gtrfdkyjrgs}
\eeq
Focus on equation $(f_5)$.
Since $\dot{y}$ and $\postset{\dot{y}}$ are both non impulsive, equation $(f_5)$ in array $A_1$ implies $\rescaling{\ddot{y}}{=}1{+}\max(\rescaling{{\dot{y}}},\rescaling{\postset{\dot{y}}}){=}1$. In addition, we also associate, to function $f_5$, a rescaling offset indicating the magnitude order w.r.t. $\vsmall^{-1}$ of $f_5$ in the neighborhood of a solution of this equation. That is: 
\[
\rescaling{\!f_5}=\max\bigl(\rescaling{\ddot{y}},1{+}\max(\rescaling{{\dot{y}}},\rescaling{\postset{\dot{y}}})\bigr)=1\,.
\]
Performing this kind of reasoning for all the equations of array $A_1$ yields the following \emph{rescaling analysis:}
\[
\left\{\bea{rlllllllllll}
\rescaling{\ddot{x}} &\negesp=\negesp& \rescaling{\!f_1} &\negesp=\negesp& \max(\rescaling{\ddot{x}},\rescaling{\tension}) \\
\rescaling{\tension} &\negesp=\negesp& \rescaling{\!f_2} &\negesp=\negesp& \max(\rescaling{\ddot{y}},\rescaling{\tension}) \\
\rescaling{\postset{\dot{y}}} &\negesp=\negesp& \rescaling{\!f_3} &\negesp=\negesp& \max(\rescaling{\postset{\dot{x}}},\rescaling{\postset{\dot{y}}}) \\
1+\rescaling{\postset{\dot{x}}} &\negesp=\negesp& \rescaling{\!f_4} &\negesp=\negesp& \max(\rescaling{{\ddot{x}}},1+\rescaling{\postset{\dot{x}}}) \\
\rescaling{{\ddot{y}}} &\negesp=\negesp& \rescaling{\!f_5} &\negesp=\negesp& \max(\rescaling{{\ddot{y}}},1+\rescaling{\postset{\dot{y}}}) \\
\eea\right.
\]
We search for solutions of the rescaling analysis, satisfying the following requirement, expressing that state variables for restart should not be impulsive:
\begin{require}[on states]
	\label{sjhagcdfasjhngf}  State variables of the last instant possess rescaling offset zero.
\end{require}
The only solution making restart variables non-impulsive is the following:
\beq
\rescaling{\ddot{x}}=\rescaling{\tension}=\rescaling{\ddot{y}}= 1 = \rescaling{\!f_5}= \rescaling{\!f_2}= \rescaling{\!f_1}
\label{skksdjfhsgclugy}
\eeq
whereas other variables and functions have rescaling offset zero. Rescaling variables and equations using (\ref{gtrfdkyjrgs}), Step~\ref{klsdjfvhsgklijug}, yields:
\beq\bea{rclcrcrcl}
\rescaled{\ddot{x}} &\negesp\eqdef\negesp& \vsmall{\times}\ddot{x} &\negesp=\negesp& \postset{\dot{x}}{-}\dot{x} &;&
\rescaled{f_1} &\negesp\eqdef\negesp& \vsmall{\times}{f_1} 
\\ [1mm]
\rescaled{\ddot{y}} &\negesp\eqdef\negesp& \vsmall{\times}\ddot{y} &\negesp=\negesp& \postset{\dot{y}}{-}\dot{y}  &;&
\rescaled{f_5} &\negesp\eqdef\negesp& \vsmall{\times}{f_5} 
\\ [1mm]
\rescaled{\tension} &\negesp\eqdef\negesp& \vsmall{\times}\tension && &;&
\rescaled{f_2} &\negesp\eqdef\negesp& \vsmall{\times}{f_2} 
\eea \label{hgtgdfkjhglkj}
\eeq
By using (\ref{hgtgdfkjhglkj}), we have
\beq
\rescaled{A_1}:\left\{\bea{cclcll}
 0&\!\!\!\!=\!\!\!\!& 
\pmatch{\rescaled{\ddot{x}}} +{\rescaled{\tension}}x 
\\ [0mm]
 0&\!\!\!\!=\!\!\!\!& 
\rescaled{\ddot{y}} +\pmatch{\rescaled{\tension}}y
\\ [1mm]
{0} &\!\!\!\!{=}\!\!\!\!& {\postset{{(x\dot{x}{+}y\pmatch{\dot{y}})}}}
\\
[1mm]
0&\!\!\!\!=\!\!\!\!& \rescaled{\ddot{x}}-(\pmatch{\postset{\dot{x}}}-\dot{x})
\\
0&\!\!\!\!=\!\!\!\!& \pmatch{\rescaled{\ddot{y}}}-({\postset{\dot{y}}}-\dot{y})
\eea\right.
\label{skeufygsdkjyhdhg}
\eeq
\begin{ffinding}[retrying $\vsmall{\,:=\,}{0}$]\rm
	\label{kjdfhgsdkmjhg} At this point, letting $\vsmall{\,:=\,}{0}$ in $\rescaled{A_1}$ preserves regularity, structurally\,---\,actually, $\vsmall$ already disappeared from $\rescaled{A_1}$. \eproof
	\end{ffinding}
We are now ready to conclude. New positions were determined by the previous mode: $\postset{x}{\,=\,}x$ and $\postset{y}{\,=\,}y$, expressing that positions are continuous at mode change. The restart system is finally obtained by renaming, in (\ref{skeufygsdkjyhdhg}):
\begin{enumerate}
	\item the variables set by the previous mode $x,y,{\dot{x}},{\dot{y}}$, by the left-limits at mode change $x^-,y^-,\mmoins{\dot{x}},\mmoins{\dot{y}}$;
	\item the tail variables $\postset{x},\postset{y},\postset{\dot{x}},\postset{\dot{y}}$, by the restart values for the new mode $x^+,y^+,\pplus{\dot{x}},\pplus{\dot{y}}$; whereas
	\item other variables are auxiliary and are not renamed.
\end{enumerate}
The second renaming action is sensible only if the following requirement is satisfied by rescaled array (\ref{skeufygsdkjyhdhg}):
\begin{require}
	\label{hgfhfkjghm} Performing rescaling shall bring, in array $\rescaled{A_1}$, no variable attached to instants later than $t{+}\vsmall$.
\end{require} 
This requirement is indeed satisfied by rescaled array (\ref{skeufygsdkjyhdhg}). Finally, performing the above legitimate renaming yields the
\begin{equation}
\mbox{restart system}:\left\{\bea{ccll}
 0&\!\!\!\!=\!\!\!\!& \pmatch{\pplus{{x}}}-\mmoins{x}
\\ 
 0&\!\!\!\!=\!\!\!\!& \pmatch{\pplus{{y}}}-\mmoins{y}
\\ 
 0&\!\!\!\!=\!\!\!\!& \pmatch{\pplus{\dot{x}}}-\dot{x}^-+{\rescaled{\tension}}x^-
\\ 
 0&\!\!\!\!=\!\!\!\!& \pplus{\dot{y}}-\dot{y}^-+\pmatch{\rescaled{\tension}}y^-
\\ 
{0} &\!\!\!\!{=}\!\!\!\!& {x^-}\pplus{\dot{x}}{+}{y^-}\pmatch{\pplus{\dot{y}}}
\eea\right.
\label{keufygkuaygkuyg}
\end{equation}
Note that eliminating $\rescaled{\tension}$ from (\ref{keufygkuaygkuyg}) yields
\beq
\pplus{\dot{x}}\pplus{y}-\pplus{\dot{y}}\pplus{x} &=& \mmoins{\dot{x}}\mmoins{y}-\mmoins{\dot{y}}\mmoins{x}\,,  \label{dkfvjshdgfkjhy}
\eeq
which expresses the preservation of angular momentum, a physical invariant that was not explicitly specified in the original modeling: our approach ``discovered'' it. 
Also note that our solution does not give a meaning to variables $\ddot{x},\ddot{y},\tension$ beyond their status of being impulsive. Only the rescaled variable $\rescaled{\tension}$ is well defined. 
This approach is formalized and generalized in Sections~\ref{jashcgfackjhamgfjh} and~\ref{hgrfdkhgfcvkjmh}; before that, we shall proceed with background material.

\section{Background}
\label{hgfdjhgljhgb}
\label{jhgfvkjhhtrgfdhgfedbsv} \label{jhtfdjghfagfdv}
In this section, we recall our background, consisting of the structural analysis of DAE systems.
Throughout this section we consider DAE systems 
\beq
S&:&f_i(\mbox{the } x_j \mbox{'s and their derivatives}) = 0
\label{sepguiohp}
\eeq
where $x_1,\dots,x_p$ are the variables and $f_1{=}0,\dots,f_n{=}0$ are the equations. The functions $f_i:\bR^p\ra\bR$ are \emph{smooth}, i.e., of class $\mathcal{C}^m$ for sufficiently large integer $m$. System (\ref{sepguiohp}) is called \emph{static} (also called \emph{algebraic}) if no derivatives are involved. 
\begin{itemize}
	\item Call \emph{leading variables} of System~(\ref{sepguiohp}) the $d_j$-th derivatives\footnote{The notation $\pprime{k}{x}$ is adopted throughout this paper, instead of $x^{(k)}$, for the $k$-th derivative of $x$.} $\pprime{d_j}{x_j}$ for $j{=}1,\dots,p$, where $d_j$ is the maximal differentiation degree of variable $x_j$ through $f_1{=}0,\dotsc,f_n{=}0$. 
	\item Remaining variables $\pprime{m}{x_j}$ for $j=1,\dots,m$ and $0{\leq} m{<}d_j$ are called \emph{state variables.} 
	\item For $f{=}0$ an equation and $x$ a variable of $S$, we denote by $\sigma_{\!\edge}$ the maximal differentiation degree of $x$ in $f$; by convention, $\sigma_{\!\edge}=-\infty$ if $x$ does not occur in $f$.
\end{itemize}
System (\ref{sepguiohp}) is denoted by 
\beq
S=(F,X), &\mbox{where}& 
F =\bigl\{\,f_i\mid \, i=1,\dots,n\bigr\}, \mbox{ and }
X = \bigl\{{x_j}\mid j=1,\dots,p\bigr\}
\label{hgrdhtrjgfjynhtdg}
\eeq
In the following, \emph{we regard $X$ and $F$ as sets of variables and functions. In particular, the conjunction of systems $S_1$ and $S_2$ is  $(F_1{\cup} F_2,X_1{\cup} X_2)$}, which we write also $S_1{\cup} S_2$. 
\begin{convention}\rm
	\label{htjfgjhtrgfd} 
By abuse of language, we will identify the set of functions $F$ and the set of equations $F{=}0$ it defines. Hence, we will often call ``equation'' an $f\in{F}$.\eproof
\end{convention}
\begin{ccomment}\rm
	\label{skdjcsghcksdujgk} 
We can alternatively regard $S$ in (\ref{sepguiohp}) as a system of static equations with the leading variables as dependent variables (unknowns). The state variables are free variables, however subject to certain consistency conditions, as we shall see in the  Section~\ref{nhmgfcvkjh} on the \sigmamethod.\eproof
\end{ccomment}

\subsection{Structural analysis of systems of equations}

\label{jhtgfkjhseghfaerv}
In this section, we consider the subcase of \emph{static systems,} i.e., systems of the form (\ref{sepguiohp}) involving no derivative. We will need to consider additional \emph{free} variables collected in $Y$, whose value is set by some environment in the system $F(X,Y){=}0$. By contrast, we call \emph{dependent} variables the variables collected in $X$, meaning that $F(X,Y){=}0$ is to be solved for $X$.

The generic term of \emph{structural analysis} refers to any analysis of a system of equations that relies on its incidence graph only\,---\,such analyses are therefore much cheaper than numerical ones. 
To $S=(F,X)$ (since $Y$ is irrelevant here), we associate its \emph{incidence graph}, which is a nondirected bipartite graph $\cG_S=(F{\cup}X,E)$, where $(f,x)\in F{\times}X$ is an edge of $\cG_S$ if and only if variable $x$ occurs in equation $f$. 

A \emph{matching} is a subset $\matching\subseteq{\Edges}$ involving at most once each equation and variable of $S$. Matching $\matching$ is \emph{perfect} (we also say \emph{complete}) if all equations and all variables of $S$ are involved (implying $p{=}n$). Matching $\cM$ is \emph{variable-complete} (respectively \emph{equation-complete}) if all variables (respectively equations) are involved in it. 
A vertex of $\cG_S$ is called \emph{unmatched} in $\matching$ if it does not occur in $\cM$.
\begin{definition}
	Say that system $S$ is \emph{structurally nonsingular} (or \emph{structurally regular}) if its incidence graph $\cG_S$ possesses a \emph{perfect matching} $\matching$.
 \label{sjduygvjyst}
\end{definition}
Structural nonsingularity is a necessary (and generically sufficient) condition for the existence and uniqueness of solutions, see~\cite{benveniste:hal-03104030}. This condition is widely used in all software dealing with very large, but sparse, systems of equations. Structural analysis only deals with the incidence graph of a system of equations. Hence, in structural analysis, the following rule is enforced:
\beq\mbox{
\begin{minipage}{14cm}
	 if variable $x$ occurs with differentiation degree $m$ in equation $f{=}0$, then it occurs with differentiation degree $m{+}1$ in equation $\dot{f}{=}0$.
\end{minipage}
}  \label{jkdfshfgsdjhfg}
\eeq
For $S$ a (possibly nonsquare, i.e., $p{\neq}{n}$) system of equations, the \emph{Dulmage-Mendelsohn (DM) decomposition}~$\mbox{\cite{PothenF90,DBLP:journals/arc/BenvenisteCM20,benveniste:hal-02521747}}$ of $\cG_S$ uniquely partitions $S$ into three pairwise disjoint subsystems: 
\beq
S=S^{\sf o}\cup S^{\sf r}\cup S^{\sf u}\,, \label{eujhydwtye}
\eeq
 where $S^{\sf r}$ is \emph{structurally regular}, whereas $S^{\sf o}$ and $S^{\sf u}$ are the structurally \emph{over-determined} and \emph{under-determined} parts of $S$.
System $S$ is structurally regular if and only if its Dulmage-Mendelsohn decomposition yields $S^{\sf o}=S^{\sf u}=\emptyset$. 
\begin{notation}\rm
\label{htgfkjthndfj}
If $\cM\subseteq{F}{\times}X$ is a matching, the notation $f\in\cM$ means that there exists $x\in X$ such that $(f,x)\in\cM$; the corresponding meaning holds for the notation $x\in\cM$.\eproof
\end{notation}

\subsection{The \sigmamethod\ for DAEs}
\label{nhmgfcvkjh} 
Here we consider square DAE systems, meaning that $p{=}n$ in (\ref{sepguiohp}). We recall the so-called \emph{index reduction}, by which suitably differentiating the different equations of a DAE system makes it ODE-like. Suppose we have a solution to the following problem:
\beq
\mbox{
\begin{minipage}{14cm}
	Find a perfect matching $\cM$ for $\cG_S$ and integer valued \emph{equation offsets} $\{c_{\!f}\mid{f\in{F}}\}$ and \emph{variable offsets} $\{d_{x}\mid{x\in{X}}\}$,  satisfying the following conditions:
\end{minipage}
}
	\label{liftuerhpituhepu8} 
\eeq
\[\bea{rcl}
d_{x}-c_{f} &\!\!\!\geq\!\!\!& \weight{\edge} \; \mbox{ with equality if } (\edge)\in\cM
 \\
c_{f} &\!\!\!\geq\!\!\!& 0
\eea\]
Following Comment~\ref{skdjcsghcksdujgk}, we denote by $\indexreduced{S}$ the static system collecting the leading equations, i.e., such that:
\beq
\indexreduced{S}=\bigl(\indexreduced{F},\indexreduced{X}\bigr)&\mbox{where}& \left\{\!\!\!\!\bea{rcl}
\indexreduced{F}&\!\!\!\!=\!\!\!\!&\{f^{\prime c_{\!f}}\mid f{\in}F\} \\
\indexreduced{X}&\!\!\!\!=\!\!\!\!&\{x^{\prime d_{x}}\mid x{\in}X\}
\eea\right.  \label{kfjhvbkdfjhb}
\eeq
and we call it the \emph{index reduced system}. System $\indexreduced{S}$ has the following properties: inequality $d_{x}\geq c_{f}+\weight{\edge}$ holds for each $x\in{X}$, and $d_{x}=c_{f}+\weight{\edge}$ holds for the unique $f$ such that  $(\edge)\in\cM$. Consequently, the set of pairs 
\beq
\indexreduced{\cM} &\eqdef& \left\{
\left.
(\pprime{c_{f}}{f},\pprime{d_{x}}{x}) 
\;\right|
(f,x)\in\cM
\right\}
\label{dskfjhsfbhjn}
\eeq
 is a perfect matching for the system $\indexreduced{S}$, seen as a system of static equations having $\pprime{d_{x}}{x}$ as dependent variables, hence this system is structurally nonsingular, see Definition~\ref{sjduygvjyst}. Hence, having reduced the index of $S$ makes it ODE-like: $\indexreduced{S}$ (generically) uniquely determines the values of leading variables, knowing the values of state variables (belonging to $\consistency{X}$).\footnote{The very elegant \sigmamethod, proposed by J.~Pryce in 2001~\cite{Pryce01} translates Problem~(\ref{liftuerhpituhepu8}) into a pair of primal/dual Linear Programs, of which the primal determines $\cM$ and  the dual determines the offsets. This proves that variable and equation offsets are independent of the particular choice for $\cM$. Furthermore, it is proven that offsets can be selected pointwise minimal, and that such a minimal solution is unique. Index reduction is performed by all major DAE based modeling tools, by using the \sigmamethod\ or the original method by C.~Pantelides~\cite{Pantelides1988}. See~\cite{DBLP:journals/arc/BenvenisteCM20,benveniste:hal-02521747} for omitted details.} As announced in footnote~\ref{skdjhsgksujg}, the consideration of latent equations governs how many times each $f\in{F}$ sould be differentiable.

The static system 
\beq
\consistency{S}=\bigl(\consistency{F},\consistency{X}\bigr)&\!\!\!\!\mbox{where}\!\!\!\!& \left\{\!\!\!\!\bea{rcl}
\consistency{F}&\!\!\!\!=\!\!\!\!&\{\pprime{l}{f}\mid f{\in}F,0{\leq} l{<}c_{\!f}\} \\
\consistency{X}&\!\!\!\!=\!\!\!\!&\{\pprime{k}{x}\mid x{\in}X,0{\leq} k{<}d_{x}\}
\eea\right.  \label{dlfgukjhsldkjfh}
\eeq
defines the \emph{consistency conditions.} System $\consistency{S}$ sets the constraints that must be satisfied by any consistent initial condition for the considered DAE~\cite{Pantelides1988}. Assuming that the \sigmamethod\ has a solution, $\consistency{S}$ possesses no overdetermined subsystem; hence, consistent valuations for the state variables exist, and each of them determines a value for the leading variables. Actually, 
\beq
\consistency{\cM} &\eqdef& \left\{
(\pprime{m}{f},\pprime{d}{x}) 
\,\left|\bea{l}
f\in{F} \\ 0\leq m<c_{\!f} \\ d=d_x{-}c_{\!f}{+}m
\eea\negesp\right.\right\}
\label{kujyfglukhoih}
\eeq
defines an equation-complete matching for $\consistency{S}$.

Finally, the static system
\beq
\complete{S}=(\complete{F},\complete{X}\bigr)&\eqdef&\bigl(\indexreduced{F}\cup\consistency{F},\indexreduced{X}\cup\consistency{X}\bigr)
\label{ldfkgjhdlkufgh}
\eeq
is called the \emph{completion} of DAE $S$. Referring to the cup-and-ball example, (\ref{htdfkuytdshgd}) shows the decomposition $\complete{S}=\indexreduced{S}\cup\consistency{S}$.
\begin{ccomment}\rm 
	\label{hgrfdgtbfd} Systems $\indexreduced{S},\consistency{S},$ and $\complete{S}$ are all static, meaning that $x,\dot{x},\dots,\pprime{k}{x},\dots$ are seen as \emph{different variables,} not as successive derivatives of the same variable $x$---they are called \emph{dummy derivatives} in the DAE litterature~\cite{DummyDerivatives1993}.\eproof
\end{ccomment}

\subsection{Implicit function theorem}
\label{skdjfhsgkldjfgh}
This section reports useful results on the structural analysis of static systems of equations involving a small parameter. More precisely, we consider a system 
\beq
F(X,Z,\vsmall)=0\,, \label{sajxchsagfvckghf}
\eeq
where $F$ is a finite set of $\cC^1$-functions, $X$ and $Z$ are the sets of dependent and free variables, and $\vsmall{>}0$ is a small parameter. 
Recall the following result, which is a rephrazing of the Implicit Function Theorem (see, e.g., Theorem 10.2.2 in~\cite{DieudonneEA1}):
\smallskip
\begin{proposition}[Implicit Function Theorem]
	\label{solifujhsgldikugh} Assume that the valuation $(x,z)$ for the pair $(X,Z)$ satisfies $F(x,z,0)=0$ and the Jacobian matrix $\partial{F}/\partial{X}$ at $(x,z,0)$ is nonsingular. Then:
	\begin{enumerate}
		\item \label{jytghflkjhlk} There exists a neighborhood of $(z,0)$ in which, for every pair $(\hat{z},\vsmall)$, there exists $\hat{x}$ such that $F(\hat{x},\hat{z},\vsmall)=0$.
		\item \label{sdkfhskdjg} For any sequence $(z_n,\vsmall_n)$ converging to $(z,0)$, the solution $x_n$ of system $F(X,z_n,\vsmall_n){=}0$ converges to $x$.
	\end{enumerate}
\end{proposition}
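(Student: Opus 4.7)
The plan is to reduce to the classical Implicit Function Theorem by grouping the free variables and the small parameter into a single parameter block $Y = (Z,\vsmall)$, so that the system takes the form $F(X,Y) = 0$ with reference point $(x,y_0)$, where $y_0 = (z,0)$. By assumption, $F(x,y_0) = 0$, the map $F$ is of class $\mathcal{C}^1$, and the Jacobian $\partial F/\partial X$ evaluated at $(x,y_0)$ is nonsingular; so the hypotheses of the classical Implicit Function Theorem (Theorem~10.2.2 in~\cite{DieudonneEA1}) are satisfied at $(x,y_0)$ with $Y$ playing the role of the ``parameter'' variable.

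I would then invoke that theorem to obtain open neighborhoods $U$ of $x$ and $V$ of $y_0$, together with a $\mathcal{C}^1$ map $\phi : V \to U$ such that $\phi(y_0) = x$ and, for every $y \in V$, the element $\hat{x} = \phi(y)$ is the unique solution in $U$ of $F(\hat{x}, y) = 0$. Unfolding $y = (\hat{z}, \vsmall)$ gives exactly the existence assertion of Statement~\ref{jytghflkjhlk}; note that $V$ being an open neighborhood of $(z,0)$ accommodates in particular strictly positive values of $\vsmall$, as required by the statement.

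For Statement~\ref{sdkfhskdjg}, given any sequence $(z_n, \vsmall_n) \to (z, 0)$, I would argue that eventually $(z_n, \vsmall_n) \in V$, so that $x_n = \phi(z_n, \vsmall_n) \in U$ is well-defined and is the unique local solution of $F(X, z_n, \vsmall_n) = 0$ in $U$. Continuity of $\phi$ at $y_0$ would then yield $x_n \to \phi(z, 0) = x$, as required.

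Both parts follow by a direct reduction to the classical theorem once the parameters are grouped, and I do not anticipate any genuine technical obstacle. The one subtlety worth flagging is that the uniqueness of $x_n$ is only local\,---\,it holds within the neighborhood $U$, not globally. This is exactly what is needed for the rescaling analysis of later sections, where the reference solution $(x,z,0)$ is fixed in advance, but it should be stated carefully to avoid confusion with a global uniqueness claim that would not follow from the Implicit Function Theorem alone.
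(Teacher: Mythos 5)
Your proof is correct and matches the paper's intent exactly: the paper presents this proposition as a direct rephrasing of the classical Implicit Function Theorem (citing Theorem~10.2.2 in Dieudonn\'e) and gives no separate proof, so grouping $(Z,\vsmall)$ into a single parameter block and reading off existence and continuity of the implicit function is precisely the argument being appealed to. Your remark that uniqueness of $x_n$ is only local is well taken and worth keeping in mind when the proposition is later used through Rule~\ref{ksjhcgskdjgh}.
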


\section{Problem setting: hot restart of mode changes}
\label{jashcgfackjhamgfjh}
In this section we formalize the hot restart problem, that is, how to properly generate deterministic restart conditions for a new mode, knowing the previous mode. We consider mode changes in isolation.
Cascades of transient modes (of zero duration, as in the cup-and-ball with elastic impact) and Zeno phenomena (in which events of mode change accumulate in a finite duration of time) are not addressed.
\begin{definition}[mode change]
	\label{sdlifjvghsaldukfjgh} A \emph{mode change} is defined as a triple $(S^-,\changetime,S)$, where $S^-=(F^-,X^-)$ and $S=(F,X)$ are square\footnote{With as many equations as dependent variables.} DAE systems of the form $(\ref{sepguiohp})$ and $\changetime{\,\in\,}\bR_+$. $S^-$ is the \emph{previous mode}, $\changetime$ is the instant of mode change, and $S$ is the \emph{new mode}. 
\end{definition}
Without loss of generality, we can assume that: 
\begin{enumerate}
	\item $F^-$ is index-reduced, i.e., $F^-=\indexreduced{F^-}$; and  
	\item $F$ is completed with its latent equations, i.e., $F=\complete{F}$. 
\end{enumerate}
Recall that $\complete{X^-}$ (resp., $\complete{X}$) denotes the set collecting the leading and state variables of the previous (resp., new) mode. Then, we consider the subset $X^+\subseteq\complete{X}$ of state variables of the new mode. The hot restart consists in finding values for all of them.

As announced before, mode changes can be time- or state-based. More precisely, the following assumption holds throughout this work regarding mode changes:
\begin{assumption}[zero-crossing]
	\label{hytdrfjytrjytr} 
Mode change $(S^-,\changetime,S)$ is caused by a \emph{zero-crossing}, i.e., the crossing of zero from below by a $C^1$-function $g(\complete{X^-})$. Formally, setting $g(t)\eqdef g(\complete{X^-}(t))$, there exists some duration $\delta{>}0$ such that, if the dynamics of $S^-$ is enforced everywhere, then, $g(t)<0$ holds for $t\in(\changetime{-}\delta,\changetime)$ and $g(t)>0$ holds for $t\in(\changetime,\changetime{+}\delta)$.
\end{assumption}
\begin{problem}[hot restart]
	\label{dsjkchgfckajycgk} 
	For $(S^-,\changetime,S)$ a mode change, construct its \emph{hot restart} system, which is a system of equations $R(\complete{X^-},X^+)=0$, relating the left limits of variables and their derivatives at the mode change to the state variables for the new mode, satisfying the following conditions:
	\begin{enumerate}
	\item \label{skdjfvhsdgkjfh} System $R(\complete{X^-},X^+)$ should be \emph{deterministic}, meaning that the values for $X^+$ should be uniquely determined from the values for $\complete{X^-}$;
		\item \label{sjdfghvcsakdujfcyg} The restart should be \emph{consistent}, meaning that the values for $X^+$ should satisfy the consistency conditions $\consistency{F}$;
	\item	\label{kaujfyhgakujyhg} After proper rescaling to compensate for possible impulsive behaviors, \emph{every invariant dynamics \emph{(i.e., common to previous and new mode)} should be satisfied at the mode change.}
	\end{enumerate}
\end{problem}
Condition~\ref{skdjfvhsdgkjfh} reflects Requirement~\ref{ksdjyhfgaskjfhgfkj}.\ref{ksjdfhsgkfjhg}; condition~\ref{sjdfghvcsakdujfcyg} reflects Requirement~\ref{ksdjyhfgaskjfhgfkj}.\ref{mdfsjhfgvksjmhdgkj}. Condition~\ref{kaujfyhgakujyhg} is partly informal, as ``impulsive'' and ``rescaling'' are not formally defined yet. This condition will be formalized and studied in Section~\ref{kaerjfhgkaewrjyfhg}. It is vacuously true if no invariant dynamics exists; otherwise, it indicates that hot restart is a strict refinement of cold start.

Our solution for hot restart will actually take the form of a system of equations 
$$\overline{R}(\complete{X^-},Z,X^+)=0$$
 where $Z$ collects auxiliary variables. Eliminating $Z$ from $\overline{R}$ will yield a system 
$$R(\complete{X^-},X^+)=0$$
 satisfying Requirements~\ref{skdjfvhsdgkjfh},~\ref{sjdfghvcsakdujfcyg}, and~\ref{kaujfyhgakujyhg}. All of this was illustrated in the cup-and-ball example by (\ref{keufygkuaygkuyg},\ref{dkfvjshdgfkjhy}), where $Z=\{\rescaled{\tension}\}$.

\section{Mode change arrays}
\label{hgrfdkhgfcvkjmh}
In this section, we introduce mode change arrays, as the main data structure of our approach. We first formalize the discrete time setting we introduced in Approach~\ref{jhgfdytrhgf}. For a fixed \emph{time step} $\vsmall{\,>\,}0$, we use the time line 
\beq
\bT \eqdef \{n\vsmall\mid n\in\bN\} \, ,  \label{jhtfdkujughkjn}
\eeq
in a neighborhood of the mode change instant, where $\bN$ denotes the set of nonnegative integers. Assuming $\vsmall$ small enough and with reference to Assumption~\ref{hytdrfjytrjytr}, we redefine the \emph{instant of detection of the mode change} as being the first instant belonging to $\bT\cap(\changetime{-}\delta,\changetime{+}\delta)$, such that the zero-crossing function $g(\complete{X^-})$ is positive. To ensure that mode change detection remains causal, the new mode begins at instant $\changetime{+}\vsmall$; by abuse of notation, we redefine $\changetime$ to be this instant. To summarize,
\beq
\mbox{
\begin{minipage}{14cm}
	 in our discretized setting, the zero-crossing is detected at instant $\changetime{-}\vsmall$ and the new mode is effective at instant $\changetime$ (see Fig.\,\ref{seldoifuhsglsig}).
\end{minipage}
} \label{sjdcytsafcjsyutf}
\eeq
\begin{figure}[!h]
\centerline{\includegraphics[width=0.6\textwidth]{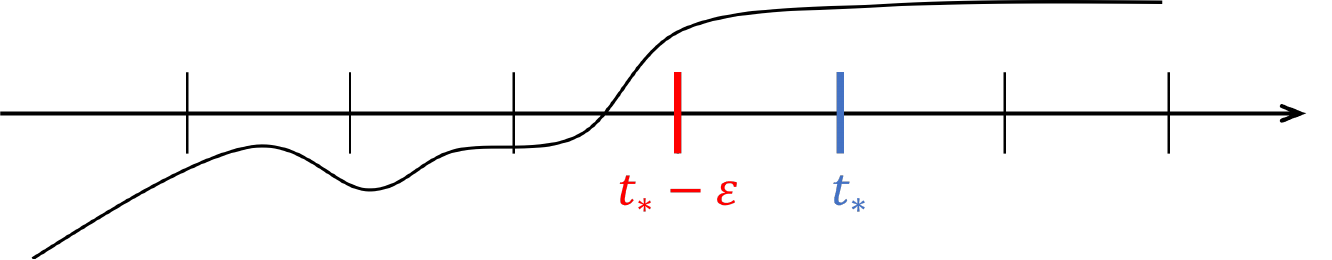}}
\caption{\sf Zero-crossing in discrete time. The detection instant is in {\color{red}red} and the first instant of the new mode is in {\color{blue}blue}.}
	\label{seldoifuhsglsig} 
\end{figure}
\subsection{Shifting and differentiating}
\label{jghtsdhgfsdhfgsd}
Since we are interested in the system behavior around the instant of mode change, we take $\changetime$ as time reference: 
\clearpage
\begin{notation}[variables and functions]\rm \
	\label{dksfjvhsgkvj} 
	\begin{enumerate}
		\item We assume an underlying set $\BaseVars$ of \emph{base variables}. 
		For $x\in\BaseVars$ and integers $k\in\bZ,m\in\bN$:
\beq\bea{rcl}
x &\mbox{shall denote}& x(\changetime) \\
\ppostset{k}{x} &\mbox{shall denote}& x(\changetime{+}k\vsmall) \\
\pprime{m}{x} &\mbox{shall denote}& \pprime{m}{x}(\changetime) \\
\pprimeppostset{m}{k}{x} &\mbox{shall denote}& \pprime{m}{x}(\changetime{+}k\vsmall) 
\eea \label{kduhgvskdug}\eeq
and we denote by $\allVars$ the set of all $\pprimeppostset{m}{k}{x}$. Notations (\ref{kduhgvskdug}) extend to $\allVars$. 
\item 
We also assume an underlying set $\baseFunctions$ of \emph{base functions}. For $f\in\baseFunctions$ we define $\pprimeppostset{m}{k}{f}$ by $\pprimeppostset{m}{k}{f}(X)\eqdef f(\pprimeppostset{m}{k}{X})$ and $\allFunctions$ follows accordingly.
\item \label{kfdjshfgdskmjhg} 
For $S$ a static system of equations over $\allVars$, we denote by $\Vars_S$ the set of variables occurring in $S$. When $S=(\{f\},X)$, we simply write $\Vars_{\!f}$.\eproof
\end{enumerate}
\end{notation}
Following step~\ref{jhgfdytrhgf} of Approach~\ref{isduyfsegdiksuygu}, derivatives are discretized by enforcing, for every $x\in\allVars$, the \emph{Euler identity}
\beq
\dot{x} =\vsmall^{-1}(\postset{x}-x)\,. \label{skdufhgwkuy}
\eeq
 \begin{notation}\rm
	 \label{kjuygfkjhjglkj}
 For $y\in\BaseVars$ and  $g\in\baseFunctions$, we define: 
\[\bea{rclcl}
x=\pprimeppostset{m}{k}{y}\in\allVars &\Ra&(\yfun{x},\mfun{x},\kfun{x}) &\negesp\eqdef\negesp& (y,m,k) \\
f=\pprimeppostset{m}{k}{g}\in\allFunctions&\Ra&
(\gfun{f},\mfun{f},\kfun{f}) &\negesp\eqdef\negesp& (g,m,k) \hspace*{5mm}\mbox{\eproof}
\eea\]
\end{notation}
\begin{definition} 
\label{jhtgfjhgrsgdfk} Let $\sim\;\subseteq\;\allVars{\times}\allVars$ be defined by:
\beqq
x_1 \sim x_2 &\mbox{iff}&
\yfun{x_1}=\yfun{x_2} \mbox{ and } \sizesim{x_1}=\sizesim{x_2}\,,
\eeqq
where $\sizesim{x}=\mfun{x}{+}\kfun{x}$ is the \emph{total degree} of $x$. Define the partial order $\preceq$ on $\allVars$ by $x_1\preceq x_2$ iff $\yfun{x_1}=\yfun{x_2}$ and $\sizesim{x_1}\leq\sizesim{x_2}$, and write $\prec~\eqdef~\preceq\land\neq$.
\end{definition}  
As mentioned in Section~\ref{jhtfdjghfagfdv}, 
\beq
\mbox{
\begin{minipage}{14cm}
	 DAEs are seen as systems of static equations over base variables, their derivatives, and their shifts, i.e., as \emph{systems of static equations over $\allVars$}.
\end{minipage}
} \label{hgrshtrsabrefgf}
\eeq
We now investigate some consequences of identity~(\ref{skdufhgwkuy}). Repeatedly applying this identity yields the \emph{Euler identity}
\beq
\pprime{n}{y} 
=\vsmall^{-n}{\sum_{i=0}^{n}}\left(\negesp\bea{c}n \\ i\eea\negesp\right)(-1)^i\ppostset{(n-i)}{y}\,, 
&\mbox{where}&\left(\negesp\bea{c}n \\ i\eea\negesp\right) = \frac{n!}{i!(n-i)!} \label{grsfbgfbdnbv}
\eeq
is the binomial coefficient. 

Consider two variables $x{\sim}z$. We have $x{=}\pprimeppostset{m_x}{k_x}{y}$ and $z{=}\pprimeppostset{m_z}{k_z}{y}$, where $m_x{+}k_x=m_z{+}k_z$. If, in addition, $x{\neq}z$ holds, then two cases can occur: either $m_x{>}m_z$, or $m_x{<}m_z$. We assume $m_x{>}m_z$, set $n=m_x{-}m_z=k_z{-}k_x>0$, and apply (\ref{grsfbgfbdnbv}) with this pair $(y,n)$. Differentiating $m_z$-times and shifting $k_x$-times the two sides of (\ref{grsfbgfbdnbv}) yields the (generalized) \emph{Euler identity}
\beq
0&\negesp=\negesp&x-\vsmall^{-n}\sum_{i=0}^{n}\left(\!\!\!\bea{c}n \\ i\eea\!\!\!\right)(-1)^i\ppostset{(-i)}{z} 
\mbox{, denoted by}
\nonumber
\\
0&\negesp=\negesp&\neuler{n}{x}{z}(x,z,U), \mbox{ where }U=\bigl(\ppostset{(-i)}{z}\bigr)_{i{=}1,\dots,n}
\label{htgrsfdjyrnwsefgdjt} \label{utrwshjyhtrf}
\eeq
where tuple $U$ collects the variables that are $\prec x$. 
\begin{convention}\rm 
	\label{cjdhgfcjsdhgf} We write $\neuler{n}{x}{z}$ when the arguments need not be mentioned, or even $\euler{x}{z}$ when only $x$ and $z$ matter.\eproof
\end{convention}
We summarize the above discussion in the following lemma, where Notations~\ref{kjuygfkjhjglkj} are used:
\begin{lemma}
	\label{ujytsagrehd} Any two different variables $x\sim z$ such that $\mfun{x}>\mfun{z}$ are related by  Euler identity $\euler{x}{z}$.
\end{lemma}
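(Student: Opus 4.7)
The plan is to derive $\euler{x}{z}$ from the basic Euler identity (\ref{grsfbgfbdnbv}) by structural differentiation and a shift. From the hypotheses, write $x = \pprimeppostset{m_x}{k_x}{y}$ and $z = \pprimeppostset{m_z}{k_z}{y}$ for the common base variable $y = \yfun{x} = \yfun{z}$. The relation $x \sim z$ gives $m_x + k_x = m_z + k_z$, and combined with $\mfun{x}=m_x > m_z = \mfun{z}$ this yields the positive integer $n := m_x - m_z = k_z - k_x > 0$.

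First I would apply (\ref{grsfbgfbdnbv}) to $y$ with this $n$, expressing $\pprime{n}{y}$ as $\vsmall^{-n}$ times an alternating binomial sum of forward shifts $\ppostset{(n-i)}{y}$. Next, following the structural rule (\ref{jkdfshfgsdjhfg}), I would differentiate the identity $m_z$ times: each shift $\ppostset{j}{y}$ is lifted to $\pprimeppostset{m_z}{j}{y}$, and the left-hand side becomes $\pprime{(n+m_z)}{y} = \pprime{m_x}{y}$.

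Then I would shift the resulting identity by $k_x \vsmall$. The left-hand side turns into $\pprimeppostset{m_x}{k_x}{y} = x$, while each right-hand side term becomes $\pprimeppostset{m_z}{(n-i+k_x)}{y}$. Since $k_x + n = k_z$, the index $i = 0$ gives exactly $\pprimeppostset{m_z}{k_z}{y} = z$, and the indices $i = 1,\dots,n$ give $\pprimeppostset{m_z}{(k_z - i)}{y} = \ppostset{(-i)}{z}$, which is precisely the tuple $U$ of (\ref{htgrsfdjyrnwsefgdjt}). Rearranging yields $\neuler{n}{x}{z}(x,z,U) = 0$, i.e.\ the Euler identity $\euler{x}{z}$. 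For completeness one checks that every $u_i \in U$ satisfies $\sizesim{u_i} = m_z + k_x + n - i = \sizesim{x} - i < \sizesim{x}$, hence $u_i \prec x$, confirming the description of $U$ preceding the lemma.

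The argument is elementary; the only care needed is the book-keeping of the two offsets $m$ and $k$ to see that the term recovering $z$ is the $i=0$ term of the forward-shift Euler sum (not the $i=n$ term), the shift $k_x$ being precisely what converts forward shifts of $\pprime{m_z}{y}$ into backward shifts of $z$.
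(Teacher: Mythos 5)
Your proof is correct and is essentially the derivation the paper itself gives in the paragraph preceding Lemma~\ref{ujytsagrehd}: write both variables over the common base $y$, set $n=m_x-m_z=k_z-k_x$, apply the basic Euler identity~(\ref{grsfbgfbdnbv}) to $(y,n)$, then differentiate $m_z$ times and shift $k_x$ times. The extra bookkeeping you add (the $i=0$ term recovers $z$, the $i\geq 1$ terms form $U$ with $\sizesim{u_i}=\sizesim{x}-i$ so $u_i\prec x$) is a welcome verification of the claim about $U$ but does not change the route.
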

\begin{definition}[$\sim$\,--\,closed system]
	\label{uyytyfgkutfjhh} Say that static system $S{=}(F,X)$, where $X{\subseteq}\allVars$, is \mbox{\emph{$\sim$\,--\,closed}} if it includes all the Euler identities $\euler{x}{z}$ for any two $x, z {\in}\Vars_S$ such that $x{\sim}z$ and $\mfun{x}{>}\mfun{z}$. For any such system $S=(F,X)$, its \emph{$\sim$\,--\,closure},  denoted by $\simclosure{S}$, is defined by:
\beq
\simclosure{S} = \bigcap\left\{ T
\,\left|\negesp\;\bea{l}
T \supseteq S \\ 
\forall x,z{\in}\Vars_T,\; 
\left[\negesp\bea{c} x{\sim} z , x{\neq}z \\ \mfun{x}{>}\mfun{z}
\eea\negesp\right]\Ra\;\euler{x}{z}\in{T}
\eea\negesp\right.\right\}
\label{htrgfjrgdfhtrgdffn}
\eeq
where $T$ ranges over the set of all extensions of system $S$.

\end{definition}
The following result justifies the above definition:
\begin{lemma}
	\label{hngtedgrefsdg} $\simclosure{S}$ always exists and is unique. It can be obtained by performing the following recursion, until fixpoint:
	\begin{enumerate}
		\item Initialization: $F_0\gets F,X_0\gets X$;
		\item While $X_{n}{\supset} X_{n-1}$, update $(F_{n+1},X_{n+1}){\gets}(F_n,X_n)$ by
		\begin{itemize}
			\item adding to $F_n$ every Euler identity $\euler{x}{z}$ such that 
			$x,z\in X_n$, $x\sim z$, $x\neq z$, and $\mfun{x}{>}\mfun{z}$;
			\item adding to $X_n$ every $v$ involved in $\euler{x}{z}$.
		\end{itemize}
		\end{enumerate}
\end{lemma}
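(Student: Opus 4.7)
The plan is to split the argument into three parts: existence and uniqueness of $\simclosure{S}$ as a lattice-theoretic least element, correctness of the iterative construction, and termination.

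For existence and uniqueness, I would establish that the family $\mathcal{C}_S$ of all $\sim$-closed extensions of $S$ forms a Moore family under intersection. It is nonempty, since the ambient ``universal'' system consisting of all variables $\pprimeppostset{m}{k}{y}$ and functions $\pprimeppostset{m}{k}{g}$ (with $y\in\BaseVars$, $g\in\baseFunctions$, $m\in\bN$, $k\in\bZ$) is trivially $\sim$-closed and extends $S$. It is closed under arbitrary intersection by a standard universal-Horn argument: if $\{T_\alpha\}_\alpha \subseteq \mathcal{C}_S$ and $x,z \in \Vars_{\bigcap_\alpha T_\alpha}$ satisfy the hypotheses of Definition~\ref{uyytyfgkutfjhh}, then $x,z \in \Vars_{T_\alpha}$ for each $\alpha$, so each $T_\alpha$ contains $\euler{x}{z}$, whence so does $\bigcap_\alpha T_\alpha$. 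Consequently the intersection~(\ref{htrgfjrgdfhtrgdffn}) defining $\simclosure{S}$ is well-defined, unique, and is the minimum element of $\mathcal{C}_S$.

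For correctness of the iteration, I would set $T_n \eqdef (F_n,X_n)$ and prove two inclusions. The containment $T_n \subseteq \simclosure{S}$ is by induction on $n$: the base case $T_0 = S$ is immediate, and at step $n{+}1$ every Euler identity $\euler{x}{z}$ added has $x,z \in X_n \subseteq \Vars_{\simclosure{S}}$, so by $\sim$-closedness of $\simclosure{S}$ both $\euler{x}{z}$ and each auxiliary variable $\ppostset{-i}{z}$ it introduces already belong to $\simclosure{S}$. Conversely, once the loop exits with $X_n = X_{n-1}$, the last update has installed every required Euler identity for the pairs in $X_n$, so $T_n$ itself lies in $\mathcal{C}_S$, and minimality of $\simclosure{S}$ forces $\simclosure{S} \subseteq T_n$. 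The two inclusions yield $T_n = \simclosure{S}$.

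The main obstacle is termination of the iteration when $\Vars_S$ is finite, since in principle a cascade of Euler identities could keep adding variables of ever smaller total degree. I would control the reachable variables one base $y$ at a time. By Notation~\ref{kjuygfkjhjglkj}, each added variable $\ppostset{-i}{z}$ preserves both $\yfun{z}$ and $\mfun{z}$, hence the set $L_y \eqdef \{\mfun{v} : v \in \Vars_S,\, \yfun{v} = y\}$ of differentiation orders ever reached for base $y$ is finite and stable throughout the run. At every level $m \in L_y$ the shifts in play are bounded above by their initial maximum, since newly added shifts $\kfun{z} - i$ are strictly smaller than existing ones. A top-down induction on $L_y$ then yields a matching lower bound: the topmost level receives no addition (no strictly higher level can trigger one), while every shift added at a lower level $m_z$ via $\euler{x}{z}$ lies in the interval $[\kfun{x}, \kfun{z} - 1]$ (using $\sizesim{x} = \sizesim{z}$) and is therefore bounded below by a shift already present at a strictly higher level. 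The reachable shifts at each $(y,m)$ thus lie in a finite integer interval, the monotone chain $X_0 \subseteq X_1 \subseteq \cdots$ saturates after finitely many steps, and the iteration terminates.
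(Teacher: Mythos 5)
Your proof is correct and considerably more detailed than the paper's, which dispatches the whole lemma in two sentences: termination is asserted because ``the total degree of the added variables is strictly decreasing,'' and the limit is simply said to satisfy the fixpoint equation. The Moore-family argument for existence and uniqueness of the intersection, and the two-inclusion argument identifying the iteration's fixpoint with $\simclosure{S}$, are the standard but unstated scaffolding that you make explicit. The genuine divergence is in the termination argument. The paper's one-dimensional measure, strictly decreasing total degree $\sizesim{x}=\mfun{x}{+}\kfun{x}$, yields termination only once one also knows that the total degrees encountered are bounded below and that each differentiation level contributes only finitely many reachable shifts; neither fact is immediate, since the shift $\kfun{x}$ is a priori an arbitrary integer. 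You supply the missing bounds directly by a top-down induction on the finite, stable set of differentiation orders per base variable, using the observation that the shifts introduced by $\euler{x}{z}$ fall in $[\kfun{x},\kfun{z}{-}1]$ and are hence sandwiched by a shift already present at a strictly higher level. This recovers, with complete bookkeeping, what the paper's ``strictly decreasing total degree'' is evidently meant to convey: your per-level argument buys full rigor at the cost of a finer case analysis, and no step is missing or incorrect.
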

\begin{proof}
	The recursion terminates in finitely many steps, since the total degree of the added variables is strictly decreasing. 	The limit of the recursion satisfies the fixpoint equation (\ref{htrgfjrgdfhtrgdffn}).
\end{proof}
%
%
\begin{convention}\rm 
	\label{skdjhasdgfckjsadhg} Unless the joint consideration of $S$ and its $\sim$\,--\,closure $\simclosure{S}$ is needed,  superscript ``$\;\simclosure{~}\;$'' shall be omitted.\eproof
\end{convention}

\subsection{Mode change arrays and facts}
\label{geadshgshdjrgnsf}
In this section, we start addressing Step~\ref{jytrgdhtsgdyntgfv} of Approach~\ref{isduyfsegdiksuygu}, for mode change $(S^-,\changetime,S)$ following Definition~\ref{sdlifjvghsaldukfjgh}.
We use the notations from Section~\ref{nhmgfcvkjh}: for $F$ a DAE system, $\indexreduced{F},\consistency{F}$, and $\complete{F}$ refer to the ``index-reduced'', ``consistency'', and ``completion'' of $F$ once index reduction has been performed. When considering mode changes, we use notations (\ref{sjdcytsafcjsyutf},\ref{kduhgvskdug}).
\begin{notation}\rm 
	\label{sdolifkjshgikldu} 
	Let $S=(F,X)$ be any DAE system, and let $\changetime$ be a time origin. Then, for $k\in\bZ$, $(\ppostset{k}{F},\ppostset{k}{X})$ denotes the dynamics of the mode $S$ snapshot at instant $\changetime{+}k\vsmall$.\eproof
\end{notation}
For example, if function $f:x{+}\dot{y}$ belongs to $F$, then $f$ refers to $x(\changetime){+}\dot{y}(\changetime)$, and $\ppostset{k}{f}$ refers to $x(\changetime{+}k\vsmall){+}\dot{y}(\changetime{+}k\vsmall)$.
\begin{convention}\rm
	\label{slrikfgsjhfglsiugh} 
	Throughout this section, arrays are implicitly $\sim$\,--\,closed by completing them with the Euler identities (\ref{htgrsfdjyrnwsefgdjt}) following Lemmas~\ref{ujytsagrehd} and~\ref{hngtedgrefsdg}, and Convention~\ref{skdjhasdgfckjsadhg} applies.\eproof
\end{convention}
To the new mode $S{=}(F,X)$ we associate the infinite array $A$ of equations obtained by stacking successive shifts of the completion $\complete{F}$:	
	\beq
A \eqdef \bigcup_{k\geq 0}\ppostset{k}{\complete{F}} 
= \left\{\left.
\pprimeppostset{m}{k}{f} \right|
f\in{F},0\leq m\leq c_{\!f}, k\geq 0
\right\}.
\label{fdkjhsfkjhhj} 
\label{skdujcgskdjgyvckuj} 
\eeq
The set of \emph{past variables} of $A$ is its subset of variables whose value is determined by the previous mode:
\beq
\bea{rcl}
\Vars^- \eqdef \bigl\{
x\in\Vars_A
\left| \;
\exists x^-\in\bigcup_{k<0}\ppostset{k}{(X^-)} \mbox{~ s.t. ~} x^-{\sim\,} x
\right.
\bigr\}\,.
\eea
\label{htdjhtdmhg}
\eeq
The previous mode $S^-$ acts on the new mode $S$ only through the set $\Vars^-$ of past variables. 
To illustrate past variables, if $\pprime{m}{y}{\in}{X^-}$, then $x=\pprimeppostset{m}{(-k)}{y}\sim\ppostset{(m-k)}{y}\in\Vars^-$ whenever $0{<}k{\leq}m$. 

Referring to (\ref{htdjhtdmhg}), since $x^-{\sim\,} x$, $x$ is computed from $x^-$ via the Euler identity relating them (Lemma~\ref{ujytsagrehd}). Hence, the set of dependent variables of $A$ is
\beq
\dependent{\Vars_{A}} &\eqdef& \Vars_{A} \setminus \Vars^-\,,    \label{kljfghbkdjhb}
\eeq
where $\Vars_{A}$ is the set of \emph{all} the variables involved in $A$ (both leading and state variables).  The reason for taking the dependent variables of $A$ as in (\ref{kljfghbkdjhb}) is that we follow the causality principle that the past cannot be undone. Variables that were set by the previous mode cannot be modified by the new mode, thus should not be considered as dependent variables of $A$. Besides this, all the variables involved in $A$ need to be evaluated when performing the mode change (including the state variables).

At mode changes, additional equations hold, due to zero-crossings (Assumption~\ref{hytdrfjytrjytr}):
\begin{lemma}
	\label{jhgffdhghkjh} 
If the mode change is caused by a zero-crossing of $g(X^-)$, where the tuple $X^-$ of variables possesses continuous trajectories in the previous mode, then $g(X^-){=}0$ holds, up to an $O(\vsmall)$, at instants $\nstime,\dots,{\nstime{+}\ell\vsmall}$, where $\ell$ is the largest integer such that $\ppostset{\ell}{(X^-)}\subseteq\Vars^-$.
\end{lemma}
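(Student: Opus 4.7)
My plan chains three ingredients: the intermediate value theorem applied to the zero-crossing of $g\circ X^-$, a local Lipschitz bound on that composition, and a standard Euler-truncation estimate linking the continuous trajectory to its discrete past-variable extrapolation. To start, I would invoke Assumption~\ref{hytdrfjytrjytr} together with the discretization convention~(\ref{sjdcytsafcjsyutf}): after the one-step shift redefining $\changetime$, the continuous function $t\mapsto g(X^-(t))$ is non-positive at $\changetime{-}2\vsmall$ and strictly positive at $\changetime{-}\vsmall$, the detection instant being by definition the \emph{first} grid point where $g$ has become positive. Continuity of $X^-(\cdot)$ (hypothesis) and of $g$ then give, by the intermediate value theorem, some $t_0\in[\changetime{-}2\vsmall,\changetime{-}\vsmall]$ with $g(X^-(t_0))=0$.

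Next, on a closed neighborhood of $\changetime$ contained in the previous mode's time interval, $X^-(\cdot)$ is sufficiently smooth (cf.\ footnote~\ref{skdjhsgksujg}) and $g$ is $C^1$, so $h(t)\eqdef g(X^-(t))$ is Lipschitz there with some constant $L$. For any $k\in\{0,\dots,\ell\}$, $|\changetime{+}k\vsmall - t_0|\le(\ell{+}2)\vsmall$, and since $\ell$ is a fixed integer independent of $\vsmall$,
\[
\left|g\bigl(X^-(\changetime{+}k\vsmall)\bigr)\right|=\left|h(\changetime{+}k\vsmall)-h(t_0)\right|\le L(\ell{+}2)\vsmall = O(\vsmall).
\]

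The final step reconciles this with the discrete tuple $\ppostset{k}{(X^-)}$ actually occurring in the statement. By~(\ref{htdjhtdmhg}), the hypothesis $\ppostset{k}{(X^-)}\subseteq\Vars^-$ asserts that each component of this tuple is $\sim$-equivalent to a past variable, so Lemma~\ref{ujytsagrehd} and the generalized Euler identities~(\ref{htgrsfdjyrnwsefgdjt}) express each component as a forward-Euler extrapolation of the previous mode's smooth past trajectory. Since $\ell$ is fixed, only boundedly many such identities chain together; the standard truncation-error bound for forward Euler then gives agreement between the discrete value and the continuous $X^-(\changetime{+}k\vsmall)$ up to $O(\vsmall)$, and one more application of the Lipschitz property of $g$ propagates the $O(\vsmall)$ bound to $g$ itself.

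The hard part is this last step: the recursive $\sim$-closure of Lemma~\ref{hngtedgrefsdg} can chain several generalized Euler identities to compute one high-order derivative at $\changetime{+}k\vsmall$ from past data, and one must verify that the accumulated truncation error stays $O(\vsmall)$. Finiteness of $\ell$ and sufficient smoothness of the trajectories make this accumulation controllable by a bounded constant times $\vsmall$, but careful bookkeeping of each chained step's error is where the main work lies.
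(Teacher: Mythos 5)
Your argument follows the same strategy as the paper's terse proof: exploit continuity/smoothness of the trajectory plus Lipschitz regularity of $g$ to bound the drift of $g\circ X^-$ over finitely many $\vsmall$-steps, starting from the zero that Assumption~\ref{hytdrfjytrjytr} places within $O(\vsmall)$ of $\changetime$. Your steps~1 and~2 are essentially the paper's one-line proof spelled out with the intermediate value theorem and an explicit Lipschitz constant; step~3, which reconciles the Euler-extrapolated discrete tuple $\ppostset{k}{(X^-)}$ with the continuous trajectory values, makes explicit a point that the paper's proof passes over silently, and your observation that boundedness of $\ell$ keeps the accumulated truncation error at $O(\vsmall)$ closes it correctly even though the bookkeeping is left informal.
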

 \begin{proof}
	 The continuity of trajectories of  $X^-$ within the previous mode and the smoothness of function $g$, together imply that the value of $g(\complete{X^-})$ changes by at most $O(\vsmall)$ during a finite number of discrete time steps. 
 \end{proof}
\begin{definition}[facts]
	\label{dkjhgdkhgl} For $g(\complete{X^-})$ a zero-crossing function:
	\begin{enumerate}
	\item Call \emph{root fact} any equation that is entailed, up to $O(\vsmall)$, by $g(X^-({t_*})){=}0$;	the set $\facts{}^-$ of all root facts is independent from the new mode;
	\item With $g$ a root fact and $\ell$ as in Lemma~\ref{jhgffdhghkjh}, call \emph{fact} any equation of $A$ that is entailed, up to $O(\vsmall)$, by the system
	\[
g(X^-){=}0\;,\;\dots\;,\;g\ppostset{\ell}{(X^-)}{=}0\;.
\]
The set of all facts is denoted by $\facts{A}$.
\end{enumerate}
\end{definition}
Since facts hold true up to an $O(\vsmall)$ and do not involve dependent variables, we set the following convention:
\begin{convention}\rm
	\label{conv:remove-facts} In the sequel, facts are removed from $A$, i.e.,
$A$ is replaced by $A \,\setminus\, \facts{A}$.\eproof
\end{convention}
We continue addressing Step~\ref{jytrgdhtsgdyntgfv} of Approach~\ref{isduyfsegdiksuygu} by introducing finite arrays, based on the notion of height. A \emph{height} for mode $S=(F,X)$ denotes a function\footnote{Reasons for considering \emph{nonconstant} heights are stated in Comment~\ref{hjgfjhgfkjhgkjh}.}
\beq
\mbox{$K:F\ra\bN$,~~ denoted ~~$f\mapsto K_{\!f}$}\,. \label{ksdjfghvblkzjch}
\eeq
In the following definition, $A$ denotes the infinite array (\ref{fdkjhsfkjhhj}), and its set $\dependent{\Vars_A}$ of dependent variables is defined by (\ref{kljfghbkdjhb}):
\begin{definition}[mode change array]
	\label{jsahgdfkjagdf} 
		 Any height $K$ for mode $S{=}(F,X)$ defines the $K$-\emph{mode change array}
		\beq
		A_K=\left\{\left.
		\pprimeppostset{m}{k}{f}
		\,\right|\,
		f{\in}{F}\,,\,0{\leq} m{\leq} c_{\!f}\,,\,0{\leq} k{\leq} K_{\!f}
		\right\}\subset A\,.
		\label{skudjcghskdjhcg}
	\eeq
The set of dependent variables of $A_K$ is $\dependent{\Vars_{A_K}}=\dependent{\Vars_A}{\cap}\Vars_{A_K}$. Define	the tail of $A_K$ and its set of dependent variables by
	\beq
	\ttail{A_K} = \,\left\{ \pprimeppostset{m}{K_{\!f}}{f} \left|\bea{l}f{\in}{F} \\ 0\leq m\leq c_{\!f} \eea\negesp\right.\right\}\,,&\negesp\negesp&
	\ttail{\Vars_{A_K}} = \left\{
	\pprimeppostset{d}{K_{\!f}}{x} \;\left|\bea{l}
	f{\in}{F},(\edge)\in\cM \\
	0\leq m\leq c_{\!f} \\
	d=\sigma_{\!\edge}+m
	\eea\negesp\right.\right\}
\nonumber \\
	\mbox{ \emph{and} }&\negesp\negesp&\hhead{\Vars_{A_K}}\eqdef {\Vars_{A_K}}\setminus\ttail{\Vars_{A_K}}
	 \label{hjgtflukkjhjhlkj}
	\eeq
	where $\cM$ is the matching found by solving the \sigmamethod\ for $S$. We define the subset of equations that must be enabled by
		\beq\bea{rcl}
\enable{A_K}&=& 
\ttail{A_K} \cup \bigl(
A_K\cap\;\bigcup_{k\geq 0} \,\ppostset{k}{(F^-)}
\bigr)
\eea
\label{jytfjytjfgyluk}
\eeq
and we set $\disable{A_K} \eqdef A_K\,\setminus\enable{A_K}$.
\end{definition}
The rationale for (\ref{jytfjytjfgyluk}) is that 1) following Requirement~\ref{ksdjyhfgaskjfhgfkj}, we want the tail to be enabled, and 2) the invariant equations (belonging to both modes) should never be disabled. 
\begin{lemma}
	\label{sldkjfhsblkj} For any given base variable $x$ and $0{\leq}d{\leq}d_x$, the set $\{\pprimeppostset{m}{k}{z}\in\ttail{\Vars_{A_K}}\mid z{=}x,m{=}d\}$ possesses cardinality $1$.
\end{lemma}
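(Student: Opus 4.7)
The plan is to exploit the bijective nature of the $\Sigma$-method matching $\cM$. I would start by unfolding Definition~\ref{jsahgdfkjagdf}: any element of $\ttail{\Vars_{A_K}}$ can be written as $\pprimeppostset{d}{K_f}{x}$ for some triple $(f, m, x)$ with $(f,x) \in \cM$, $0 \le m \le c_f$, and $d = \sigma_{f,x} + m$. In particular, the shift $k$ of a tail variable is forced to be $K_f$ once $f$ is known, so each tail variable is entirely specified by this triple.

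I would then fix the base variable $z = x$. Because $\cM$ is a perfect matching (Definition~\ref{sjduygvjyst}), there is a unique $f_x \in F$ such that $(f_x, x) \in \cM$. This immediately determines both the shift $k = K_{f_x}$ of any tail variable with base $x$ and the integer offset $\sigma_{f_x,x}$. Fixing further $m = d$ forces $m = d - \sigma_{f_x,x}$, and the admissibility constraint $0 \le m \le c_{f_x}$ is equivalent, via the $\Sigma$-method identity $d_x = \sigma_{f_x,x} + c_{f_x}$ holding for matched pairs, to $\sigma_{f_x,x} \le d \le d_x$. Within this range the triple $(f_x, m, K_{f_x})$ is uniquely determined, producing exactly one element of the set; for $d$ below this range the same argument shows the set is empty.

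The only structural inputs are the bijectivity of $\cM$ and the matched-pair identity $d_x = \sigma_{f,x} + c_f$ from Pryce's $\Sigma$-method; the remainder is bookkeeping on the indexing. I do not expect any substantial obstacle here: the claim is essentially a direct consequence of $\cM$ being a perfect matching together with the minimality constraints on the offsets, and the proof should therefore occupy only a few lines.
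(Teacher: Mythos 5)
Your proof follows the same route as the paper's one-liner: since $\cM$ is a perfect matching, $x$ uniquely determines its matched equation $f$, hence the shift $K_{\!f}$, giving at most one tail variable of the form $\pprimeppostset{d}{k}{x}$. You are actually more careful than the paper, explicitly tracking Definition~\ref{jsahgdfkjagdf}'s parameter $m$ and noting that the differentiation order of tail variables with base $x$ is confined to $[\sigma_{f,x},\,d_x]$ (using $\sigma_{f,x}=d_x-c_{\!f}$ for matched pairs), so the set in the lemma is empty for $0\le d<\sigma_{f,x}$. As written, the lemma's claim of cardinality exactly one for \emph{all} $0\le d\le d_x$ therefore holds only when $\sigma_{f,x}=0$ for the matched equation; the paper's proof ("the lemma follows immediately") glosses over this. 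You observed the range restriction yourself but then concluded there was no substantial obstacle --- you should have flagged the mismatch with the stated hypothesis explicitly. The imprecision is harmless in context, since the lemma is invoked only to justify injectivity of the renaming~(\ref{erlfgiuehrliueh}), for which cardinality at most one suffices. A minor notational clash aside (you reuse $m$ both for the lemma's set-builder index and for Definition~\ref{jsahgdfkjagdf}'s parameter; a distinct symbol would help), your argument is sound and more informative than the paper's own.
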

\begin{proof}
Variable $x$ uniquely determines $f$ by the condition $(\edge)\in\cM$. Hence, the height $K_{\!f}$ is uniquely defined. The lemma follows immediately.
\end{proof}

\begin{example}\rm
	\label{kvdjhfgvkfdujgh} 
Fig.\,\ref{kefygfkjfgjhgf} shows the $\sim$\,-\,closed array ${A}_1$ (see Definition~\ref{uyytyfgkutfjhh}) associated to the mode change $\guard:\fff\ra\ttt$ for the cup-and-ball. Disabled equations (in {\color{red}red}) are included; $K=1$ is a constant; $\Vars^-=\{x,y,\dot{x},\dot{y},\postset{x},\postset{y}\}$; and $\Vars_{A_1}$ collects all the variables involved in blocks occurring at $t$ and $t{+}\vsmall$. $\ttail{A_1}$ is the block of equations attached to instant $t{+}\vsmall$ and $\ttail{\Vars_{A_1}}=\postset{\{\ddot{x},\ddot{y},\tension,\dot{x},\dot{y}\}}$. Note that $\ttail{\Vars_{A_1}}$ has cardinality $5$, whereas the cardinality of $\ttail{A_1}$ is $4$. Adding, to $\ttail{A_1}$, Euler identity $(\euler{\ddot{x}}{})$ makes it structurally nonsingular.\eproof
\end{example}
\subsection{Structural Implicit Function Theorem}
\label{ksdjhfgksjhg}
We now provide the main mathematical tool justifying step~\ref{hgrfdjtrgdhtfjhgsdrf} of  Approach~\ref{isduyfsegdiksuygu}, when we set $\vsmall$ to zero. This is achieved through a structural interpretation of Proposition~\ref{solifujhsgldikugh} (Implicit Function Theorem). 

We assume a reference instant $\changetime$, a time step $\vsmall$, and the resulting set $\allVars$ of variables, associated with them following Notations~\ref{dksfjvhsgkvj}. On top of this, we consider a static system $S=(F,X{\cup}Z),X{\cup}{Z}{\,\subseteq\,}\allVars$, where $X$ and $Z$ are the dependent and free variables of $S$. We assume that $S$ is structurally nonsingular and $\sim$\,--\,closed, see Definition~\ref{uyytyfgkutfjhh}.\footnote{An example is the array $A_1$ shown in Fig.\,\ref{hgrtduhtraytrgfeads}.} System $S$ involves dependent and free variables, as well as the parameter $\vsmall$, due to the occurrence of Euler identities. $S$ is thus of the form $F(X,Z,\vsmall){=}0$, to which Proposition\,\ref{solifujhsgldikugh} applies. As its main  smoothness condition, Proposition\,\ref{solifujhsgldikugh} requires that $\partial F/\partial X$ shall be invertible at a solution $(x,z,0)$ of $F$. 

As now usual in our approach, we weaken invertibility to generic invertibility (equivalently, structural nonsingularity). Accordingly, Proposition\,\ref{solifujhsgldikugh} weakens as follows:
\smallskip
\begin{proposition}[Structural Implicit Function Theorem]
	\label{jsdhgfaksjyhcg} 
		If setting $\vsmall$ to zero in the system $F(X,Z,\vsmall)=0$ yields a structurally nonsingular system, then:
	\begin{enumerate}
		\item \label{ksjdfhgskdjyhg} System $F(X,Z,0)=0$ structurally determines $X$ as a function of $Z$; and
		\item \label{jesdhfgkjshg} For a same valuation of $Z$, the so obtained valuation of $X$ is the limit, when $\vsmall\searrow{0}$, of the valuation of $X$ defined by system $F(X,Z,\vsmall)=0$.
	\end{enumerate}
\end{proposition}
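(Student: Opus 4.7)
The plan is to reduce this structural statement to the classical Implicit Function Theorem (Proposition~\ref{solifujhsgldikugh}), via the standard bridge that structural nonsingularity implies numerical nonsingularity at generic points. Concretely, the hypothesis gives us the right combinatorial data; the classical IFT then supplies the analytical content (local solvability and continuity in $\vsmall$).

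First I would exploit the hypothesis as follows. By Definition~\ref{sjduygvjyst}, $F(X,Z,0)=0$ being structurally nonsingular means that its incidence bipartite graph admits a perfect matching $\cM$ between $F$ and $X$. Permuting rows and columns of the $|X|\times|X|$ Jacobian $\partial F/\partial X(x,z,0)$ according to $\cM$, every diagonal entry $\partial f/\partial x_{\cM(f)}$ is symbolically nonzero, so the Leibniz expansion of $\det(\partial F/\partial X)$ contains a ``diagonal'' term that is not identically zero as a function of $(X,Z)$. Hence, at a generic solution point $(x,z,0)$ the determinant is nonzero and $\partial F/\partial X$ is numerically invertible; this is the standard generic-regularity argument recalled in~\cite{benveniste:hal-03104030}.

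Equipped with numerical invertibility of $\partial F/\partial X$ at such a $(x,z,0)$, I would conclude both claims by direct appeal to Proposition~\ref{solifujhsgldikugh}. Its part~\ref{jytghflkjhlk} delivers, on a neighborhood of $(z,0)$, an assignment $(\hat z,\vsmall)\mapsto\hat x$ with $F(\hat x,\hat z,\vsmall)=0$; specializing to $\vsmall=0$ gives the local function $Z\mapsto X$, which is exactly claim~\ref{ksjdfhgskdjyhg}. For claim~\ref{jesdhfgkjshg}, I would apply part~\ref{sdkfhskdjg} of Proposition~\ref{solifujhsgldikugh} to the constant sequence $z_n\equiv z$ together with any sequence $\vsmall_n\searrow 0$: convergence of the corresponding $x_n$ to $x$ is the asserted limit property.

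The main obstacle will not be algebraic but conceptual: the structural hypothesis does not guarantee numerical nonsingularity of $\partial F/\partial X$ at \emph{every} point, only generically. Rigorously, one must either assume the chosen solution lies outside the (measure-zero) singular locus of $\det(\partial F/\partial X)$, or declare that the structural conclusion is understood modulo a generic non-degeneracy condition on the numerical parameters, consistent with the paper's philosophy stated in the introduction. I would also have to verify that $\sim$\,--\,closure leaves no residual $\vsmall^{-1}$ term in $F$, so that $F(X,Z,\vsmall)$ is genuinely $\mathcal{C}^1$ up to and including $\vsmall=0$; as Finding~\ref{kjdfhgsdkmjhg} illustrates on the cup-and-ball example, this is exactly what the rescaling procedure of Section~\ref{hgrfdkhgfcvkjmh} is designed to achieve, so it is implicitly part of the hypothesis that $F$ is presented in rescaled form.
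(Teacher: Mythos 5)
Your proof is correct, and it follows essentially the route the paper implies but does not spell out: the paper only remarks that the proposition is obtained by ``weakening invertibility to generic invertibility'' and that Statement~\ref{ksjdfhgskdjyhg} is tantamount to the hypothesis itself, so in effect no proof is given beyond this remark. Your reduction---perfect matching $\Rightarrow$ nonzero symbolic diagonal $\Rightarrow$ generically nonvanishing determinant $\Rightarrow$ classical IFT at a generic solution point---is exactly that reasoning made explicit, with the genericity caveat correctly flagged. Two small remarks. First, the Leibniz-expansion sentence is slightly loose as phrased: a single nonzero ``diagonal'' term does not by itself prevent cancellation with other permutation terms; the clean statement is that when the nonzero Jacobian entries are treated as free parameters, the determinant is a nonzero polynomial in them, hence nonvanishing on a dense open set---which matches the paper's footnote definition of ``generic'' as robustness under perturbation of the nonzero entries rather than genericity over $(X,Z)$ alone. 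Second, your closing observation that $F$ must already be presented in rescaled form (so that no residual $\vsmall^{-1}$ survives and $F$ is $\mathcal{C}^1$ down to $\vsmall=0$) is a correct and useful precondition; the hypothesis ``setting $\vsmall$ to zero \dots yields a structurally nonsingular system'' already presupposes that setting $\vsmall=0$ is a legal operation, and Rule~\ref{ksjhcgskdjgh} together with Procedure~\ref{skedfjuyghwsaliukgh} enforces this upstream.
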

\smallskip
Statement~\ref{ksjdfhgskdjyhg} is equivalent to assuming that $F(X,Z,0)=0$ is structurally nonsingular, hence only Statement~\ref{jesdhfgkjshg} has added value. Proposition\,\ref{jsdhgfaksjyhcg} induces the following rule governing the setting of $\vsmall$ to zero in the system $F(X,Z,\vsmall){=}0$:
\smallskip
\begin{rrule}
	\label{ksjhcgskdjgh} 	
It is legitimate to set $\vsmall$ to zero in system $F(X,Z,\vsmall)=0$, provided that the resulting system remains structurally nonsingular.
\end{rrule}
\smallskip
We will use Rule~\ref{ksjhcgskdjgh} as a validity criterion when performing Step~\ref{hgrfdjtrgdhtfjhgsdrf} of Approach~\ref{isduyfsegdiksuygu}.

\section{Rescaling analysis of mode change arrays}
\label{dfkjshlksgfskdjfmg}
\label{jdhsgcvfsjhtdfj}
In this section we develop a rescaling analysis for mode change arrays. 
Conventions~\ref{skdjhasdgfckjsadhg} and~\ref{slrikfgsjhfglsiugh} are in force.
This section focuses on exploiting array $A$ introduced in (\ref{fdkjhsfkjhhj}), recalling that facts were removed from it (Convention~\ref{conv:remove-facts}).

We first specify more precisely the class of ``model functions'', on top of which $A$ is built. 
\begin{definition}[model functions]
	\label{skldifmjghslnkj} 
	The set $\allFunctions$ of \emph{model functions} is defined as the set of all smooth functions defined on the set of variables $\allVars$, augmented with the set of all Euler identities relating the variables belonging to $\allVars$.
\end{definition}
\begin{lemma}
	\label{juyhfhtgnfhg} $\allFunctions$ contains all the Euler identities, as well as the closure of $\baseFunctions$ under time differentiation $f\mapsto\dot{f}$.
\end{lemma}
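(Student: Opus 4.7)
The first claim is immediate from Definition~\ref{skldifmjghslnkj}: $\allFunctions$ is defined as the set of all smooth functions on $\allVars$ \emph{augmented with} the set of all Euler identities relating variables in $\allVars$, so every Euler identity lies in $\allFunctions$ by construction. Hence only the second claim, closure under time differentiation, requires an argument.

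For the second claim, my plan is to show by induction on $k\in\bN$ that $\pprime{k}{f}\in\allFunctions$ for every $f\in\baseFunctions$. The base case $k=0$ is immediate: $f\in\baseFunctions$ is a smooth function of finitely many base variables $x_1,\dots,x_n\in\BaseVars$, and since $\BaseVars\subseteq\allVars$ (take $m=k=0$ in Notations~\ref{dksfjvhsgkvj}), $f$ is in particular a smooth function on $\allVars$, hence in $\allFunctions$.

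For the inductive step, assume $\pprime{k}{f}$ is a smooth function of finitely many variables of the form $\pprime{j}{x_i}$ with $0\le j\le k$ and $x_i\in\BaseVars$. Applying the chain rule formally with respect to time, and using the fact that $\frac{d}{dt}\pprime{j}{x_i}=\pprime{j+1}{x_i}$, we obtain
\[
\pprime{k+1}{f} \;=\; \sum_{i,j}\frac{\partial\,\pprime{k}{f}}{\partial\,\pprime{j}{x_i}}\cdot \pprime{j+1}{x_i},
\]
which is a smooth function of the variables $\{\pprime{j}{x_i}\mid 0\le j\le k+1,\ 1\le i\le n\}$. All these variables belong to $\allVars$ (again with shift index $0$ in Notations~\ref{dksfjvhsgkvj}), so $\pprime{k+1}{f}$ is a smooth function on $\allVars$ and thus lies in $\allFunctions$, completing the induction.

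There is no real obstacle here: the only point worth checking carefully is that iterated differentiation of a smooth function of base variables introduces no variables outside $\allVars$, which is granted because $\allVars$ contains $\pprimeppostset{m}{k}{x}$ for arbitrary nonnegative $m$ and $k$. In particular the closure of $\baseFunctions$ under $f\mapsto\dot f$ stays within the smooth functions on $\allVars$, and hence within $\allFunctions$. \eproof
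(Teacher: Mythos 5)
Your proof is correct and follows the same route as the paper's: both claims are read off Definition~\ref{skldifmjghslnkj}, and the second one reduces to the chain rule, which you merely spell out as an explicit induction on the differentiation order. The only thing you add over the paper's one-line proof is the inductive bookkeeping that all the variables produced along the way (the $\pprime{j}{x_i}$) stay inside $\allVars$, which is indeed the point being implicitly invoked.
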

\begin{proof} The first statement is immediate.
The second statement follows from the chain rule for differentiation.
\end{proof}

\noindent We now zoom on the features of model functions that we will be using in the rescaling analysis:
\begin{itemize}
	\item For $f{\in}\allFunctions$ a smooth function defined on $\allVars$, we distinguish the subset $\Lin{f}$ of its variables that enter linearly: 
\beq
f=\sum_{x\in\Lin{f}} x \times f_x\,,
\label{skdufchsagvdlcsivcg}
\eeq
where each $f_x$ is a smooth nonlinear function of a finite subset of variables belonging to ${\allVars}\,\setminus\,\Lin{f}$. 
\item 
Euler identity (\ref{htgrsfdjyrnwsefgdjt}), where $x\sim{z}$ and $n\eqdef\mfun{x}{-}\mfun{z}{>}0$, has the following form, where $a_i$ are constants:
\beq\euler{x}{z}&=&
x-\vsmall^{-n}\sum_{i=0}^{n}a_i\times\ppostset{(-i)}{z}\,.
\label{jsdhgfsjhydfg}
\eeq
\item Classes (\ref{skdufchsagvdlcsivcg}) and (\ref{jsdhgfsjhydfg}) of functions are subsumed by the following larger class collecting all functions of the form
\beq
f=\sum_{x\in\Lin{f}} \vsmall^{-n_x} \times  x \times f_x\,,
\label{jdhsgfsdkjhfg}
\eeq
where each $n_x$ is a nonnegative integer, and each $f_x$ is a smooth nonlinear function of a finite subset of variables belonging to ${\allVars}\,\setminus\,\Lin{f}$.
\end{itemize}
In the sequel, we use the form (\ref{jdhsgfsdkjhfg}) for our model functions. 
%
%
\begin{definition}[rescaling offset]
	\label{hgtdfheafsgrefa} 
	Call \emph{rescaling offset} a function
	$\rescaling{}:\allFunctions\ra\bN\cup\{+\infty\}$, 
	denoted $\polyn\mapsto\rescaling{\!\polyn}$, satisfying the following axioms, where we write $\rescaling{x}$ when $f\equiv x$, and $\Vars_{\!f}$ is defined in Notations~$\ref{dksfjvhsgkvj}$:
	\beq
	\forall n\in\bN&\negesp\Ra\negesp& \rescaling{\vsmall^{-n}}=n
	\label{hgfxhngfdxcnjhcg}
	\\ [1mm]
f\mbox{ \emph{of the form (\ref{jdhsgfsdkjhfg})}} &\negesp\Ra\negesp& \rescaling{\!\polyn}=\displaystyle\max_{x\in\Lin{f}}\,(n_x{+}\rescaling{x}{+}\rescaling{f_x}) \label{hgfdjhgfmjm}
	\\ 
	f\mbox{ \emph{non-linear} }&\negesp\Ra\negesp&\rescaling{\!f}=
~\mathbf{if}\;\rescaling{\Vars_{\!f}}{=}0\;\mathbf{then}\;0 \;\mathbf{else}\;{+}\infty
	\label{grsjutrwskjthea}
	\eeq
	where $\rescaling{\Vars_{\!f}}=0$ means $\rescaling{x}=0$ for every $x\in{\Vars_{\!f}}$.
\end{definition}
\begin{ccomment}\rm
	\label{jshgcfjahtgf} Axioms (\ref{hgfxhngfdxcnjhcg}--\ref{grsjutrwskjthea}) are supported by the following intuition. The rescaling offset aims to represent the ``impulse order'', namely $\rescaling{x}=n$ captures that $x$ is $O(\vsmall^{-n})$. Axiom (\ref{hgfxhngfdxcnjhcg}) expresses that intent, and axiom (\ref{hgfdjhgfmjm}) is immediate considering (\ref{jdhsgfsdkjhfg}). Axiom (\ref{grsjutrwskjthea}) formalizes that, if $f$ is nonlinear, we do not know to assign an ``impulse order'' to it, unless it is known that all the variables involved in $f$ are not impulsive.\eproof
\end{ccomment}
Say that a function is \emph{impulsive} if its rescaling offset is positive. The \emph{rescaling} of function $\polyn$ is defined by
\beq
\rescaled{f}&\eqdef&\mathbf{if}~{\rescaling{\!f}}{<}\infty~\mathbf{then}~\vsmall^{\rescaling{\!f}} f~\mathbf{else}~\mbox{``undefined''}\,.
\label{grshrefrsriyyk}
\eeq
Of course, the intent of rescaling is that, if defined, $\rescaled{\polyn}$ will be non impulsive:
\begin{lemma} \label{kerdjfhsgdkfcjsduh} For every $f$ such that $\rescaling{\polyn}{<}\infty$: we have $\rescaling{\rescaled{\polyn}}{=}0$ and $\rescaled{({\rescaled{\polyn}})}{=}\rescaled{\polyn}$.
\end{lemma}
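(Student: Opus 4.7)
It suffices to prove $\rescaling{\rescaled{f}} = 0$: once established, the second identity $\rescaled{(\rescaled{f})} = \rescaled{f}$ follows at once from definition~(\ref{grshrefrsriyyk}), since $\rescaled{(\rescaled{f})} = \vsmall^{0}\, \rescaled{f} = \rescaled{f}$. I would therefore focus on the first identity and proceed by case analysis on the syntactic form of $f$, following the three axioms of Definition~\ref{hgtdfheafsgrefa}.

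Two of the three cases are short. For $f = \vsmall^{-n}$, direct computation gives $\rescaled{f} = \vsmall^{n} \cdot \vsmall^{-n} = 1$, a variable-free constant whose offset is $0$ by axiom~(\ref{grsjutrwskjthea}). For purely nonlinear $f$, $\rescaling{f} < \infty$ combined with axiom~(\ref{grsjutrwskjthea}) forces $\rescaling{f} = 0$, hence $\rescaled{f} = f$ and there is nothing more to show.

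The substantive case is $f = \sum_{x \in \Lin{f}} \vsmall^{-n_x}\, x\, f_x$ in the form~(\ref{jdhsgfsdkjhfg}) with $\Lin{f}$ nonempty. From $\rescaling{f} = \max_{x}(n_x + \rescaling{x} + \rescaling{f_x}) < \infty$ and the nonlinearity of each $f_x$, axiom~(\ref{grsjutrwskjthea}) yields $\rescaling{f_x} = 0$ and ensures that every variable of every $f_x$ is non-impulsive. The key computation is then
\[
\rescaled{f} \;=\; \vsmall^{\rescaling{f}}\, f \;=\; \sum_{x \in \Lin{f}} \vsmall^{\rescaling{f} - n_x}\; x\; f_x ,
\]
combined with the bound $\rescaling{f} - n_x \geq \rescaling{x} \geq 0$. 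Substituting $x = \vsmall^{-\rescaling{x}} \rescaled{x}$ and absorbing the resulting nonnegative powers of $\vsmall$ into new nonlinear coefficients $g_x \eqdef \vsmall^{\rescaling{f} - n_x - \rescaling{x}} f_x$ exhibits $\rescaled{f}$ in canonical form~(\ref{jdhsgfsdkjhfg}) over the rescaled variables: every ``linear'' exponent is $0$, each coefficient $g_x$ is nonlinear with non-impulsive variables (hence of offset $0$ by axiom~(\ref{grsjutrwskjthea})), and each $\rescaled{x}$ has offset $0$ by construction. Axiom~(\ref{hgfdjhgfmjm}) then delivers $\rescaling{\rescaled{f}} = \max_{x}(0 + 0 + 0) = 0$.

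The main obstacle I expect is precisely this last bookkeeping step: the definition $\rescaled{f} = \vsmall^{\rescaling{f}} f$ is not stated in terms of rescaled variables, so one must justify the re-expression over the $\rescaled{x}$ and show that the surplus powers of $\vsmall$ genuinely migrate into the coefficients rather than leaving a spurious $\max_x \rescaling{x}$ contribution in the final application of axiom~(\ref{hgfdjhgfmjm}). This is the point where the purely syntactic axioms must be reconciled with the intended semantic meaning of rescaling, namely that every quantity be reduced to magnitude $O(1)$.
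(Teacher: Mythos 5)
Your argument is correct in outline but takes a noticeably longer route than the paper's, and the "main obstacle" you flag at the end is precisely where your route costs more than it should.

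The paper proves $\rescaling{\rescaled{f}} = 0$ in one line, with no case analysis: from $(\ref{grshrefrsriyyk})$ one writes $f = \vsmall^{-\rescaling{f}} \times \rescaled{f}$, then applies the product rule (the instance of axiom $(\ref{hgfdjhgfmjm})$ with a single ``linear'' factor $\rescaled{f}$ and trivial coefficient) to read off $\rescaling{f} = \rescaling{f} + \rescaling{\rescaled{f}}$, whence $\rescaling{\rescaled{f}} = 0$ since $\rescaling{f}$ is finite. This is uniform across all three shapes of $f$, so your first two cases (constants $\vsmall^{-n}$ and purely nonlinear $f$) are dispensed with automatically, and your third case needs no re-expression of $f$ over rescaled variables at all. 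The key fact that does the work, here and in the paper, is the multiplicativity of $\rescaling{}$, stated later as Lemma~\ref{htdfkfkjhgssssghch}: $\rescaling{ab} = \rescaling{a} + \rescaling{b}$.

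By contrast, your re-expression $\rescaled{f} = \sum_x \rescaled{x}\, g_x$ with $g_x = \vsmall^{\rescaling{f}-n_x-\rescaling{x}} f_x$ is where the argument becomes delicate: $\rescaled{x}$ is not an element of $\allVars$ (it is $\vsmall^{\rescaling{x}} x$), and the positive power of $\vsmall$ that you fold into $g_x$ is not covered by the stated form $(\ref{jdhsgfsdkjhfg})$, which only licenses coefficients $\vsmall^{-n_x}$ with $n_x\ge 0$ multiplying genuine variables. So reapplying axiom $(\ref{hgfdjhgfmjm})$ to the rewritten expression requires a tacit extension of the axioms, which is essentially equivalent to granting yourself the multiplicative rule anyway --- at which point the paper's one-line computation is already available. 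What your route does buy is a more explicit picture of how the surplus powers of $\vsmall$ spread over the terms of $f$, which is the computation that reappears later in the proof of Substatement~b) of Theorem~\ref{skldjhfgslkdufghikl}; but as a proof of this particular lemma it is longer and less robust than the paper's.
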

\begin{proof}
	Using (\ref{grshrefrsriyyk}), we get ${\polyn}{=}\vsmall^{-\rescaling{\!\polyn}}{\times}\rescaled{\polyn}$. Hence, by axiom (\ref{hgfdjhgfmjm}), we get $\rescaling{\!\polyn}{=}{\rescaling{\!\polyn}}{+}\rescaling{\rescaled{\polyn}}$, showing that $\rescaling{\rescaled{\polyn}}{=}0$. 
\end{proof}
%
We are given a mode change $(S^-,\changetime,S)$. To formalize Approach~\ref{isduyfsegdiksuygu}, we consider the following \emph{rescaling problem,} where we use Notations~\ref{htgfkjthndfj}. For $K$ a height, let $\cM_K$ be a matching of $(A_K,\dependent{\Vars_{A_K}})$, and let $(\edge_{\!f})\in\cM_K$, where $f$ has the form (\ref{jdhsgfsdkjhfg}). Define
\beq
f_{\downarrow\cM_K}  
\label{hgfdjyhtgfkjmhgb}
\eeq
by restricting, in (\ref{jdhsgfsdkjhfg}), the sum to the subset of its terms involving the variable $x_{\!f}$ (note that $x_{\!f}$ may or may not belong to $\Lin{f}$). We denote this subset by $$\LinM{f}{\cM_K}\,.$$
For example, referring to array $A_1$ of Fig.\,\ref{kefygfkjfgjhgf} and equation $(\postset{\dot{k_1}})$ in it, this equation is matched with variable $\bemph{\postset{\dot{y}}}$ in matching $\cM_1$ highlighted in blue, hence $(\postset{\dot{k_1}})_{\downarrow\cM_1}=\postset{(y\dot{y})}$.
\begin{problem}[rescaling problem]
	\label{likgusefhdrlpouhiouh} For a given pair $(A,\Vars^-)$ following $(\ref{fdkjhsfkjhhj},\ref{htdjhtdmhg})$, find:
\begin{enumerate}
\item 
a {height $K$} following $(\ref{ksdjfghvblkzjch})$;
\item \label{srtjhsrthsrth} 
a {variable-complete matching $\cM_K$} for $A_K$,	satisfying (see $(\ref{jytfjytjfgyluk})$ for the definition of $\enable{A_K}$)
	\beq
	\forall f\in\enable{A_K}\;\,\Ra \;\,f\in\cM_K
	\label{srrftjhtr} 
	\eeq
	and such that {the following system of equations over rescaling offsets has a solution} $\rescaling{}=(\rescaling{x})_{x\in{X}}$ that is finite for every $x$:
		\beq
		\forall f \in \cM_K \Ra \rescaling{\!f}=\rescaling{(f_{\downarrow\cM_K})}\;, &\mbox{ where $f_{\downarrow\cM}$ was defined in $(\ref{hgfdjyhtgfkjmhgb})$.}
\label{dksjhfgksjhgfk}
\eeq
\end{enumerate}
Triple $(K,\cM_K,\rescaling{})$ is a \emph{solution} of the rescaling problem.
\end{problem}
Expanding equation (\ref{dksjhfgksjhgfk}) by using Axiom (\ref{hgfdjhgfmjm}) yields
\beq
\displaystyle\max_{x\in\Lin{f}}\,(n_x{+}\rescaling{x}{+}\rescaling{f_x}) = 
\displaystyle\max_{x\in\LinM{f}{\cM_K}}\,(n_x{+}\rescaling{x}{+}\rescaling{f_x})
\label{slfighlifghpuifhiu}
\eeq
which, by applying Axiom (\ref{grsjutrwskjthea}) to $\rescaling{f_x}$, reveals that (\ref{dksjhfgksjhgfk}) is indeed to be solved for $(\rescaling{x})_{x\in X}$.
\begin{example}\rm
	\label{kfsdjhfgskjdfgj} %
	These notions are illustrated on the cup-and-ball example (Section~\ref{jhtgfkuytjrsdfghk}). In Fig.~\ref{fdkghldkjhkjj}, $\enable{A_0}$ collects equations $(e_1,e_2)$ since they are common to both modes, and $(\dot{k_1})$ since it is a consistency equation ot the last instant of the array. Thus, $\disable{A_0}=\emptyset$; hence, disabling $(\dot{k_1})$ violates this requirement. In Fig.~\ref{kefygfkjfgjhgf}, $\ttail{A_1}$ collects all the equations attached to the last instant $t{+}\vsmall$; $\hhead{A_1}={{{A}_1}}\setminus\ttail{A_1}$; $\disable{A_1}$ collects equations $(\dot{k_1},\ddot{k_1})$, since they belong to the new mode only and do not sit in the last instant; $\enable{A_1}$ is the complement. The solution of the corresponding rescaling problem consists of the matching highlighted in blue, and the rescaling offsets listed in (\ref{skksdjfhsgclugy}).\eproof
\end{example}
Using a solution $(K,\cM_K,\rescaling{})$, we can apply the following procedure to (hopefully) generate the system of equations characterizing the hot restart. In this procedure, we make time step $\vsmall$ explicit, since we let $\vsmall:={0}$ in its step~\ref{ligujehdrglieugh}.
\begin{procedure}
	\label{skedfjuyghwsaliukgh}
	Let $(K,\cM_K,\rescaling{})$ be a solution of Problem~$\ref{likgusefhdrlpouhiouh}$, and let $A_\vsmall$ be the subarray of $A_K$ collecting the equations matched in $\cM_K$. Perform the following:
	\begin{enumerate}
		\item \label{kuhyglsdkjvhblk} erase from $A_\vsmall$ all equations not matched in $\cM_K$;
		\item rescale the remaining equations and variables using $(\ref{grshrefrsriyyk})$; the resulting rescaled array is denoted by $\rescaled{A}_\vsmall$;
		\item \label{ligujehdrglieugh} set $\vsmall :={0}$ in $\rescaled{A}_\vsmall$, and denote the result by $\rescaled{A}_0$;
		\item rename as follows all the variables occurring in $\rescaled{A}_0$:
		\beq\negesp\negesp\negesp\bea{rcll}
		x\,{\in}\,\Vars^- &\negesp\gets\negesp& \mmoins{x} &
		\\ [1mm]
		\pprimeppostset{m}{k}{x} \,{\in}\,\ttail{\Vars_{A_K}}&\negesp\gets\negesp& \pplus{(\pprime{m}{x})} & \mbox{\emph{(restart variables)}}
		\\ [1mm]
		\pprimeppostset{m}{k}{x} \,{\in}\,\hhead{\Vars_{A_K}}  &\negesp\gets\negesp& \pprimeppostset{m}{k}{x} & \mbox{\emph{(auxiliary variables)}}
		\\ [1mm]
		\pprimeppostset{m}{k}{x} \,{\not\in}\,{\Vars_{A_K}}&\negesp\gets\negesp& \mbox{\emph{undefined}} & 
		\eea  \label{erlfgiuehrliueh}
		\eeq
	\end{enumerate}	
	Let $A_\star$ be the array obtained from $A_\vsmall$ by performing the above sequence of operations: we call it the \emph{restart system}.
\end{procedure}
For the cup-and-ball, this procedure was illustrated in (\ref{hgtgdfkjhglkj},\ref{skeufygsdkjyhdhg},\ref{keufygkuaygkuyg}).

The restart system has a unique solution under certain conditions, as will be seen in Theorem~\ref{skldjhfgslkdufghikl} and its proof. These conditions define good solutions of the rescaling problem:
\begin{definition}[good solution]
	\label{ksuyhjgsakjhsg} 
Triple $(K,\cM_K,\rescaling{})$ is a \emph{good solution} of the rescaling problem if the following holds:
	\beq
	\forall f\in\cM_K &\Ra& \rescaling{\!f}<\infty
	\label{jsdyhgcdfasjyht}
	\\
\forall x\in\ttail{\Vars_{A_K}} &\Ra& \rescaling{x}=0 
\label{jkdsfhgfjsmdhfgjk}
\\ 
\forall x{=}\pprimeppostset{m}{k}{y}\in\hhead{\Vars_{A_K}} &\Ra& 
\pprimeppostset{(m-n)}{(k+n)}{y}\in{\Vars_{A_K}}
\label{klifgyegrujytfgeikuy}
\\ 
&&\mbox{\emph{where} }n\eqdef\min(\rescaling{x},m) \hspace*{7.5cm} \Box \hspace*{-2mm}
\nonumber
\eeq
\end{definition}
Condition (\ref{jsdyhgcdfasjyht}) expresses that the enabled array must be rescalable. Condition (\ref{jkdsfhgfjsmdhfgjk}) expresses that restart variables should not be impulsive. Condition (\ref{klifgyegrujytfgeikuy}) will ensure that the last case of renaming (\ref{erlfgiuehrliueh}) never occurs.
\begin{ccomment}\rm
	\label{hjgfjhgfkjhgkjh} Note that, if the new mode $S$ is the composition of two noninteracting DAE systems, then Problem~\ref{likgusefhdrlpouhiouh} also decomposes into two independent rescaling problems. Requiring heights to be constant would then be inadequate.\eproof
\end{ccomment}
The solution (\ref{skksdjfhsgclugy}) for the rescaling problem of the cup-and-ball was good. These results are further illustrated by a series of examples in Appendices~\ref{ksdfiygskdyjgf} and~\ref{hgrdfhreagfdsuj}.
Differences with the approach by Trenn et al. are illustrated in Appendix~\ref{sldkvjshdblvckujsh}.

\section{Main theorems}
\label{ldxkjvbvskljvh}
\subsection{Correctness of our approach}
\label{hgrfdjhtrsbdgfb}
\begin{theorem}
	\label{skldjhfgslkdufghikl} The hot restart system obtained by solving Problem~$\ref{likgusefhdrlpouhiouh}$ and then applying Procedure~$\ref{skedfjuyghwsaliukgh}$, satisfies the following properties:
	\begin{enumerate}
		\item \label{jkhtgfvkjhgkjh}
		Given $(K,\cM_K)$ satisfying $(\ref{srrftjhtr})$, system $(\ref{slfighlifghpuifhiu})$ possesses a solution $\rescaling{}$, generically.
		\item 	\label{drtjhsrtjhrsth}
		If solution $(K,\cM_K,\rescaling{})$ is good, then: 
		\begin{enumerate}
			\item rescaled array $\rescaled{A}_\vsmall$  involves only variables belonging to $\Vars_{A_K}$; 
			\item $\rescaled{A}_0$ is structurally nonsingular;
			\item $\rescaled{A}_0$ is the unique limit of $\rescaled{A}_\vsmall$ when $\vsmall{\,\searrow\,}{0}$; and
			\item restart system $A_\star$ uniquely determines the states $\pplus{(\pprime{m}{y})}$ for the hot restart of the new mode.
		\end{enumerate}

		\item \label{jcsdghsdvckjhsdgk} If good solutions $(K,\cM_K,\rescaling{})$ exist, then height $K$ and rescaling offsets $\rescaling{}$ are unique; the only possible source of non uniqueness is indeed the matching $\cM_K$.
	\end{enumerate}
\end{theorem}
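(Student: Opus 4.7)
My plan is to treat equation~(\ref{slfighlifghpuifhiu}) as a monotone fixed-point system in the max-plus semiring $(\bN\cup\{-\infty\},\max,+)$, and to use the Structural Implicit Function Theorem (Proposition~\ref{jsdhgfaksjyhcg}) to legitimize the setting $\vsmall{:=}0$ in Procedure~\ref{skedfjuyghwsaliukgh}. For statement~\ref{jkhtgfvkjhgkjh}, I view~(\ref{slfighlifghpuifhiu}) as one constraint per matched pair $(f,x_f)\in\cM_K$, which pins the matched variable's offset $\rescaling{x_f}$ to be at least as large as needed for some term in $\LinM{f}{\cM_K}$ to realize the maximum on $\Lin{f}$. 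Since $\cM_K$ is variable-complete by Problem~\ref{likgusefhdrlpouhiouh}.\ref{srtjhsrthsrth}, the resulting system has as many equations as unknowns. I would prove existence by a Tarski-style Kleene iteration starting from the zero vector; each round increases only those $\rescaling{x_f}$ whose equation is violated, and the iteration is monotone. Finiteness of the limit can fail only if axiom~(\ref{grsjutrwskjthea}) forces some $\rescaling{f_x}=+\infty$, which requires a nonlinear factor in a matched equation to already contain an impulsive variable; generically no such cascade occurs, and the iteration converges to a finite $\rescaling{}$.

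For statement~\ref{drtjhsrtjhrsth}, I would treat the four sub-claims in order. (a) A rescaling $x\mapsto\vsmall^{\rescaling{x}}x$ for $x=\pprimeppostset{m}{k}{y}\in\hhead{\Vars_{A_K}}$ can introduce new variables only through the Euler identity~(\ref{jsdhgfsjhydfg}), and the only such variable is $\pprimeppostset{(m-n)}{(k+n)}{y}$ with $n=\min(\rescaling{x},m)$; condition~(\ref{klifgyegrujytfgeikuy}) of goodness is designed precisely to keep that variable inside $\Vars_{A_K}$. (b) For structural nonsingularity of $\rescaled{A}_0$, the matching $\cM_K$ is my witness, and the key step is to check that the matched variable $x_f$ survives the rescaling of $f$ with coefficient of order $O(1)$ rather than being scaled away to $o(1)$. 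This is exactly the role of~(\ref{dksjhfgksjhgfk}): the equality $\rescaling{f}=\rescaling{f_{\downarrow\cM_K}}$ forces some term containing $x_f$ to carry the factor $\vsmall^{0}$ after multiplication of $f$ by $\vsmall^{\rescaling{f}}$, while every other term carries a non-negative power of $\vsmall$. (c) Convergence is then a direct application of Proposition~\ref{jsdhgfaksjyhcg} to the system $\rescaled{A}_\vsmall$, using the structural nonsingularity from (b) as the hypothesis. (d) Uniqueness of the restart values follows from (b) together with the renaming~(\ref{erlfgiuehrliueh}), whose ``undefined'' branch is excluded by~(\ref{jkdsfhgfjsmdhfgjk})--(\ref{klifgyegrujytfgeikuy}).

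For statement~\ref{jcsdghsdvckjhsdgk}, I would prove uniqueness of $\rescaling{}$ by a minimality argument reminiscent of the \smethod: once $K$ and $\cM_K$ are fixed, system~(\ref{slfighlifghpuifhiu}) admits a unique componentwise minimal solution produced by the Kleene iteration of statement~\ref{jkhtgfvkjhgkjh}, and goodness condition~(\ref{jsdyhgcdfasjyht}) rules out the infinite alternative. For $K$, the requirement~(\ref{srrftjhtr}) that every enabled equation be matched, combined with tail-goodness~(\ref{jkdsfhgfjsmdhfgjk}) forcing restart variables to be non-impulsive, pins each $K_{\!f}$ both from below (too small a height would disable a required equation) and from above (too large a height would leave some restart variable impulsive). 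The matching $\cM_K$ is constrained only by variable-completeness and~(\ref{srrftjhtr}), which admits genuine combinatorial multiplicities, hence it is the sole source of non-uniqueness.

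The hardest part will be the bookkeeping in~(b) above: the rescaled Euler identities carry explicit $\vsmall^{-n}$ prefactors (see~(\ref{jsdhgfsjhydfg})) that interact with the nonlinear terms $f_x$ of~(\ref{jdhsgfsdkjhfg}) in a subtle way, and one must verify carefully that no cancellation introduced only through the rescaling turns the matched occurrence of $x_f$ into a strictly positive power of $\vsmall$ when $\vsmall{:=}0$. Once this combinatorial step is settled, the rest follows by direct invocation of the Structural Implicit Function Theorem and standard lattice fixed-point techniques.
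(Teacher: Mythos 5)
Your treatment of Statement~\ref{drtjhsrtjhrsth} follows essentially the same line as the paper: item~(a) hinges on goodness condition~(\ref{klifgyegrujytfgeikuy}) and the form of the Euler identities; item~(b) is the computation that, after multiplying $f$ by $\vsmall^{\rescaling{\!f}}$, every term carries $\vsmall^{\rescaling{\!f}-\tilde\mu_x}\geq\vsmall^0$ with equality on the matched term (the paper does precisely this bookkeeping); item~(c) is a direct call to Proposition~\ref{jsdhgfaksjyhcg}; item~(d) is the injectivity of the renaming~(\ref{erlfgiuehrliueh}) under goodness. You correctly flag the $\vsmall$-power bookkeeping as the crux, and that is where the paper spends its effort. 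No complaints there.

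For Statement~\ref{jkhtgfvkjhgkjh}, your Kleene/Tarski iteration is a genuinely different route from the paper's, which instead replaces $\max$ by a smoothed $\max^\eta$ built from a mollified Heaviside function, observes that the matching $\cM_K$ transfers to a structurally nonsingular smooth system in the offsets, and extracts a convergent subsequence as $\eta\searrow 0$. Your route is more elementary, but as written it has two gaps. First, equation~(\ref{slfighlifghpuifhiu}) is a constraint ($\max_{\LinM{f}{\cM_K}}=\max_{\Lin{f}}$), not an assignment; turning it into a monotone update $\rescaling{x_{\!f}}\gets\cdots$ is exactly what Theorem~\ref{dsjkchagfvckjshg} does via~(\ref{kdjvchsgkvjshdgv}), and that simplification rests on Assumption~\ref{jsdhfgsdjhg}, which is introduced later; the paper's smoothing proof deliberately avoids this dependency. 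Second, you dispose of the ``generically'' qualifier with one sentence (``generically no such cascade occurs''), but the genericity is precisely the paper's device for handling the possibility that the iteration blows up to $+\infty$: the paper extends offsets and $\vsmall$-exponents to real values and invokes the structural-analysis sense of genericity, which your lattice argument does not touch.

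For Statement~\ref{jcsdghsdvckjhsdgk}, there is a real gap. You claim uniqueness of $\rescaling{}$ by asserting that a good solution must be the componentwise minimal one produced by Kleene iteration, with goodness~(\ref{jsdyhgcdfasjyht}) ruling out infinities. But that leaves open the possibility of a strictly larger yet finite solution that also satisfies goodness~(\ref{jsdyhgcdfasjyht})--(\ref{klifgyegrujytfgeikuy}); nothing in your argument rules it out. The paper's proof goes in the \emph{opposite} direction: it supposes $\rescaling{1}<\rescaling{2}$ both good, rescales $x$ once with each, observes $\rescaled{x}_2=\vsmall^{\rescaling{2}-\rescaling{1}}\rescaled{x}_1$ vanishes as $\vsmall\searrow 0$, which erases $\rescaled{x}$ from $\cM_K$ and contradicts Statement~\ref{drtjhsrtjhrsth}. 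Uniqueness is established first, and only afterward (in Theorem~\ref{dsjkchagfvckjshg}) is minimality inferred from uniqueness plus the observation that shrinking a good solution preserves goodness. Your minimality-implies-uniqueness inference is, as stated, circular. Similarly, your argument for uniqueness of $K$ (``too small disables a required equation, too large leaves a restart variable impulsive'') describes the bracketing $K_*\leq K\leq K^*$ of Theorem~\ref{sekdfuysageloiug} rather than uniqueness. The paper actually proves $K_1\neq K_2$ impossible by extending the smaller solution to the larger height, pinning the extra offsets to $0$, and then producing a consistency equation in the tail whose matched variable must take offset $\geq 1$ via a forced Euler identity---a contradiction. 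Your sketch does not contain this argument or an equivalent one.
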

In Statement~\ref{jkhtgfvkjhgkjh}, ``generically'' refers to the structural analysis setup. More precisely, we extend $\rescaling{}$ to take values in $\bR_{\geq 0}$ and exponents of $\vsmall^{-1}$ to belong to $\bR_{\geq 0}$. Problem~\ref{likgusefhdrlpouhiouh} extends to this situation. Then, ``generically'' means: for almost every value for the exponents of $\vsmall^{-1}$ in some neighborhood of their nominal values.
\begin{proof}
See Section~\ref{skduhvgskjvghkjh}.
\end{proof}

\noindent Theorem~\ref{skldjhfgslkdufghikl} allows us to reject a model at compile time if no good solution is found, but gives little feedback in this case, regarding how to correct the model.  The following corollary of Theorem~\ref{skldjhfgslkdufghikl} improves the feedback to the designer by identifying the subset of variables that are insufficiently determined. Notations of Theorem~\ref{skldjhfgslkdufghikl} are used.

Consider the array $A_\vsmall$ and assume that goodness conditions (\ref{jsdyhgcdfasjyht}) and (\ref{jkdsfhgfjsmdhfgjk}) are satisfied. Let $\YVars$ be the subset of variables $x$ violating the goodness condition (\ref{klifgyegrujytfgeikuy}). Remove, from $A_\vsmall$, the equations that are matched with a variable belonging to $\YVars$, and keep ${\dependent{\Vars_{A_\vsmall}}}$ unchanged. The resulting system $(A_\vsmall^{\YVars},{\dependent{\Vars_{A_\vsmall}}})$ is now rectangular, with more dependent variables than equations. Its Dulmage-Mendelsohn decomposition (\ref{eujhydwtye}) yields $A_\vsmall^{\YVars}=A_\vsmall^{\sf r}\uplus A_\vsmall^{\sf u}$ (the overdetermined subsystem is empty). Let $\cM_K^{\sf r}$ be a perfect matching for $\bigl(A_\vsmall^{\sf r},\Vars^{\sf r}\bigr)$.
\begin{corollary}
	\label{hgrfdjrgfdfkm} Consider the objects of Theorem~$\ref{skldjhfgslkdufghikl}$ and perform the substitutions 
	\beq
	A_\vsmall\gets A_\vsmall^{\sf r}\,,\, {\dependent{\Vars_A}}\gets \Vars^{\sf r}\,,\, \cM_K\gets\cM_K^{\sf r}
	\label{kjhyfgliukjk}
	\eeq
	whereas the rescaling offsets $\rescaling{x}$ and $\rescaling{\!f}$ are kept unchanged for $x\in\Vars^{\sf r}$ and $f\in{A_\vsmall^{\sf r}}$. Let $A_\star^{\sf r}$ be the array obtained by performing, mutatis mutandis, the operations listed in Procedure~$\ref{skedfjuyghwsaliukgh}$. Then, $A_\star^{\sf r}$ determines the values of state variables belonging to $\Vars^{\sf r}$ for a hot restart of the new mode. Other state variables may be undetermined.
\end{corollary}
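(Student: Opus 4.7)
The plan is to reduce Corollary~\ref{hgrfdjrgfdfkm} to an application of Theorem~\ref{skldjhfgslkdufghikl} on the restricted triple $(A_\vsmall^{\sf r}, \Vars^{\sf r}, \cM_K^{\sf r})$. The core observation is that the Dulmage-Mendelsohn decomposition isolates exactly the fragment of the mode change array on which a well-defined hot restart can still be extracted, while quarantining into the under-determined block the variables that violated condition~(\ref{klifgyegrujytfgeikuy}).

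I would first verify that $(K, \cM_K^{\sf r}, \rescaling{}|_{\Vars^{\sf r}})$ is a solution of Problem~\ref{likgusefhdrlpouhiouh} for the regular subsystem. Variable-completeness and the enabling constraint~(\ref{srrftjhtr}) restricted to $A_\vsmall^{\sf r}$ follow directly from $\cM_K^{\sf r}$ being a perfect matching of $(A_\vsmall^{\sf r}, \Vars^{\sf r})$ and from the fact that equations in $\enable{A_K}$ that survive the DM restriction must remain matched. For the rescaling equations~(\ref{dksjhfgksjhgfk}), I would invoke the uniqueness clause of Theorem~\ref{skldjhfgslkdufghikl}.\ref{jcsdghsdvckjhsdgk}, by which the rescaling offsets do not depend on the particular matching chosen and therefore remain valid under the switch from $\cM_K$ to $\cM_K^{\sf r}$.

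The second step is to establish the goodness conditions on the restricted system. Conditions~(\ref{jsdyhgcdfasjyht}) and~(\ref{jkdsfhgfjsmdhfgjk}) are inherited by monotonicity from the original hypotheses of the corollary. The key point is condition~(\ref{klifgyegrujytfgeikuy}): by definition of $\YVars$, all variables violating it are exactly those whose originally matched equation has been stripped from $A_\vsmall^{\YVars}$; any such variable that happens to be rematched by a different equation of $A_\vsmall^{\sf r}$ can still be handled by observing that the Euler identities relating shifted/differentiated variants of a base variable are present in $A$ as purely structural facts, independent of any matching choice. Once this is done, Theorem~\ref{skldjhfgslkdufghikl}.\ref{drtjhsrtjhrsth} applies verbatim to $(A_\vsmall^{\sf r}, \Vars^{\sf r}, \cM_K^{\sf r})$: the rescaled array $\rescaled{A}_\vsmall^{\sf r}$ involves only variables in $\Vars^{\sf r}$, its limit $\rescaled{A}_0^{\sf r}$ at $\vsmall = 0$ is structurally nonsingular, and the restart system $A_\star^{\sf r}$ uniquely determines the values of the state variables belonging to $\Vars^{\sf r}$. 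The variables of $\Vars^{\sf u}$ escape determination since the under-determined block carries strictly more dependent variables than equations, which is precisely the ``Other state variables may be undetermined'' clause of the corollary.

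The hard part is verifying that the original rescaling offsets remain consistent with the new matching $\cM_K^{\sf r}$. Because the rescaling equation~(\ref{dksjhfgksjhgfk}) depends on which variable is matched to each equation through the restriction $f_{\downarrow\cM_K}$, switching from $\cM_K$ to $\cM_K^{\sf r}$ may a priori alter which terms of the sum defining $f$ are retained and therefore shift the balance of offsets. Settling this cleanly requires either a sharpening of Theorem~\ref{skldjhfgslkdufghikl}.\ref{jcsdghsdvckjhsdgk} establishing matching-invariance of offsets already at the level of each individual equation, or a direct recomputation of offsets on $A_\vsmall^{\sf r}$ followed by a verification that they coincide with the restriction of the original ones to $\Vars^{\sf r}$ and $A_\vsmall^{\sf r}$.
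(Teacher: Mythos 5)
Your proposal has a genuine gap, which you already flag yourself in the final paragraph: after swapping $\cM_K$ for $\cM_K^{\sf r}$, it is not clear that the original rescaling offsets $\rescaling{}$ still satisfy the rescaling equations~(\ref{dksjhfgksjhgfk}), because $f_{\downarrow\cM_K}$ depends on which variable is matched with each equation, and the DM decomposition may re-match some equations to different variables. Your appeal to Theorem~\ref{skldjhfgslkdufghikl}.\ref{jcsdghsdvckjhsdgk} does not close this, since that uniqueness statement asserts that offsets are independent of the matching \emph{among matchings that are part of a good solution}; it does not provide the equation-by-equation matching-invariance you would need here, and indeed, in the setting of the corollary the solution on $A_\vsmall$ is by hypothesis \emph{not} good (condition~(\ref{klifgyegrujytfgeikuy}) fails on $\YVars$). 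So the reduction to a black-box application of Theorem~\ref{skldjhfgslkdufghikl} on the restricted triple does not go through as written.

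The paper's own proof avoids this issue entirely by a different and shorter route: it does not re-establish that $(K,\cM_K^{\sf r},\rescaling{})$ is a good solution of Problem~\ref{likgusefhdrlpouhiouh}, and it does not invoke Theorem~\ref{skldjhfgslkdufghikl} as a black box. Instead it goes \emph{inside} the proof of Statement~\ref{drtjhsrtjhrsth} of Theorem~\ref{skldjhfgslkdufghikl} and observes where the goodness conditions (\ref{jsdyhgcdfasjyht}--\ref{klifgyegrujytfgeikuy}) actually intervene (essentially in the final renaming step, Substatement~d). Substatements a)--c), which carry the rescaling and structural-nonsingularity arguments, are established with the offsets already in hand; the DM restriction~(\ref{kjhyfgliukjk}) is then exactly what is needed to restore the goodness hypotheses for the step where they are used, after which the conclusion of Substatement~d) applies to the variables in $\Vars^{\sf r}$. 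In other words, the paper's argument is a \emph{surgery on the proof} of the theorem rather than a re-application of its statement, which is precisely what sidesteps the offset-consistency problem you identified but could not resolve.
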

\begin{proof}
See Appendix~\ref{hjgfgkkjmhnm}.
\end{proof}

\noindent In Appendix~\ref{klsdifgujhsjdhfh} we analyze a slight modification of the cup-and-ball game, by which the mode change is no longer state based, but rather an external input. For this example, the rescaling problem possesses no good solution. However, applying Corollary~\ref{hgrfdjrgfdfkm} shows that restart positions are determined, but restart velocities are not. 

So far nothing tells us how large the height $K$ for the array should get explored. The next result proposes bounds for it:
\begin{theorem}
	\label{sekdfuysageloiug} The exist two heights $K_*\leq K^*$, depending on the equation offsets of the new mode $S$ and the set $\Vars^-$ of past variables, such that:
	\begin{enumerate}
		\item To find a variable-complete matching $\cM_K$ satisfying requirement $(\ref{srrftjhtr})$, it is enough to select $K\geq K^*$, whereas no such matching can be expected unless $K\geq K_*$;
		\item Let $K$ be the variable height of a successful solution of Problem~$\ref{likgusefhdrlpouhiouh}$. Then, $K\leq K^*$ actually holds.
	\end{enumerate}
\end{theorem}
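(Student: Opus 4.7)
\emph{Plan.} The approach is to exploit the $\Sigma$-method completion structure of the new mode recalled in Section~\ref{nhmgfcvkjh}: for each time shift $k$, the equations $\ppostset{k}{\complete{F}}$ form a structurally nonsingular system on $\ppostset{k}{\complete{X}}$ via the Pryce matching $\consistency{\cM}\cup\indexreduced{\cM}$. My strategy is to build $\cM_K$ by pasting together shifted copies of this canonical matching and connecting them across timesteps through the Euler identities provided by $\sim$-closure (Lemma~\ref{ujytsagrehd}).

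Concretely, I would define $K^*$ by setting $c^*\deff\max_{f\in F} c_{\!f}$, $d^*\deff\max_{x\in X} d_x$, and letting $\ell^-$ be the maximum total degree represented in $\Vars^-$. Then $K^*_{\!f}\deff c^*+d^*+\ell^-$ is a uniform threshold above which $\ttail{\Vars_{A_K}}\cap\Vars^-=\emptyset$. For $K\geq K^*$, the matching is constructed as follows: at each shift $k$, the Pryce matching $\ppostset{k}{(\consistency{\cM}\cup\indexreduced{\cM})}$ covers the corresponding shifted variables; the remaining variables, which are those linked only via Euler identities across shifts, are each uniquely matched with a dedicated Euler identity from $\simclosure{A_K}$. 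Variables belonging to $\Vars^-$ are free inputs and need not be matched. Checking that condition~(\ref{srrftjhtr}) holds reduces to verifying that the tail and the invariant equations are among those covered by the shifted Pryce matchings, which is immediate by construction.

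The lower bound $K_*$ is obtained by a counting argument. The cardinalities $|\dependent{\Vars_{A_K}}|$ and $|\enable{A_K}|$ grow weakly with $K$ with different coefficients, controlled by $\sum_{\!f}(c_{\!f}{+}1)$ and by the thickness of $\Vars^-$ respectively; $K_*$ is the smallest height making $|\enable{A_K}|\geq|\dependent{\Vars_{A_K}}|$, since a variable-complete matching satisfying (\ref{srrftjhtr}) obviously requires at least as many enabled equations as dependent variables. For the second claim, I would argue by truncation: given a successful solution with $K_{\!f}>K^*$ for some $f$, the tail at height $K^*$ already provides a structurally nonsingular cover of every dependent variable of $A_{K^*}$ by the construction above; the matching cells lying at times $k\in(K^*,K_{\!f}]$ form a self-contained block (matching shifted $\complete{F}$ with shifted $\complete{X}$) that can be pruned, yielding a restricted triple $(K^*,\cM_{K^*},\rescaling{}|_{A_{K^*}})$ still solving Problem~\ref{likgusefhdrlpouhiouh}.

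\emph{Main obstacle.} The delicate part is the truncation argument when heights $K_{\!f}$ are non-constant, since equations coming from different base functions at different shifts may share variables through $\sim$-closure. One must verify that the pruned array $A_{K^*}$ remains $\sim$-closed, that the set of variables to be matched does not grow, and that the rescaling offsets $\rescaling{}$ transfer across the truncation boundary without violating the goodness conditions (\ref{jkdsfhgfjsmdhfgjk}) and (\ref{klifgyegrujytfgeikuy}). This last point will require a careful propagation analysis: showing that the rescaling equations~(\ref{slfighlifghpuifhiu}) at times just below $K^*$ cannot, under a good solution, force any variable at time $k>K^*$ to be impulsive beyond what the truncated system already carries.
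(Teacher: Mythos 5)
Your proposal shares the high-level idea of pasting together shifted Pryce matchings and linking them through Euler identities, but it deviates from the paper's proof in ways that leave genuine gaps.

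The main problem is the claim that requirement~(\ref{srrftjhtr}) is ``immediate by construction.'' It is not. In the single-instant Pryce matching, the consistency equations $\pprime{m}{f}$ with $m<c_{\!f}$ are matched against state variables, while the index-reduced equations are matched against leading variables. But in the mode change array, the state variable $\pprimeppostset{d}{K_{\!f}}{x}$ sitting in the tail is $\sim$-equivalent to the leading variable $\pprimeppostset{d_x}{k}{x}$ at an earlier shift $k=K_{\!f}-c_{\!f}+m<K_{\!f}$, so after $\sim$-closure your ``disjoint'' shifted Pryce matchings collide on the same equivalence class. One cannot just superimpose them. The paper resolves this by working in the $\sim$-quotient and then applying the exchange map $\chi$ (see the proof of Lemma~\ref{slfjhsglsujfv}): a leading equation $h=\pprimeppostset{c_{\!f}}{k}{f}$ and a tail consistency equation $g=\pprimeppostset{m}{K_{\!f}}{f}$ share a $\sim$-class, and one swaps the matching edges along the connecting Euler identity to put $g$ into the matching. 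Without this step your $K^*$ does not actually certify that the tail equations are enabled, so Statement~1 is not established. Relatedly, your $K^*_{\!f}=c^*+d^*+\ell^-$ is a uniform constant chosen to clear $\Vars^-$, but the constraint coming from (\ref{leejhghleigheiu}) is per-equation and depends on which tail consistency equations are facts, so the paper's $K^*$ is fundamentally a non-constant height defined by the fine-grained property $\cP^*(K,f,m)$ about $\indexreduced{X}^g\cap\Vars^-$, not a single bound.

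Your argument for Statement~2 is also a different route from the paper's and, as you yourself note, not completed. The paper's Lemma~\ref{sekrghelrghlkjh} does not truncate; it argues by contradiction that if $K>K^*$ then there is a pair $(f,m)$ at the tail whose matched Euler identity forces $\rescaling{y}\geq 1+\rescaling{u}$ with $\rescaling{u}>0$, violating goodness condition~(\ref{klifgyegrujytfgeikuy}). Your truncation approach could in principle work, but you would need to prove precisely what the paper proves directly: that near the artificial boundary $K^*$ the rescaling offsets cannot stay at zero, so the pruned block is not ``self-contained'' in the sense you claim. Stating this as ``a careful propagation analysis'' is naming the gap, not closing it. Finally, your cardinality count for $K_*$ is a genuinely different and more elementary lower bound than the paper's $\cP_*$-based definition, and it does yield a valid (if possibly weaker) $K_*$; that part is a legitimate alternative.
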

\begin{proof}
See Appendix~\ref{slviwsjrhfvluikvhk}.
\end{proof}

\noindent For the cup-and-ball example, we have $K_*{=}K^*{=}1$. Although not true in general, the equality $K_*=K^*$ is expected to hold in practice. Theorem~\ref{sekdfuysageloiug} is illustrated in Appendix~\ref{hgnhtdfiytujhgn}.

\subsection{Solving the rescaling problem}
\label{jsdhgcfvsdjhcg}
Recall our functions $f$ belong to class (\ref{jdhsgfsdkjhfg}). However, since our actual equations will be of the form either (\ref{skdufchsagvdlcsivcg}) or (\ref{jsdhgfsjhydfg}), the following assumptions can be formulated on class (\ref{jdhsgfsdkjhfg}):
\begin{assumption}
	\label{jsdhfgsdjhg} The functions $f$ of class $(\ref{jdhsgfsdkjhfg})$ that we consider, satisfy the following conditions:
	\beq
\exists x{\in}\Lin{f}\mbox{ s.t. }n_x>0 &\mbox{implies}& \Lin{f}=X
\label{jdsfdghsdjfg}
\\
x_{\!f}\in\Lin{f} &\mbox{implies}& \LinM{f}{\cM_K}=\{x_{\!f}\}
\label{fkjsdfhvskdjyhcgk}
\eeq
\end{assumption}
\paragraph*{\normalsize Justification}
The left hand side of (\ref{jdsfdghsdjfg}) holds only if $f$ is of class (\ref{jsdhgfsjhydfg}), i.e., a Euler identity; in this case, all the variables are linearly involved, whence the right hand side. Then, (\ref{fkjsdfhvskdjyhcgk}) holds for every function $f$ of class (\ref{jdhsgfsdkjhfg}).\eproof
\begin{theorem}
	\label{dsjkchagfvckjshg} Assumption~$\ref{jsdhfgsdjhg}$ is in force.
	\begin{enumerate}
		\item \label{jsdhgcfsjcyfg} Finding a good solution for $(\ref{slfighlifghpuifhiu})$ decomposes as the following two steps, where 
		$$\cM_\cL \eqdef \{(\edge_{\!f}){\in}\cM_K {\,\mid\,} x_{\!f}{\in}\Lin{\!f}\;\}$$
		\begin{enumerate}
		\item \label{ksduhfvgskdujfgy} Find a finite minimal nonnegative solution for $\rescaling{\cL}\eqdef\{\rescaling{x_{\!f}}\mid (f,x_{\!f}){\in}\cM_\cL\,\}$, of the system
		\beq
		\forall f{\in}\cM_\cL:
\displaystyle\max_{x\in\LinM{f}{\cM}}\,(n_x{+}\rescaling{x})
= \displaystyle\max_{x\in\Lin{f}}\,(n_x{+}\rescaling{x})
\label{jshgfcsdjhgsfj}
\eeq
\item \label{kjdfghksg} Extend this solution to a good solution for $(\ref{slfighlifghpuifhiu})$ by checking the satisfiability of the following property:
\beq
\forall f\in\cM_K, \forall x\not\in\Lin{\!f}&\Ra&
 \rescaling{x}=0
\label{ksdjfhsgdkfjgh}
\eeq
	\end{enumerate}
	\item \label{jsdhvgcsdkjcg} Rescaling equation $(\ref{jshgfcsdjhgsfj})$ simplifies as 
		\beq
		\forall f{\in}\cM_\cL:&
n_{x_{\!f}}{+}\rescaling{x_{\!f}}
= \displaystyle\max_{x\in\Lin{f}}\,(n_x{+}\rescaling{x})
\label{kdjvchsgkvjshdgv}
\eeq
\end{enumerate}
\end{theorem}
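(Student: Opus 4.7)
The plan is to peel off the structural simplifications afforded by Assumption~\ref{jsdhfgsdjhg} and the goodness hypothesis (\ref{jsdyhgcdfasjyht}), then decompose the rescaling system according to the partition $\cM_K = \cM_\cL \sqcup (\cM_K \setminus \cM_\cL)$ into its linearly and nonlinearly matched halves.

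For Statement 2, I would start from (\ref{jshgfcsdjhgsfj}) and specialize to $f \in \cM_\cL$, i.e., $x_f \in \Lin{f}$. Part (\ref{fkjsdfhvskdjyhcgk}) of Assumption~\ref{jsdhfgsdjhg} then immediately gives $\LinM{f}{\cM_K} = \{x_f\}$, so the right-hand max collapses to the single term $n_{x_f} + \rescaling{x_f}$, which is precisely (\ref{kdjvchsgkvjshdgv}). Note that (\ref{jshgfcsdjhgsfj}) itself is the specialization of the definition-level equation $\rescaling{f} = \rescaling{f_{\downarrow\cM_K}}$ (i.e., (\ref{slfighlifghpuifhiu})) to good solutions: since each $f_x$ is smooth and nonlinear, Axiom (\ref{grsjutrwskjthea}) restricts $\rescaling{f_x}$ to $\{0, +\infty\}$; combined with Axiom (\ref{hgfdjhgfmjm}), the goodness requirement $\rescaling{f} < \infty$ forces $\rescaling{f_x} = 0$ uniformly for every $x \in \Lin{f}$, erasing the $\rescaling{f_x}$ contributions from both sides.

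For Statement 1, I would examine each half of the partition separately. On $\cM_\cL$ the rescaling equations take the form (\ref{kdjvchsgkvjshdgv}) by Statement 2; viewed as a system in the unknowns $\rescaling{\cL}$, this is a monotone max-plus system, and a Kleene fixed-point iteration starting from the zero vector produces its pointwise-minimal finite nonnegative solution whenever one exists, giving step (a). For $f \in \cM_K \setminus \cM_\cL$, the matched variable $x_f$ lies inside some nonlinear factor $f_y$ with $y \in \Lin{f}$, so the goodness-imposed $\rescaling{f_y} = 0$ forces every variable of $f_y$ — in particular $x_f$, and more generally every variable appearing in a matched equation but outside its $\Lin{\cdot}$ — to carry rescaling zero. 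This is exactly condition (\ref{ksdjfhsgdkfjgh}), the content of step (b). To close the loop (that (a)+(b) actually produce a good solution of the full system over $\cM_K$), Assumption (\ref{jdsfdghsdjfg}) intervenes: the very existence of some $f \in \cM_K \setminus \cM_\cL$ rules out any $n_x > 0$ on $\Lin{f}$, for otherwise $\Lin{f}$ would exhaust $\Vars_f$, placing $x_f$ back inside $\Lin{f}$ in contradiction with $f \notin \cM_\cL$. Hence every $n_x$ vanishes on such $\Lin{f}$, and (\ref{slfighlifghpuifhiu}) for $f$ collapses under the zero-rescaling assignments from (b) to a bare max identity on rescaling offsets.

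The main obstacle is exactly this last verification: one must argue that the minimal solution from step (a), extended by the zeros forced by (b), satisfies the identity $\max_{x \in \Lin{f}} \rescaling{x} = \max_{x \in \LinM{f}{\cM_K}} \rescaling{x}$ for every $f \in \cM_K \setminus \cM_\cL$. Showing that this compatibility holds automatically — or characterizing precisely when it does, thereby justifying the "satisfiability check" phrasing of step (b) — requires careful use of the minimality of the solution, the variable-completeness of $\cM_K$, and the structural nonsingularity that $\cM_K$ witnesses for $A_K$.
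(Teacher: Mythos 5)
There is a genuine gap, which you yourself point out but do not close, and it stems from taking the opposite direction to the paper's argument. Your plan is constructive: compute a minimal solution over $\cM_\cL$ by Kleene iteration, force zeros on the complement via (\ref{ksdjfhsgdkfjgh}), and then \emph{verify} that the resulting assignment still satisfies (\ref{slfighlifghpuifhiu}) on $\cM_K\setminus\cM_\cL$. You candidly acknowledge that you cannot discharge this last compatibility check. The paper's proof never performs a converse construction. It argues purely by characterization: \emph{if} a good solution exists, it satisfies (\ref{jshgfcsdjhgsfj}) on $\cM_\cL$ and property (\ref{ksdjfhsgdkfjgh}) (the two easy implications you do handle); any smaller solution of (\ref{slfighlifghpuifhiu}) is again good; and, decisively, by Statement~\ref{jcsdghsdvckjhsdgk} of Theorem~\ref{skldjhfgslkdufghikl} the good solution is \emph{unique}. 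Uniqueness pins the good solution, if any, down to be the minimal one, i.e.\ exactly the output of step~(\ref{ksduhfvgskdujfgy}); if the algorithm's output then fails the check in step~(\ref{kjdfghksg}) or any other goodness condition, no good solution exists. The key lemma your proposal is missing is therefore the uniqueness result; without it the Kleene-fixpoint route is stuck at precisely the place you flag. (Your monotone max-plus observation is of course the right intuition for \emph{solving} the system algorithmically, and the paper does pursue it, via DBM, just after the theorem --- but it is not what discharges the proof.)

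A minor slip in your Statement~\ref{jsdhvgcsdkjcg} argument: under Assumption (\ref{fkjsdfhvskdjyhcgk}) it is the \emph{left}-hand $\max_{x\in\LinM{f}{\cM}}$ in (\ref{jshgfcsdjhgsfj}) that collapses to the singleton $\{x_{\!f}\}$, not the right-hand $\max_{x\in\Lin{f}}$; equation (\ref{kdjvchsgkvjshdgv}) retains the full $\max_{x\in\Lin{f}}$ on its right-hand side and replaces the left-hand max by $n_{x_{\!f}}{+}\rescaling{x_{\!f}}$.
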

\begin{proof} We first prove Statement~\ref{jsdhgcfsjcyfg}. Consider, as part of the rescaling problem, the system of equations (\ref{slfighlifghpuifhiu}). Suppose it possesses a good solution. By (\ref{jsdyhgcdfasjyht}) we look for a solution ensuring that all the variables and enabled equations possess a finite rescaling offset. If such a solution $\mu$ exists, then:
	 \begin{enumerate}
		 \item For $(\edge_{\!f})\in\cM_\cL$, (\ref{jshgfcsdjhgsfj}) holds, and
		\item Property (\ref{ksdjfhsgdkfjgh}) holds.
	 \end{enumerate}
In addition, if a solution is good, then any smaller solution is also good. By Statement~\ref{jcsdghsdvckjhsdgk} of Theorem~\ref{skldjhfgslkdufghikl}, there is at most one good solution. Hence the good solution, if any, must be a minimal solution of (\ref{jshgfcsdjhgsfj}). This proves Statement~\ref{jsdhgcfsjcyfg}.

Focus on Statement~\ref{jsdhvgcsdkjcg}. By Property (\ref{fkjsdfhvskdjyhcgk}) of Assumption~\ref{jsdhfgsdjhg}, the max sitting on the left hand side of (\ref{jshgfcsdjhgsfj}) boils down to the singleton $x=x_{\!f}$. Thus (\ref{jshgfcsdjhgsfj}) rewrites as (\ref{kdjvchsgkvjshdgv}).
\end{proof}

\noindent System (\ref{kdjvchsgkvjshdgv}) is of max/$+$ type. Yet, it belongs to a subclass amenable of much more efficient algorithms. More precisely, straightforward manipulations bring (\ref{kdjvchsgkvjshdgv}) to a set of inequalities of the form $\rescaling{x}-\rescaling{y}\leq D_{xy}$, where $x,y\in{X}\cup\{\zzero\}$, the unknowns are $\rescaling{x}$ taking values in $\overline{\bZ}\eqdef\bZ\cup\{{\pm}\infty\}$, $\zzero$ is a distinguished element such that $\rescaling{\zzero}=0$, and $D_{xy}$ is a matrix of entries belonging to $\overline{\bZ}$. Such systems of inequality constraints are known as Difference Bound Matrices (DBM)~\cite{DBLP:conf/avmfss/Dill89,DBLP:conf/pado/Mine01}. DBM can also be used for the dual problem of the \sigmamethod, for which they provide a modular (compositional) algorithm, see~\cite{caillaud:hal-05257001}.

\subsection{Characterizing the solutions}
\label{kaerjfhgkaewrjyfhg}
In this section, we complete the proof that our approach solves the hot restart problem, by formalizing point~\ref{kaujfyhgakujyhg} of Problem~\ref{dsjkchgfckajycgk} and proving that our solution solves it.

\paragraph*{Background regarding rings and ideals:}
Our reference is the textbook~\cite{ideals2010}. The set of solutions of a (static) system of smooth equations is best studied by using the notions of \emph{ring} $\allFunctions$ of real-valued functions $f$, and of \emph{ideal} of this ring. For $S=(F,X)$  such a system,  we denote by $V_S$ its set of solutions, i.e., the set of all valuations $\vval{X}\in\dom{X}$ such that $F(\vval{X})=0$.  
If $S=S_1\cup{S_2}$, then $V_S=V_{S_1}\cap V_{S_2}$, where the intersection symbol first requires to take the inverse projections of $V_{S_1}$ and $V_{S_2}$, from $X_1$ and $X_2$ to $X_1{\cup}X_2$. The set of all functions $f$ such that $F(\vval{X})=0$ implies $f(\vval{X})=0$ is an ideal, i.e., is closed under addition and multiplication by an arbitrary element of $\allFunctions$, we denote this ideal by $\cI_S$. If $S=S_1\cup{S_2}$, then $\cI_S=\cI_{S_1}+\cI_{S_2}\eqdef\{f_1+f_2\mid f_i\in\cI_{S_i}\}$.

\subsubsection*{Analyzing the cup-and-ball example}
Consider the cup-and-ball example. The  subsystem common to the two modes is
\[S^\pm:\left\{\bea{l}
0=\ddot{x}+\tension{x} \\
0=\ddot{y}+\tension{y}+g
\eea\right.\]
The algebraic variable (the tension $\tension$) plays no role in hot restart. Eliminating it from $S^\pm$ yields the equation
\beq
f_\pm=0, \mbox{ where }f_\pm=\ddot{y}x-\ddot{x}y+gx\,, \label{sdkfvijhsgdbvkjgh}
\eeq
which belongs to the ideal generated by $S^\pm$. However, equation (\ref{sdkfvijhsgdbvkjgh}) exhibits impulses at the mode change, and we have $\rescaling{f_\pm}=\rescaling{\ddot{x}}=\rescaling{\ddot{y}}=1$. Therefore:
\[\bea{rclcl}
\rescaled{\ddot{x}} &=& \vsmall{}\ddot{x} &=& \postset{\dot{x}}-\dot{x} \\
\rescaled{\ddot{y}} &=& \vsmall{}\ddot{y} &=& \postset{\dot{y}}-\dot{y} \\
\rescaled{f_\pm} &=& \vsmall{}{f_\pm} &=&  (\postset{\dot{y}}-\dot{y})x-(\postset{\dot{x}}-\dot{x})y+{0(\vsmall)}
\eea\]
which, by letting $\vsmall:={0}$, yields
\beq
0&=&(\postset{\dot{y}}-\dot{y})x-(\postset{\dot{x}}-\dot{x})y\,. \label{jhgfkysztyregfd}
\eeq
Focus on our solution (\ref{keufygkuaygkuyg}) for the restart of the cup-and-ball, which yields (\ref{dkfvjshdgfkjhy}) by eliminating $\rescaled{\tension}$ from it. Observe that (\ref{jhgfkysztyregfd}) identifies with (\ref{dkfvjshdgfkjhy}) after renaming (\ref{erlfgiuehrliueh}) is performed. 

\subsubsection*{{Generalizing}}
We now generalize the analysis of the cup-and-ball example. Following Section~\ref{jdhsgcvfsjhtdfj}, we consider the ring $\allFunctions$ of functions of the variables belonging to $\allVars$ augmented with $\{\vsmall^{-n}\mid n{\geq}0\}$. We also consider the ring $\Offsets\eqdef(\bN{\cup}\{+\infty\},\max,+)$ of rescaling offsets equipped with the max/$+$\,--\,algebra. 

From now on, we consider $(S^-,\changetime,S)$ a mode change, and we assume that Problem~\ref{likgusefhdrlpouhiouh} possesses a solution $(K,\cM_K,\rescaling{})$ satisfying goodness conditions (\ref{jsdyhgcdfasjyht}--\ref{klifgyegrujytfgeikuy}). By  axioms (\ref{hgfxhngfdxcnjhcg}--\ref{grsjutrwskjthea}), the following result easily follows regarding $\rescaling{}$:
\begin{lemma}
	\label{htdfkfkjhgssssghch} 
$\rescaling{}:\allFunctions\ra\Offsets$ is a ring homorphism.
\end{lemma}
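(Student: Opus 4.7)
The plan is to verify the two homomorphism identities defining a ring map into the max-plus algebra $\Offsets$, namely $\rescaling{f+g} = \max(\rescaling{f},\rescaling{g})$ and $\rescaling{fg} = \rescaling{f}+\rescaling{g}$, plus preservation of the multiplicative unit $\rescaling{1}=0$ (which is immediate from axiom (\ref{hgfxhngfdxcnjhcg}) with $n=0$). The key device I would use throughout is the factorization from Lemma~\ref{kerdjfhsgdkfcjsduh}: whenever $\rescaling{\!f}<\infty$, one can write $f = \vsmall^{-\rescaling{\!f}} \rescaled{f}$ with $\rescaling{\rescaled{f}}=0$. First I would observe that axioms (\ref{hgfdjhgfmjm}) and (\ref{grsjutrwskjthea}) together force every variable appearing in $\rescaled{f}$, whether linearly or inside a nonlinear subterm $f_x$, to carry rescaling offset $0$; otherwise the max in (\ref{hgfdjhgfmjm}) would exceed $0$, or (\ref{grsjutrwskjthea}) would return $+\infty$.

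For the product, I would write $fg = \vsmall^{-(\rescaling{\!f}+\rescaling{g})}(\rescaled{f}\rescaled{g})$. Since the variables of $\rescaled{f}\rescaled{g}$ all have offset $0$, applying whichever of axioms (\ref{hgfdjhgfmjm}) and (\ref{grsjutrwskjthea}) fits the shape of $\rescaled{f}\rescaled{g}$ gives $\rescaling{\rescaled{f}\rescaled{g}}=0$, and then (\ref{hgfxhngfdxcnjhcg}) combined with (\ref{hgfdjhgfmjm}) applied to the $\vsmall^{-(\rescaling{\!f}+\rescaling{g})}$ coefficient yields $\rescaling{fg} = \rescaling{\!f}+\rescaling{g}$. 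For the sum, setting $n = \max(\rescaling{\!f},\rescaling{g})$, I would write $f+g = \vsmall^{-n}\bigl(\vsmall^{n-\rescaling{\!f}}\rescaled{f} + \vsmall^{n-\rescaling{g}}\rescaled{g}\bigr)$, note that the parenthesised term has only variables of offset $0$ and hence rescaling offset $0$, and conclude $\rescaling{f+g} = n$. The boundary cases where $\rescaling{\!f}$ or $\rescaling{g}$ equals $+\infty$ are handled separately but trivially, since the left-hand side is then also $+\infty$.

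The main obstacle is the crudeness of axiom (\ref{grsjutrwskjthea}): it only assigns values in $\{0,+\infty\}$ to a nonlinear function, which at first sight is incompatible with the finite positive offsets that can arise from products such as $fg$ through axiom (\ref{hgfdjhgfmjm}). The factorization trick is precisely the device that resolves this tension: after dividing out $\vsmall^{-\rescaling{\!f}}$ and $\vsmall^{-\rescaling{g}}$, one works entirely with non-impulsive subfunctions, on which (\ref{grsjutrwskjthea}) is benign. A minor secondary point I would need to check is that axioms (\ref{hgfdjhgfmjm}) and (\ref{grsjutrwskjthea}) agree on the intersection of their domains (namely, expressions where all relevant variables carry offset $0$), so that $\rescaling{}$ is unambiguously defined on expressions that can be viewed simultaneously in the linear form (\ref{jdhsgfsdkjhfg}) and as purely nonlinear.
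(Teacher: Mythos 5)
The paper offers no proof of this lemma: it simply asserts that the homomorphism property ``easily follows'' from axioms~(\ref{hgfxhngfdxcnjhcg})--(\ref{grsjutrwskjthea}) and moves on. Your factorization $f = \vsmall^{-\rescaling{\!f}}\rescaled{f}$ via Lemma~\ref{kerdjfhsgdkfcjsduh}, together with the observation that $\rescaling{\rescaled{f}}=0$ forces every variable of $\rescaled{f}$ (linearly occurring or buried inside an $f_x$) to carry offset $0$, is the right way to make explicit what the paper takes for granted. Your closing remark about checking that axioms~(\ref{hgfdjhgfmjm}) and~(\ref{grsjutrwskjthea}) agree on their domain overlap is also a legitimate worry the paper leaves tacit.

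Two caveats, both of which the paper itself glosses over but which are genuine weak points in the argument as written. First, the additive identity $\rescaling{f_1+f_2}=\max(\rescaling{f_1},\rescaling{f_2})$ is strictly speaking only a $\leq$ for functions in $\allFunctions$: when $\rescaling{f_1}=\rescaling{f_2}$ and the leading terms cancel, the left side is strictly smaller. Your step ``the parenthesised term has only variables of offset $0$ and hence rescaling offset $0$'' silently excludes this degeneracy. As elsewhere in the paper, the equality should be read structurally, i.e.\ generically with respect to coefficient values; Lemma~\ref{jfgklujlkughlkj}, which this lemma feeds, is already phrased so that the sum over the maximizing indices may itself vanish. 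Second, in the product case, your final step ``(\ref{hgfxhngfdxcnjhcg}) combined with (\ref{hgfdjhgfmjm}) applied to the $\vsmall^{-(\rescaling{\!f}+\rescaling{g})}$ coefficient'' presupposes that $\vsmall^{-(\rescaling{\!f}+\rescaling{g})}\,\rescaled{f}\rescaled{g}$ fits the form~(\ref{jdhsgfsdkjhfg}), i.e.\ that $\rescaled{f}\rescaled{g}$ has at least one linearly occurring variable. When it does not, only (\ref{grsjutrwskjthea}) applies, and that axiom is insensitive to the $\vsmall^{-n}$ prefactor, so it would return $0$ rather than $\rescaling{\!f}+\rescaling{g}$. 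Closing this hole requires either the additional (natural, but unstated) axiom $\rescaling{\vsmall^{-n}h}=n+\rescaling{h}$, or a restriction of $\allFunctions$ so that the products one encounters always expose a linear variable.
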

Consider the map $\Lambda:\allFunctions\ra\allFunctions$ consisting in setting $\vsmall{:=}{0}$ in $\rescaled{f}$:
\beq
\Restart{f} &\eqdef&\rescaled{f}[\vsmall{:=}{0}]\,.
\label{heagrehrsdjh}
\eeq
\begin{lemma}
	\label{jfgklujlkughlkj} 
Writing $\rescaling{i}$ instead of $\rescaling{f_i}$, map $\Lambda$ satisfies
\[\hspace*{10mm}\bea{rcl}
\Restart{f_{1}{+}f_{2}} &=& \sum_{i:\rescaling{i}=\max(\rescaling{1},\rescaling{2})}~\Restart{f_{i}}
\\ [1mm]
\Restart{f_{1}{\times} f_{2}}&=&\Restart{f_{1}}\times \Restart{f_{2}} 
\eea\]
\end{lemma}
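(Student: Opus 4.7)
The plan is to leverage the ring homomorphism property asserted by Lemma~\ref{htdfkfkjhgssssghch}, which yields $\rescaling{f_1{+}f_2} = \max(\rescaling{1},\rescaling{2})$ and $\rescaling{f_1 f_2} = \rescaling{1}+\rescaling{2}$. These two identities reduce both claims to direct algebraic manipulations of the defining formula $\rescaled{f} = \vsmall^{\rescaling{f}} f$, followed by specialization at $\vsmall{:=}0$.

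For the sum, set $M \eqdef \max(\rescaling{1},\rescaling{2})$ and observe that
\[
\rescaled{f_1{+}f_2} \;=\; \vsmall^{M}(f_1{+}f_2) \;=\; \vsmall^{M-\rescaling{1}}\rescaled{f_1} + \vsmall^{M-\rescaling{2}}\rescaled{f_2}\,,
\]
where each exponent $M{-}\rescaling{i}\geq 0$ vanishes iff $\rescaling{i}{=}M$. Substituting $\vsmall{:=}0$ kills any term whose exponent is strictly positive, leaving exactly $\sum_{i:\rescaling{i}=M}\Restart{f_i}$. The silent but essential point is that $\Restart{f_i}$ is meaningful in the first place: by Lemma~\ref{kerdjfhsgdkfcjsduh}, $\rescaling{\rescaled{f_i}}=0$, so $\rescaled{f_i}$ carries no negative powers of $\vsmall$ and admits a well-defined value at $\vsmall{=}0$.

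For the product, the analogous computation gives
\[
\rescaled{f_1 f_2} \;=\; \vsmall^{\rescaling{1}+\rescaling{2}}\,f_1 f_2 \;=\; \rescaled{f_1}\,\rescaled{f_2}\,,
\]
and specialization at $\vsmall{:=}0$ distributes over this product, since both factors have finite limits at $\vsmall{=}0$ (again by Lemma~\ref{kerdjfhsgdkfcjsduh}).

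No deep obstacle is anticipated. The one subtlety to track is semantic rather than algebraic: each surviving term in both identities must be one on which the substitution $\vsmall{:=}0$ makes sense. Once the non-impulsivity of $\rescaled{f_i}$ provided by Lemma~\ref{kerdjfhsgdkfcjsduh} is invoked, both identities collapse to elementary manipulations in the ring $\allFunctions$.
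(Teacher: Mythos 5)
Your proof is correct and takes essentially the same route as the paper's: invoke Lemma~\ref{htdfkfkjhgssssghch} to evaluate $\rescaling{}$ of the sum and the product, rewrite $\rescaled{f_1{+}f_2}$ (resp.\ $\rescaled{f_1 f_2}$) in terms of $\rescaled{f_1},\rescaled{f_2}$ with nonnegative correction exponents (your $M{-}\rescaling{i}$ is exactly the paper's $\widetilde{\mu}_i$), and then set $\vsmall{:=}0$. Your explicit appeal to Lemma~\ref{kerdjfhsgdkfcjsduh} to justify that the specialization $\vsmall{:=}0$ is meaningful on each $\rescaled{f_i}$ is a small clarification the paper leaves tacit.
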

\begin{proof}
See Appendix~\ref{kcsjgvcsldjghkj}.
\end{proof}

\noindent The following theorem shows that our solution meets Requirement~\ref{kaujfyhgakujyhg} of Problem~\ref{dsjkchgfckajycgk} (by the way, this requirement is now formalized):
\begin{theorem}
	\label{sjdufghafejahyfg} Every function belonging to the ideal spanned by the subsystem that is common to both modes, i.e., $S^\pm=S^-{\cap}S$, is mapped via $\Lambda$ to a function that belongs to the ideal generated by our solution for restart.
\end{theorem}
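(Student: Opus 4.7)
The plan is to prove the statement in two stages: first establish it for each generator $h \in S^\pm$ of the ideal, and then lift to arbitrary ideal elements via the $(\max,+)$-morphism-like properties of $\Lambda$ collected in Lemma~\ref{jfgklujlkughlkj}.

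First, I would show that every $h \in S^\pm = S^- \cap S$ sits as $\pprimeppostset{0}{0}{h}$ inside the infinite array $A$, hence in $A_K$ for any height $K$. Since $h \in F^-$, we have $h \in \ppostset{0}{F^-} \subseteq \bigcup_{k \geq 0} \ppostset{k}{F^-}$, so by~(\ref{jytfjytjfgyluk}) it belongs to $\enable{A_K}$. Requirement~(\ref{srrftjhtr}) then forces $h \in \cM_K$, and goodness condition~(\ref{jsdyhgcdfasjyht}) gives $\rescaling{h} < \infty$, so $\rescaled{h}$ is a bona fide element of $\rescaled{A}_\vsmall$. Tracing through Procedure~\ref{skedfjuyghwsaliukgh}, step~\ref{kuhyglsdkjvhblk} retains $h$, step~2 produces $\rescaled{h}$, step~\ref{ligujehdrglieugh} yields $\Lambda(h) = \rescaled{h}[\vsmall := 0]$ as one of the equations of $\rescaled{A}_0$, and step~4 renames its variables (all of which sit at time $\changetime$ and therefore belong to $\Vars_{A_K}$) to produce an equation of $A_\star$. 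Hence $\Lambda(h) \in \cI_{A_\star}$.

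Second, I would extend to the whole ideal. Any $f \in \cI_{S^\pm}$ decomposes as a finite sum $f = \sum_i g_i h_i$ with $h_i \in S^\pm$ and $g_i$ a smooth function of variables in $\Vars_{A_K}$. Multiplicativity of $\Lambda$ gives $\Lambda(g_i h_i) = \Lambda(g_i)\,\Lambda(h_i)$ termwise; iterating the binary rule of Lemma~\ref{jfgklujlkughlkj} over the sum then yields $\Lambda(f) = \sum_{i \in I^*} \Lambda(g_i)\,\Lambda(h_i)$, where $I^* \subseteq \{1,\dots,n\}$ collects the indices achieving the common maximum rescaling offset $\max_j \rescaling{g_j h_j}$. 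By the first stage each $\Lambda(h_i) \in \cI_{A_\star}$; closure of the ideal under multiplication by elements of the ambient ring gives $\Lambda(g_i)\,\Lambda(h_i) \in \cI_{A_\star}$; and closure under finite sums delivers $\Lambda(f) \in \cI_{A_\star}$.

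The main obstacle I expect is the max-additivity of $\Lambda$: when the $\rescaling{g_i h_i}$ are not all equal, only the dominant terms survive after $\vsmall := 0$, and I must confirm that discarding the sub-dominant contributions cannot push $\Lambda(f)$ out of $\cI_{A_\star}$. This is actually immediate once one notices that any subsum of ideal elements remains in the ideal. A secondary worry is the variable-renaming step of Procedure~\ref{skedfjuyghwsaliukgh}, which is only well-defined on expressions whose variables lie in $\Vars_{A_K}$; this is why the coefficients $g_i$ must be restricted accordingly. Under that restriction the renaming is a ring homomorphism onto the ring in which $A_\star$ lives, so it transports ideals to ideals and validates the identification made in the first stage.
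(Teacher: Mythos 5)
Your proof is correct and takes essentially the same route as the paper's: show that the common-mode generators are enabled and matched (via~(\ref{srrftjhtr})), then lift through the ideal via Lemma~\ref{jfgklujlkughlkj}. The only slip is the parenthetical claim that the variables of $\Lambda(h)$ all ``sit at time $\changetime$'' --- after rescaling impulsive variables via Euler identities they can sit at shifted instants, and the correct reason they still lie in $\Vars_{A_K}$ is Substatement~a) of Theorem~\ref{skldjhfgslkdufghikl}, which in turn relies on goodness condition~(\ref{klifgyegrujytfgeikuy}).
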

\begin{proof}
Convention~{\ref{slrikfgsjhfglsiugh}} is in force throughout this proof. Let $A_\vsmall$ be the mode change array associated to the considered solution $(K,\cM_K,\rescaling{})$, and let $A_\vsmall^\cM$ be the subarray collecting the enabled equations (i.e., matched in $\cM_K$). Rephrasing Theorem~\ref{drtjhsrtjhrsth} by using map $\Lambda$, and seeing arrays as sets of equations, we have:
\beq\bea{rcl}
\rescaled{A_0} &=& \RRestart{A_\vsmall^\cM} \mbox{ is structurally nonsingular, and}
\\
A_\star &=& \rescaled{A_0} \mbox{ followed by renaming }(\ref{erlfgiuehrliueh})\,.
\eea  \label{ujytfriukytloik}
\eeq
Define the ideals
\beq
\cI_\vsmall \eqdef \sspan(A_\vsmall^\cM) &\mbox{and}& \cI_0 \eqdef \sspan(\rescaled{A_0})\,. \label{htrgfsgraesrgjdthg}
\eeq
Using Lemma~\ref{jfgklujlkughlkj} yields
\beq
\forall f\in\cI_\vsmall &\Ra& \RRestart{f}\in\cI_0\,.  \label{htrsjythdfgmnbv}
\eeq
Let $A^\pm_\vsmall$ be the subarray of $A^\cM_K$ collecting shifted versions of the equations involved in both modes. By (\ref{srrftjhtr}), invariant equations cannot be disabled, i.e., $A^\pm_\vsmall\subseteq\enable{A_\vsmall}$. Hence equations belonging to $A^\pm_\vsmall$ are all matched in $\cM_K$. Therefore, $A^\pm_\vsmall \subseteq \cI_\vsmall$. Hence, by (\ref{htrsjythdfgmnbv}), $
\RRestart{A^\pm_\vsmall}\subseteq\cI_0$, which proves the theorem.
\end{proof}

\noindent Array $\rescaled{A_0}$ involves the following categories of variables:
\begin{enumerate}
	\item \label{lskfcjhaglfiuh} free variables belonging to $\Vars^-$; they are non-impulsive;
	\item \label{gosekldufygekiuyrg} dependent state variables sitting in the tail, i.e., of the form $\pprimeppostset{m}{K_{\!f}}{x}$ for $(\edge)\in\cM$, the solution of Problem~(\ref{liftuerhpituhepu8}) and $0\leq m<c_{\!f}$; they are also non-impulsive;
	\item \label{slifvjhsglsikugh} other variables, possibly rescaled and/or auxiliary.
\end{enumerate}
Variables of categories~\ref{lskfcjhaglfiuh} and~\ref{gosekldufygekiuyrg} are physically interpretable (left- and right-limits of state variables at the mode change). In contrast, variables of category~\ref{slifvjhsglsikugh} are not: it makes sense to eliminate them. The resulting system of equations is included in ideal $\cI_0$. Such an elimination can be performed using numerical methods for linear systems, and Gr\"obner bases and elimination theory~\cite{ideals2010} for polynomial systems. For the latter class, large systems are totally beyond reach. Even for linear systems, elimination is of order $n^3$, where $n$ is the dimension of the mode change array. 

\section{Proof of correctness}
\label{skduhvgskjvghkjh}
In this section we present the proof of the main result, namely the correctness of our approach formalized by Theorem~\ref{skldjhfgslkdufghikl}. Other proofs are deferred to appendices.
Throughout this section, Conventions~\ref{skdjhasdgfckjsadhg},~\ref{slrikfgsjhfglsiugh}, and~\ref{conv:remove-facts} are in force.
We will use the following result regarding rescaling offsets:
\begin{lemma}
	\label{jdschtyasdfcjhagf}   
		Let $x,z$ be as in Euler identity $(\ref{utrwshjyhtrf})$, and let $\rescaling{}$ be a rescaling offset. Then, $\rescaling{x}=n+\max_{0\leq i\leq n} \rescaling{(\ppostset{(-i)}{z})}$.
\end{lemma}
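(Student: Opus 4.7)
The plan is to apply the rescaling axioms of Definition~\ref{hgtdfheafsgrefa} directly to the Euler identity $\euler{x}{z}$, then invoke the matching constraint (\ref{dksjhfgksjhgfk}) together with the minimality of the rescaling solution granted by Theorem~\ref{dsjkchagfvckjshg}.\ref{ksduhfvgskdujfgy}.

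Rewriting (\ref{utrwshjyhtrf}) as
\[
\euler{x}{z} \;=\; x \,-\, \vsmall^{-n}\sum_{i=0}^{n}(-1)^i\binom{n}{i}\,\ppostset{(-i)}{z},
\]
I identify it as a model function of form (\ref{jsdhgfsjhydfg}), a special case of (\ref{jdhsgfsdkjhfg}). Its set of linear variables is $\Lin{\euler{x}{z}} = \{x\} \cup \{\ppostset{(-i)}{z} \mid 0 \leq i \leq n\}$, with exponents $n_x = 0$ and $n_{\ppostset{(-i)}{z}} = n$, each accompanied by a nonzero constant multiplier of rescaling offset $0$ (axiom (\ref{grsjutrwskjthea})). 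Axiom (\ref{hgfdjhgfmjm}) then yields
\[
\rescaling{\euler{x}{z}} \;=\; \max\bigl(\rescaling{x},\; n + \max_{0 \leq i \leq n} \rescaling{(\ppostset{(-i)}{z})}\bigr).
\]
Within a solution of the rescaling problem in which $\euler{x}{z}$ is matched with $x$---the essentially forced choice, since $x$ is the unique linear variable of $\Lin{\euler{x}{z}}$ with $n_\bullet = 0$---property (\ref{fkjsdfhvskdjyhcgk}) of Assumption~\ref{jsdhfgsdjhg} gives $(\euler{x}{z})_{\downarrow\cM_K} = x$, so the rescaling equation (\ref{dksjhfgksjhgfk}) reduces to $\rescaling{\euler{x}{z}} = \rescaling{x}$. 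Combining with the displayed identity produces the inequality $\rescaling{x} \geq n + \max_{i} \rescaling{(\ppostset{(-i)}{z})}$.

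The main obstacle is promoting this inequality to an equality. I would argue this from the minimality of the rescaling solution, guaranteed by Theorem~\ref{dsjkchagfvckjshg}.\ref{ksduhfvgskdujfgy}: if strict inequality held at $\rescaling{x}$, then replacing it by the lower value $n + \max_i \rescaling{(\ppostset{(-i)}{z})}$ would still satisfy the rescaling equation for $\euler{x}{z}$ (the two terms in the outer max collapse to the same value), and could only decrease the right-hand-side maxima of any other equations in $\cM_K$ in which $\rescaling{x}$ appears linearly on the right; a careful bookkeeping based on the form (\ref{kdjvchsgkvjshdgv}) confirms that the reduced assignment remains a valid solution of the rescaling system, contradicting minimality. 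Equality then follows, completing the proof.
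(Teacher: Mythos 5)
Your plan diverges from the paper's argument and has real gaps. The lemma's hypothesis is minimal: $\rescaling{}$ is an arbitrary rescaling offset, i.e.\ any function satisfying the axioms of Definition~\ref{hgtdfheafsgrefa}. Your proof narrows the setting to a minimal good solution of Problem~\ref{likgusefhdrlpouhiouh} equipped with a chosen matching $\cM_K$, which is extra structure the lemma does not assume. Worse, invoking Theorem~\ref{dsjkchagfvckjshg}.\ref{ksduhfvgskdujfgy} is circular here: the proof of Theorem~\ref{dsjkchagfvckjshg} appeals to Statement~\ref{jcsdghsdvckjhsdgk} of Theorem~\ref{skldjhfgslkdufghikl}, which is established through Lemma~\ref{lvdkfjhbvlfdkujh}, whose argument relies on Statement~\ref{drtjhsrtjhrsth} of Theorem~\ref{skldjhfgslkdufghikl} --- and Substatement~a) of that statement is precisely where Lemma~\ref{jdschtyasdfcjhagf} is used, closing the loop. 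Also, your claim that matching $\euler{x}{z}$ with $x$ is ``essentially forced'' is false: in the construction $\Gamma$ of Lemma~\ref{jdshcfgvskdjmhgn}, Euler identities are paired with the lower-order $z$-variable, and Fig.~\ref{kefygfkjfgjhgf} exhibits both pairings. If $\euler{x}{z}$ is instead matched with some $\ppostset{(-i)}{z}$, the constraint~(\ref{dksjhfgksjhgfk}) gives a different relation and your inequality-then-minimality step no longer produces the lemma.

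The argument the paper has in mind is a one-liner and uses none of that machinery. Euler identity~(\ref{utrwshjyhtrf}) expresses $x$ \emph{itself} as the linear combination $g\eqdef\vsmall^{-n}\sum_{i=0}^{n}(-1)^i\binom{n}{i}\,\ppostset{(-i)}{z}$. This $g$ is of form~(\ref{jdhsgfsdkjhfg}) with $\Lin{g}=\{\ppostset{(-i)}{z}\mid 0\leq i\leq n\}$, each exponent equal to $n$, and constant coefficients of rescaling offset $0$ by Axiom~(\ref{grsjutrwskjthea}). Applying Axiom~(\ref{hgfdjhgfmjm}) to this expansion of $x$ --- not to the function $\euler{x}{z}=x-g$, as you do --- gives $\rescaling{x}=n+\max_{0\leq i\leq n}\rescaling{(\ppostset{(-i)}{z})}$ directly, as an equality, for any rescaling offset. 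Your misstep is applying the axiom to $\euler{x}{z}$: since $x$ then appears inside the max on the right-hand side, you can only extract an inequality, and you are forced to import matchings and minimality to recover the equality.
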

\begin{proof}
The lemma is a direct consequence of identity (\ref{utrwshjyhtrf}) and Axiom (\ref{hgfdjhgfmjm}).
\end{proof}

\noindent We now proceed with the proof of Theorem~\ref{skldjhfgslkdufghikl}.

\paragraph*{Proof of Statement~\ref{jkhtgfvkjhgkjh} of Theorem~\ref{skldjhfgslkdufghikl}:}
\label{klsdjghbsvkjh}
Let $\chi$ be the Heaviside function and let $\chi^\eta$ be a smoothing of it, such that $0\leq\chi^\eta\leq 1$ and $\chi^\eta=\chi$ outside the interval $[0,\eta]$, with $\eta$ being a small positive parameter. Expand $\max(x,y)=x\chi(x-y)+y\chi(y-x)$ and define, inductively:
\[\bea{rclll}
\max^\eta(x,y) &\eqdef&\negesp& x\;\chi^\eta(x-y)+y\;\chi^\eta(y-x)
\\ [1mm]
\max^\eta(x,{\rm list}) &\eqdef&\negesp&y
x\;\chi^\eta(x-\max^\eta({\rm list}))
+y\;\chi^\eta(\max^\eta({\rm list})-x)
\eea\]
Then, operator $\max^\eta$ is smooth in its arguments. Next,  with reference to Axiom (\ref{hgfdjhgfmjm}), setting
\beqq
\tilde{\mu}_{\!f} \eqdef 
\max^{\hspace*{8mm}\eta}_{x\in\Lin{f}}\tilde{\mu}_{x}
-
\max^{\hspace*{8mm}\eta}_{x\in\Lin{f\downarrow\cM_K}}\tilde{\mu}_{x}\,,
&\mbox{ where }&\tilde{\mu}_{x} \eqdef n_x{+}\rescaling{x}{+}\rescaling{\!f_x}\,,
\eeqq 
the mapping 
\beqq
(\edge) &\mapsto& 
(\tilde{\mu}_{\!f},\rescaling{x})
\eeqq
maps the variable-complete matching $\cM_K$ for system $A_\vsmall{=}0$ to a variable-complete matching $\cM_{\rescaling{}}$ for the system 
\beq
\forall f\in A_\vsmall&:& \tilde{\mu}_{\!f}=0
\label{htgfdjtrgdf}
\eeq	
whose dependent variables are the $(\rescaling{x})_{x\in X}$. Hence, system (\ref{htgfdjtrgdf}) is structurally nonsingular, let $\rescaling{x}^\eta$ denote its generically unique solution.

Let $\eta_n\searrow{0}$. The sequence $n\mapsto(\rescaling{x}^{\eta_n})_{x\in{X}}$ is bounded. It thus possesses converging subsequences. By abuse of notation we rename $n\mapsto(\rescaling{x}^{\eta_n})_{x\in{X}}$ one of them, and we denote by $(\rescaling{x}^*)_{x\in{X}}$ its limit. It is a solution of system (\ref{slfighlifghpuifhiu}). This proves the existence of solutions to this system. Uniqueness, however, cannot be guaranteed at this point.

\paragraph*{Proof of Statement~\ref{drtjhsrtjhrsth} of Theorem~\ref{skldjhfgslkdufghikl}:}
We decompose it into its successive substatements:
\begin{enumerate}
	  \item[a)] \label{klsifhwegloig} Array $\rescaled{A}_\vsmall$  involves only variables belonging to $\Vars_{A_\vsmall}$;
		\item[b)] \label{kifjhwgfliwurgh} Array $\rescaled{A}_0$ is structurally nonsingular;
		\item[c)] \label{lfiweruhfghlpuihg} $\rescaled{A}_0$ is the unique limit of $\rescaled{A}_\vsmall$ when $\vsmall\searrow{0}$; and
		\item[d)] \label{lfiwughfwlifu} The restart system $A_\star$ uniquely determines the states $\pplus{(\pprime{m}{y})}$ for the hot restart of the new mode.\eproof
\end{enumerate}
\emph{Proof of Substatement~a):} This substatement deals with rescaling offsets of variables. Let $x$ be as in Euler identity (\ref{utrwshjyhtrf}), and select $z$ such that $n\eqdef m_x{-}m_z =\min(\rescaling{x},m_x)$. Using  (\ref{utrwshjyhtrf}) yields:
\beqq
\rescaled{x} = \vsmall^{\rescaling{x}}{\;}x
&=& \vsmall^{\rescaling{x}-n}{\;}
\underbrace{\sum_{i=0}^{n}
\left(\negesp\bea{c}n \\ i\eea\negesp\right)
(-1)^i\underbrace{\ppostset{(-i)}{z}}_{z_i}}_{z_*}\,.
\eeqq
By Lemma~\ref{jdschtyasdfcjhagf}, the following two cases occur:
\begin{itemize}
	\item 
Case $\rescaling{x}{\leq}{m}$: we have
\begin{itemize} 
	\item $\rescaled{x}={z_*}$ and, in the expansion of $z_*$, $\vsmall$ does not occur and no variable $z_i$ is impulsive; 
	\item by condition (\ref{klifgyegrujytfgeikuy}), all the variables $z_i$ occurring in the expansion of $z$ sit within ${\Vars_A}$.
\end{itemize}
\item Case $\rescaling{x}{>}{m}$: we have
\begin{itemize}
	\item $\rescaled{x}=\rescaled{z_*}=\vsmall^{\rescaling{x}-m}{}z_*$ is not impulsive;
	\item for $1{\leq} i{\leq} n$, $\vsmall^{\rescaling{x}-m}{}z_i{=}\vsmall^{\rescaling{i}}{}\rescaled{z_i}$ for $\rescaling{i}{=}{\rescaling{x}{-}m}{+}{\rescaling{z_i}}$;
	\item by condition (\ref{klifgyegrujytfgeikuy}), all the variables occurring in the expansion of ${z}$ sit within ${\Vars_A}$.
\end{itemize}
\end{itemize}
This analysis is summarized by the following statement:
\beq
\mbox{\begin{minipage}{12cm}
In the two cases, $\rescaled{x}$ expands as a linear combination of terms $\vsmall^{\rescaling{i}}{}\rescaled{z_i}$, where  $\rescaling{i}{\geq} 0$ and variable $\rescaled{z_i}$ sits within array $A_\vsmall$.
\end{minipage}}	
\label{lougvhwselouhgilo}
\eeq
This proves {Substatement}~a).
\smallskip

\emph{Proof of Substatement~b), the crux of the proof (guided by Rule~$\ref{ksjhcgskdjgh})$:} Let
\[\bea{rcl}
f = \sum_{x\in\Lin{f}}\;\vsmall^{-n_x}\;x \;f_x
\eea\]
be as in (\ref{jdhsgfsdkjhfg}) and such that $\rescaling{f}{<}\infty$. Using axioms (\ref{hgfxhngfdxcnjhcg}--\ref{grsjutrwskjthea}), we get $\rescaling{f_x}=0$, implying $\rescaled{f_x}=f_x$. Setting $\tilde{\mu}_x\eqdef n_x{+}\rescaling{x}{+}\rescaling{f_x}=n_x{+}\rescaling{x}$, rescaling (\ref{grshrefrsriyyk}) for $f$ yields:
\[\bea{rcl}
\rescaled{f} &=& \vsmall^{\rescaling{\!f}}{}\sum_{x\in\Lin{f}}\vsmall^{-n_x}\,x \,f_x
\\ [1mm]
&=& \vsmall^{\rescaling{\!f}}{}\sum_{x\in\Lin{f}} \bigl(\vsmall^{-n_x}{}\,(\vsmall^{-\rescaling{x}}\rescaled{x})\,f_x\bigr)
\\ [1mm]
&=& \vsmall^{\rescaling{\!f}}{}\sum_{x\in\Lin{f}} \bigl(\vsmall^{-\tilde{\mu}_x}{}\,\rescaled{x}\,f_x\bigr)
\\ [1mm]
&=& \wemph{\vsmall^{\rescaling{\!f}}{}}\sum_{x\in\Lin{f}} \bigl(\vsmall^{\rescaling{\!f}-\tilde{\mu}_x}{}\,\rescaled{x}\,f_x\bigr)\,.
\eea
\label{lriguhwseluh}
\]
Using Axiom (\ref{hgfdjhgfmjm}) and equation (\ref{slfighlifghpuifhiu}), we get $\rescaling{\!f}-\tilde{\mu}_x\geq 0$, with equality for the term in the sum involving the unique variable $x$ such that $(\edge)\in\cM_K$. This shows that 
\beq
\mbox{
\begin{minipage}{12cm}
	 when setting $\vsmall:={0}$ in the rescaled array $\rescaled{A_\vsmall}$, the presence of term $\rescaled{x}f_x$ remains guaranteed for $(\edge)\in\cM_K$.
\end{minipage}
} \label{lsifughlsiufh}
\eeq
Hence,  rescaling $A_\vsmall$ and then setting $\vsmall:={0}$ ensures that $\rescaled{\cM_0}$ is a variable-complete matching for $\rescaled{A}_0$, where $\rescaled{\cM_0}\eqdef\{(\rescaled{f},\rescaled{x}){\,\mid\,}(\edge){\,\in\,}\cM_K\}$. This proves {Substatement}~(b). 

\smallskip
\emph{Proof of Substatement~c):}
{It is a direct consequence of Substatement~b), by Proposition~\ref{jsdhgfaksjyhcg} in Section~\ref{ksdjhfgksjhg}. 

\smallskip
\emph{Proof of Substatement~d):}
Substatements~a,b,c) together imply the following, for array $\rescaled{A}_0$:
\begin{itemize}
	\item[(i)] it is structurally nonsingular;
	\item[(ii)] it involves only variables belonging to $\Vars_{A_\vsmall}$;
	\item[(iii)] when constructing it, no equation belonging to the tail of $A_\vsmall$ was disabled.
\end{itemize}
Then, $A_\star$ is obtained from $\rescaled{A}_0$ via renaming (\ref{erlfgiuehrliueh}). By (ii) and Lemma~\ref{sldkjfhsblkj}, the renaming (\ref{erlfgiuehrliueh}) is total and injective. By (iii), all the consistency constraints of the new mode are satisfied by the solution of $A_\star=0$. Since  the renaming (\ref{erlfgiuehrliueh}) is injective by (i), $A_\star$ is also structurally nonsingular. This proves {Substatement}~(d) and finishes the proof of Statement~\ref{drtjhsrtjhrsth} of Theorem~\ref{skldjhfgslkdufghikl}.

\paragraph*{Proof of Statement~\ref{jcsdghsdvckjhsdgk} of Theorem~\ref{skldjhfgslkdufghikl}:}
\label{selrijghvlsdkjh}
We refine Statement~\ref{jcsdghsdvckjhsdgk} as the following lemma:
\begin{lemma} \label{lvdkfjhbvlfdkujh}
	\	 
	 \begin{enumerate}
	\item \label{jdhtfgfedsfm}
	For every variable $x\in\dependent{\Vars_A}$, its rescaling offset $\rescaling{x}$ is independent from the particular choice for the pair $\bigl(K,\cM_K\bigr)$ in the solution of Problem~$\ref{likgusefhdrlpouhiouh}$ satisfying goodness Conditions $(\ref{jsdyhgcdfasjyht}$--$\ref{klifgyegrujytfgeikuy})$;
		 \item \label{sdlvcsowujhvsloivch}
	 For $f\in\enable{A}$, its rescaling offset $\rescaling{\!f}$ is independent from the particular choice for the pair $\bigl(K,\cM_K\bigr)$ in the solution of Problem~$\ref{likgusefhdrlpouhiouh}$ satisfying goodness Conditions $(\ref{jsdyhgcdfasjyht}$--$\ref{klifgyegrujytfgeikuy})$;
	
	\item \label{hjtgfkujyhglk}
The height function $K{:}f{\mapsto} K_{\!f}$, for $f{\in}{F}$, is unique. 
	 \end{enumerate}
\end{lemma}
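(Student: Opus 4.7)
The proof rests on two observations. First, by Theorem~\ref{dsjkchagfvckjshg} Statement~\ref{jsdhgcfsjcyfg}, a good solution identifies $\rescaling{}$ as a minimal nonnegative finite solution of the max-plus system~(\ref{kdjvchsgkvjshdgv}). Reading this system as
$$\rescaling{\!f}=\max_{x\in\Lin{f}}(n_x+\rescaling{x})\quad\text{for all }f\in\enable{A_K},$$
with equation offsets recovered from variable offsets via Axiom~(\ref{hgfdjhgfmjm}), the matching $\cM_K$ only serves as an argmax witness certificate and does not affect the system of constraints itself. Since the right-hand side is monotone non-decreasing in the arguments $\rescaling{x}$, the set of solutions satisfying the boundary conditions (past variables and tail variables have offset $0$) is closed under pointwise minimum, and therefore admits a unique minimal element for any fixed domain.

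For Statement~\ref{jdhtfgfedsfm}, given two good solutions $(K^1,\cM^1,\mu^1)$ and $(K^2,\cM^2,\mu^2)$, I would work on a common upper-bound height and assume without loss of generality $K^1\leq K^2$ pointwise. The restriction of $\mu^2$ to $\Vars_{A_{K^1}}$ still satisfies the max-plus fixed-point equations inherited from $\mu^2$. The main thing to check is that the tail of $A_{K^1}$ has offset~$0$ under $\mu^2$: this is a backward monotonicity property, namely that if the system is non-impulsive at the tail of $A_{K^2}$ and the shift-invariant max-plus relations are satisfied in between, then it must already be non-impulsive at the (earlier) tail of $A_{K^1}$. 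Once this is established, $\mu^2\!\upharpoonright\Vars_{A_{K^1}}$ is a good solution of the smaller problem, so by the minimality from Stage~1 it coincides with $\mu^1$. Statement~\ref{sdlvcsowujhvsloivch} then follows immediately: Axiom~(\ref{hgfdjhgfmjm}) expresses $\rescaling{\!f}$ solely from variable offsets appearing in $\Lin{\!f}$, which are now known to be independent of the chosen solution.

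For Statement~\ref{hjtgfkujyhglk}, once the variable offsets are fixed, the height $K_{\!f}$ is pinned down as the smallest nonnegative integer $k$ such that all tail variables $\pprimeppostset{m}{k}{x_{\!f}}$ (for $0\leq m\leq c_{\!f}$, where $x_{\!f}$ is the variable matched to $f$ in the \smethod) have rescaling offset $0$ under the now-unique $\rescaling{}$. Any strictly larger choice of $K_{\!f}$ would introduce redundant tail equations which cannot be matched consistently with the goodness condition~(\ref{klifgyegrujytfgeikuy}) while remaining variable-complete, and any strictly smaller choice would leave an impulsive variable in the tail, violating~(\ref{jkdsfhgfjsmdhfgjk}); hence $K_{\!f}$ is intrinsic.

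The main obstacle is the backward monotonicity step used in the second paragraph. Naively, impulsivity should decay as shifts move past the transient, but rigorously justifying this requires exploiting shift-invariance of the equations $\pprimeppostset{m}{k}{f}$ together with condition~(\ref{klifgyegrujytfgeikuy}), so that a zero offset at $\pprimeppostset{m}{K^2_{\!f}}{x_{\!f}}$ propagates back through the Euler identities and the max-plus relations to $\pprimeppostset{m}{K^1_{\!f}}{x_{\!f}}$ without being inflated by some intermediate nonlinearity.
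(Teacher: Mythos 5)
There are two genuine problems with your proposal, one of logic and one of substance.

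\emph{Circularity.} Your opening step invokes Theorem~\ref{dsjkchagfvckjshg}, Statement~\ref{jsdhgcfsjcyfg}, to conclude that a good solution must be a minimal nonnegative solution of the max-plus system. But the paper's proof of Theorem~\ref{dsjkchagfvckjshg} explicitly relies on Statement~\ref{jcsdghsdvckjhsdgk} of Theorem~\ref{skldjhfgslkdufghikl} (``there is at most one good solution, hence the good solution, if any, must be a minimal solution''), and Lemma~\ref{lvdkfjhbvlfdkujh} is exactly the refinement that establishes Statement~\ref{jcsdghsdvckjhsdgk}. Any argument routed through Theorem~\ref{dsjkchagfvckjshg} therefore presupposes the conclusion. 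You need an independent reason for minimality, and the paper does not supply one prior to this lemma.

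\emph{The matching does affect the constraint system.} Equation~(\ref{dksjhfgksjhgfk}) imposes $\rescaling{\!f}=\rescaling{(f_{\downarrow\cM_K})}$ only for $f\in\cM_K$, and $f_{\downarrow\cM_K}$ singles out the term containing the matched variable $x_{\!f}$, so the equality says precisely that $x_{\!f}$ must attain the max. Condition~(\ref{srrftjhtr}) only forces $\enable{A_K}\subseteq\cM_K$; a variable-complete matching may also match equations from $\disable{A_K}$, and which ones get matched (and which $x_{\!f}$ is chosen for each) genuinely changes the set of equality constraints. Likewise, the index set of the constraints is tied to $K$. Dropping this dependence to get a matching-free monotone fixed-point system is not justified.

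Finally, your own flagged ``backward monotonicity'' step is where the real work sits, and it is not a minor technicality: one must show that a good solution on $A_{K_2}$ restricts to a good solution on $A_{K_1}$, in particular that tail variables of the shorter array are already non-impulsive. The paper sidesteps this entirely by a direct contradiction: if $\rescaling{x}^{K_1,\cM_1}<\rescaling{x}^{K_2,\cM_2}$, then rescaling $x$ according to the second solution over-scales it by a positive power of $\vsmall$, so setting $\vsmall:={0}$ erases $x$ from $\cM_2$ and kills the structural nonsingularity of $\rescaled{A}_0$ guaranteed by Statement~\ref{drtjhsrtjhrsth} of Theorem~\ref{skldjhfgslkdufghikl}; Statement~\ref{sdlvcsowujhvsloivch} is argued analogously, using $f\in\enable{A}\Ra f\in\cM$. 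For Statement~\ref{hjtgfkujyhglk}, the paper again argues by contradiction, not by a ``smallest $k$ with zero tail offsets'' characterization: extending the shorter matching to height $K_2$, a forced Euler identity $0=\vsmall\,\dot{z}-(\postset{z}-z)$ in the gap yields $\rescaling{\dot z}\geq 1$, which clashes with the $0$-extension convention and the already-established uniqueness of variable offsets. Your characterization of $K_{\!f}$ in terms of the ``now-unique $\rescaling{}$'' is also delicate because the offsets are only defined up to the $0$-extension outside the chosen array, so the characterization risks being vacuous unless that bootstrap is handled explicitly.
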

We prove the three statements successively. For the first two, since the variable height  $K$ of the array $A_\vsmall$ is not fixed, we extend rescaling offsets beyond array $A_\vsmall$, by setting $\rescaling{x}=0$ for $x\not\in\dependent{\Vars_{A_\vsmall}}$ and $\rescaling{\!f}=0$ for $f\not\in{A_\vsmall}$. 

Consider Statement~\ref{jdhtfgfedsfm} of Lemma~\ref{lvdkfjhbvlfdkujh}. With the above convention, denote by $\rescaling{x}^{K,\cM}$ the rescaling offset of $x$ for a pair $(K,\cM_K)$ being part of a solution of Problem~\ref{likgusefhdrlpouhiouh} satisfying goodness Conditions (\ref{jsdyhgcdfasjyht}--\ref{klifgyegrujytfgeikuy}). Consider two such different pairs $(K_1,\cM_{1})$ and $(K_2,\cM_{2})$. 
Suppose $\rescaling{1}\,{\eqdef}\,\rescaling{x}^{K_1,\cM_1}<\rescaling{2}\,{\eqdef}\,\rescaling{x}^{K_2,\cM_2}$ for two such pairs $(K_i,\cM_i),i{=}1,2$. Let $\rescaled{x}_i=\vsmall^{\rescaling{i}}x,i{=}1,2$ be the rescaling of $x$ based on the first and second solution. Then, $\rescaled{x}_2=\vsmall^{\rescaling{2}}x=\vsmall^{\rescaling{2}-\rescaling{1}}\rescaled{x}_1$. Therefore, setting $\vsmall:={0}$ will zero $\rescaled{x}_2$ since $\rescaling{2}-\rescaling{1}>0$, thus erasing $\rescaled{x}$ from the matching $\cM_{2}$. This  contradicts Statement~\ref{drtjhsrtjhrsth} of Theorem~\ref{skldjhfgslkdufghikl}. 

Statement~\ref{sdlvcsowujhvsloivch} of Lemma~\ref{lvdkfjhbvlfdkujh} is proved by using a similar argument. However, the condition that $f{\in}\enable{A}$ is used to ensure that $f{\in}\cM$ whatever the matching selected in the considered solution of Problem~\ref{likgusefhdrlpouhiouh}. Let $\rescaling{\!f}^{K,\cM}$ be the rescaling offset of $f$ for a pair $(K,\cM_K)$ being part of a solution of Problem~\ref{likgusefhdrlpouhiouh} satisfying goodness Conditions (\ref{jsdyhgcdfasjyht}--\ref{klifgyegrujytfgeikuy}). Suppose $\rescaling{1}\eqdef\rescaling{\!f}^{K_1,\cM_1}<\rescaling{2}\eqdef\rescaling{\!f}^{K_2,\cM_2}$ for two such pairs $(K_i,\cM_{\vsmall i}),i{=}1,2$. Let $\rescaled{f}_1=\vsmall^{\rescaling{1}}f$ be the rescaling of $f$ based on the first solution. Then, using the same notation, $\rescaled{f}_2=\vsmall^{\rescaling{2}}f=\vsmall^{\rescaling{2}-\rescaling{1}}\rescaled{f}_1$. Therefore, setting $\vsmall:={0}$ will zero $\rescaled{f}_2$ since $\rescaling{2}-\rescaling{1}>0$, thus making the corresponding equation a tautology. This, again, contradicts Statement~\ref{drtjhsrtjhrsth} of Theorem~\ref{skldjhfgslkdufghikl}.

We also prove Statement~\ref{hjtgfkujyhglk} of Lemma~\ref{lvdkfjhbvlfdkujh} by contradiction. If $K_1$ and $K_2$ are two variable heights, seen as functions, then $K_1{\neq}K_2$ iff either $K_1{\neq}\max(K_1,K_2)$ holds, or the opposite holds. Thus, it is enough to investigate one of the latter cases. Let $(K_1,\cM_{1})$ and $(K_2,\cM_{2})$ be two solutions such that $K_1{\leq}K_2,K_1{\neq}K_2$, and denote by $\rescaling{x}^1$ and $\rescaling{x}^2$ the rescaling offset of variable $x$ for solutions $1$ and $2$. Extend $(K_1,\cM_{1})$ until $K_2$ by adding, to $\cM_1$, the set of pairs 
	$$
	\left\{\left.(\pprimeppostset{c_{\!f}}{k}{f},\pprimeppostset{d_x}{k}{x})
	\;\right|\, (\edge)\in\cM,\,K_{1,f}<k\leq K_{2,f}
	\right\}\,.
$$
 Denote by $(K_2,\overline{\cM}_{1})$ this extension, and, as before, we extend rescaling offsets by setting 
\beq
\mbox{$\rescaling{x}^1=0$ if $x$ sits between $K_1$ and $K_2$.} 
\label{eloriguehgkliuh}
\eeq
By statement~\ref{jdhtfgfedsfm} of Lemma~\ref{lvdkfjhbvlfdkujh}, $\rescaling{x}$ is the same for both solutions:
\beq
\rescaling{x}^1=\rescaling{x}^2 \mbox{, for every dependent variable $x$ of }A_{K_2}\,. \label{lsfvhsgdblvk}
\eeq
Consider the tail of the mode change array of height $K_2$. The equations of this tail must all be matched in ${\cM}_{2}$. Let $f$ be an equation of the new mode with equation offset $c_{\!f}{>}0$ and let $x$ be the variable  matched with $f$ in the matching $\cM$ found in Problem~(\ref{liftuerhpituhepu8}). Then, $\pprimeppostset{(c_{\!f}-1)}{K_{2,f}}{f}$ is a consistency equation sitting in the tail, hence it is enabled and matched in $\cM_{2}$ with the state variable $\pprimeppostset{(d_x-1)}{K_{2,f}}{x}$, where $d_x$ is the variable offset of $x$ in Problem~(\ref{liftuerhpituhepu8}). Consider the variable $\pprimeppostset{d_x}{(K_{2,f\,}-1)}{x}$ sitting in the previous block. We have $\pprimeppostset{d_x}{(K_{2,f\,}-1)}{x}\sim\pprimeppostset{(d_x-1)}{K_{2,f}}{x}$, implying that $\pprimeppostset{d_x}{(K_{2,f\,}-1)}{x}$ cannot be matched with an equation in $\cM_{2}$. Therefore, it must be matched in $\cM_{2}$ with the Euler identity 
\[
0=\vsmall{\times}\pmatch{\pprimeppostset{d_x}{(K_{2,f}-1)}{x}}-\bigl(\pprimeppostset{(d_x-1)\,}{K_{2,f}}{x}-\pprimeppostset{(d_x-1)\,}{(K_{2,f}-1)}{x}\bigr)\,.
\]
Setting $z\eqdef\pprimeppostset{(d_x-1)}{(K_{2,f}-1)}{x}$, this Euler identity rewrites
\[
0=\vsmall{\times}\pmatch{\dot{z}}-(\postset{z}-z)\,,
\]
which yields the rescaling equation
\beqq
\rescaling{\pmatch{\dot{z}}}^2 &=& 1+\max\left(\rescaling{\postset{z}}^2,\rescaling{z}^2\right) ~\geq~1\,.
\eeqq
This contradicts (\ref{lsfvhsgdblvk}) since, by (\ref{eloriguehgkliuh}), $\rescaling{\dot{z}}^1=0$ should hold because $K_{2,f}-1\geq K_{1,f}$. This completes the proof of Lemma~\ref{lvdkfjhbvlfdkujh}, and thus proves Statement~\ref{jcsdghsdvckjhsdgk} of Theorem~\ref{skldjhfgslkdufghikl}.

\section{Addendum: solving nondeterminism and handling cascaded mode changes}  \label{sdlkjhfcklsujdg}
So far we considered mode changes between two successive ``long modes'' (lasting for a positive duration with the same DAE dynamics). Such mode change may not be sufficiently determined, in which case either the structural analysis of the array or Corollary~\ref{hgrfdjrgfdfkm} may provide appropriate diagnosis. In this section we consider active determinization, by which constraints are added to the mode change to determinize it. By doing so, we will also encompass cascades of transient (of zero-duration) modes, with limited objectives, however. We will illustrate this with the cup-and-ball example (\ref{loeifuhpwoui}) in the elastic impact case (\ref{hythfjnhgfedshgf}). 

Most of the previous material is reused with little change, if any. Pointers to \bemph{\textbf{reused}} material will be given, and emphasis will be put in \remph{red} on what is new.

\subsection{Revisiting hot restart}
\begin{definition}[mode change revisited]
	\label{sldkjvhcsdblkgh} A \emph{mode change} is defined as a tuple $(S^-,\changetime,\remph{T},S)$, where $S^-$ is the \emph{previous long mode}, $S$ is the \emph{new long mode}, 
	\remph{$T$ is the (optional) \emph{hot restart constraint}}, 
	and $\changetime{\,\in\,}\bR_+$ is the instant when the cascaded mode change occurs. We assume that: 
	\begin{itemize}
		\item $S^-{=}(F^-,X^-)$ and $S{=}(F,X)$ are square DAE systems; $S^-$ is index-reduced, i.e., $S^-=\indexreduced{S^-}$; $S$ is completed with its latent equations, i.e., $S=\complete{S}$; and
		\item $T{=}(F_T,X^-_T{\cup}\,Z\,{\cup}X_T)$ is a static system of equations where 
		\beqq
		X_T^-\subset\bigcup_{k\geq 0}\pprime{k}{{X^-}} &\mbox{and}& X_T\subset\bigcup_{k\geq 0}\pprime{k}{X}\,,
		\eeqq
		and $Z$ collect auxiliary variables.
		Sets $X^-_T$ and $Z\,{\cup}X_T$ collect the free and dependent variables of $T$, respectively. We require $T$ not to be structurally overdetermined.
	\end{itemize}
\end{definition}
Static system $T$ relates variables of the previous long mode to variables of the new long mode, thus contributing to the determination of the hot restart.

Recalling that $\complete{X}$ denotes the set collecting the leading and state variables of the new long mode, we consider the subset $X^+\subseteq\complete{X}$ of state variables of the new mode. The hot restart consists in finding values for all the variables belonging to $X^+$. 

\begin{nochange}\rm 
	\label{skldifujhgaliug} The following material is reused with no change:
\begin{itemize}
	\item Problem~\ref{dsjkchgfckajycgk} of hot restart, with the exception that ``previous/new mode'' is replaced by ``previous/new \emph{long} mode'';
	\item Assumption~\ref{hytdrfjytrjytr} regarding how the instant $\changetime$ of mode change is defined.\eproof
\end{itemize}
\end{nochange}
\paragraph*{Syntax for hot restart:} To specify a pair $(T,S)$ collecting the new long mode and its restart transient mode, we propose the following syntax:\footnote{This syntax conforms to Modelica, in which ``$\prog{when}\;\guard$'' denotes the onset of condition $\guard$.}
\beq
\bea{rl}
\when\negesp&\guard\;\doo\;S \\ \prog{and}\;\prog{when}\negesp&\guard\;\doo\;T
\eea
\label{sldjhvcgsdlkjhg}
\eeq

\subsection{Revisiting mode change arrays}
\begin{nochange}\rm 
	\label{sldkvjhgxvlkgh} The following material is reused with no change:
\begin{itemize}
	\item The introduction of Section~\ref{hgrfdkhgfcvkjmh} introducing our model of discrete time for the mode change;
	\item Section~\ref{jghtsdhgfsdhfgsd} regarding the toolbox relating shifting and differentiating.\eproof
\end{itemize}
\end{nochange}
The construction of arrays, however, must be revisited, so we detail it carefully. 
%
We use the notations from Section~\ref{nhmgfcvkjh}: for $F$ a DAE system, $\indexreduced{F},\consistency{F}$, and $\complete{F}$ refer to the ``index-reduced'', ``consistency'', and ``completion'' of $F$ once index reduction has been performed. When considering mode changes, we use notations (\ref{sjdcytsafcjsyutf},\ref{kduhgvskdug}) for variables and functions, and Notations~\ref{sdolifkjshgikldu} for shifting transition relations. Convention~\ref{slrikfgsjhfglsiugh} regarding $\sim$\,-\,completion is again in force.
\paragraph*{Mode change array:}
To the new long mode $S{=}(F,X)$ we again associate following (\ref{fdkjhsfkjhhj}) the infinite system $A$ of equations obtained by stacking successive shifts of the completion $\complete{F}$.	
The set of \emph{past variables} of $A$ is its subset of variables whose value is determined by the previous mode:
\beq
\bea{rcl}
\Vars^- = \bigl\{
x\in\Vars_A
\left| \;
\exists x^-\in\bigcup_{k<0}\ppostset{k}{(X^-)} \mbox{~ s.t. ~} x^-{\sim\,} x
\right.
\bigr\}\,.
\eea
\label{kdsjhgsdkjuhycg}
\eeq
The set of dependent variables of $A$ is
\beq
\dependent{\Vars_{A}} &\eqdef& \Vars_{A} \setminus \Vars^-\,,    \label{ldkghdliuga}
\eeq
where $\Vars_{A}$ is the set of \emph{all} the variables involved in $A$ (both leading and state variables). 
\begin{nochange}\rm 
	\label{saldifaglkjug} The following material is reused with no change:
	\begin{itemize}
		\item Lemma~\ref{jhgffdhghkjh} and Definition~\ref{dkjhgdkhgl} regarding the definition of facts;
		\item Convention~\ref{conv:remove-facts} still applies, i.e., facts are removed from $A$.\eproof
	\end{itemize}
\end{nochange}
We address Step~\ref{jytrgdhtsgdyntgfv} of Approach~\ref{isduyfsegdiksuygu} by introducing finite arrays, based on the notion of height.
A \emph{height} for mode $S=(F,X)$ denotes a function
\beq
\mbox{$K:F\ra\bN$,~~ denoted ~~$f\mapsto K_{\!f}$}\,. \label{slvkjhslksjlvhslk}
\eeq
In the following definition, $A$ denotes the infinite array (\ref{skdujcgskdjgyvckuj}), and its set $\dependent{\Vars_A}$ of dependent variables is defined by (\ref{ldkghdliuga}). The changes with reference to Definition~\ref{jsahgdfkjagdf} are pointed in \remph{red}:
\smallskip
\begin{definition}[mode change array]
	\label{lsdiuvhsliuhi} 
		 Any height $K$ for mode $S{=}(F,X)$ defines the $K$-\emph{mode change array}
		\beq
		&&A_K~=~\left\{\left.
		\pprimeppostset{m}{k}{f}
		\,\right|\,
		f{\in}{F}\,,\,0{\leq} m{\leq} c_{\!f}\,,\,0{\leq} k{\leq} K_{\!f}
		\right\}\remph{~\cup~ T_K}\,,\emph{ where}
		\label{lssialcjskuvhlkoj}
		\\
		&&\emph{
\begin{minipage}{12cm}
 $T_K$ is equal to $T$ in which each $\pprime{d}{x}\in\mmoins{X_T}$ is replaced by $\pprimeppostset{d}{(-1)}{x}$ and each $\pprime{d}{x}\in{X_T}$ is replaced by $\pprimeppostset{d}{K_{\!f}}{x}$, for $(\edge)\in\cM$, $\cM$ being the matching found by solving the \sigmamethod\ for $S$.
\end{minipage}
} 		\nonumber
	\eeq
The set of dependent variables of $A_K$ is $\dependent{\Vars_{A_K}}=\dependent{\Vars_A}{\cap}\Vars_{A_K}$. Define
	the tail of $A_K$ and its set of dependent variables by
	\beq\bea{lcl}
	\ttail{A_K} &=& \,\left\{ \pprimeppostset{m}{K_{\!f}}{f} \left|\bea{l}f{\in}{F} \\ 0\leq m\leq c_{\!f} \eea\negesp\right.\right\}\remph{~\cup~ T_K}
	\\ [4mm]
	\ttail{\Vars_{A_K}} &=& \left\{
	\pprimeppostset{d}{K_{\!f}}{x} \;\left|\bea{l}
	f{\in}{F},(\edge)\in\cM \\
	0\leq m\leq c_{\!f} \\
	d=\sigma_{\!\edge}+m
	\eea\negesp\right.\right\}
	\\ [6mm]
	\hhead{\Vars_{A_K}}&=& ~{\Vars_{A_K}}\setminus\ttail{\Vars_{A_K}}
	\eea 
	\label{ldfijnbvolugc}
	\eeq
We define the subset of equations that must be enabled by
		\beq\bea{rcl}
\enable{A_K}&=& 
\ttail{A_K} \cup \bigl(
A_K\cap\;\bigcup_{k\geq 0} \,\ppostset{k}{(F^-)}
\bigr)
\eea
\label{lisduvhsldhvliu}
\eeq
and we set $\disable{A_K} \eqdef A_K\,\setminus\enable{A_K}$.
\end{definition}
\begin{nochange}\rm
	\label{skdjuhfgliukg} The following material is reused with no change:
	\begin{itemize}
		\item Lemma~\ref{sldkjfhsblkj} guaranteeing the soundness of the definition of $T_K$ in (\ref{lssialcjskuvhlkoj}).\eproof
	\end{itemize}
\end{nochange}

\subsection{Reusing our previous analysis}
\begin{nochange}\rm
	\label{slivkusdgvlsidug} The following material is reused with no change:
	\begin{itemize}
		\item Structural implicit function theorem (Proposition~\ref{jsdhgfaksjyhcg} and Rule~\ref{ksjhcgskdjgh});
		\item Section~\ref{dfkjshlksgfskdjfmg} on the rescaling analysis of mode change arrays;
		\item Section~\ref{ldxkjvbvskljvh} collecting the main theorems, and their associated proofs.\eproof
	\end{itemize}
\end{nochange}
Indeed, once mode change arrays are properly adapted (see the \remph{modifications} in Definition~\ref{lsdiuvhsliuhi}), the above listed material can be reused with no change (except for how a mode change is named), since it only takes the mode change array as its input.

\subsection{Cup-and-ball with elastic impact}
We discuss here the cup-and-ball example of Section~\ref{jhtgfkuytjrsdfghk} in the case of elastic impact, see (\ref{hythfjnhgfedshgf}). 

\paragraph*{A specification where transience is made explicit:}
The model (\ref{loeifuhpwoui}) is modified as follows:
\beq\left\{\bea{rlll}
& 0= \ddot{x}+{\tension}x & (\eqq_1) \\
& 0= \ddot{y}+{\tension}y+g  & (\eqq_2) \\
& {{\guard}= [s^-\leq{0}]; \guard(0)=\fff}    \\ [2mm]
\prog{when} \; \guard\; \doo&0 = {L^2}{-}{(x^2{+}y^2)} & {({\straight_0})} & \mbox{(hot restart)}\\
\prog{and}& 0={\tension}+{s} &({\straight_1})  & \mbox{(hot restart)} \\
[2mm]
\prog{else}& 0=\tension   & (\straight_3) \\
\prog{and}& 0=({L^2}{-}(x^2{+}y^2))-s   & (\straight_4)
\eea\right.
\label{ksefudygskuedyfgk}
\eeq
Model (\ref{ksefudygskuedyfgk}) is obtained from (\ref{loeifuhpwoui}) by replacing the long guard ``$\when\;\guard$'' by the transient guard ``$\prog{when}\;\guard$''. We consider the cascade $\guard:\fff\ra\ttt\ra\fff$, where restart occurs. We regard the transient mode $\guard{=}\ttt$ as contributing, by being the ``$T$'' player, to the hot restart of the new long mode $\guard{=}\fff$ following Definition~\ref{sldkjvhcsdblkgh}. Following the construction (\ref{lsdiuvhsliuhi}), the resulting mode change array $A_0$ is 
\beq
A_0=\left\{\bea{lll}
0= {\ddot{x}}+{\tension}x & (\eqq_1) & \guard{=}\fff,\changetime \\
0= {\ddot{y}}+{\tension}y+g  & (\eqq_2) \\ 
{0=\tension} &{(\straight_3)} & \\
0=({L^2}{-}(x^2{+}y^2))-{s} & (\straight_4) 
\\ [2mm]
\gremph{0 = {L^2}{-}{(x^2{+}y^2)}} & \gremph{({\straight_0})} & \mbox{(hot restart) \gremph{fact}}\\
0={\tension}+{s} &({\straight_1}) & \mbox{(hot restart)}
\eea\right.
\label{lrsiugfshkvjudsgh}
\eeq
Array $A_0$ consists of two blocks. The first block collects the dynamics of the new long mode $\guard{=}\fff$, snapshot at $\changetime$. The second block collects the transient dynamics $\guard{=}\ttt$ at the same instant $\changetime$. Note that \emph{\normalsize no latent equation was added:} it makes no sense to add them since a transient mode takes zero duration whereas differentiations tightly relate to time shifts as we have seen in Section~\ref{jghtsdhgfsdhfgsd}.

Equation $\gremph{(k_0)}$ is a fact, since it is the zero-crossing triggering the mode change. The dependent variables of this array are $\tension,s,\ddot{x},\ddot{y}$, whereas $x,y,\dot{x},\dot{y}$ are the past variables. $A_0$ is thus structurally singular with $4$ dependent variables and $5$ equations. Actually, the subsystem $(k_3,k_4,k_1)$, which involves only the two dependent variables $\tension,s$, is conflicting. Hence, one of these equations must be disabled, which contradicts the definition (\ref{lisduvhsldhvliu}) for $\enable{A_0}$. We thus need to consider $A_1$:
\beq
A_1=\left\{\bea{lll}
0= \pmatch{\ddot{x}}+{\tension}x & (\eqq_1) & \guard{=}\fff,\changetime \\
0= {\ddot{y}}+\pmatch{\tension}y+g  & (\eqq_2) \\ 
\remph{0=\tension} &\remph{(\straight_3)} & \remph{\mbox{conflict}}\\
0=({L^2}{-}(x^2{+}y^2))-\pmatch{s} & (\straight_4) 
\\ [2mm]
0= \vsmall{\times}\pmatch{\ddot{y}}-(\postset{\dot{y}}{-}\dot{y})
& {(\euler{\pmatch{\ddot{y}}}{\postset{\dot{y}}})} 
\\ 
0= \vsmall{\times}{\ddot{x}}-(\pmatch{\postset{\dot{x}}}{-}\dot{x})
& {(\euler{{\ddot{x}}}{\pmatch{\postset{\dot{x}}}})} 
\\  [2mm]
0= \pmatch{\postset{\ddot{x}}}+\postset{\tension}\postset{x} & (\postset{\eqq_1}) & \guard{=}\fff,\changetime{+}\vsmall \\
0= \pmatch{\postset{\ddot{y}}}+{\postset{\tension}}\postset{y}+g  & (\postset{\eqq_2}) \\ 
\remph{0={\postset{\tension}}} & \remph{(\postset{\straight_3})} & \remph{\mbox{conflict}} \\
0=({L^2}{-}\postset{(x^2{+}y^2)})-\pmatch{\postset{s}} & (\postset{\straight_4})  \\ [2mm]
\gremph{0 = {L^2}{-}\postset{(x^2{+}y^2)}} & \gremph{({\straight_0})} & \mbox{(hot restart) }\gremph{\mbox{fact}} \\
0=\pmatch{\postset{\tension}}+{\postset{s}} & {({\straight_1})} &\mbox{(hot restart) } \\
\eea\right.
\label{kfuyhgakuyag}
\eeq
We pinpoint in \remph{red} the conflicting equations. They belong to $\disable{A_1}$, so we disable them. Then, we show in \pmatch{blue} an equation-complete matching. Unfortunately, the dependent variable $\postset{\dot{y}}$ is not matched. Further augmenting the array height will not solve this problem. The hot restart specified by model (\ref{ksefudygskuedyfgk}) was insufficiently determined. To cope with this, we propose adding the following fully elastic Newton impact law as an additional hot restart constraint following Definition~\ref{sldkjvhcsdblkgh}:
\beq\left\{\bea{rllr}
& 0= \ddot{x}+{\tension}x & (\eqq_1) \\
& 0= \ddot{y}+{\tension}y+g  & (\eqq_2) \\
& {{\guard}= [s^-\leq{0}]; \guard(0)=\fff}    \\ [2mm]
\prog{when} \; \guard\; \doo&0 = {L^2}{-}{(x^2{+}y^2)} & {({\straight_0})} & \mbox{(hot restart)} \\
\prog{and}& 0={\tension}+{s} &({\straight_1}) & \mbox{(hot restart)} \\
\prog{and}& 0={(x\dot{x}{+}y\dot{y})} {+}\mmoins{(x\dot{x}{+}y\dot{y})} & ({\cN}) & \mbox{(additional hot restart)} \\ [2mm]
\prog{else}& 0=\tension   & (\straight_3) \\
\prog{and}& 0=({L^2}{-}(x^2{+}y^2))-s   & (\straight_4)
\eea\right.
\label{liushlfvsiufhvsk}
\eeq
With reference to Definition~\ref{sldkjvhcsdblkgh} of cascaded mode change and its pair $(T,S)$, equations $({\straight_0}),({\straight_1}),({\cN})$ constitute the $T$-part whereas the $\prog{else}$ branch augmented with the always active equations $(\eqq_1),(\eqq_2)$, constitutes the $S$-part.
We consider the cascade $\guard:\fff\ra\ttt\ra\fff$, where restart occurs. Its mode change array $A_1$ with matching $\cM$ highlighted in \bemph{blue} is
\beq
A_1=\left\{\bea{lll}
0= \pmatch{\ddot{x}}+{\tension}x & (\eqq_1) & \guard{=}\fff,\changetime \\
0= {\ddot{y}}+\pmatch{\tension}y+g  & (\eqq_2) \\ 
\remph{0=\tension} &\remph{(\straight_3)} & \remph{\mbox{conflict}}\\
0=({L^2}{-}(x^2{+}y^2))-\pmatch{s} & (\straight_4) 
\\ [2mm]
0= \vsmall{\times}\pmatch{\ddot{y}}-(\postset{\dot{y}}{-}\dot{y})
& {(\euler{\pmatch{\ddot{y}}}{\postset{\dot{y}}})} 
\\ 
0= \vsmall{\times}{\ddot{x}}-(\pmatch{\postset{\dot{x}}}{-}\dot{x})
& {(\euler{{\ddot{x}}}{\pmatch{\postset{\dot{x}}}})} 
\\  [2mm]
0= \pmatch{\postset{\ddot{x}}}+\postset{\tension}\postset{x} & (\postset{\eqq_1}) & \guard{=}\fff,\changetime{+}\vsmall \\
0= \pmatch{\postset{\ddot{y}}}+{\postset{\tension}}\postset{y}+g  & (\postset{\eqq_2}) \\ 
\remph{0={\postset{\tension}}} & \remph{(\postset{\straight_3})} & \remph{\mbox{conflict}} \\
0=({L^2}{-}\postset{(x^2{+}y^2)})-\pmatch{\postset{s}} & (\postset{\straight_4})  \\ [1mm]
\gremph{0 = {L^2}{-}\postset{(x^2{+}y^2)}} & \gremph{({\straight_0})} & \gremph{\mbox{fact}} \\
0=\pmatch{\postset{\tension}}+{\postset{s}} & {({\straight_1})}  \\
0=\postset{(x\dot{x}{+}y\pmatch{\dot{y}})} {+}\ppostset{(-1)}{(x\dot{x}{+}y{\dot{y}})} & ({\cN}) 
\eea\right.
\label{ksahcgakjc}
\eeq
The three equations constituting the ``$\prog{when}$'' branch sit at the end of the array, with the due replacements following (\ref{skudjcghskdjhcg}). Note that $\gremph{({\straight_0})}$ is a fact. The four equations constituting the ``$\prog{else}$'' branch occur at both instants $\changetime$ and $\changetime{+}\vsmall$. The needed Euler identities are included. The two equations $\remph{({\straight_3})},\remph{(\postset{\straight_3})}$ are conflicting with the rest of the array: since disabling them is allowed, they get disabled. A perfect matching is shown, showing structural nonsingularity of the black part of the array. The useful subarray (obtained by discarding leading equations of the last block) is:
\beq
A_1=\left\{\bea{lll}
0= \pmatch{\ddot{x}}+{\tension}x & (\eqq_1) & \guard{=}\fff,\changetime \\
0= {\ddot{y}}+\pmatch{\tension}y+g  & (\eqq_2) \\ 
0=({L^2}{-}(x^2{+}y^2))-\pmatch{s} & (\straight_4) 
\\ [2mm]
0= \vsmall{\times}\pmatch{\ddot{y}}-(\postset{\dot{y}}{-}\dot{y})
& {(\euler{\pmatch{\ddot{y}}}{\postset{\dot{y}}})} 
\\ 
0= \vsmall{\times}{\ddot{x}}-(\pmatch{\postset{\dot{x}}}{-}\dot{x})
& {(\euler{{\ddot{x}}}{\pmatch{\postset{\dot{x}}}})} 
\\  [2mm]
0=\postset{(x\dot{x}{+}y\pmatch{\dot{y}})} {+}\ppostset{(-1)}{(x\dot{x}{+}y{\dot{y}})} & ({\cN}) 
\eea\right.
\label{dfoivhdfloijh}
\eeq
The corresponding rescaling analysis is:
\beq\bea{rclcl}
\rescaling{\ddot{x}} &\negesp=\negesp& \rescaling{e_1} &\negesp=\negesp& \max(\rescaling{\ddot{x}},\rescaling{\tension}) \\
\rescaling{\tension} &\negesp=\negesp& \rescaling{e_2} &\negesp=\negesp& \max(\rescaling{\ddot{y}},\rescaling{\tension}) \\
\rescaling{s} &\negesp=\negesp& \rescaling{k_4}  \\
\rescaling{{\ddot{y}}} &\negesp=\negesp& \rescaling{\euler{\pmatch{\ddot{y}}}{\postset{\dot{y}}}} &\negesp=\negesp& \max(\rescaling{{\ddot{y}}},1+\rescaling{\postset{\dot{y}}}) \\
1{+}\rescaling{\postset{\dot{x}}} &\negesp=\negesp& \rescaling{\euler{{\ddot{x}}}{\pmatch{\postset{\dot{x}}}}} &\negesp=\negesp& \max(\rescaling{{\ddot{x}}},1+\rescaling{\postset{\dot{x}}}) \\
\rescaling{\postset{\dot{y}}} &\negesp=\negesp& \rescaling{{\cN}} &\negesp=\negesp&\max(\rescaling{\postset{\dot{y}}},\rescaling{\postset{\dot{x}}})
\eea
\label{sldufgvshdlfouh}
\eeq
Its good solution is:
\[
\rescaling{\ddot{x}}=\rescaling{\tension}=\rescaling{\ddot{y}}= 1 = \rescaling{\euler{\pmatch{\ddot{y}}}{\postset{\dot{y}}}}= \rescaling{\euler{{\ddot{x}}}{\pmatch{\postset{\dot{x}}}}}=\rescaling{e_2}= \rescaling{e_1}
\]
whereas other variables and functions have rescaling offset zero. Rescaling variables and equations using (\ref{gtrfdkyjrgs}) yields
\beqq\bea{rclcrcrcl}
\rescaled{\ddot{x}} &\negesp\eqdef\negesp& \vsmall{\times}\ddot{x} &\negesp=\negesp& \postset{\dot{x}}{-}\dot{x} &\negesp;\negesp&
\rescaled{e_1} &\negesp\eqdef\negesp& \vsmall{\times}{e_1} 
\\ [1mm]
\rescaled{\ddot{y}} &\negesp\eqdef\negesp& \vsmall{\times}\ddot{y} &\negesp=\negesp& \postset{\dot{y}}{-}\dot{y}  &\negesp;\negesp&
\rescaled{\euler{\pmatch{\ddot{y}}}{\postset{\dot{y}}}} &\negesp\eqdef\negesp& \vsmall{\times}{\euler{\pmatch{\ddot{y}}}{\postset{\dot{y}}}} 
\\ [1mm]
\rescaled{\tension} &\negesp\eqdef\negesp& \vsmall{\times}\tension && &\negesp;\negesp&
\rescaled{e_2} &\negesp\eqdef\negesp& \vsmall{\times}{e_2} 
\\ [1mm]
 &&  && &;& \rescaled{\euler{{\ddot{x}}}{\pmatch{\postset{\dot{x}}}}} &\negesp\eqdef\negesp& \vsmall\times{\euler{{\ddot{x}}}{\pmatch{\postset{\dot{x}}}}} 
\eea 
\eeqq
The rescaled array $\rescaled{A_1}$ is then:
\beqq
\rescaled{A_1}:\left\{\bea{cclcll}
 0&\!\!\!\!=\!\!\!\!& 
\pmatch{\rescaled{\ddot{x}}}
+{\rescaled{\tension}}x 
\\ 
 0&\!\!\!\!=\!\!\!\!& 
\rescaled{\ddot{y}}
+\pmatch{\rescaled{\tension}}y
\\
0&\!\!\!\!=\!\!\!\!&({L^2}{-}(x^2{+}y^2))-\pmatch{s}
\\
[1mm]
0&\!\!\!\!=\!\!\!\!& \rescaled{\ddot{x}}-(\pmatch{\postset{\dot{x}}}-\dot{x})
\\
0&\!\!\!\!=\!\!\!\!& \pmatch{\rescaled{\ddot{y}}}-({\postset{\dot{y}}}-\dot{y})
\\ [1mm]
{0} &\!\!\!\!{=}\!\!\!\!& (\postset{x}\postset{\dot{x}}{+}\postset{y}\pmatch{\postset{\dot{y}}}){+}{\ppostset{(-1)}{(x\dot{x}{+}y{\dot{y}})}}
\eea\right.
\eeqq
and the resulting (expected) restart system is obtained by renaming as in (\ref{erlfgiuehrliueh}):
\beq
\mbox{restart}:\left\{\bea{cclcll}
 0&\!\!\!\!=\!\!\!\!& 
\pmatch{\pplus{{\rescaled{\ddot{x}}}}}
+\pplus{{\rescaled{\tension}}}\mmoins{x} 
\\ 
 0&\!\!\!\!=\!\!\!\!& 
\pplus{{\rescaled{\ddot{y}}}}
+\pmatch{\pplus{{\rescaled{\tension}}}}\mmoins{y}
\\
0&\!\!\!\!=\!\!\!\!&({L^2}{-}\mmoins{(x^2{+}y^2)})-\pmatch{\pplus{s}}
\\
[1mm]
0&\!\!\!\!=\!\!\!\!& \pplus{{\rescaled{\ddot{x}}}}-(\pmatch{\pplus{\dot{x}}}-\mmoins{\dot{x}})
\\
0&\!\!\!\!=\!\!\!\!& \pmatch{\pplus{{\rescaled{\ddot{y}}}}}-(\pplus{{{\dot{y}}}}-\mmoins{\dot{y}})
\\ [1mm]
{0} &\!\!\!\!{=}\!\!\!\!& (\mmoins{x}\pplus{\dot{x}}{+}\mmoins{y}\pmatch{\pplus{\dot{y}}}){+}{\mmoins{(x\dot{x}{+}y{\dot{y}})}}
\eea\right.
\label{hytfglioihjolk}
\eeq

\paragraph*{An inconsistent specification:} By guarding the mode $\guard{=}\ttt$ with $\prog{when}$ only, model (\ref{liushlfvsiufhvsk}) directly specifies that the mode $\guard{=}\ttt$ is transient. 
One could, alternatively, consider the following model:
\beq\left\{\bea{rll}
& 0= \ddot{x}+{\tension}x & (\eqq_1) \\
& 0= \ddot{y}+{\tension}y+g  & (\eqq_2) \\
& {{\guard}= [s^-\leq{0}]; \guard(0)=\fff}    \\ [2mm]
\when \; \guard\; \doo&0 = {L^2}{-}{(x^2{+}y^2)} & {({\straight_0})} \\
\prog{and}& 0={\tension}+{s} &({\straight_1}) \\
\prog{and}\;\prog{when} \; \guard\; \doo&0 = {(x\dot{x}{+}y\dot{y})} {+}\mmoins{(x\dot{x}{+}y\dot{y})} & ({\cN}) \\ [2mm]
\prog{else}& 0=\tension   & (\straight_3) \\
\prog{and}& 0=({L^2}{-}(x^2{+}y^2))-s   & (\straight_4)
\eea\right.
\label{sldifvuhsdlikfvuh}
\eeq
With reference to model (\ref{liushlfvsiufhvsk}), in (\ref{sldifvuhsdlikfvuh}) equations $(k_0),(k_1)$ were shifted, from the $T$ part of the new mode to its $S$ part, see Definition~\ref{sldkjvhcsdblkgh}. Since mode $\guard{=}\ttt$ is possibly long, we must consider the mode change $\guard:\fff\ra\ttt$. Also, following (\ref{fdkjhsfkjhhj}), latent equations must be added to (\ref{ksahcgakjc}) when forming the array:
\beq
A_1=\left\{\bea{lll}
0= \pmatch{\ddot{x}}+{\tension}x & (\eqq_1) & \guard{=}\ttt,\changetime \\
0= {\ddot{y}}+\pmatch{\tension}y+g  & (\eqq_2) \\ 
\gremph{0={L^2}{-}(x^2{+}y^2)} & \gremph{(\straight_4)} & \gremph{\mbox{fact}} \\
\remph{0=x\dot{x}+y\dot{y}} & \remph{(\dot{\straight_4})} & \remph{\mbox{conflict}} \\
\remph{0=x\ddot{x}+\dot{x}^2+\dot{y}^2+y\ddot{y}} & \remph{(\ddot{\straight_4})} & \remph{\mbox{conflict}}
\\ [2mm]
0= \vsmall{\times}\pmatch{\ddot{y}}-(\postset{\dot{y}}{-}\dot{y})
& {(\euler{\pmatch{\ddot{y}}}{\postset{\dot{y}}})} 
\\ 
0= \vsmall{\times}{\ddot{x}}-(\pmatch{\postset{\dot{x}}}{-}\dot{x})
& {(\euler{{\ddot{x}}}{\pmatch{\postset{\dot{x}}}})} 
\\ [2mm]
\gremph{0={L^2}{-}\postset{(x^2{+}y^2)}} & \gremph{(\postset{\straight_4})}  & \gremph{\mbox{fact}}
\\ 
\remph{0=\postset{(x\dot{x}+y\dot{y})}} & \remph{(\postset{\dot{\straight_4}})}  & \remph{\mbox{conflict}} \\
0=\postset{(x\dot{x}{+}y\pmatch{\dot{y}})} {+}\ppostset{(-1)}{(x\dot{x}{+}y{\dot{y}})} & ({\cN}) & \changetime{+}\vsmall
\eea\right.
\label{ldfvuhsdliduvh}
\eeq
Note that equations $\remph{(\postset{\dot{\straight_4}})}$ and $({\cN})$ are in conflict. Solving this conflict requires disabling one of them, which is forbidden since both belong to the set of must-be-enabled equations by (\ref{lisduvhsldhvliu}). Augmenting the array will not change this. The conclusion is that model (\ref{sldifvuhsdlikfvuh}) is inconsistent due to the conjunction of ``$\when\;\guard$'' and ``$\prog{when}\;\guard$''. This is indeed expected from physical standpoint: one cannot have both elastic impact $({\cN})$ and mode $\guard{=}\ttt$ being long, meaning inelastic impact.

\paragraph*{A redundant specification of the impact law:}
What happens, however, if we explicitly add as $\bemph{({\cN})}$ a law modeling inelastic impact?
\beq\left\{\bea{rll}
& 0= \ddot{x}+{\tension}x & (\eqq_1) \\
& 0= \ddot{y}+{\tension}y+g  & (\eqq_2) \\
& {{\guard}= [s^-\leq{0}]; \guard(0)=\fff}    \\ [2mm]
\when \; \guard\; \doo&0 = {L^2}{-}{(x^2{+}y^2)} & {({\straight_0})} \\
\prog{and}& 0={\tension}+{s} &({\straight_1}) \\
\prog{and}\;\prog{when} \; \guard\; \doo& \bemph{0 = {(x\dot{x}{+}y\dot{y})}}  & \bemph{({\cN})} \\ [2mm]
\prog{else}& 0=\tension   & (\straight_3) \\
\prog{and}& 0=({L^2}{-}(x^2{+}y^2))-s   & (\straight_4)
\eea\right.
\label{vdlvojhsdvouh}
\eeq
Then array $A_1$ of (\ref{ldfvuhsdliduvh}) is modified as follows in its last two equations:
\beq
A_1=\left\{\bea{lll}
0= \pmatch{\ddot{x}}+{\tension}x & (\eqq_1) & \guard{=}\ttt,\changetime \\
0= {\ddot{y}}+\pmatch{\tension}y+g  & (\eqq_2) \\ 
\gremph{0={L^2}{-}(x^2{+}y^2)} & \gremph{(\straight_4)} & \gremph{\mbox{fact}} \\
\remph{0=x\dot{x}+y\dot{y}} & \remph{(\dot{\straight_4})} & \remph{\mbox{conflict}} \\
\remph{0=x\ddot{x}+\dot{x}^2+\dot{y}^2+y\ddot{y}} & \remph{(\ddot{\straight_4})} & \remph{\mbox{conflict}}
\\ [2mm]
0= \vsmall{\times}\pmatch{\ddot{y}}-(\postset{\dot{y}}{-}\dot{y})
& {(\euler{\pmatch{\ddot{y}}}{\postset{\dot{y}}})} 
\\ 
0= \vsmall{\times}{\ddot{x}}-(\pmatch{\postset{\dot{x}}}{-}\dot{x})
& {(\euler{{\ddot{x}}}{\pmatch{\postset{\dot{x}}}})} 
\\ [2mm]
\gremph{0={L^2}{-}\postset{(x^2{+}y^2)}} & \gremph{(\postset{\straight_4})}  & \gremph{\mbox{fact}}
\\ 
{0=\postset{(x\dot{x}+y\dot{y})}} & {(\postset{\dot{\straight_4}})}  &  \\
0=\postset{(x\dot{x}+y\pmatch{\dot{y}})}  & ({\cN}) & \changetime{+}\vsmall
\eea\right.
\label{sldivujshdlfvuih}
\eeq
The situation is now different: equations $(\postset{\dot{\straight_4}})$ and $({\cN})$ are syntactically identical, showing redundancy. So we could fuse them by discarding $({\cN})$, which brings us back to the inelastic impact case handled in Section~\ref{jhtgfkuytjrsdfghk}. This is a clear and easy case.

However, depending on how the impact law is formulated, it could be redundant with the first latent equation, but still different from it in its syntax. Checking redundancy could be performed at run time: $({\cN})$ is discarded at compile time and evaluated at run time to check that $(\postset{\dot{\straight_4}})$ and $({\cN})$ take equal values. Performing the same checking on array (\ref{ldfvuhsdliduvh}) would violate redundancy checking.

\subsection{On the modeling power of our approach}
To understand the power and limits of our approach, we consider the Newton cradle displayed in Fig.\,\ref{slvdikjfvhbcslvkujh}.
\begin{figure}[!h]
\centerline{\includegraphics[width=0.4\textwidth]{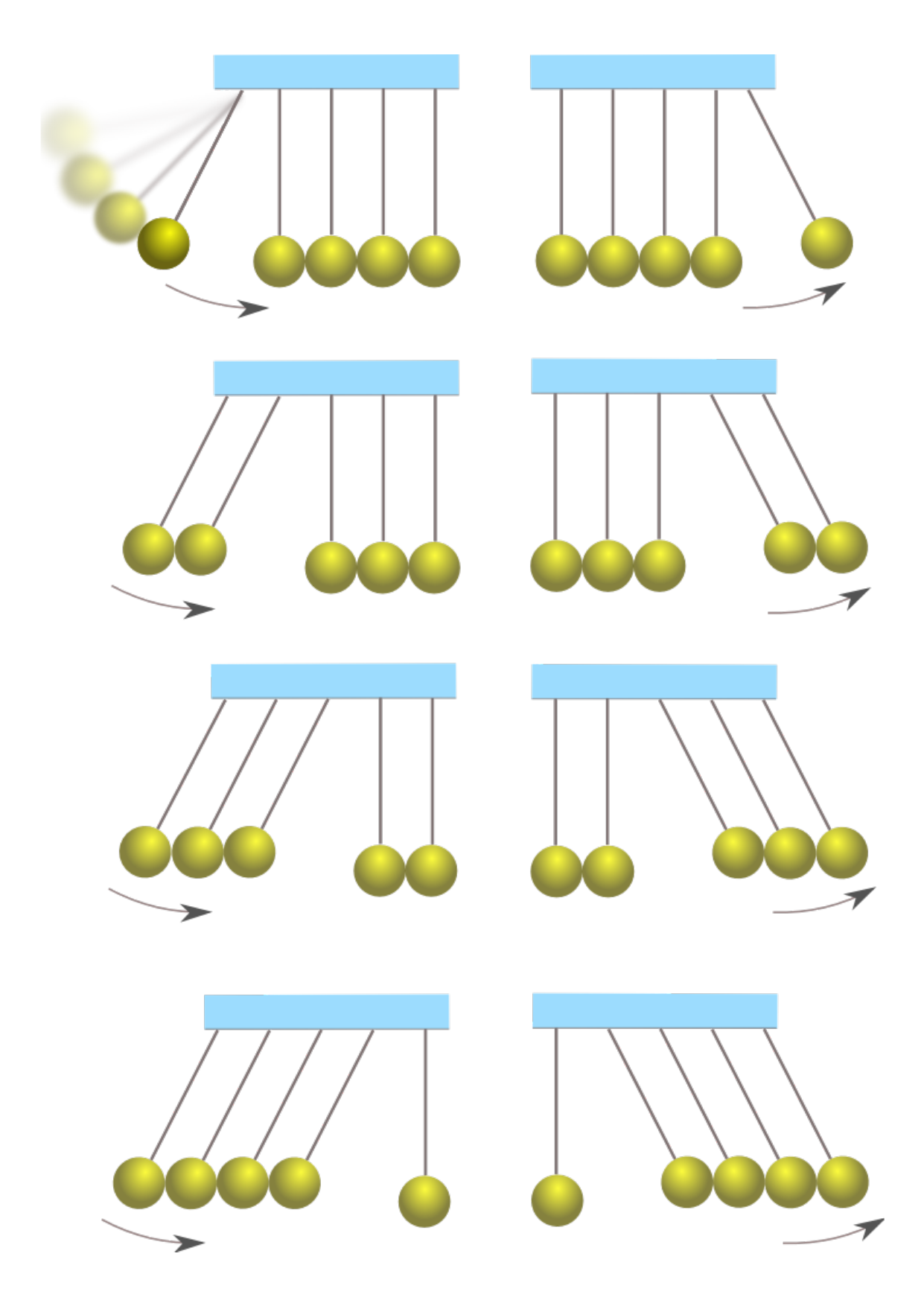}}
\caption{\sf Newton cradle. Source: \url{https://fr.wikipedia.org/wiki/Pendule_de_Newton}.}
	\label{slvdikjfvhbcslvkujh} 
\end{figure}

We assume elastic impact. The transient modes are the impacts. Long modes inherit an index two from the pendulum in Cartesian coordinates. In performing the modeling of this device, the difficult part is the model of the impacts, see, e.g., the interesting discussion in Section 4 of~\cite{DBLP:journals/tcps/Lee16}. The case of two balls is easy since the two tangent velocities resulting from the impact are determined by using two laws, namely the conservation of momentum and the conservation of kinetic energy. The resulting model can then be handled as the cup-and-ball in the elastic impact case. The modeling is more subtle if a larger number of balls is considered. Using only the above two laws results in nondeterminism. The analysis must be complemented by considering the potential energy associated to deformations and its propagation through waves. To circumvent such difficulties, several authors determinize their model by using smooth but stiff versions of the contact---the resulting simulations highly vary with how this smoothing is performed~\cite{DBLP:journals/tcps/Lee16}.

The impact modeling problem is indeed by itself a subject matter~\cite{GILARDI20021213}. Clearly, our approach provides no help for discovering the impact model. We only offer means to specify the impact model once it is known. Admittedly, this impairs modularity of modeling, since one can hardly expect being able to specify a single pendulum with its general context, and combine such models to get a model for the Newton cradle.

\section{Conclusion and perspectives}
We contributed to the mathematical soundness of DAE-based modeling languages such as Modelica, by formally solving the hot restart problem for a multimode DAE. We targeted modeling languages that are physics agnostic. Accordingly, our approach is physics agnostic. Despite being physics agnostic, our solution for the hot restart preserves (possibly latent) invariants.

We handle impulsive behaviors with no restriction. Our approach generates hot restart for general nonlinear DAE provided that the non-polynomial part does not interfere with the impulsive part---this condition is checked by our algorithm.\footnote{The class of multimode DAE for which hot restart is generated can be widened. By allowing for rational offsets, our rescaling analysis can be extended to handle the partition polynomial/nonpolynomial (instead of linear/nonlinear). Variables inheriting a non-integer offset can still be handled, provided that they do not need to be integrated or differentiated using Euler identities. It is not clear how useful this widening would be in practice.}

Our approach was first developed for mode changes involving two successive long modes. We extended it to handle the hot restart for \emph{finite cascades of transient mode changes}, such as occurring in the cup-and-ball with elastic impact.\footnote{This mode change takes the form free-motion$\,\ra\,$straight-rope$\,\ra\,$free-motion, with zero time spent in the ``straight-rope'' transient mode.} This was achieved by introducing additional hot restart constraints, and then reusing with little changes the analysis developed in Sections~\ref{jashcgfackjhamgfjh}--\ref{skduhvgskjvghkjh} for mode changes separating long modes.

Our approach builds on the principles of \emph{structural analysis} abstracting numerical equations as their incidence graphs. Our algorithm for synthesizing the hot restart bears similarities with the \sigmamethod\ for the index reduction of (single mode) DAE systems. 

So far we solved the hot restart problem for a mode change considered in isolation. However, a multimode DAE system has a large number of modes, its number of mode changes is even larger. Enumerating mode changes is the beyond reach. One way of overcoming this consists in reducing the set of mode changes by specifying assertions in the form of invariants. A complementary way is to reuse the techniques proposed in~\cite{electronics11172755} for the all-modes-at-once structural analysis of multimode DAE systems.

We proved in~\cite{Caillaud2020a,benveniste:hal-04295096,caillaud:hal-05257001} that a notion of structural interface can be associated to any DAE system having more dependent variables than equations\,---\,the mathematical model for a model class in Modelica. With this notion of structural interface, index reduction can be performed in a modular way (i.e., at the class level) instead of globally\,---\,as performed in DAE based tools today. This allows an impressive scaling-up for the compilation of Modelica models~\cite{caillaud:hal-05257001}, by handling models with millions of equations. From the above mentioned similarity, we expect that the structural interface can be extended to include mode change information. This is in contrast with the computational cost of methods using the Quasi-Weierstrass decomposition, not to mention the use of elimination which becomes totally prohibitive when polynomial systems are considered. The way forward is clear: modular methods must be developed for the hot restart of mode changes. 

\clearpage
\bibliographystyle{plain}
\bibliography{modelica.bib}

@inproceedings{dlr11508,
           pages = {3502--3507}, 
       booktitle = {{Proc.   38th IEEE Conf. on Decis.   Control}},
            year = {1999},
          author = {Mattsson, Sven Erik and Otter, Martin and Elmqvist, Hilding}, 
           title = {{Modelica hybrid modeling and efficient simulation}},
        keywords = {Modelica, hybrid systems, object-oriented modeling, automatic gearbox}
}

@inproceedings{VaryingIndex2015,
  author =	{Mattsson, Sven Erik and Otter, Martin and Elmqvist, Hilding},
  title =	{{Multi-Mode DAE systems with varying index}},
  booktitle =	{Proc.   11th Int. Modelica Conf.},
  crossref =	{MODELICA15},
  editor =	{Hilding Elmqvist and Peter Fritzson},
  address =	{Versailles, France},
  organization ={Modelica Association},
  month =	sep,
  year =	{2015},
}

@book{MODELICA12,
  title =	{Proceedings of 9th   International  Modelica Conference}, 
  year =	2012,
  month =	sep,
  editor =	{Martin Otter and Dirk Zimmer},
  address =	{Munich, Germany},
  organization ={Modelica Association},
}

@inproceedings{MultiMode2014,
  author =	{Elmqvist, Hilding and Mattsson, Sven Erik and Otter, Martin},
  title =	{{Modelica extensions for multi-mode DAE systems}},
  booktitle =	{Proc.   10th Int. Modelica Conf.},
  editor =	{Hubertus Tummescheit and Karl-Erik Arzen},
  address =	{Lund, Sweden},
  organization ={Modelica Association},
  crossref =	{MODELICA14},
  month =	sep,
  year =	{2014}
}

@proceedings{MODELICA14,
  title =	{Proc.   Int. Modelica Conference},
  booktitle =	{Proc.   Int. Modelica Conference},
  year =	2014,
  month =	sep,
  editor =	{Hubertus Tummescheit and Karl-Erik Arzen},
  address =	{Lund, Sweden},
  organization ={Modelica Association},
}

@proceedings{MODELICA15,
  title =	{Proc.   Int. Modelica Conference},
  booktitle =	{Proc.   Int. Modelica Conference},
  year =	2015,
  month =	sep,
  editor =	{Hilding Elmqvist and Peter Fritzson},
  address =	{Versailles, France},
  organization ={Modelica Association},
}

@Book{PfeifferGlocker2008,
 author = 	 {Pfeiffer, Friedrich and Glocker, Christoph},
 title = 	 "{Multibody Dynamics with Unilateral Contacts}",
 publisher = 	 "Wiley",
 year = 	 2008
}

@Book{DieudonneEA1,
 author = 	 {Dieudonn\'e, Jean},
 title = 	 "{Fondements de l'analyse moderne}",
 publisher = 	 "Gauthier-Villars",
 year = 	 1963
}

@article{PothenF90,
  author    = {Pothen, Alex and Fan, Chin{-}Ju},
  title     = {Computing the block triangular form of a sparse matrix},
  journal   = {{ACM} Trans. Math. Softw.},
  volume    = {16},
  number    = {4},
  pages     = {303--324},
  year      = {1990},
  COMMENT_url       = {http://doi.acm.org/10.1145/98267.98287},
  doi       = {10.1145/98267.98287},
  timestamp = {Thu, 27 Nov 2003 13:34:32 +0100},
  biburl    = {http://dblp.uni-trier.de/rec/bib/journals/toms/PothenF90},
  bibsource = {dblp computer science bibliography, http://dblp.org}
}

@article{Pfeiffer2012,
author = {Pfeiffer, Friedrich},
title = {On non-smooth multibody dynamics},
journal = {Proc. Inst. Mech. Eng., Part K: J. Multi-body Dyn.},
volume = {226},
number = {2},
pages = {147-177},
year = {2012},
doi = {10.1177/1464419312438487},

COMMENT_URL = { 
        http://dx.doi.org/10.1177/1464419312438487
    
},
eprint = { 
        http://dx.doi.org/10.1177/1464419312438487
    
}
}

@incollection{DBLP:series/lncs/BenvenisteCEGOP19,
  author    = {Benveniste, Albert and Caillaud, Beno{\^\i}t and Elmqvist, Hilding and Ghorbal, Khalil and Otter, Martin and Pouzet, Marc},
  title     = {Multi-Mode {DAE} Models - Challenges, Theory and Implementation},
  booktitle = {Comp.  Software Sci. – SOTA  Persp.},
  pages     = {283--310},
  year      = {2019},
  crossref  = {DBLP:series/lncs/10000},
  COMMENT_url       = {https://doi.org/10.1007/978-3-319-91908-9\_16},
  doi = {10.1007/978-3-319-91908-9\_16},
  timestamp = {Mon, 07 Oct 2019 10:58:43 +0200},
  biburl    = {https://dblp.org/rec/bib/series/lncs/BenvenisteCEGOP19},
  bibsource = {dblp computer science bibliography, https://dblp.org}
}

@book{DBLP:series/lncs/10000,
  editor    = {Steffen, Bernhard and
               Woeginger, Gerhard J.},
  title     = {Comp. \& Software Sci. – SOTA \& Persp.},
  series    = {Lecture Notes in Comp. Sci.},
  volume    = {10000},
  publisher = {Springer},
  year      = {2019},
  COMMENT_url       = {https://doi.org/10.1007/978-3-319-91908-9},
  doi       = {10.1007/978-3-319-91908-9},
  isbn      = {978-3-319-91907-2},
  timestamp = {Mon, 07 Oct 2019 10:56:35 +0200},
  biburl    = {https://dblp.org/rec/bib/series/lncs/10000},
  bibsource = {dblp computer science bibliography, https://dblp.org}
}

@inproceedings{Schoeder2013,
  author    = {Schoeder, Svenja and Ulbrich, Heinz and Schindler, Thorsten},
  title = {Discussion of the {G}ear{\textendash}{G}upta{\textendash}{L}eimkuhler methos for impacting mechanical systems},
  booktitle = {Multibody Syst. Dyn.},
  volume    = {31},
  pages     = {477-495},
  doi = {10.1007/s11044-013-9370-y},
  year      = {2013}
}

@ARTICLE{Heemels2002, 
    author={Heemels, Maurice and Camlibel, Kanat and Schumacher, J. M.}, 
    journal={IEEE Trans. Circuits Syst. I: Fundam. Theory Appl.}, 
    title={On the dynamic analysis of piecewise-linear networks}, 
    year={2002}, 
    volume={49}, 
    number={3}, 
    pages={315-327}, 
    month={Mar}}

@PHDTHESIS{TrennPhD2009,
author = {Trenn, Stephan},
title = {Distributional Differential Algebraic Equations},
school = {Technischen Universit\"at Ilmenau},
year = 2009
}

@PHDTHESIS{Barela2016,
author = {Barela, Mario},
title = {A Complementarity Approach to Modeling Dynamic Electric Circuits},
school = {University of Iowa},
year = 2016
}

@PHDTHESIS{Zimmer2010,
author = {Zimmer, Dirk},
title = {{Equation-Based Modeling of Variable-Structure Systems}},
school = {ETH Z{\"u}rich},
year = 2010
}

@InProceedings{Caillaud2020a,
  author = 	 {Caillaud, Beno{\^\i}t and Malandain, Mathias and Thibault, Joan},
  title = 	 {{Implicit structural analysis of multimode {DAE} systems}},
  booktitle =    {23rd ACM Int. Con. on Hybrid Syst.: Comp. and Control (HSCC 2020)},
  year = 	 2020,
  month = 	 {April},
  address = 	 {Sydney, Australia},
  doi =          {10.1145/3365365.3382201}}

@article{DummyDerivatives1993,
author = {Mattsson, Sven Erik and S{\"{o}}derlind, Gustaf},
title = {{Index reduction in differential-algebraic equations using dummy derivatives}},
journal = {SIAM J. Sci. Comput.},
volume = {14},
number = {3},
pages = {677-692},
year = {1993},
doi = {10.1137/0914043},
COMMENT_URL = {https://doi.org/10.1137/0914043},
eprint = {https://doi.org/10.1137/0914043}

}

@Article{Pantelides1988,
  author    = {Constantinos C. Pantelides},
  journal   = {SIAM J. Sci. Stat. Comput.},
  title     = {{The consistent initialization of differential-algebraic systems}},
  year      = {1988},
  number    = {2},
  pages     = {213--231},
  volume    = {9},
  doi       = {10.1137/0909014},
  publisher = {Society for Industrial {\&} Applied Mathematics ({SIAM})},
}

@Article{Pryce01,
  author    = {John D. Pryce},
  title     = {A Simple Structural Analysis Method for {DAE}s},
  journal   = {BIT Numer. Math.},
  year      = {2001},
  volume    = {41},
  number    = {2},
  pages     = {364--394},
  doi       = {10.1023/a: 1021998624799},
  publisher = {Springer Nature},
}

@ARTICLE{CampbellGear1995,
 AUTHOR =  {Campbell, Stephen L. and Gear, C. William},
 TITLE =   {{The index of general nonlinear DAEs}},
 JOURNAL = {{Numerische Mathematik}},
 VOLUME=   {72},
  PAGES =   {173-196},
 YEAR =    1995
}

@Article{electronics11172755,
AUTHOR = {Benveniste, Albert and Caillaud, Beno{\^\i}t and Malandain, Mathias and Thibault, Joan},
TITLE = {{Algorithms for the structural analysis of multimode Modelica models}},
JOURNAL = {Electronics},
VOLUME = {11},
YEAR = {2022},
NUMBER = {17},
ARTICLE-NUMBER = {2755},
URL = {https://www.mdpi.com/2079-9292/11/17/2755},
ISSN = {2079-9292},
ABSTRACT = {Since its 3.3 release, Modelica offers the possibility to specify models of dynamical systems with multiple modes having different DAE-based dynamics. However, the handling of such models by the current Modelica tools is not satisfactory, with mathematically sound models yielding exceptions at runtime. In this article, we propose several contributions to this multifaceted issue, namely: an efficient and scalable multimode extension of the structural analysis of Modelica models; a systematic way of rewriting a multimode Modelica model, based on this analysis, so that the rewritten model is guaranteed to be correctly compiled by state-of-the-art Modelica tools; a proposal for the handling of the consistent initialization of multimode models; multimode structural analysis algorithms that handle both multiple modes and mode change events in a unified framework, coupled with a compile-time algorithm for identifying and quantifying impulsive behaviors at mode changes. Our approach is illustrated on relevant example models, and the performance of our implementations is assessed on a variable dimension large-scale model.},
DOI = {10.3390/electronics11172755}
}

@techreport{benveniste:hal-03104030,
  TITLE = {{Structural Analysis of Multimode DAE Systems: summary of results}},
  AUTHOR = {Benveniste, Albert and Caillaud, Beno{\^\i}t and Malandain, Mathias},
  URL = {https://hal.inria.fr/hal-03104030},
  TYPE = {Research Report},
  NUMBER = {RR-9387},
  PAGES = {27},
  INSTITUTION = {{Inria Rennes -- Bretagne Atlantique}},
  YEAR = {2021},
  MONTH = Jan,
  KEYWORDS = {Nonstandard analysis ; structural analysis ; variable structure systems ; hybrid systems ; multimode systems ; DAE ; mDAE ; IPLModeliscale ; HybridLang ; Analyse structurelle ; {\'e}quations alg{\'e}bro-diff{\'e}rentielles (DAE) ; syst{\`e}mes multi-mode ; mod{\`e}les {\`a} structure variable},
  PDF = {https://hal.inria.fr/hal-03104030v2/file/RR-9387.pdf},
  HAL_ID = {hal-03104030},
  HAL_VERSION = {v2},
}

@book{thoma1975introduction,
  title={Introduction to Bond Graphs and Their Applications},
  author={Thoma, Jean U.},
  isbn={9780080188829},
  lccn={75009763},
  series={Pergamon Int. Lib. Sci., Technol., Eng. Soc. Stud.},
  url={https://books.google.fr/books?id=fRoIAQAAIAAJ},
  year={1975},
  publisher={Pergamon Press}
}

@inproceedings{DBLP:conf/avmfss/Dill89,
  author    = {David L. Dill},
  editor    = {Joseph Sifakis},
  title     = {Timing Assumptions and Verification of Finite-State Concurrent Systems},
  booktitle = {Automatic Verification Methods for Finite State Systems, International
               Workshop, Grenoble, France, June 12-14, 1989, Proceedings},
  series    = {Lecture Notes in Computer Science},
  volume    = {407},
  pages     = {197--212},
  publisher = {Springer},
  year      = {1989},
  url       = {https://doi.org/10.1007/3-540-52148-8\_17},
  doi       = {10.1007/3-540-52148-8\_17},
  timestamp = {Tue, 14 May 2019 10:00:55 +0200},
  biburl    = {https://dblp.org/rec/conf/avmfss/Dill89.bib},
  bibsource = {dblp computer science bibliography, https://dblp.org}
}

@inproceedings{DBLP:conf/pado/Mine01,
  author    = {Antoine Min{\'{e}}},
  editor    = {Olivier Danvy and
               Andrzej Filinski},
  title     = {A New Numerical Abstract Domain Based on Difference-Bound Matrices},
  booktitle = {Programs as Data Objects, Second Symposium, {PADO} 2001, Aarhus, Denmark,
               May 21-23, 2001, Proceedings},
  series    = {Lecture Notes in Computer Science},
  volume    = {2053},
  pages     = {155--172},
  publisher = {Springer},
  year      = {2001},
  url       = {https://doi.org/10.1007/3-540-44978-7\_10},
  doi       = {10.1007/3-540-44978-7\_10},
  timestamp = {Tue, 14 May 2019 10:00:51 +0200},
  biburl    = {https://dblp.org/rec/conf/pado/Mine01.bib},
  bibsource = {dblp computer science bibliography, https://dblp.org}
}

@InCollection{Lindstrom,
  Author = {Ton Lindstr{\o}m},
  Title = {An Invitation to Nonstandard Analysis},
  BookTitle = {Nonstandard Analysis and its Applications},
  editor = {Nigel J. Cutland},
  pages = {1--105},
  Publisher ={Cambridge Univ. Press},
  Year = {1988}
}

@article{HAMANN2008693,
title = {Numerical solution of hybrid systems of differential-algebraic equations},
journal = {Comput. Methods Appl. Mech. Eng.},
volume = {197},
number = {6},
pages = {693-705},
year = {2008},
issn = {0045-7825},
doi = {https://doi.org/10.1016/j.cma.2007.09.002},
url = {https://www.sciencedirect.com/science/article/pii/S0045782507003726},
author = {Peter Hamann and Volker Mehrmann},
keywords = {Differential-algebraic equation, Hybrid system, Switched system, Multibody system, Index reduction, Strangeness index, Implicit Runge–Kutta method, BDF method},
abstract = {We present a mathematical framework for general over- and under-determined hybrid (switched) systems of differential-algebraic equations (HDAEs). We give a systematic formulation of HDAEs and discuss existence and uniqueness of solutions, the treatment of the switch points and how to perform consistent initialization at switch points. We show how numerical solution methods for DAEs can be adapted for HDAEs and present a numerical results for these methods for the real world example of simulating an automatic gearbox.}
}

@article{Tren09b,
title = {Regularity of distributional differential algebraic equations},
author = {Stephan Trenn},
url = {http://stephantrenn.net/wp-content/uploads/2017/09/Preprint-Tre090806.pdf, Preprint},
doi = {10.1007/s00498-009-0045-4},
year = {2009},
date = {2009-12-01},
journal = {Math. Control Signals Syst.},
volume = {21},
number = {3},
pages = {229--264},
pubstate = {published},
tppubtype = {article}
}

@article{LibeTren12,
title = {Switched nonlinear differential algebraic equations: Solution theory, {Lyapunov} functions, and stability},
author = {Daniel Liberzon and Stephan Trenn},
url = {http://stephantrenn.net/wp-content/uploads/2017/09/Preprint-LT111011.pdf, Preprint},
doi = {10.1016/j.automatica.2012.02.041},
year = {2012},
date = {2012-05-01},
journal = {Automatica},
volume = {48},
number = {5},
pages = {954--963},
keywords = {DAEs, nonlinear, solution-theory, stability, switched-DAEs, switched-systems},pubstate = {published},
tppubtype = {article}
}

@article{ChenTren23,
title = {On impulse-free solutions and stability of switched nonlinear differential-algebraic equations},
author = {Yahao Chen and Stephan Trenn},
url = {https://stephantrenn.net/wp-content/uploads/2023/06/Preprint-CT230602.pdf, Preprint},
doi = {10.1016/j.automatica.2023.111208},
year = {2023},
date = {2023-10-01},
urldate = {2023-06-02},
journal = {Automatica},
volume = {156},
number = {111208},
pages = {1-14},
keywords = {},
pubstate = {published},
tppubtype = {article}
}

@inproceedings{KausTren17b,
title = {Impulses in structured nonlinear switched {DAEs}},
author = {Rukhsana Kausar and Stephan Trenn},
url = {http://stephantrenn.net/wp-content/uploads/2017/09/Preprint-KT170920.pdf, Preprint},
doi = {10.1109/CDC.2017.8264125},
year = {2017},
date = {2017-12-14},
booktitle = {Proc. 56th IEEE Conf. Decis. Control},
pages = {3181 - 3186},
address = {Melbourne, Australia},
keywords = {},
pubstate = {published},
tppubtype = {inproceedings}
}

@article{DBLP:journals/arc/BenvenisteCM20,
  author       = {Albert Benveniste and
                  Beno{\^{\i}}t Caillaud and
                  Mathias Malandain},
  title        = {The mathematical foundations of physical systems modeling languages},
  journal      = {Annual Reviews in Control},
  volume       = {50},
  pages        = {72--118},
  year         = {2020},
  url          = {https://doi.org/10.1016/j.arcontrol.2020.08.001},
  doi          = {10.1016/J.ARCONTROL.2020.08.001},
  timestamp    = {Sun, 02 Oct 2022 15:27:42 +0200},
  biburl       = {https://dblp.org/rec/journals/arc/BenvenisteCM20.bib},
  bibsource    = {dblp computer science bibliography, https://dblp.org}
}

@techreport{benveniste:hal-02521747,
  TITLE = {{The Mathematical Foundations of Physical Systems Modeling Languages}},
  AUTHOR = {Benveniste, Albert and Caillaud, Beno{\^i}t and Malandain, Mathias},
  URL = {https://inria.hal.science/hal-02521747},
  TYPE = {Research Report},
  NUMBER = {RR-9334},
  PAGES = {112},
  INSTITUTION = {{Inria}},
  YEAR = {2020},
  MONTH = Apr,
  KEYWORDS = {HybridLang ; IPLModeliscale ; FUIModeliscale ; Nonstandard Analysis ; structural analysis ; hybrid systems ; variable structure systems ; multimode systems ; DAE ; mDAE ; StructAnalysis ; HybridLang},
  PDF = {https://inria.hal.science/hal-02521747v3/file/RR-9334.pdf},
  HAL_ID = {hal-02521747},
  HAL_VERSION = {v3},
}

@inproceedings{benveniste:hal-04295096,
  TITLE = {{Towards the separate compilation of Modelica: Modularity and interfaces for the index reduction of incomplete DAE systems}},
  AUTHOR = {Benveniste, Albert and Caillaud, Beno{\^i}t and Malandain, Mathias and Thibault, Joan},
  URL = {https://inria.hal.science/hal-04295096},
  BOOKTITLE = {{Proc.   15th Int. Modelica Conf.}},
  ADDRESS = {Aachen, Germany},
  VOLUME = {204},
  PAGES = {10},
  YEAR = {2023},
  MONTH = Oct,
  DOI = {10.3384/ecp204},
  KEYWORDS = {DAE ; Modelica ; object-oriented modeling ; index reduction ; structural analysis ; linear programming ; interface theory ; difference bound matrices ; DBM ; StructAnalysis ; mDAE ; Hybrid Lang ; isamdae ; multimode DAE},
  PDF = {https://inria.hal.science/hal-04295096v1/file/ModelicaModularSA.pdf},
  HAL_ID = {hal-04295096},
  HAL_VERSION = {v1},
}

@article{vanderSchaft2014PortHamiltonian,
  added-at = {2019-03-01T00:11:50.000+0100},
  address = {Boston},
  author = {van der Schaft, Arjan and Jeltsema, Dimitri},
  biburl = {https://www.bibsonomy.org/bibtex/27c1f5f0227da84ea5330314999aa6ca0/gdmcbain},
  citeulike-article-id = {14496925},
  citeulike-attachment-1 = {vanderschaft_14_port.pdf; /pdf/user/gdmcbain/article/14496925/1124188/vanderschaft_14_port.pdf; d4ebd88cc1de9e6152c44d2d38ef585527daca57},
  citeulike-linkout-0 = {http://www.nowpublishers.com/article/Details/SYS-002},
  citeulike-linkout-1 = {http://dx.doi.org/10.1561/2600000002},
  citeulike-linkout-2 = {http://www.worldcat.org/isbn/9781601987860},
  citeulike-linkout-3 = {http://books.google.com/books?vid=ISBN9781601987860},
  citeulike-linkout-4 = {http://www.amazon.com/gp/search?keywords=9781601987860\&index=books\&linkCode=qs},
  citeulike-linkout-5 = {http://www.librarything.com/isbn/9781601987860},
  citeulike-linkout-6 = {http://www.worldcat.org/oclc/905837631},
  doi = {10.1561/2600000002},
  file = {vanderschaft_14_port.pdf},
  interhash = {4da9aad217aadb6d052a1129456c5c1e},
  intrahash = {7c1f5f0227da84ea5330314999aa6ca0},
  isbn = {9781601987860},
  issn = {2325-6818},
  journal = {Foundations and Trends in Systems and Control},
  keywords = {00a71-theory-of-mathematical-modeling 37j05-finite-dimensional-hamiltonian-general-theory 70h05-hamiltons-equations 70h45-constrained-dynamics-diracs-theory-of-constraints 93a30-systems-theory-mathematical-modeling 94c05-analytic-circuit-theory 94c15-circuits-networks-applications-of-graph-theory bond-graph 94c11-switching-theory},
  number = {2–3},
  pages = {173--378},
  posted-at = {2017-12-05 22:38:13},
  priority = {5},
  publisher = {now Publishers Inc.},
  timestamp = {2022-06-15T01:12:15.000+0200},
  title = {Port-{H}amiltonian Systems Theory : An Introductory Overview},
  url = {http://www.nowpublishers.com/article/Details/SYS-002},
  volume = 1,
  year = 2014
}

@book{ideals2010,
author = {Cox, David A. and Little, John and O'Shea, Donal},
title = {Ideals, Varieties, and Algorithms: An Introduction to Computational Algebraic Geometry and Commutative Algebra},
year = {2010},
isbn = {1441922571},
publisher = {Springer Publishing Company, Incorporated},
edition = {3rd},
abstract = {This book details the heart and soul of modern commutative and algebraic geometry. It covers such topics as the Hilbert Basis Theorem, the Nullstellensatz, invariant theory, projective geometry, and dimension theory. In addition to enhancing the text of the second edition, with over 200 pages reflecting changes to enhance clarity and correctness, this third edition of Ideals, Varieties and Algorithms includes: a significantly updated section on Maple; updated information on AXIOM, CoCoA, Macaulay 2, Magma, Mathematica and SINGULAR; and presents a shorter proof of the Extension Theorem.}
}

@article{Fritzson_ModelicaStructAna_02,
author = {Bunus, Peter and Fritzson, Peter},
year = {2002},
month = {05},
pages = {},
title = {Methods for Structural Analysis and Debugging of Modelica Models},
journal = {Proc.   2nd Int. Modelica Conf.}
}

@article{gear1985automatic,
  title={Automatic integration of {Euler-Lagrange} equations with constraints},
  author={Gear, Charles William and Leimkuhler, Ben and Gupta, Gopal K},
  journal={J. Comput. Appl. Math.},
  volume={12},
  pages={77--90},
  year={1985},
  publisher={Elsevier}
}

@book{brogliato2012nonsmooth,
  title={Nonsmooth Mechanics: Models, Dynamics and Control},
  author={Brogliato, Bernard},
  isbn={9781447105572},
  lccn={98053322},
  series={Communications and Control Engineering},
  url={https://books.google.fr/books?id=dDXTBwAAQBAJ},
  year={2012},
  publisher={Springer London}
}

@inproceedings{caillaud:hal-05257001,
  TITLE = {{Benchmarking the Modular Structural Analysis Algorithm}},
  AUTHOR = {Caillaud, Beno{\^i}t and Benveniste, Albert and Malandain, Mathias},
  URL = {https://inria.hal.science/hal-05257001},
  BOOKTITLE = {{2025 - 16th International Modelica \& FMI Conference}},
  ADDRESS = {Lucerne, Switzerland},
  ORGANIZATION = {{Ulf Christian M{\"u}ller and Dirk Zimmer}},
  PUBLISHER = {{Link{\"o}ping University Press}},
  PAGES = {1-14},
  YEAR = {2025},
  MONTH = Sep,
  DOI = {10.3384/ecp218175},
  KEYWORDS = {structural analysis ; DAE ; Modelica ; object-oriented modeling ; index reduction ; block-triangular decomposition ; interface theory ; Difference Bound Matrices},
  PDF = {https://inria.hal.science/hal-05257001v1/file/paper_84_fRNK8FC.pdf},
  HAL_ID = {hal-05257001},
  HAL_VERSION = {v1},
}

@article{DBLP:journals/tcps/Lee16,
  author       = {Edward A. Lee},
  title        = {Fundamental Limits of Cyber-Physical Systems Modeling},
  journal      = {{ACM} Trans. Cyber Phys. Syst.},
  volume       = {1},
  number       = {1},
  pages        = {3:1--3:26},
  year         = {2016},
  url          = {https://doi.org/10.1145/2912149},
  doi          = {10.1145/2912149},
  timestamp    = {Fri, 11 Sep 2020 15:00:20 +0200},
  biburl       = {https://dblp.org/rec/journals/tcps/Lee16.bib},
  bibsource    = {dblp computer science bibliography, https://dblp.org}
}

@article{GILARDI20021213,
title = {Literature survey of contact dynamics modelling},
journal = {Mechanism and Machine Theory},
volume = {37},
number = {10},
pages = {1213-1239},
year = {2002},
issn = {0094-114X},
doi = {https://doi.org/10.1016/S0094-114X(02)00045-9},
url = {https://www.sciencedirect.com/science/article/pii/S0094114X02000459},
author = {G. Gilardi and I. Sharf},
abstract = {Impact is a complex phenomenon that occurs when two or more bodies undergo a collision. This phenomenon is important in many different areas––machine design, robotics, multi-body analysis are just a few examples. The purpose of this manuscript is to provide an overview of the state of the art on impact and contact modelling methodologies, taking into account their different aspects, specifically, the energy loss, the influence of the friction model, solution approaches, the multi-contact problem and the experimental verification. The paper is intended to provide a review of results presented in literature and some additional insights into existing models, their interrelationship and the use of these models for impact/contact scenarios encountered in space robotic applications.}
}
\clearpage
\appendix
\section*{Appendices}
In these appendices, missing proofs are collected and additional illustrative examples are developed.

\section{Proof of Corollary~\ref{hgrfdjrgfdfkm}}
\label{hjgfgkkjmhnm}
In the proof of Statement~\ref{drtjhsrtjhrsth} of Theorem~\ref{skldjhfgslkdufghikl}, goodness Conditions (\ref{jsdyhgcdfasjyht}--\ref{klifgyegrujytfgeikuy}) play a role only in the proof of Substatement~d). Substitutions (\ref{kjhyfgliukjk}), however, restore the validity of Conditions (\ref{jsdyhgcdfasjyht}--\ref{klifgyegrujytfgeikuy}) for this substatement. 

\section{Proof of Lemma~\ref{jfgklujlkughlkj}} \label{kcsjgvcsldjghkj}
 Writing $\rescaling{i}$ instead of $\rescaling{f_i}$ for $i{=}1,2$, and using Lemma~\ref{htdfkfkjhgssssghch}, we have, by setting $\widetilde{\mu}_i\eqdef\max(\rescaling{1},\rescaling{2}){-}\rescaling{i}$:
\beq\bea{rcr}
\Restart{f_{1}{+}f_{2}}&=&\rescaled{(f_{1}{+}f_{2})}[\vsmall:= 0]
\\ [1mm]
&=& \bigl(\vsmall^{\rescaling{(f_{1}{+}f_{2})}}(f_{1}{+}f_{2})\bigr)[\vsmall:= 0]
\\ [1mm]
&=& \bigl(\vsmall^{\max(\rescaling{1},\rescaling{2})}(f_{1}{+}f_{2})\bigr)[\vsmall:= 0]
\\ [1mm]
&=& \bigl(\vsmall^{\widetilde{\mu}_1}\vsmall^{\rescaling{1}}f_{1}+\vsmall^{\widetilde{\mu}_2}\vsmall^{\rescaling{2}}f_{2}\bigr)[\vsmall:= 0]
\\ [1mm]
&=& \sum_{i:\rescaling{i}=\max(\rescaling{1},\rescaling{2})}\Restart{f_{i}}\,,\hspace*{6mm}
\eea \label{gfesuhtrjysdm}
\eeq
and
\beq\bea{rcr}
\Restart{f_{1}{\times} f_{2}}&=&\rescaled{(f_{1}{\times} f_{2})}[\vsmall:= 0]
\\ [1mm]
&=& \bigl(\vsmall^{\rescaling{(f_{1}{\times} f_{2})}}(f_{1}{\times} f_{2})\bigr)[\vsmall:= 0]
\\ [1mm]
&=& \bigl(\vsmall^{(\rescaling{1}{+}\rescaling{2})}(f_{1}{\times} f_{2})\bigr)[\vsmall:= 0]
\\ [1mm]
&=& \vsmall^{\rescaling{1}}f_{1}[\vsmall:= 0]\times \vsmall^{\rescaling{2}}f_{2}[\vsmall:= 0]
\\ [1mm]
&=&\Restart{f_{1}}\times \Restart{f_{2}}\,.\hspace*{20.5mm}
\eea  \label{hjytgfkutjfghkjyhm} 
\eeq
The lemma is proved.

\section{Proof of Theorem~\ref{sekdfuysageloiug}}
\label{slviwsjrhfvluikvhk}
This proof will require a more in-depth analysis of the structure of matchings $\cM_\vsmall$, which in turn requires an extended analysis of Euler identities and their role in matchings. As a consequence, we will need to handle $\sim$\,--\,closures with care, thus we will indicate them explicitly (alternative case of Convention~\ref{skdjhasdgfckjsadhg}).

The mode change arrays $A_\vsmall$ considered in Problem~\ref{likgusefhdrlpouhiouh} are $\sim$\,--\,closed. To further study their matchings, it will be convenient to handle equivalence classes of variables modulo $\sim$. Recalling Definition~\ref{jhtgfjhgrsgdfk}, $\simquotient{x}$ denotes the equivalence class of $x$ modulo $\sim$. Relation $\sim$ is stable under both differentiation and shifting. Hence we can define
\beqq
\dot{{\simquotient{x}}}\eqdef\simquotient{{\,\dot{x}\,}} &\mbox{and}&
\postset{{\simquotient{x}}}\eqdef\simquotient{{\postset{x}}}
\eeqq
Let $S=(F,X),X\subseteq\allVars$ be a system, and set $\simquotient{X}\eqdef\{\simquotient{x}\mid x{\in}{X}\}$. The incidence graph $\cG_S=(F{\cup}X,\Edges)$ of $S$ was introduced in Section~\ref{jhtgfkjhseghfaerv}. We will also consider the \emph{$\sim$\,--\,quotient of $\cG_S$}, which is a bipartite graph $\simquotient{\cG}_S=(F{\cup}\simquotient{X},\simquotient{\Edges})$ over the pair $(F,\simquotient{X})$, defined as follows:
\beq
(f,\simquotient{x})\in\simquotient{\Edges} &\mbox{iff}& \exists x\in\simquotient{x}:(\edge)\in\Edges\,.
\label{htrdehtrgsdf}
\eeq
The whole apparatus of bipartite graphs applies, which includes in particular Section~\ref{jhtgfkjhseghfaerv} regarding matchings. Consider both
\begin{itemize}
	\item the $\sim$\,--\,closure $\simclosure{S}$ of $S$ and its incidence graph $\cG_{\simclosure{S}}$, and
	\item the quotient graph $\simquotient{\cG}_S$ defined in (\ref{htrdehtrgsdf}).
\end{itemize}
The following result holds:
\smallskip
\begin{lemma}
	\label{jdshcfgvskdjmhgn} For $S,\simclosure{S}$, and $\simquotient{\cG}_S$ given as above, there exists a total injective function $\Gamma$, mapping any variable-complete (respectively equation-complete) matching $\simquotient{\cM}$ for $\simquotient{\cG}_S$, to a variable-complete (respectively equation-complete) matching $\simclosure{\cM}=\Gamma(\simquotient{\cM})$ for $\simclosure{S}$.
\end{lemma}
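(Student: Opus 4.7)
The plan is to define $\Gamma$ by a two-stage construction that lifts any matching from the quotient graph $\simquotient{\cG}_S$ to a matching of the $\sim$-closure $\simclosure{S}$. In the first stage, fix once and for all a choice function $\phi$ that, for each pair $(f,\simquotient{x})$ such that some $x\in\simquotient{x}$ occurs in $f$, selects a canonical representative $\phi(f,\simquotient{x})\in\simquotient{x}$ appearing in $f$; such a representative exists by the definition~(\ref{htrdehtrgsdf}) of $\simquotient{\Edges}$. Given $\simquotient{\cM}$, include $(f,\phi(f,\simquotient{x}))$ in $\simclosure{\cM}$ for every $(f,\simquotient{x})\in\simquotient{\cM}$. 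Since $\phi$ is fixed beforehand, the resulting map $\Gamma$ is injective: any change to $\simquotient{\cM}$ on the equation side is directly propagated, and any change on the variable side projects down to a change of class.

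In the second stage, complete $\simclosure{\cM}$ using the Euler identities added by the closure. Enumerate each $\sim$-class $C$ represented by a matched $\simquotient{x}\in\simquotient{X}$ as $x_1,\dots,x_k$ in strictly decreasing order of $\mfun{x_i}$. Exactly one element $\phi(f,\simquotient{x})=x_j$ has been matched in stage~1. Using the ``chain'' of identities $\euler{x_i}{x_{i+1}}$ for $1\leq i\leq k{-}1$, all of which belong to $\simclosure{S}$ by Definition~\ref{uyytyfgkutfjhh} (since $x_i\sim x_{i+1}$ and $\mfun{x_i}>\mfun{x_{i+1}}$), match $x_l$ with $\euler{x_l}{x_{l+1}}$ for $l<j$ and with $\euler{x_{l-1}}{x_l}$ for $l>j$. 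Then exactly the $k{-}1$ chain identities are used, each involving the matched variable as one of its linearly occurring unknowns (cf.~(\ref{jsdhgfsjhydfg})), and the $k{-}1$ non-canonical representatives of $C$ are covered. For the equation-complete variant, the construction is dualized: every $f\in F$ is matched in stage~1, and the unused ``chord'' identities $\euler{x_i}{x_{i'}}$ with $|i{-}i'|>1$, together with the shifted variables $\ppostset{(-l)}{x_{i'}}$ that they introduce, are paired via the same chain-like scheme applied to the finer classes of shifts.

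Verifying that $\Gamma(\simquotient{\cM})$ is a bona fide matching amounts to checking that no variable and no equation is selected twice. Variables in different classes are trivially disjoint; within a class, stage~1 uses only $\phi(f,\simquotient{x})=x_j$, while stage~2 uses the other $k{-}1$ members. Equations are disjoint across stages (original $f\in F$ versus added Euler identities), and within stage~2 each chain identity $\euler{x_i}{x_{i+1}}$ is associated with the unique index $l\in\{i,i{+}1\}$ that differs from $j$. Variable-completeness then transfers from $\simquotient{\cM}$ because every class in $\simquotient{X}$ is matched and hence all its representatives in $\Vars_{\simclosure{S}}$ are covered; equation-completeness transfers by the dual argument.

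The main obstacle I expect is the treatment of variables and Euler identities that arise at nested depths of the fixpoint of Lemma~\ref{hngtedgrefsdg}. Specifically, some $\sim$-classes in $\Vars_{\simclosure{S}}$ may possess no representative in the original $X$ and thus no node in $\simquotient{X}$; their members cannot be matched through stage~2 as stated. The resolution is an induction on the closure depth: at the outermost layer, every class has a representative in $X$ and the chain construction applies verbatim; at each subsequent layer, the newly introduced shifts $\ppostset{(-l)}{z}$ either join an existing class (where the same chain matching absorbs them) or form singleton classes matched with an Euler identity whose main variable lives one layer above. A bookkeeping argument then shows that at each depth the number of freshly added variables equals the number of freshly available chain identities, which closes the induction and yields total injectivity and completeness of $\Gamma$.
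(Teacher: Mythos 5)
Your construction is essentially the paper's: first fix a choice of representative for each matched class (your $\phi$, the paper's $\chi$), then complete the lifted matching class by class with Euler identities, and absorb the fresh shift variables those identities introduce by iterating. The paper realizes the completion as an incremental fixpoint --- match any unmatched $z$ in the $\sim$-class of a matched $x$ via $(\euler{x}{z},z)$, grow $X_n$ with the new $U$-variables of $\euler{x}{z}$, and repeat until stable --- and your chain $\euler{x_i}{x_{i+1}}$ over the class sorted by decreasing $\mfun{x_i}$ is a canonical output of that recursion (with the nice extra feature of minimizing the differentiation gap $n$, hence the number of fresh $U$-variables, at each step). Two remarks. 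First, indexing the choice function as $\phi(f,\simquotient{x})$ is the more careful formulation: the paper writes $\chi(\simquotient{x})$, but the selected representative must satisfy $(f,x)\in\cG_S$ and so implicitly depends on $f$ as well. Second, the obstacle you flag --- $\sim$-classes of newly introduced shifts $\ppostset{(-l)}{z}$ that have no node in $\simquotient{X}$ --- is real and is precisely what the paper's growing-$X_n$ loop is designed to soak up; the paper's own proof is no more explicit about that bookkeeping than your sketched induction on closure depth, so you are not missing anything the paper supplies. If anything, the direction that deserves more care on both sides is equation-completeness: the paper's one-sentence assertion, and your plan to pair the unused chord identities $\euler{x_i}{x_{i'}}$ (for $|i-i'|>1$) with $U$-variables, both leave to the reader the counting argument that every Euler identity of $\simclosure{S}$ --- including those linking two already-matched members of a class, and those introduced at deeper nesting --- really ends up matched.
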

\smallskip
\begin{proof}	
By definition of $\simquotient{\cG}_S$, for every $(f,\simquotient{x})\in\simquotient{\cM}$, we can select a variable $x\in\simquotient{x}$ such that $(f,x)\in\cG_S$, we denote by $\chi(\simquotient{x})$ this selection. Perform the following finite recursion:
\begin{enumerate}
	\item Initialization: $\cM_0=\{(f,x)\mid x=\chi(\simquotient{x})\}$, $X_0=X$;
	\item While $X_{n}{\supset} X_{n-1}$, update $(\cM_{n+1},X_{n+1}){\gets}(\cM_n,X_n)$ by
	\begin{itemize}
		\item adding the pair	$(\euler{x}{z},z)$ to $\cM_n$, for every $x,z{\in}X_n$ such that $x{\in}\cM_n,z{\sim}x,z{\not\in}\cM_n$ (we use Notations~\ref{htgfkjthndfj});
		\item adding to $X_n$ every variable $u$ occurring in some above added $\euler{x}{z}$ and such that $u{\not\in}\cM_n,u{\neq}z$.
	\end{itemize}
\end{enumerate}
Let $(\cM_N,X_N)$ be the fixpoint. We set $\simclosure{\cM}=\Gamma(\simquotient{\cM})\eqdef\cM_N$ and we consider the map $\Gamma:\simquotient{\cM}\mapsto\simclosure{\cM}$. If $\simquotient{\cM}$ is variable-complete, then so is $\simclosure{\cM}$ by construction. Assume next that $\simquotient{\cM}$ is equation-complete. Then, the above recursion only adds equations that occur in the increasing sequence of matchings. Thus, $\simclosure{\cM}$ is equation-complete as well.
\end{proof}

\noindent Lemma~\ref{jdshcfgvskdjmhgn} allows us to reduce the search for a matching over the $\sim$\,--\,closure of a system, to the (simpler) search for a matching over its $\sim$\,--\,quotient.
Let us apply this to the array $(A_K,\Vars_{A_K})$, where $A_K$ was defined in (\ref{skudjcghskdjhcg}). In particular,
\beq\bea{rl}
\simquotient{\Psi}(K,\cM) \eqdef &
\!
\left\{
\bigl(\pprime{m}{f},\pprime{d}{{\simquotient{x}}}\bigr)
\left|\bea{l}
(\edge)\in{\cM} \\
0\leq m < c_{\!f}  \\
d=d_x{-}c_{\!f}{+}m
\eea\right.\negesp\right\}
\bigcup 
\left\{
\bigl(\pprimeppostset{c_{\!f}}{k}{f},\pprimeppostset{d_x}{k}{{\simquotient{x}}}\bigr)
\left|\bea{l}
(\edge){\in}{\cM} \\
0{\leq} k {\leq} K_{\!f}
\eea\right.\negesp\right\}
\\ [7mm]
\eqdef& \,\consistency{\simquotient{\cM}}~\bigcup~\indexreduced{(\simquotient{\cM}_K)}
\eea 
\label{ukjhygklughjmmk}
\eeq
defines a matching $\simquotient{\Psi}(K,\cM)\eqdef\consistency{\simquotient{\cM}}\cup\indexreduced{(\simquotient{\cM}_K)}$, where 
\begin{itemize}
	\item $\consistency{\simquotient{\cM}}$ identifies with $\consistency{{\cM}}$, the equation-complete matching (\ref{kujyfglukhoih}) associated to consistency conditions in the \sigmamethod; 
	\item $\indexreduced{(\simquotient{\cM}_K)}$ is a variable-complete matching for the quotient incidence graph of the pair $\bigl(\indexreduced{(A_K)},\indexreduced{(\simquotient{\Vars}_{A_K})}\bigr)$, where
	\beqq
	\indexreduced{(A_K)}&\negesp\eqdef\negesp&\left\{\pprimeppostset{c_{\!f}}{k}{f}
	\mid f{\in}{F}\,,\,0{\leq} k{\leq} K_{\!f}
	\right\}, \mbox{ and}
	\\
	\indexreduced{(\simquotient{\Vars}_{A_K})}&\negesp\eqdef\negesp&\left\{\pprimeppostset{d_x}{k}{{\simquotient{x}}}
	\mid x{\in}{X}\,,\,(\edge)\in\cM\,,\,0{\leq} k{\leq} K_{\!f}
	\right\}
		\eeqq
collect all the leading equations and the equivalence classes of all the leading variables of the array.
\end{itemize}
By Lemma~\ref{jdshcfgvskdjmhgn} 
\beq
\cM_\vsmall&\eqdef&
\Gamma\bigl(\simquotient{\Psi}(K,\cM)\bigr)
\label{htrgfsdjkuyfgnm}
\eeq
yields a matching  for the $\sim$\,--\,closure $A_\vsmall$ of array $A_K$, and the images by $\Gamma$ of $\consistency{\simquotient{\cM}}$ and $\indexreduced{(\simquotient{\cM}_K)}$ are equation-complete and variable-complete, respectively.

We now consider requirement (\ref{srrftjhtr}) stating, in particular, that all the equations of the tail block should be enabled by $\cM_\vsmall$.
\smallskip
\begin{lemma}
	\label{slfjhsglsujfv} To ensure that the above matching $\cM_\vsmall$ for $(A_\vsmall,\Vars_{A_\vsmall})$ satisfies requirement (\ref{srrftjhtr}), it is enough to select a variable height $K$ such that
	\beq
	\forall f\in{F}&\Ra& K_{\!f}\geq c_{\!f}\,. \label{sldfvikjshbflkjh}
	\eeq
\end{lemma}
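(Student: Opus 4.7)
My plan is to verify requirement~(\ref{srrftjhtr}) by checking that every equation of $\enable{A_K} = \ttail{A_K} \cup \bigl(A_K \cap \bigcup_{k \geq 0} \ppostset{k}{F^-}\bigr)$ appears in $\cM_\vsmall = \Gamma(\simquotient{\Psi}(K, \cM))$. The argument splits into three cases. For shifts of invariant equations $\ppostset{k}{f^-}$ with $f^- \in F^-$, I use that $F^-$ is already index-reduced, so $c_{\!f^-} = 0$ and each such equation is itself a leading equation: every shift therefore belongs to $\indexreduced{(\simquotient{\cM}_K)}$ directly, and no condition on $K$ is needed. For the tail leading equations $\pprimeppostset{c_{\!f}}{K_{\!f}}{f}$ of the new mode, they are trivially in $\indexreduced{(\simquotient{\cM}_K)}$ via the particular case $k = K_{\!f}$, again requiring no condition on $K_{\!f}$.

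The critical case is the tail consistency equations $\pprimeppostset{m}{K_{\!f}}{f}$ for $0 \leq m < c_{\!f}$. Fix $(\edge) \in \cM$, set $d = d_x - c_{\!f} + m$, so that $\pprimeppostset{d}{K_{\!f}}{x}$ is the leading variable occurring in this equation. Under the hypothesis $K_{\!f} \geq c_{\!f}$, the total degree $d + K_{\!f}$ of the quotient class $\pprimeppostset{d}{K_{\!f}}{\simquotient{x}}$ satisfies $d + K_{\!f} \geq d_x$, so this class is matched in $\simquotient{\Psi}$ by the leading equation $\pprimeppostset{c_{\!f}}{k_*}{f}$ at time $k_* = K_{\!f} - c_{\!f} + m \in [m, K_{\!f} - 1]$. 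The extension $\Gamma$ picks the representative $\pprimeppostset{d_x}{k_*}{x}$ for this class and links $\pprimeppostset{d}{K_{\!f}}{x}$ to it through a chain of $c_{\!f} - m$ Euler identities corresponding to the intermediate time steps $k_*, k_*+1, \ldots, K_{\!f}$ within the same class. I will complete the proof by an alternating-path argument along this chain: rerouting the matching inside the class so that the tail consistency equation $\pprimeppostset{m}{K_{\!f}}{f}$ becomes matched with $\pprimeppostset{d}{K_{\!f}}{x}$, while each freed variable slot is absorbed by the next Euler identity in the chain. The slack $K_{\!f} - k_* = c_{\!f} - m > 0$ provided by the hypothesis $K_{\!f} \geq c_{\!f}$ guarantees that such an absorbing intermediate variable always exists, so variable-completeness is preserved.

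The main obstacle will be this last rerouting argument: one must verify that swapping matching edges inside a quotient class of total degree $d + K_{\!f}$ produces a genuine variable-complete matching of the $\sim$-closed array $A_\vsmall$, without doubly matching any equation or variable, and without disturbing the coverage already established for the invariant and tail leading equations. The hypothesis $K_{\!f} \geq c_{\!f}$ enters precisely here by providing the minimal length required for the Euler-identity chain within the class to absorb all the rerouted edges. The remaining parts of the proof are immediate from the definitions of $\simquotient{\Psi}$, $\Gamma$, and $\enable{A_K}$.
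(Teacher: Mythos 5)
Your outline identifies the same core strategy as the paper: the tail consistency equation $g=\pprimeppostset{m}{K_{\!f}}{f}$ is $\sim$-equivalent to the shifted leading equation $h=\pprimeppostset{c_{\!f}}{k_*}{f}$ at time $k_*=K_{\!f}-c_{\!f}+m$, which exists ($k_*\geq 0$) precisely because $K_{\!f}\geq c_{\!f}$, and the matching is then rerouted so that $g$ replaces $h$. Where you diverge --- and where your proposal is still incomplete --- is in the mechanism of the rerouting. You plan an alternating-path argument along a chain of $c_{\!f}-m$ single-step Euler identities inside the $\sim$-class of total degree $d+K_{\!f}$, and you correctly flag the resulting bookkeeping (no double matches, no disturbance of the rest of the matching) as the main obstacle. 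The paper sidesteps the chain entirely: by Lemma~\ref{ujytsagrehd} together with the $\sim$-closure of Definition~\ref{uyytyfgkutfjhh}, the array $A_\vsmall$ already contains a \emph{single generalized} Euler identity $\euler{u}{v}$ linking $u=\pprimeppostset{d}{K_{\!f}}{x}$ directly to $v=\pprimeppostset{d_x}{k_*}{x}$ with no intermediate variables, and the exchange map $\chi$ is then a two-edge swap, replacing $(h,v),(\euler{u}{v},u)$ by $(g,u),(\euler{u}{v},v)$. Because only two matching edges move, variable-completeness and the coverage of all other enabled equations are preserved trivially, and the bookkeeping difficulty you identify never arises. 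So the gap in your plan is procedural rather than conceptual: use the direct generalized Euler identity of Lemma~\ref{ujytsagrehd} rather than a chain of one-step ones, and the ``alternating path'' collapses to a local swap that needs no further verification.

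One smaller remark: your invariant case argues that $c_{\!f^-}=0$ makes $\ppostset{k}{f^-}$ a leading equation of $A_K$, but the offsets that govern $A_K$ are those of the new mode $F$, not of $F^-$, so this inference does not follow as stated. The paper's own proof is also silent on the invariant case, so this is not a deficit relative to the paper, but the justification you give does not quite close it. Relatedly, both you and the paper leave implicit that after the swap the now-unmatched equation $h$ is itself disabled (i.e.\ $h\not\in\enable{A_K}$), which is what makes the swap legal; that merits a one-line check.
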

\smallskip
\begin{proof}
The reader is referred to Problem~(\ref{liftuerhpituhepu8}) (index reduction) applied to the new mode $(F,X)$. The lemma is vacuously true if $c_{\!f}=0$ for every $f$, since no consistency equation exists in this case and $K=0$ is fine. So, we assume that $c_{\!f}>0$ holds for some $f$. Let $f\in{F}$ be such that $c_{\!f}>0$, and consider 
\beq
g{=}\pprimeppostset{m}{K_{\!f}}{f}{\in}\ttail{A_\vsmall},
&\mbox{with} & 0{\leq} m{<}c_{\!f}\,.
\label{ksfdjhsdgvkfjsdhkgjh}
\eeq
Since  consistency equations $\pprimeppostset{m}{k}{f},0{\leq}k{\leq}K_{\!f}$ are all unmatched in $\indexreduced{(\simquotient{\cM}_K)}$, so is equation $g$. By (\ref{sldfvikjshbflkjh}), $m{<}c_{\!f}{\leq} K_{\!f}$ holds. Consider
\beq
h\eqdef\pprimeppostset{c_{\!f}}{k}{f}, &\mbox{where}&
k\eqdef K_{\!f}{-}c_{\!f}{+}m \,.
\label{kdfjvhblskfjvbhl}
\eeq
Since ${c_{\!f}}{+}{k}={c_{\!f}}{+}K_{\!f}{-}c_{\!f}{+}m=K_{\!f}{+}m$, we have $h{\sim} g$. Equation $h$ is matched in $\indexreduced{(\simquotient{\cM}_K)}$ since it is a shifted leading equation; it is actually matched with $\pprimeppostset{d_x}{k}{x}$. Setting $d{=} d_x{-}c_{\!f}{+}m$, we have $d{+}K_{\!f}=d_x{-}c_{\!f}{+}m{+}K_{\!f}=d_x{+}k\,,$ hence $\pprimeppostset{d_x}{k}{x}\sim\pprimeppostset{d}{K_{\!f}}{x}$. By Lemma~\ref{ujytsagrehd}, variables $\pprimeppostset{d_x}{k}{x}$ and $\pprimeppostset{d}{K_{\!f}}{x}$ are related by Euler identity $0{=}\euler{u}{v}$, where $u{=}\pprimeppostset{d}{K_{\!f}}{x}$ and $v{=}\pprimeppostset{d_x}{k}{x}$, and this Euler identity belongs to $A_\vsmall$ by Lemma~\ref{hngtedgrefsdg} since $A_\vsmall$ is the $\sim$\,--\,closure of $A_K$.

\emph{The introduction of map $\chi$ is the key step of the proof:} Consider the subarray $A_g$ of $A_\vsmall$:
\beq
A_g\eqdef\left\{\negesp\bea{l}
0=h(\pmatch{v},V) \\ 0=\euler{u}{v}(\pmatch{u},v,U)
\eea\right. 
\label{hfrshtrgfsdjm}
\eeq
where $u=\pprimeppostset{d}{K_{\!f}}{x},v=\pprimeppostset{d_x}{k}{x}$, $U$ collects variables $\prec{u}$, and the matchings are indicated in \pmatch{blue}. We associate to $A_g$ the array $\chi(A_g)$ defined by
\beq
\chi(A_g)\eqdef\left\{\negesp\bea{l}
0=g(\pmatch{u},W) \\ 0=\euler{u}{v}({u},\pmatch{v},U)
\eea\right. 
\label{jyhtdhtrgtjmgfkjhn}
\eeq
Since this change is only local, the remaining part of the matching is not modified. Performing the mapping $A_g{\mapsto}\chi(A_g)$ for every consistency equation $g$ belonging to the tail of $A_K$, yields a matching satisfying (\ref{srrftjhtr}). The lemma is proved.
\end{proof}

\paragraph*{Taking past variables into account:}
Lemma~\ref{slfjhsglsujfv} does not take past variables $\Vars^-$ into account. The effect of $\Vars^-$ is twofold. First, some equations become facts, and thus get removed from the array. Second, the past variables are no longer dependent variables, and thus cannot participate in matchings. We investigate these two effects in detail. 

\paragraph*{\normalsize  Taking facts into account:} If $K$ is the variable height of the considered mode change array, consistency equations $\pprimeppostset{m}{K_{\!f}}{f}$ that are facts are removed from the array. To account for this, we modify Lemma~\ref{slfjhsglsujfv} by 
replacing Condition (\ref{sldfvikjshbflkjh}) by
\beq 
\forall f{\in}{F}  \Ra & K_{\!f}\geq \max\left\{c_{\!f}{-}m \,\left|\bea{c} 
0\leq m\leq c_{\!f}
\\ [1mm]
\pprimeppostset{m}{K_{\!f}}{f}\not\in\facts{A_\vsmall}
\eea\negesp\right.\right\}  \label{leejhghleigheiu}
\eeq
No other change is needed in the proof. 

Lemma~\ref{slfjhsglsujfv} with its extension (\ref{leejhghleigheiu}) is illustrated in Fig.\,\ref{kefygfkjfgjhgf}, displaying array $A_1$ for the cup-and-ball example. The exchange map $\chi$ is exemplified by the substitution of $(\ddot{k_1})$ for $(\postset{\dot{k_1}})$ when disabling equations.

\paragraph*{\normalsize Removing past variables from dependent variables:}
Consider $g$ and $h$ defined in (\ref{ksfdjhsdgvkfjsdhkgjh}) and (\ref{kdfjvhblskfjvbhl}). We know that $m{+}K_{\!f}=c_{\!f}{+}k$, and $k=m{+}K_{\!f}{-}c_{\!f}\geq{m}$ holds by (\ref{leejhghleigheiu}). 
Equation $h$ belongs to the $k$-th instant $(\ppostset{k}{\indexreduced{F}},\ppostset{k}{\indexreduced{X}})$ in array $A_K$. Therefore, in any possible choice for $\cM_\vsmall$, equation $g$ is paired with some variable belonging to $\ppostset{k}{\indexreduced{X}}$. This leads to considering the set $\indexreduced{X}^g$ collecting all the variables occurring in $g$ that could possibly be paired with $g$ in some matching $\cM_\vsmall$:
\beq
\indexreduced{X}^g \eqdef \left\{
\pprimeppostset{d}{K_{\!f}}{x}\mbox{ occurs in }g \,\left|\;\bea{c}
\pprimeppostset{d_x}{k}{x}  \in \ppostset{k}{\indexreduced{X}} 
\\
d_x{+}k = d{+}K_{\!f}
\eea\right.\negesp\right\}
\eeq
Let $K$ be a variable height for the array satisfying condition (\ref{leejhghleigheiu}). For $f{\in}{F}$ such that $c_{\!f}{>}0$, and $0{\leq} m{<}c_{\!f}$, we consider the following properties, where $g{=}\pprimeppostset{m}{K_{\!f}}{f}$:
	\beqq
\cP_*(K,f,m) &\negesp\eqdef\negesp&
\mbox{either $\pprimeppostset{m}{K_{\!f}}{f}$ is a fact, or $\indexreduced{X}^g\,{\setminus}\,\Vars^-{\neq}\emptyset$}
\\ 
\cP^*(K,f,m) &\negesp\eqdef\negesp&
\mbox{either $\pprimeppostset{m}{K_{\!f}}{f}$ is a fact, or $\indexreduced{X}^g{\cap}\Vars^-{=}\emptyset$}
\eeqq
so that $\cP^*(K,f,m)\Ra\cP_*(K,f,m)$, and we set
\beqq
\cP_*(K)
&\eqdef&
\bigwedge\left\{\cP_*(K,f,m) \mid {f{\in}{F},0{\leq} m{<}c_{\!f}}\right\}
\\ 
\cP^*(K)
&\eqdef&
\bigwedge\left\{\cP^*(K,f,m) \mid {f{\in}{F},0{\leq}m{<}c_{\!f}}\right\} 
\eeqq
and, finally:
\beqq
K_*&\eqdef&\min\bigl\{K \mid K\models(\ref{leejhghleigheiu}) \mbox{ and } K\models\cP_*(K) \bigr\}
\\
K^*&\eqdef&\min\bigl\{K \mid K\models(\ref{leejhghleigheiu}) \mbox{ and } K\models\cP^*(K) \bigr\}
\eeqq
so that $K_*\leq K^*$.
\smallskip
\begin{lemma}
	\label{lergurlouguih} 
With these notations, to find a variable-complete matching $\cM_\vsmall$ satisfying requirement $(\ref{srrftjhtr})$: 
\begin{enumerate}
	\item \label{hgrfdkjyhfgkljh} it is enough to select $K\geq K^*$, whereas
	\item \label{jhgrdjhgfjhgv} no such matching can be expected unless $K\geq K_*$.
\end{enumerate}
\end{lemma}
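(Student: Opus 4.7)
The strategy is to build on the proof of Lemma~\ref{slfjhsglsujfv} and track the extra constraint that past variables belonging to $\Vars^-$ are forbidden from participating in any matching (since $\dependent{\Vars_A}=\Vars_A\setminus\Vars^-$). Recall from the proof of Lemma~\ref{slfjhsglsujfv} that for each non-fact consistency equation $g=\pprimeppostset{m}{K_{\!f}}{f}$ in the tail, the only way to fulfill (\ref{srrftjhtr}) is through the local exchange map $\chi$ of (\ref{hfrshtrgfsdjm}--\ref{jyhtdhtrgtjmgfkjhn}), which matches $g$ with a variable $u=\pprimeppostset{d}{K_{\!f}}{x}$ that is, by construction, one of the elements of $\indexreduced{X}^g$. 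Facts themselves are removed from $A_\vsmall$ by Convention~\ref{conv:remove-facts} and need not be matched.

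For Statement~\ref{hgrfdkjyhfgkljh}, I would pick $K\geq K^*$ and verify that the construction of Lemma~\ref{slfjhsglsujfv} goes through unchanged. Indeed, (\ref{leejhghleigheiu}) is inherited from the defining condition of $K^*$, so no tail equation is missed. For each non-fact consistency equation $g$ in the tail, property $\cP^*(K,f,m)$ ensures $\indexreduced{X}^g\cap\Vars^-=\emptyset$, so \emph{every} candidate variable is a legitimate dependent variable, and the exchange $A_g\mapsto\chi(A_g)$ can be performed with any such $u$ without clashing with $\Vars^-$. The remainder of the matching $\simquotient{\Psi}(K,\cM)$ of (\ref{ukjhygklughjmmk}) only involves leading and state variables at instants $k\leq K_{\!f}$ in the usual \sigmamethod\ sense; past variables occur there only as values imposed by the previous mode, not as matching candidates. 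Lifting via $\Gamma$ of Lemma~\ref{jdshcfgvskdjmhgn} yields the desired variable-complete matching $\cM_\vsmall$.

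For Statement~\ref{jhgrdjhgfjhgv}, I would argue by contradiction: suppose a matching $\cM_\vsmall$ satisfies (\ref{srrftjhtr}) while $K<K_*$. Then either (\ref{leejhghleigheiu}) fails, in which case some tail equation $\pprimeppostset{m}{K_{\!f}}{f}\in\ttail{A_\vsmall}\subseteq\enable{A_\vsmall}$ has $K_{\!f}<c_{\!f}-m$ and cannot be reached by any shift of a leading equation, hence cannot be matched at all; or $\cP_*(K)$ fails, in which case there is a non-fact consistency equation $g=\pprimeppostset{m}{K_{\!f}}{f}$ in the tail with $\indexreduced{X}^g\subseteq\Vars^-$. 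In the latter case, the only matching candidates for $g$ that respect the ``same instant'' structural constraint implicit in $\simquotient{\Psi}(K,\cM)$ and the exchange map $\chi$ lie in $\indexreduced{X}^g$; since they are all past variables, they are excluded from $\dependent{\Vars_A}$ and hence from any matching, so $g$ is left unmatched, contradicting (\ref{srrftjhtr}).

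The main obstacle will be to make rigorous the statement that ``the only candidate variables for $g$ lie in $\indexreduced{X}^g$'', i.e., that no alternative pathway (through additional Euler identities or state variables at intermediate blocks) can rescue $g$. I would handle this by examining all edges of the $\sim$\,--\,quotient graph $\simquotient{\cG}_{A_\vsmall}$ incident to $g$: outside the tail block, the occurrences of variables in $g$ coincide with the set $\indexreduced{X}^g$ up to $\sim$-equivalence, so any alternative match would require an Euler identity $\euler{u}{v}$ whose counterpart $v$ is still in $\Vars^-$, pushing the obstruction one step back without eliminating it. A finite induction on the number of Euler identities in the alternation path then yields the claim.
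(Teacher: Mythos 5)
Your proposal takes essentially the same approach as the paper: both organize the argument around the implication chain $\cP^*\Ra\cP\Ra\cP_*$, using the exchange map $\chi$ and the canonical matching $\simquotient{\Psi}(K,\cM)$ from the proof of Lemma~\ref{slfjhsglsujfv}, so that $K\geq K^*$ forces $\cP^*$ hence $\cP$ (sufficiency), while existence of a matching forces $\cP$ hence $\cP_*$ hence $K\geq K_*$ (necessity). The concern you flag---whether $\indexreduced{X}^g$ exhausts the candidate variables for a tail consistency equation $g$, or whether an alternative alternating path through Euler identities could rescue $g$---is a legitimate one, but note that the paper's own proof also leaves this implicit by confining attention to matchings of the canonical form $\simquotient{\Psi}(K,\cM)$ and its $\chi$-variants; your proposed induction on the number of Euler identities in the alternation path would tighten the argument beyond what the published proof actually establishes, so it is a welcome strengthening rather than a gap you need to close to match the paper.
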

\begin{proof} Let $K$ satisfy (\ref{leejhghleigheiu}), and assume the existence of a variable-complete matching $\cM_\vsmall$ satisfying requirement (\ref{srrftjhtr}).
Assume that $\pprimeppostset{m}{K_{\!f}}{f}$ is not a fact, and there exists a variable $\pprimeppostset{d}{K_{\!f}}{x}$ that is paired with $\pprimeppostset{m}{K_{\!f}}{f}$ in the matching  $\simquotient{\Psi}(K,\cM)$ defined in (\ref{ukjhygklughjmmk}). The set $\Vars^-$ of past variables forbids such a pairing if and only if it contains $\pprimeppostset{d}{K_{\!f}}{x}$. We deduce that
\beq
\mbox{
\begin{minipage}{10cm}
	 there exists a matching $\indexreduced{\cM}$ following (\ref{dskfjhsfbhjn}), such that $(\pprimeppostset{m}{K_{\!f}}{f},\pprimeppostset{d}{K_{\!f}}{x})\in\simquotient{\Psi}(K,\cM)$ and $\pprimeppostset{d}{K_{\!f}}{x}\not\in\Vars^-$.
\end{minipage}
}
\label{kserghvelrghvelqrh}
\eeq
This leads to considering the property
\beqq
\cP(K,f,m) &\eqdef& \mbox{either $\pprimeppostset{m}{K_{\!f}}{f}$ is a fact, or (\ref{kserghvelrghvelqrh}) holds.}
\eeqq
Property $\cP(f,m)$ explores all the alternatives in chosing a candidate variable $\pprimeppostset{d}{K_{\!f}}{x}$, for pairing with $\pprimeppostset{m}{K_{\!f}}{f}$ in some matching $\simquotient{\Psi}(K,\cM)$ considered in (\ref{ukjhygklughjmmk}). 

Condition (\ref{kserghvelrghvelqrh}) is too complicated, as it involves exploring the set of all the matchings $\cM$ solution of Problem~(\ref{liftuerhpituhepu8}). We thus consider the following properties:
\beqq
\cP_*(K,f,m) &\negesp\eqdef\negesp&
\mbox{either $\pprimeppostset{m}{K_{\!f}}{f}$ is a fact, or $\indexreduced{X}^g\setminus\Vars^-\neq\emptyset$.}
\\
\cP^*(K,f,m) &\negesp\eqdef\negesp&
\mbox{either $\pprimeppostset{m}{K_{\!f}}{f}$ is a fact, or $\indexreduced{X}^g{\cap}\Vars^-{=}\emptyset$.}
\\
\mbox{where}&\negesp\negesp&g=\pprimeppostset{m}{K_{\!f}}{f}\,.
\eeqq
Then, the following implications hold:
\[
\cP^*(K,f,m) \Ra \cP(K,f,m) \Ra \cP_*(K,f,m)\,,
\]
which proves statements~\ref{hgrfdkjyhfgkljh} and~\ref{jhgrdjhgfjhgv} of  the lemma.
\end{proof}

\noindent The following result states that there is no point going beyond $K^*$ in searching for a good solution of Problem~\ref{likgusefhdrlpouhiouh}:
\begin{lemma}
	\label{sekrghelrghlkjh} Let $K$ be the variable height of a good solution of Problem~$\ref{likgusefhdrlpouhiouh}$. Then, $K\leq K^*$ holds.
\end{lemma}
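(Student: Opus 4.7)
The plan is a proof by contradiction, reducing to the uniqueness of the variable height for good solutions established in Statement~\ref{hjtgfkujyhglk} of Lemma~\ref{lvdkfjhbvlfdkujh}. Suppose $(K,\cM_K,\rescaling{})$ is a good solution of Problem~\ref{likgusefhdrlpouhiouh} with $K_{f}>K^*_{f}$ at some $f\in F$. I will exhibit a second good solution whose height $K'$ is strictly smaller than $K$ at coordinate $f$; uniqueness of height will then force the contradiction.

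Define $K'$ coordinatewise by $K'_{f}=K^*_{f}$ and $K'_{g}=K_{g}$ for $g\neq f$, so that $K'\leq K$ pointwise with $K'_{f}<K_{f}$. A variable-complete matching $\cM_{K'}$ for $A_{K'}$ satisfying~(\ref{srrftjhtr}) is obtained from $\cM_K$ as follows. For columns $g\neq f$ the tail position is unchanged, hence the restriction of $\cM_K$ to those columns suffices. For column $f$, by definition of $K^*_{f}$ the property $\cP^*(K^*,f,m)$ holds for every $0\leq m<c_{f}$, so the tail consistency equations $\pprimeppostset{m}{K^*_{f}}{f}$ that are not facts can be re-paired via the exchange trick $\chi$ introduced in the proof of Lemma~\ref{slfjhsglsujfv}, using Euler identities local to $A_{K'}$. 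On the surviving equations and variables I keep the original rescaling offsets $\rescaling{x},\rescaling{\!f}$ unchanged.

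It remains to verify that the triple $(K',\cM_{K'},\rescaling{})$ is a good solution, namely that conditions~(\ref{jsdyhgcdfasjyht})--(\ref{klifgyegrujytfgeikuy}) of Definition~\ref{ksuyhjgsakjhsg} still hold. Condition~(\ref{jsdyhgcdfasjyht}) is inherited at once: every surviving matched equation either was already matched in $\cM_K$ or is an Euler identity introduced by $\chi$, and both carry finite offsets. The main obstacle is condition~(\ref{jkdsfhgfjsmdhfgjk}) applied to the new tail at column $f$: the variables $\pprimeppostset{d}{K^*_{f}}{y}$, for $(f,y)\in\cM$ and $d_y-c_{f}\leq d\leq d_y$, were interior in $A_K$ and could a priori carry positive rescaling offsets. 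To show their offsets are zero, I would chain Lemma~\ref{jdschtyasdfcjhagf} through the Euler identities linking each $\pprimeppostset{d}{K^*_{f}}{y}$ to its $\sim$-equivalent tail variable $\pprimeppostset{d'}{K_{f}}{y}$ of $A_K$ (with $d+K^*_{f}=d'+K_{f}$): the original tail offsets are zero by~(\ref{jkdsfhgfjsmdhfgjk}) for the original solution, and the rescaling equations at the matched Euler identities, analogous to those exploited in the proof of Statement~\ref{hjtgfkujyhglk} of Lemma~\ref{lvdkfjhbvlfdkujh}, force zero offsets at all intermediate positions, in particular at the new tail. Condition~(\ref{klifgyegrujytfgeikuy}) is then recovered along the same lines: the backward-shifted variables are either already present in $\Vars_{A_{K'}}$ because $K'_{g}=K_{g}$ for $g\neq f$, or obtained by the same Euler chain argument within column $f$. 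Once goodness of $(K',\cM_{K'},\rescaling{})$ is established, Statement~\ref{hjtgfkujyhglk} of Lemma~\ref{lvdkfjhbvlfdkujh} yields the desired contradiction, completing the proof.
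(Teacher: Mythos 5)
The key step in your plan is the verification that the restricted offsets $\rescaling{}$ make $(K',\cM_{K'},\rescaling{})$ a good solution, and precisely this step fails. You claim that chaining Lemma~\ref{jdschtyasdfcjhagf} through the Euler identities ``force[s] zero offsets at all intermediate positions, in particular at the new tail.'' This is backwards. Take a new tail variable $\pprimeppostset{d}{K^*_f}{y}$ and its $\sim$-equivalent original tail variable $\pprimeppostset{d'}{K_f}{y}$, with $d = d' + (K_f - K^*_f)$. The Euler identity~(\ref{utrwshjyhtrf}) relating them has the high-derivative variable $x=\pprimeppostset{d}{K^*_f}{y}$ on the left, and Lemma~\ref{jdschtyasdfcjhagf} gives $\rescaling{\pprimeppostset{d}{K^*_f}{y}} = n + \max_{0\leq i\leq n}\rescaling{\bigl(\ppostset{(-i)}{\pprimeppostset{d'}{K_f}{y}}\bigr)}$ with $n = K_f - K^*_f > 0$. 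The max is over nonnegative offsets, so $\rescaling{\pprimeppostset{d}{K^*_f}{y}}\geq n > 0$. In other words, the same Euler-chain arithmetic you invoke from the proof of Statement~\ref{hjtgfkujyhglk} of Lemma~\ref{lvdkfjhbvlfdkujh} forces \emph{strictly positive} offsets at the new tail, not zero ones; indeed that proof derives a contradiction precisely by showing such offsets must be positive. Consequently condition~(\ref{jkdsfhgfjsmdhfgjk}) fails for $(K',\cM_{K'},\rescaling{})$, this triple is not a good solution, and your intended appeal to the uniqueness of heights cannot fire. There is also a secondary gap you pass over: the rescaling equations~(\ref{dksjhfgksjhgfk}) are relative to the matching, and after the $\chi$-exchange the equations matched with Euler identities change, so inheriting $\rescaling{}$ verbatim is not automatically consistent.

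The paper's actual proof takes the opposite route and is considerably shorter: it does not manufacture a second good solution, but instead argues directly that if $K > K^*$, then by the definition of $K^*$ one can find a consistency equation whose matched variable $u$ is not a past variable and hence must itself be paired with an Euler identity; the resulting chain of two Euler rescaling equations produces an interior variable $y$ with $\rescaling{y} > 1$, which makes the required $\sim$-equivalent variable of condition~(\ref{klifgyegrujytfgeikuy}) fall at an instant strictly beyond $K_f$, violating goodness. If you want to keep the reduction-to-uniqueness architecture, you would have to show a smaller good solution genuinely exists --- which the offset arithmetic shows cannot be done by merely restricting $\rescaling{}$ and re-matching.
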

\begin{proof}
	Let $(f,m)$ be a pair such that: 1) $m=c_{\!f}{-}1$, and 2) $\pprimeppostset{m}{K_{\!f}}{f}$ is not a fact and is paired with $\pprimeppostset{(d_x-1)}{K_{\!f}}{x}$ in the matching $\cM_\vsmall$. Then, by (\ref{jyhtdhtrgtjmgfkjhn}), the Euler identity
	\beqq
	\euler{y}{z}
	&:&
	\vsmall{\times}\underbrace{\pmatch{\pprimeppostset{d_x}{(K_{\!f}-1)}{x}}}_{y}=\underbrace{\pprimeppostset{(d_x-1)}{K_{\!f}}{x}}_{z}-\underbrace{\pprimeppostset{(d_x-1)}{(K_{\!f}-1)}{x}}_{u}
	\eeqq
	needs to be added to the array. Since $\rescaling{z}=0$ by goodness condition (\ref{jsdyhgcdfasjyht}), this Euler identity induces the following rescaling equation:
	\beq
	\rescaling{y}=\rescaling{\euler{y}{z}}\geq 1+\rescaling{u}\,. \label{klgjhrlkghlkujh}
	\eeq
	\emph{The sequel of the proof is by contradiction:} 
	We will prove:
	\beq
	\mbox{
	\begin{minipage}{14cm}
		 if $K{>} K^*$,  we can find a pair $(f,m)$ as above, such that $\rescaling{u}{>}0$\,.
	\end{minipage}
	} \label{lkgjetrhgjhkjh}
	\eeq
	Property (\ref{lkgjetrhgjhkjh}) would imply $\rescaling{y}>1$ by (\ref{klgjhrlkghlkujh}). In this case, goodness condition (\ref{klifgyegrujytfgeikuy}) would get violated for $y=\pprimeppostset{d_x}{(K_{\!f}-1)}{x}$, since $K_{\!f}{-}1+\min(d_x,\rescaling{y})>K_{\!f}$. 
	
	\emph{It thus remains to prove \emph{(\ref{lkgjetrhgjhkjh})}:}
	Since $K>K^*$, we can find a pair $(f,m)$ as above, such that $\indexreduced{X}^h\cap\Vars^-=\emptyset$, where 
	$h=\pprimeppostset{m}{(K_{\!f}-1)}{f}$, and $u\in\indexreduced{X}^h$. Since $u$ is not a past variable, it must be paired in the matching $\cM_\vsmall$, namely with the Euler identity
	\beqq
	\euler{u}{v}
	&:& \vsmall{\times}\underbrace{\pmatch{\pprimeppostset{(d_x-1)}{(K_{\!f}-1)}{x}}}_{\pmatch{u}}=\underbrace{\pprimeppostset{(d_x-2)}{K_{\!f}}{x}}_{v}-\underbrace{\pprimeppostset{(d_x-2)}{(K_{\!f}-1)}{x}}_{w}
	\eeqq
	which induces the following rescaling equation:
	\beqq
	\rescaling{u}=\rescaling{\euler{u}{v}}\geq 1+\rescaling{w}&>&0\,, 
	\eeqq
	proving (\ref{lkgjetrhgjhkjh}). The proof is now complete.
\end{proof}

\noindent Lemmas~\ref{lergurlouguih} and~\ref{sekrghelrghlkjh} together prove Theorem~\ref{sekdfuysageloiug} and make the heights $K_*$ and $K^*$ precise.

Theorem~\ref{sekdfuysageloiug} was illustrated for the cup-and-ball example by the results $K_*=K^*=1$. We illustrate in Appendix~\ref{hgnhtdfiytujhgn} what happens if we still insist taking $K>1$ for this example.

\section{Examples having no good solution}
\label{ksdfiygskdyjgf}
\subsection{The cup-and-ball with $K{>}1$}
\label{hgnhtdfiytujhgn}
In Section~\ref{jhtgfkuytjrsdfghk}, we successfully studied the cup-and-ball example with a minimal mode change array $A$ of height $K{=}1$. In this section we study what happens if we select a non-minimal array. Note that we know by Statement~\ref{hjtgfkujyhglk} of Theorem~\ref{skldjhfgslkdufghikl} that no solution should be found. Nevertheless, we like to investigate how the failure to find a solution occurs. Hence, we consider the choice $K{=}2$. We will show that Problem~\ref{likgusefhdrlpouhiouh} possesses no solution with these choices. The same holds for any $K{>}1$. Thus, $K1$ is the only choice for the height of $A$.

Here follow the details. The mode change array with $K{=}2$ is the following:
 \beqq\small
A_2=\left\{\bea{rclcll}
 0&\negesp=\negesp& \pmatch{\ddot{x}}+{\tension}x 
& (f_1) 
& \nstime
\\
 0&\negesp=\negesp& \ddot{y}+\pmatch{\tension}y+g 
& (f_2)  & \cdots
\\
\gremph{0} &\!\!\!\!\gremph{=}\!\!\!\!& \gremph{{L^2}{-}(x^2{+}y^2)} 
& \gremph{(\straight_1)}  & \cdots
\\
\remph{0} &\!\!\!\!\remph{=}\!\!\!\!& \remph{x\dot{x}{+}y\dot{y}} 
& \remph{(\dot{\straight_1})}  &  \cdots
\\
\remph{0} &\!\!\!\!\remph{=}\!\!\!\!&  \remph{x\ddot{x}{+}\dot{x}^2{+}\dot{y}^2{+}y\ddot{y}}
& \remph{(\ddot{\straight_1})}  &  \cdots
\\ [2mm]
0&\negesp=\negesp& \postset{(\pmatch{\ddot{x}}+{\tension}x)}
& (f_3) 
& \nstime{+}\vsmall
\\
 0&\negesp=\negesp& \postset{(\ddot{y}+\pmatch{\tension}y+g)}
& (f_4)  & \cdots
\\
\gremph{0} &\!\!\!\!\gremph{=}\!\!\!\!& \gremph{\postset{({L^2}{-}(x^2{+}y^2))} }
& \gremph{(\postset{\straight_1})}    & \cdots
\\
\remph{0} &\!\!\!\!\remph{=}\!\!\!\!& \remph{\postset{{(x\dot{x}{+}y{\dot{y}})}}}
& \remph{(\postset{\dot{\straight_1}})}  & \cdots
\\
\remph{0} &\!\!\!\!\remph{=}\!\!\!\!&  \remph{\postset{({x\ddot{x}{+}\dot{x}^2{+}\dot{y}^2{+}y\ddot{y}})}}
& \remph{(\postset{\ddot{\straight_1}})} & \cdots
\\ [2mm]
0&\negesp=\negesp& \ppostset{2}{(\pmatch{\ddot{x}}+{\tension}x)}
& (f_5) 
& \nstime{+}2\vsmall
\\
 0&\negesp=\negesp& \ppostset{2}{(\ddot{y}+\pmatch{\tension}y+g)}
& (f_6)  & \cdots
\\
{0} &\!\!\!\!{=}\!\!\!\!& {\ppostset{2}{({L^2}{-}(x^2{+}\pmatch{y^2}))} }
& {(f_7)}    & \cdots
\\
{0} &\!\!\!\!{=}\!\!\!\!& {\ppostset{2}{{(x\dot{x}{+}y\pmatch{\dot{y}})}}}
& {(f_8)}  & \cdots
\\
{0} &\!\!\!\!{=}\!\!\!\!&  \ppostset{2}{({x\ddot{x}{+}\dot{x}^2{+}\dot{y}^2{+}y\pmatch{\ddot{y}}})}
& {(f_9)}   & \cdots
\\ [2mm]
{0} &\!\!\!\!{=}\!\!\!\!& \ddot{x}-\vsmall^{-2}(\pmatch{\ppostset{2}{{x}}}-2\postset{x}+x) &(f_{10})& \mbox{Euler ids.}
\\
{0} &\!\!\!\!{=}\!\!\!\!& \pmatch{\ddot{y}}-\vsmall^{-2}({\ppostset{2}{{y}}}-2\postset{y}+y) &(f_{11})
\\
{0} &\!\!\!\!{=}\!\!\!\!& \postset{\ddot{x}}-\vsmall^{-1}(\pmatch{\ppostset{2}{\dot{x}}}-\postset{\dot{x}}) &(f_{12})
\\
{0} &\!\!\!\!{=}\!\!\!\!& \pmatch{\postset{\ddot{y}}}-\vsmall^{-1}(\ppostset{2}{\dot{y}}-\postset{\dot{y}}) &(f_{13})
\\
{0} &\!\!\!\!{=}\!\!\!\!& \pmatch{\postset{\dot{x}}}-\vsmall^{-1}(\ppostset{2}{x}-\postset{x}) &(f_{14})
\\
{0} &\!\!\!\!{=}\!\!\!\!& \pmatch{\postset{\dot{y}}}-\vsmall^{-1}(\ppostset{2}{y}-\postset{y}) &(f_{15})
\eea\right.
\eeqq
$\Vars^-$ is the same as for $K{=}1$. \gremph{Facts} and \remph{conflicts} are pointed. The subsystem in black is structurally nonsingular, with a perfect matching  $\pmatch{\cM}$ highlighted in {\color{blue}blue}. In the right most column we indicate the origin of each equation: for example, $\nstime{+}\vsmall$ indicates that the corresponding equation originates from the $1$-shifted discretized dynamics.

The corresponding rescaling calculus is:
\[\small\bea{rclcl}
\rescaling{\ddot{x}} &\negesp=\negesp& \rescaling{\!{f_1}} &\negesp\geq\negesp& \rescaling{\tension}
\\
\rescaling{\tension} &\negesp=\negesp& \rescaling{\!{f_2}} &\negesp\geq\negesp& \rescaling{\ddot{y}}
\\
\rescaling{\postset{\ddot{x}}} &\negesp=\negesp& \rescaling{\!{f_3}} &\negesp\geq\negesp& \rescaling{\postset{\tension}}
\\
\rescaling{\postset{\tension}} &\negesp=\negesp& \rescaling{\!{f_4}} &\negesp\geq\negesp& \rescaling{\postset{\ddot{y}}}
\\
\rescaling{\ppostset{2}{\ddot{x}}} &\negesp=\negesp& \rescaling{\!{f_5}} &\negesp\geq\negesp& \rescaling{\ppostset{2}{\tension}}
\\
\rescaling{\ppostset{2}{\tension}} &\negesp=\negesp& \rescaling{\!{f_6}} &\negesp\geq\negesp& \rescaling{\ppostset{2}{\ddot{y}}}
\\
 && \rescaling{\!{f_7}} &\negesp=\negesp& \rescaling{\ppostset{2}{x}}=\rescaling{\ppostset{2}{y}}=0
\\
\rescaling{\ppostset{2}{y}}+\rescaling{\ppostset{2}{\dot{y}}} &\negesp=\negesp& \rescaling{\!{f_8}} &\negesp\geq\negesp& \rescaling{\ppostset{2}{x}}+\rescaling{\ppostset{2}{\dot{x}}}
\\
\rescaling{\ppostset{2}{y}}{+}\rescaling{\ppostset{2}{\ddot{y}}} &\negesp=\negesp& \rescaling{\!{f_9}} &\negesp\geq\negesp& \max(\rescaling{\ppostset{2}{x}}{+}\rescaling{\ppostset{2}{\ddot{x}}},2\rescaling{\ppostset{2}{\dot{x}}},2\rescaling{\ppostset{2}{\dot{y}}})
\\
2+\rescaling{\ppostset{2}{x}} &\negesp=\negesp& \rescaling{\!{f_{10}}} &\negesp\geq\negesp& \rescaling{\ddot{x}}
\\
\rescaling{\ddot{y}} &\negesp=\negesp& \rescaling{\!{f_{11}}} &\negesp\geq\negesp& 2+\rescaling{\ppostset{2}{y}}
\\
1+\rescaling{\postset{\dot{x}}} &\negesp=\negesp& \rescaling{\!{f_{12}}} &\negesp\geq\negesp& \rescaling{\postset{\ddot{x}}}
\\
\rescaling{\postset{\ddot{y}}} &\negesp=\negesp& \rescaling{\!{f_{13}}} &\negesp\geq\negesp& 1+\rescaling{\postset{\dot{y}}}
\\
\rescaling{\postset{\dot{x}}} &\negesp=\negesp& \rescaling{\!{f_{14}}} &\negesp\geq\negesp& 1+\rescaling{\ppostset{2}{{x}}}
\\
\rescaling{\postset{\dot{y}}} &\negesp=\negesp& \rescaling{\!{f_{15}}} &\negesp\geq\negesp& 1+\rescaling{\ppostset{2}{{y}}}
\eea
\]
Note that the rescaling equation for $f_7$ corresponds to rescaling equation (\ref{ksdjfhsgdkfjgh}) for nonlinear functions. By goodness condition (\ref{jkdsfhgfjsmdhfgjk}), the following equations and variables possess a zero rescaling offset: $f_7,x,y,{\ppostset{2}{x}},{\ppostset{2}{y}},{\ppostset{2}{\dot{x}}},{\ppostset{2}{\dot{y}}},{\ppostset{2}{\ddot{x}}},{\ppostset{2}{\ddot{y}}},{\ppostset{2}{\tension}}$. 
	The solution of the rescaling calculus is the following, where we list the variables and equations with rescaling offsets $2,1,0$:
	\[\small\bea{lll}
	2&\negesp:\negesp&
	f_1,f_2,f_3,f_4,f_{10},f_{11},f_{12},f_{13},{\ddot{x}},{\ddot{y}},
	{\tension},\remph{{\postset{\ddot{x}}},{\postset{\ddot{y}}}},{\postset{\tension}} 
	\\
1&\negesp:\negesp&f_{14},f_{15},{\postset{\dot{x}}},{\postset{\dot{y}}}
	\\
0&\negesp:\negesp&\mbox{other equations and variables}
	\eea\]
	The rescaling offsets violating goodness condition (\ref{klifgyegrujytfgeikuy}) were highlighted in {\color{red}red}. Hence, Problem~\ref{likgusefhdrlpouhiouh} possesses no solution for $K{\,=\,}2$.
	
	Still, suppose that we insist producing the restart system following Procedure~\ref{skedfjuyghwsaliukgh}. The useful part of the rescaled system $\rescaled{A_2}$ is the following (rescaled Euler identities have been directly used to expand impulsive derivatives, and are no longer shown):
 \beqq
\rescaled{A_2}=\left\{\bea{rclcll}
 0&\negesp=\negesp& \pmatch{\ppostset{2}{x}}-2\postset{x}+x+\rescaled{\tension}x 
\\
 0&\negesp=\negesp& {\ppostset{2}{y}}-2\postset{y}+y+\pmatch{\rescaled{\tension}}y 
\\ [2mm]
0&\negesp=\negesp& \pmatch{\ppostset{3}{x}}-2\ppostset{2}{x}+\postset{x}+\rescaled{{\postset{\tension}}}\postset{x}
\\
 0&\negesp=\negesp& {\ppostset{3}{y}}-2\ppostset{2}{y}+\postset{y}+\pmatch{\rescaled{{\postset{\tension}}}}\postset{y}
\\ [2mm]
{0} &\!\!\!\!{=}\!\!\!\!& {\ppostset{2}{({L^2}{-}(x^2{+}\pmatch{y^2}))} }
\\
{0} &\!\!\!\!{=}\!\!\!\!& {\ppostset{2}{{(x\dot{x}{+}y\pmatch{\dot{y}})}}}
\eea\right. 
\eeqq
The restart system follows by performing renaming (\ref{erlfgiuehrliueh}). The array involves variables beyond the tail instant $\nstime+2\vsmall$, namely $\ppostset{3}{x},\ppostset{3}{y}$. Hence, the renaming (\ref{erlfgiuehrliueh}) is no longer bijective, which kills the structural nonsingularity of the restart system:
\beqq
\mbox{restart system}=\left\{\bea{rclcll}
 0&\negesp=\negesp& \pmatch{\pplus{x}}-\mmoins{x}+\rescaled{\tension}\mmoins{x} 
\\
 0&\negesp=\negesp& {\pplus{y}}-\mmoins{y}+\pmatch{\rescaled{\tension}}\mmoins{y} 
\\ [2mm]
0&\negesp=\negesp& {{\pplus{{x}}}+\mmoins{{x}}+\rescaled{{\postset{\tension}}}\mmoins{x}}
\\
 0&\negesp=\negesp& {\pplus{{y}}}+\mmoins{{y}}+\pmatch{\rescaled{{\postset{\tension}}}}\mmoins{y}
\\ [2mm]
{0} &\!\!\!\!{=}\!\!\!\!&{L^2}{-}(({\pplus{x}})^2{+}\pmatch{(\pplus{y})^2})
\\
{0} &\!\!\!\!{=}\!\!\!\!& \pplus{x}\pplus{{\dot{x}}}{+}\pplus{y}\pmatch{\pplus{{\dot{y}}}}
\eea\right. 
\eeqq
The resulting restart system possesses $6$ equations and $6$ dependent variables: $\pplus{x},\pplus{{\dot{x}}},\pplus{y},\pplus{{\dot{y}}},\rescaled{\tension},\rescaled{{\postset{\tension}}}$. Nevertheless it is structurally singular: $\pplus{{\dot{x}}}$ and the third equation are both unmatched. However, adding the equation $\rescaled{\tension}=\rescaled{{\postset{\tension}}}$ has the following effects:
\begin{itemize}
	\item The first and third equations become identical; so we can discard the third equation: the resulting system is no longer structurally conflicting;
	\item Still, variable $\pplus{{\dot{x}}}$ remains unmatched.
\end{itemize}
The conclusion is that the additional non-structural post-processing we applied did not provide any further progress toward finding a solution to the hot restart.

\smallskip
\subsection{The cup-and-ball with exogenous mode change}
\label{klsdifgujhsjdhfh}
In this section we modify the cup-and-ball example, by making the control of mode changes external (with reference to the original model (\ref{loeifuhpwoui}), we removed equation $(\straight_0)$):
\beqq\left\{\bea{rll}
& 0= \ddot{x}+{\tension}x & (\eqq_1) \\
& 0= \ddot{y}+{\tension}y+g  & (\eqq_2) \\
\when \; \guard\; \doo& 0={L^2}{-}(x^2{+}y^2)   & (\straight_1) \\
\prog{and}& 0=\tension+s   & (\straight_2) \\
\when \;\prog{not}\; \guard\; \doo& 0=\tension   & (\straight_3) \\
\prog{and}& 0=({L^2}{-}(x^2{+}y^2))-s   & (\straight_4) \\
\eea\right.
\eeqq
%
We focus again on the mode change $\guard:\fff\ra\ttt$ and we show the mode change array with $K=2$:
 \beq\small
\left\{\!\!\!\bea{rclcll}
 0&\negesp=\negesp& \pmatch{\ddot{x}}+{\tension}x 
& (\eqq_1) & \guard{=}\ttt,\nstime
\\
 0&\negesp=\negesp& \ddot{y}+\pmatch{\tension}y+g 
& (\eqq_2)  & \dots\dots
\\
\remph{0} &\!\!\!\!\remph{=}\!\!\!\!& \remph{{L^2}{-}(x^2{+}y^2)} 
& \remph{(\straight_1)}  & \dots\dots 
\\
\remph{0} &\!\!\!\!\remph{=}\!\!\!\!& \remph{x\dot{x}{+}y\dot{y}} 
& \remph{(\dot{\straight_1})}  & \dots\dots 
\\
\remph{0} &\!\!\!\!\remph{=}\!\!\!\!&  \remph{x\ddot{x}{+}{\dot{x}}^2{+}{\dot{y}}^2{+}y\ddot{y}}
& \remph{(\ddot{\straight_1})}  & \dots\dots 
\\ [2mm]
 0&\negesp=\negesp& \postset{(\pmatch{\ddot{x}}+{\tension}x)}
& (\postset{\eqq_1}) & \guard{=}\ttt,\postset{\nstime}
\\
 0&\negesp=\negesp& \postset{(\ddot{y}+\pmatch{\tension}y)}+g 
& (\postset{\eqq_2})  & \dots\dots
\\
\remph{0} &\!\!\!\!\remph{=}\!\!\!\!& \remph{\postset{({L^2}{-}(x^2{+}y^2))} }
& \remph{(\postset{\straight_1})}  & \dots\dots 
\\
\remph{0} &\!\!\!\!\remph{=}\!\!\!\!& \remph{\postset{{(x\dot{x}{+}y\dot{y})}}}
& \remph{(\postset{\dot{\straight_1}})}  & \dots\dots 
\\
\remph{0} &\!\!\!\!\remph{=}\!\!\!\!& \remph{\postset{{(x\ddot{x}{+}{\dot{x}}^2{+}{\dot{y}}^2{+}y\ddot{y})}}}
& \remph{(\postset{{\ddot{\straight_1}}})}  & \dots\dots 
\\ [2mm]
{0} &\!\!\!\!{=}\!\!\!\!& {\ppostset{2}{({L^2}{-}(x^2{+}\pmatch{y^2}))} }
& {(\ppostset{2}{\straight_1})}  & \guard{=}\ttt,\ppostset{2}{\nstime} & 
\\
{0} &\!\!\!\!{=}\!\!\!\!& {\ppostset{2}{{(x\dot{x}{+}y\pmatch{\dot{y}})}}}
& {(\ppostset{2}{\dot{\straight_1}})}  & \dots\dots & 
\eea\right.
\label{ksdjhvgdfkisdyujhg}
\eeq
Observe the effect of the change in the model. No fact occurs. Instead, we have two more conflicting equations, namely $(k_1)$ and $(\postset{k_1})$. This causes the need for a longer array compared to the one in Fig.\,\ref{kefygfkjfgjhgf} for the original cup-and-ball. 
In (\ref{ksdjhvgdfkisdyujhg}), the \remph{conflicts} were pointed. Corresponding equations are disabled. A perfect matching modulo $\sim$ is highlighted in {\color{blue}blue}. We make it explicit by adding the needed Euler identities:
%
 \beq\small
\left\{\!\!\!\bea{rclr}
 0&\negesp=\negesp& \pmatch{\ddot{x}}+{\tension}x 
& (\eqq_1:1) 
\\ 
 0&\negesp=\negesp& \ddot{y}+\pmatch{\tension}y+g 
& (\eqq_2:2)  
\\ 
 0&\negesp=\negesp& \postset{(\pmatch{\ddot{x}}+{\tension}x)}
& (\postset{\eqq_1}:3) 
\\ 
 0&\negesp=\negesp& \postset{(\ddot{y}+\pmatch{\tension}y)}+g 
& (\postset{\eqq_2}:4) 
\\ 
0&\!\!\!\!{=}\!\!\!\!& {\ppostset{2}{({L^2}{-}(x^2{+}\pmatch{y^2}))} }
& {(\ppostset{2}{\straight_1}:5)}  
\\ 
0 &\!\!\!\!{=}\!\!\!\!& {\ppostset{2}{{(x\dot{x}{+}y\pmatch{\dot{y}})}}}
& {(\ppostset{2}{\dot{\straight_1}}:6)}  
\\ 
0 &\!\!\!\!{=}\!\!\!\!& \vsmall^2{\times}\ddot{x}-(\pmatch{\ppostset{2}{x}}-2\postset{x}+x) & (\mathcal{E}_{\ddot{x},\ppostset{2}{x}}:7)
\\ 
0 &\!\!\!\!{=}\!\!\!\!& \vsmall^2{\times}\pmatch{\ddot{y}}-(\ppostset{2}{y}-2\postset{y}+y) & (\mathcal{E}_{\ddot{y},\ppostset{2}{y}}:8)
\\ 
0 &\!\!\!\!{=}\!\!\!\!& \vsmall{\times}\pmatch{\postset{\dot{x}}}-({\ppostset{2}{{x}}}-\postset{{x}}) & (\mathcal{E}_{\postset{\dot{x}},\ppostset{2}{x}}:9)
\\ 
0 &\!\!\!\!{=}\!\!\!\!& \vsmall{\times}\pmatch{\postset{\dot{y}}}-({\ppostset{2}{{y}}}-\postset{{y}}) & (\mathcal{E}_{\postset{\dot{y}},\ppostset{2}{y}}:10)
\\ 
0 &\!\!\!\!{=}\!\!\!\!& \vsmall{\times}\postset{\ddot{x}}-(\pmatch{\ppostset{2}{\dot{x}}}-\postset{\dot{x}}) & (\mathcal{E}_{\postset{\ddot{x}},\ppostset{2}{\dot{x}}}:11)
\\ 
0 &\!\!\!\!{=}\!\!\!\!& \vsmall{\times}\pmatch{\postset{\ddot{y}}}-(\ppostset{2}{\dot{y}}-\postset{\dot{y}}) & (\mathcal{E}_{\postset{\ddot{y}},\ppostset{2}{\dot{y}}}:12)
\eea\right.
\label{yhgtrdhtgfjhfg}
\eeq
The resulting rescaling calculus is:
 \beqq\small
\bea{rclcl}
\rescaling{\ddot{x}} &\negesp=\negesp& \rescaling{{\!f_1}} &\negesp\geq\negesp& \rescaling{\tension}
\\ 
\rescaling{\tension} &\negesp=\negesp& \rescaling{{\!f_2}} &\negesp\geq\negesp& \rescaling{\ddot{y}}
\\ 
\rescaling{\postset{\ddot{x}}} &\negesp=\negesp& \rescaling{{\!f_3}} &\negesp\geq\negesp& \rescaling{\postset{\tension}}
\\ 
\rescaling{\postset{\tension}} &\negesp=\negesp& \rescaling{{\!f_4}} &\negesp\geq\negesp& \rescaling{\postset{\ddot{y}}}
\\ 
&& \rescaling{{\!f_5}} &\negesp=\negesp& \rescaling{\ppostset{2}{y}}=\rescaling{\ppostset{2}{x}}=0
\\ 
\rescaling{\ppostset{2}{\dot{y}}}+\rescaling{\ppostset{2}{y}} &\negesp=\negesp& \rescaling{{\!f_6}} &\negesp\geq\negesp& \rescaling{\ppostset{2}{\dot{x}}}+\rescaling{\ppostset{2}{x}}
\\ 
2+\rescaling{\ppostset{2}{x}} &\negesp=\negesp& \rescaling{{\!f_7}} &\negesp\geq\negesp& \rescaling{\ddot{x}}
\\ 
2+\rescaling{\ppostset{2}{y}} &\negesp=\negesp& \rescaling{{\!f_8}} &\negesp\geq\negesp& \rescaling{\ddot{y}}
\\ 
1+\rescaling{\postset{\dot{x}}} &\negesp=\negesp& \rescaling{\!{f_{9}}} &\negesp\geq\negesp& \rescaling{\postset{\ddot{x}}}
\\
\rescaling{\postset{\ddot{y}}} &\negesp=\negesp& \rescaling{\!{f_{10}}} &\negesp\geq\negesp& 1+\rescaling{\postset{\dot{y}}}
\\
\rescaling{\postset{\dot{x}}} &\negesp=\negesp& \rescaling{\!{f_{11}}} &\negesp\geq\negesp& 1+\rescaling{\ppostset{2}{{x}}}
\\
\rescaling{\postset{\dot{y}}} &\negesp=\negesp& \rescaling{\!{f_{12}}} &\negesp\geq\negesp& 1+\rescaling{\ppostset{2}{{y}}}
\eea
\eeqq
The rescaling equation for $f_5$ corresponds to rescaling equation (\ref{ksdjfhsgdkfjgh}) for nonlinear functions.
The solution of the rescaling calculus is the following, where we list the variables and equations with rescaling offsets $2,1,0$:
	\[\small\bea{rclcl}
2&\negesp:\negesp&	f_1,f_2,f_3,f_4,f_7,f_8,{f_9},{f_{10}},{\ddot{x}},{\ddot{y}},{\tension}, \remph{{\postset{\ddot{x}}},{\postset{\ddot{y}}}},{\postset{\tension}} 
	\\
	1&\negesp:\negesp&{f_{11}},f_{12},{\postset{\dot{x}}},{\postset{\dot{y}}}
	\\
	0&\negesp:\negesp&\mbox{other equations and variables}
	\eea\]
The rescaling offsets violating goodness condition (\ref{klifgyegrujytfgeikuy}) are highlighted in {\color{red}red}.
This solution violates goodness condition (\ref{klifgyegrujytfgeikuy}), hence Problem~\ref{likgusefhdrlpouhiouh} possesses no good solution. The considered mode change is insufficiently determined. 

Since goodness conditions (\ref{jsdyhgcdfasjyht},\ref{jkdsfhgfjsmdhfgjk}) are satisfied, we can be more precise, however, by applying Corollary~\ref{hgrfdjrgfdfkm}. The rescaling offsets violating goodness condition (\ref{klifgyegrujytfgeikuy}) are highlighted in {\color{red}red}. The involved variables are ${\postset{\ddot{x}}}$ and ${\postset{\ddot{y}}}$. Remove the equations that are matched with these variables, namely $\postset{e_1}$ and $\mathcal{E}_{\postset{\ddot{y}},\ppostset{2}{\dot{y}}}$. Then, a Dulmage-Mendelsohn decomposition of the system of (\ref{yhgtrdhtgfjhfg}), we denote it by ${A_\vsmall}$ yields the following result: 
\[\bea{l}
A_\vsmall^o = \emptyset
\\ [1mm]
A_\vsmall^r = \bigl((e_1,e_2,\ppostset{2}{k_1},\mathcal{E}_{\ddot{x},\ppostset{2}{x}},\mathcal{E}_{\ddot{y},\ppostset{2}{y}})
,(\ddot{x},\ppostset{2}{x},\ddot{y},\ppostset{2}{y},\tension)\bigr)
\\ [1mm]
A_\vsmall^u = \mbox{other equations and variables}
\eea\]
Rescaling of the regular subsystem $A_\vsmall^r$ yields 
\[\left\{\bea{l}
0={\ppostset{2}{x}}-2\postset{x}+(1+\rescaled{\tension})x \\
0={\ppostset{2}{y}}-2\postset{y}+(1+\rescaled{\tension})y \\
0=(\ppostset{2}{y})^2+(\ppostset{2}{y})^2
\eea\right.\]
which maps to the restart system
\beq\mbox{restart system}=\left\{\bea{l}
0={\pplus{x}}-\mmoins{x}+\rescaled{\tension}\mmoins{x} \\
0={\pplus{y}}-\mmoins{y}+\rescaled{\tension}\mmoins{y} \\
0=(\pplus{y})^2+(\pplus{y})^2
\eea\right.
\label{htrsejytdik}
\eeq
Conclusion: \emph{the positions are determined by $(\ref{htrsejytdik})$, whereas the velocities remain undetermined.}

\smallskip
\subsection{A ``strange'' cup-and-ball}
\label{sjdufghvcsedmgh}
With this example we aim at analyzing difficult cases related to facts. The following ``strange'' cup-and-ball is a different modification of the cup-and-ball example, by which the zero-crossing function defining the mode change was  modified, which modifies the facts. Its model is the following, where the modification is highlighted in {\color{red}red}:
\beqq\left\{\bea{rll}
& 0= \ddot{x}+{\tension}x & (\eqq_1) \\
& 0= \ddot{y}+{\tension}y+g  & (\eqq_2) \\
& \remph{\guard= [s'^-\leq{0}]}; \guard(0)=\fff  & (\straight_0) \\
\when \; \guard\; \doo& 0={L^2}{-}(x^2{+}y^2)   & (\straight_1) \\
\prog{and}& 0=\tension+s   & (\straight_2) \\
\when \;\prog{not}\; \guard\; \doo& 0=\tension   & (\straight_3) \\
\prog{and}& 0=({L^2}{-}(x^2{+}y^2))-s   & (\straight_4) \\
\eea\right.
\eeqq
The dynamics in modes $\guard=\fff$ and $\guard=\ttt$ are not modified. However, in mode $\guard=\fff$, we have 
\[
s' = ({L^2}{-}(x^2{+}y^2))' = 2(xx'+yy')\,.
\] 
Consequently, the candidate fact is now $xx'{+}yy'=0$ (compared to ${L^2}{-}(x^2{+}y^2)=0$ for the cup-and-ball). 
We show next the mode change array $A_2$:
 \beq
\small\left\{\negesp\bea{cclcll}
 0&\negesp=\negesp& \pmatch{\ddot{x}}+{\tension}x 
& (\eqq_1) & \guard{=}\ttt,\nstime
\\
 0&\negesp=\negesp& \ddot{y}+\pmatch{\tension}y+g 
& (\eqq_2)  & \dots\dots
\\
\remph{0} &\!\!\!\!\remph{=}\!\!\!\!& \remph{{L^2}{-}(x^2{+}y^2)} 
& \remph{(\straight_1)}  & \dots\dots 
\\
\gremph{0} &\!\!\!\!\gremph{=}\!\!\!\!& \gremph{x\dot{x}{+}y\dot{y}} 
& \gremph{(\dot{\straight_1})}  & \dots\dots 
\\
\remph{0} &\!\!\!\!\remph{=}\!\!\!\!&  \remph{x\ddot{x}{+}{\dot{x}}^2{+}{\dot{y}}^2{+}y\ddot{y}}
& \remph{(\ddot{\straight_1})}  & \dots\dots 
\\ [2mm]
 0&\negesp=\negesp& \postset{(\pmatch{\ddot{x}}+{\tension}x)}
& (\postset{\eqq_1}) & \guard{=}\ttt,\postset{\nstime}
\\
 0&\negesp=\negesp& \postset{(\ddot{y}+\pmatch{\tension}y)}+g 
& (\postset{\eqq_2})  & \dots\dots
\\
\remph{0} &\!\!\!\!\remph{=}\!\!\!\!& \remph{\postset{({L^2}{-}(x^2{+}y^2))} }
& \remph{(\postset{\straight_1})}  & \dots\dots 
\\
\remph{0} &\!\!\!\!\remph{=}\!\!\!\!& \remph{\postset{{(x\dot{x}{+}y\dot{y})}}}
& \remph{(\postset{\dot{\straight_1}})}  & \dots\dots 
\\
\remph{0} &\!\!\!\!\remph{=}\!\!\!\!& \remph{\postset{(x\ddot{x}{+}{\dot{x}}^2{+}{\dot{y}}^2{+}y\ddot{y})}}
& \remph{(\postset{{\ddot{\straight_1}}})}  & \dots\dots 
\\ [2mm]
{0} &\!\!\!\!{=}\!\!\!\!& {\ppostset{2}{({L^2}{-}(x^2{+}\pmatch{y^2}))} }
& {(\ppostset{2}{\straight_1})}  & \guard{=}\ttt,\ppostset{2}{\nstime} 
\\
{0} &\!\!\!\!{=}\!\!\!\!& {\ppostset{2}{(x\dot{x}{+}y\pmatch{\dot{y}})}}
& {(\ppostset{2}{\dot{\straight_1}})}  & \dots\dots 
\eea\right.
\label{hjyfgdsgrfbdmnh}
\eeq
The fact is shown in {\color{green}green} and the conflicts are shown in {\color{red}red}. The dependent variables are 
$$\tension,\pmatch{\ppostset{2}{x}}{\sim}\pprimeppostset{}{}{x}{\sim}\ddot{x},
\pmatch{\ppostset{2}{y}}{\sim}\pprimeppostset{}{}{y}{\sim}\ddot{y}\,;\,\postset{\tension},
\pmatch{\pprimeppostset{}{2}{x}}{\sim}\postset{\ddot{x}},\pmatch{\pprimeppostset{}{2}{y}}{\sim}\postset{\ddot{y}}
$$
The black subsystems of (\ref{hjyfgdsgrfbdmnh}) and (\ref{ksdjhvgdfkisdyujhg}) (cup-and-ball with exogenous mode change) coincide. Consequently, the analyses of the two examples coincide as well.


\section{Nonlinear systems: a clutch}
\label{hgrdfhreagfdsuj}
\begin{figure}[ht]
    \centerline{\includegraphics[width=0.75\textwidth]{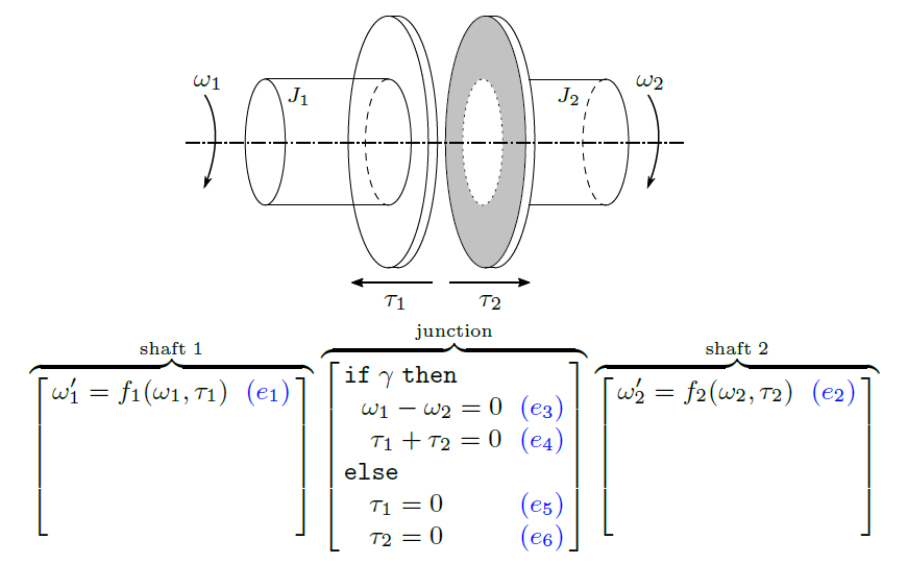}}
  \caption{A simple clutch with two shafts and its model from first principles. }
 \label{figclutch}
\end{figure} 
The clutch and its model from first principles are shown on Fig.\,\ref{figclutch}.
We separately model each schaft: rotation velocities are $\omega_i,i{=}1,2$ and torques are $\tau_i,i{=}1,2$. In the model of their junction, the Boolean $\guard$ is an  input. The junction exhibits two modes: \emph{engaged} $\guard{\,=\,}\ttt$, and \emph{released} $\guard{\,=\,}\fff$. We  focus on the mode change from released to engaged $\guard{\,:\,}\fff{\,\ra\,}\ttt$. The physics suggests the following: 1) The mode change $\guard{\,:\,}\fff{\,\ra\,}\ttt$ should be determined; hot restart differs form cold start, which possesses one degree of freedom (the common rotation velocity $\omega_1{\,=\,}\omega_2$ is not determined);  2) The torques will experience an impulsion. Physically meaningful modeling leads to  take $f_1,f_2$ linear with respect to torques. We will explore the consequences of chosing $f_1,f_2$ nonlinear.
Here follows the mode change array $A_1$ for this example:
\beq\small
A_1:\left\{\negesp\bea{lll}
\pmatch{\dot{\omega_1}}=f_1(\omega_1,\tau_1) & (e_1)
&\guard{=}\ttt, t
\\ [0.5mm]
\dot{\omega_2}=f_2(\omega_2,\pmatch{\tau_2}) & (e_2)
\\ [0.5mm]
\remph{\omega_1=\omega_2} & \remph{(e_3)}
&\remph{\mbox{disabled}}
\\ [0.5mm]
\remph{\dot{\omega_1}=\dot{\omega_2}} & \remph{(\dot{e_3})}
&\remph{\mbox{disabled}}
\\ [0.5mm]
\pmatch{\tau_1}+\tau_2=0 & (e_4)
\\ [2mm]
\dot{\omega_1}=\vsmall^{-1}(\pmatch{\postset{\omega_1}}-{\omega_1}) \hspace*{1mm} & (\euler{1}{})
\\ [0.5mm]
\pmatch{\dot{\omega_2}}=\vsmall^{-1}(\postset{\omega_2}-{\omega_2}) & (\euler{2}{})
\\ [2mm]
\pmatch{\postset{\dot{\omega_1}}}=f_1(\postset{\omega_1},\postset{\tau_1}) & (\postset{e_1})
&\guard{=}\ttt, t{+}\vsmall
\\ [0.5mm]
\postset{\dot{\omega_2}}=f_2(\postset{\omega_2},\pmatch{\postset{\tau_2}}) & (\postset{e_2})
\\ [0.5mm]
{\postset{\omega_1}=\pmatch{\postset{\omega_2}}}  & (\postset{e_3})
\\ [0.5mm]
{\postset{\dot{\omega_1}}=\pmatch{\postset{\dot{\omega_2}}}}  & (\postset{\dot{e_3}})
\\ [0.5mm]
\pmatch{\postset{\tau_1}}+\postset{\tau_2}=0  & (\postset{e_4})
\eea\right.
\label{jkcdysgcfxjatshdtfg}
\eeq
Mode change array $A_1$ showing the dynamics around the mode change $\guard{\,:\,}\fff{\,\ra\,}\ttt$ occurring at instant $t$, by stacking the discrete time dynamics at instants $t$, and $t{+}\vsmall$. Disabled equations are highlighted in {\color{red}red}. The enabled subsystem (in black) is structurally nonsingular, as evidenced by the one-to-one matching between variables and equations highlighted in {\color{blue}blue}.
%
We have $\Vars^-{=}\{\omega_1,\omega_2\}$. 
Since we focus on state variables for restart, we are not interested in leading equations of instant $\nstime{+}1=\postset{\nstime}$. By omitting equations of this instant, the enabled part of the mode change array $A_1$ in (\ref{jkcdysgcfxjatshdtfg}) is:
	 \[\left\{\negesp\bea{ll}
J_1\pmatch{\dot{\omega_1}}=f_1(\tau_1,\omega_1) & (f_1)
\\
J_2\dot{\omega_2}=f_2(\pmatch{\tau_2},\omega_2) & (f_2)
\\
\pmatch{\tau_1}+\tau_2=0 & (f_3)
\\ [2mm]
\postset{\omega_1}=\pmatch{\postset{\omega_2}} & (f_4)
\\ [2mm]
\dot{\omega_1}=\vsmall^{-1}{\times}(\pmatch{\postset{\omega_1}}-{\omega_1}) & (f_5)
\\
\pmatch{\dot{\omega_2}}=\vsmall^{-1}{\times}(\postset{\omega_2}-{\omega_2}) & (f_6)
\eea\right.\]
So far, no particular form was assumed for $f_i,i{=}1,2$. 

\smallskip
\subsubsection*{Rescaling calculus with $f_i$ linear in $\tau_i$}
Assume $f_i(\tau_i,\omega_i)=\tau_i-g_i(\omega_i)$ is linear with respect to the torque. The resulting rescaling calculus is:
\beq\left\{\negesp\bea{ll}
\rescaling{{\pmatch{\dot{\omega_1}}}}=\max(\rescaling{{\pmatch{\dot{\omega_1}}}},\rescaling{{\tau_1}})
\\
\rescaling{{\pmatch{\tau_2}}}=\max(\rescaling{{\dot{\omega_2}}},\rescaling{{\pmatch{\tau_2}}})
\\
\rescaling{{\pmatch{\tau_1}}}=\max(\rescaling{{\pmatch{\tau_1}}},\rescaling{{\tau_2}})
\\ [2mm]
\rescaling{{\pmatch{\postset{\omega_2}}}}=\max(\rescaling{{\pmatch{\postset{\omega_2}}}},\rescaling{{\postset{\omega_1}}})
\\ [2mm]
\rescaling{{\pmatch{\postset{\omega_1}}}}+1=\max(\rescaling{{\pmatch{\postset{\omega_1}}}}+1,\rescaling{{\dot{\omega_1}}})
\\
\rescaling{{\pmatch{\dot{\omega_2}}}}=\max(\rescaling{{\pmatch{\dot{\omega_2}}}},\rescaling{{\postset{\omega_2}}}+1)
\eea\right.
\label{hgfdjyrgsgfd}
\eeq
The following solution of this rescaling analysis yields a good solution for Problem~\ref{likgusefhdrlpouhiouh}:
\beq\bea{rcl}
1&\negesp:\negesp&\dot{\omega_1},\dot{\omega_2},\tau_1,\tau_2\,;\,f_1,f_2,f_3,f_5,f_6 \\
0&\negesp:\negesp&\postset{\omega_1},\postset{\omega_2}\,;\,f_4 \\
\eea
\label{hgrsdhgsrdhgfdh}
\eeq
Expanding $\rescaled{\dot{\omega_i}}={\postset{\omega_i}}-{\omega_i},i=1,2$
The rescaled system is
\[\left\{\negesp\bea{ll}
J_1(\pmatch{\postset{\omega_1}}-\omega_1)=\rescaled{\tau_1}
\\
J_2(\postset{\omega_2}-\omega_2)=\pmatch{\rescaled{\tau_2}}
\\
\pmatch{\rescaled{\tau_1}}+\rescaled{\tau_2}=0 
\\ 
\postset{\omega_1}=\pmatch{\postset{\omega_2}} 
\eea\right.\]
which yields, by setting $\pplus{\omega}\eqdef\postset{\omega_1}={\postset{\omega_2}}$, the restart system:
\beqq
(J_1+J_2)\pplus{\omega}&=&J_1\mmoins{\omega_1}+J_2\mmoins{\omega_2}\,,
\eeqq
reflecting the preservation of angular momentum. This again illustrates Theorem~\ref{sjdufghafejahyfg}. Note that having $g_i$ nonlinear was not an obstacle, since the rotation velocities are not impulsive at the mode change (they are past variables).

\smallskip
\subsubsection*{Rescaling calculus with $f_i$ nonlinear in $\tau_i$}
Assume now that $f_i$ is nonlinear in its arguments. Rescaling calculus (\ref{hgfdjyrgsgfd}) modifies as follows:
\beq\left\{\negesp\bea{ll}
\rescaling{{\pmatch{\dot{\omega_1}}}}=\rescaling{f_1}=\mathbf{if}\;\max(\rescaling{\tau_1},\rescaling{\omega_1})=0\;\mathbf{then}\;0\;\mathbf{else}\;\infty
\\
\rescaling{\pmatch{\tau_2}}=\rescaling{f_2}=\mathbf{if}\;\max(\rescaling{\tau_2},\rescaling{\omega_2})=0\;\mathbf{then}\;0\;\mathbf{else}\;\infty
\\
\rescaling{{\pmatch{\tau_1}}}=\max(\rescaling{{\pmatch{\tau_1}}},\rescaling{{\tau_2}})
\\ [2mm]
\rescaling{{\pmatch{\postset{\omega_2}}}}=\max(\rescaling{{\pmatch{\postset{\omega_2}}}},\rescaling{{\postset{\omega_1}}})
\\ [2mm]
\rescaling{{\pmatch{\postset{\omega_1}}}}+1=\max(\rescaling{{\pmatch{\postset{\omega_1}}}}+1,\rescaling{{\dot{\omega_1}}})
\\
\rescaling{{\pmatch{\dot{\omega_2}}}}=\max(\rescaling{{\pmatch{\dot{\omega_2}}}},\rescaling{{\postset{\omega_2}}}+1)
\eea\right.
\label{jsdhycgdfscjsdhgf}
\eeq
The last equation of this rescaling calculus prevents the existence of a good solution to Problem~\ref{likgusefhdrlpouhiouh} since it prohibits $\rescaling{{\pmatch{\dot{\omega_2}}}}=0$. Our approach does not solve the restart problem in this case.

\smallskip
\subsubsection*{What if the junction is nonlinear in velocities} We keep $f_i=\tau_i-g(\omega_i),i{=}1,2$ but we change the junction model by changing $(e_3)$ to $\omega_2=h(\omega_1)$, where $h$ is nonlinear. The rescaling calculus is now
\beq\left\{\negesp\bea{ll}
\rescaling{{\pmatch{\dot{\omega_1}}}}=\max(\rescaling{{\pmatch{\dot{\omega_1}}}},\rescaling{{\tau_1}})
\\
\rescaling{{\pmatch{\tau_2}}}=\max(\rescaling{{\dot{\omega_2}}},\rescaling{{\pmatch{\tau_2}}})
\\
\rescaling{{\pmatch{\tau_1}}}=\max(\rescaling{{\pmatch{\tau_1}}},\rescaling{{\tau_2}})
\\ [2mm]
\rescaling{{\pmatch{\postset{\omega_2}}}}=\max\left(\rescaling{{\pmatch{\postset{\omega_2}}}},\mathbf{if}\;\rescaling{{\postset{\omega_1}}}=0\;\mathbf{then}\;0\;\mathbf{else}\;\infty\right)
\\ [2mm]
\rescaling{{\pmatch{\postset{\omega_1}}}}+1=\max(\rescaling{{\pmatch{\postset{\omega_1}}}}+1,\rescaling{{\dot{\omega_1}}})
\\
\rescaling{{\pmatch{\dot{\omega_2}}}}=\max(\rescaling{{\pmatch{\dot{\omega_2}}}},\rescaling{{\postset{\omega_2}}}+1)
\eea\right.
\label{hjtfhgfkjmhg}
\eeq
for which (\ref{hgrsdhgsrdhgfdh}) is still a good solution. Making the junction model non-polynomial added no difficulty with reference to the original clutch.

\section{{Comparing with Trenn et al.}}
\label{sldkvjshdblvckujsh}
\subsection{A linear DAE system}
Consider the following linear model
\[\left\{\bea{ll}
0=\dot{v_1}-\pmatch{i} & (f_1)\\
0=\dot{\pmatch{v_2}}-i+v_R  & (f_2)\\
0=\dot{i}-\pmatch{v_R}  & (f_3)\\
0=\pmatch{v_1}+u & (f_4)
\eea\right.\]
The dependent variables of this DAE are $i,v_1,v_2,v_R$ and $u$ is an input with $u,\dot{u},\ddot{u}$ known at $t=0^-$. There is a unique perfect matching (shown in \pmatch{blue}) and its total weight is $1$.  The equations defining the offsets are (with $c_i\eqdef c_{f_i}$)
\[\bea{rcl}
d_i &=& c_1 \\
d_{v_2} &=& 1+c_2 \\
d_{v_R} &=& c_3 \\
d_{v_1}&=& c_4 \\
\eea \hspace*{1cm} \bea{rcl}
c_4-c_1 &\geq& 1 \\
c_1-c_2 &\geq& 0\\
c_3-c_2 &\geq& 0 \\
c_1-c_3 &\geq& 1 
\eea\]
The minimal solution is 
\[\bea{c}
c_1=1,c_2=c_3=0, c_4=2 \\
d_i=1,d_{v_1}=2,d_{v_2}=1,d_{v_R}=0
\eea\]
The index reduced system is
\[\left\{\bea{ll}
0=\dot{v_1}-{i} & (f_1)\\
0=\ddot{v_1}-\pmatch{\dot{i}} & (\dot{f_1})\\
0={\pmatch{\dot{v_2}}}-i+v_R  & (f_2)\\
0=\dot{i}-\pmatch{v_R}  & (f_3)\\
0={v_1}+u & (f_4)\\
0={\dot{v_1}}+\dot{u} & (\dot{f_4})\\
0=\pmatch{\ddot{v_1}}+\ddot{u} & (\ddot{f_4})
\eea\right.\]
and we show in \pmatch{blue} a perfect matching for the leading variables $\dot{i},\dot{v_2},v_R,\ddot{v_1}$. The state variables are $i,v_2,v_1,\dot{v_1},u,\dot{u}$, subject to the consistency constraints $(f_1,f_4)$, and $\ddot{u}$ is a free (input) variable.

The mode change array $A_1$ completed with Euler identities, in which we omit the leading equations of the last block, together with the resulting rescaling calculus, is  the following:
\[\small\left\{\negesp\bea{lll}
0=\dot{v_1}-\pmatch{i} & \rescaling{i}\geq\rescaling{\dot{v_1}} & (f_1)\\
\remph{0=\ddot{v_1}-{\dot{i}}} &  \remph{\mbox{conflict}}\\
0=\dot{i}-\pmatch{v_R} & \rescaling{v_R}\geq\rescaling{\dot{i}}  & (f_3)\\
0=\pmatch{v_1}+\bemph{u} & \rescaling{v_1}\geq \bemph{0}  & (f_4)\\
\remph{0={\dot{v_1}}+{\dot{u}}} &  \remph{\mbox{conflict}}
\\
\remph{0={\ddot{v_1}}+\ddot{u}} &  \remph{\mbox{conflict}}
 \\ [2mm]
0=\postset{\dot{v_1}}-\pmatch{\postset{i}} & \rescaling{\postset{i}}\geq\rescaling{\postset{\dot{v_1}}}  & (f_5)\\
0=\pmatch{\postset{v_1}}+\bemph{\postset{u}}  & \rescaling{\postset{v_1}}\geq \bemph{0}  & (f_6)\\
0=\pmatch{\postset{\dot{v_1}}}+{\postset{\dot{u}}} & \rescaling{\postset{\dot{v_1}}}\geq \rescaling{\postset{\dot{u}}}  & (f_7)
\\ [2mm]
0=\vsmall{\times}\pmatch{\dot{i}}-(\postset{i}-i) & \rescaling{\dot{i}} \geq 1+\max(\rescaling{\postset{i}},\rescaling{{i}})  & (f_8)\\
0=\vsmall{\times}\pmatch{\dot{v_1}}-(\postset{v_1}-v_1) & \rescaling{\dot{v_1}} \geq 1+\max(\rescaling{\postset{v_1}},\rescaling{{v_1}})  & (f_9)\\
0=\vsmall{\times}\pmatch{\ddot{v_1}}-(\postset{\dot{v_1}}-\dot{v_1}) & \rescaling{\ddot{v_1}} \geq 1+\max(\rescaling{\postset{\dot{v_1}}},\rescaling{\dot{v_1}})  & (f_{10}) \\
0=\vsmall{\times}\pmatch{\ddot{u}}-(\postset{\dot{u}}-\dot{u}) & \rescaling{\ddot{u}} \geq 1+\max(\rescaling{\postset{\dot{u}}},\rescaling{\dot{u}})  & (f_{12})
\eea\right.\]
The values of $\preset{u},\preset{\dot{u}},\preset{\ddot{u}}$ are given. Hence, the set of past variables is
\[
\Vars^- = \{
\bemph{u}{=}\preset{u}{+}\vsmall\,{\preset{\dot{u}}},\bemph{\dot{u}}{=}\preset{\dot{u}}{+}\vsmall\,{\preset{\ddot{u}}},\bemph{\postset{u}}{=}
\preset{u}{+}2\vsmall\,{\preset{\dot{u}}}{+}\vsmall^2\,{\preset{\ddot{u}}})
\}
\]
The solution of the rescaling calculus is the following, where we list the variables and equations with rescaling offsets $2,1,0$:
\[\bea{rcl}
2&\negesp:\negesp&v_R,\dot{i},\remph{{{\ddot{v_1}}}}\,;\,{f_{10}},{f_{8}},{f_{3}} \\
1&\negesp:\negesp& {\ddot{u}},{{i}},{{\dot{v_1}}}\,;\,{f_{9}},{f_{1}},{f_{12}}\\
0&\negesp:\negesp& \mbox{other variables and equations}
\eea\]
Goodness condition (\ref{klifgyegrujytfgeikuy}) is violated by variable $\ddot{v_1}$, highlighted in \remph{red}. So this mode change is nondetermined according to our approach. Equation $f_{10}$ is matched with $\ddot{v_1}$. Hence, we remove it, and perform the Dulmage-Mendelsohn decomposition of the so reduced mode change array $A_1$. The following variables are reachable from $\ddot{v_1}$ by following an alternating path: $v_1,\dot{v_1},i,\postset{v_1},\dot{i},\postset{i},\postset{\dot{v_1}},\postset{\dot{u}}$; no variable is determined.

Our method fails finding a solution to hot restart, unlike Trenn's method. The reason is that we do not allow for changes in the state basis, which is precisely performed by the Weierstrass decomposition~\cite{TrennPhD2009,Tren09b,LibeTren12}.
\smallskip

\subsection{Further investigating the clutch} 
Being structural, our approach does not recombine states. We illustrate some consequences of this on a linear version of the clutch, namely: $f_i(\omega_i,\tau_i)=\tau_i-a_i\omega_i$, where $a_i$ are constants, already studied in Appendix~\ref{hgrdfhreagfdsuj}. 

Now, reconsider the same example, but with the change of variable $(\omega_i,\tau_i)\mapsto(w_i,\sigma_i)$, where:
\[
w_1=\omega_1{-}\tau_1, w_2=\omega_2{-}\tau_2, \sigma_1=\omega_1{+}\tau_1, \sigma_2=\omega_2{+}\tau_2\,.
\]
or, equivalently, 
\[
2\omega_1=w_1{+}\sigma_1, 2\omega_2=w_2{+}\sigma_2, 2\tau_1=\sigma_1{-}w_1, 2\tau_2=\sigma_2{-}w_2\,.
\]
The dynamics in the $\fff$ mode rewrites
\[
\left\{\bea{l}
0=\dot{w_1}{+}\dot{\sigma_1}-(\sigma_1{-}w_1{-}a(\sigma_1{+}w_1)) \\
0=\dot{w_2}{+}\dot{\sigma_2}-(\sigma_2{-}w_2{-}a(\sigma_2{+}w_2)) \\
0={w_1}{-}{\sigma_1} \\
0={w_2}{-}{\sigma_2} 
\eea\right.
\]
It is now a DAE, structurally! The index reduction yields
\[
\left\{\bea{l}
0=\dot{w_1}{+}\dot{\sigma_1}-(\sigma_1{-}w_1{-}a(\sigma_1{+}w_1)) \\
0=\dot{w_2}{+}\dot{\sigma_2}-(\sigma_2{-}w_2{-}a(\sigma_2{+}w_2)) \\
0={w_1}{-}{\sigma_1} \\
0={w_2}{-}{\sigma_2} \\
0=\dot{w_1}{-}\dot{\sigma_1} \\
0=\dot{w_2}{-}\dot{\sigma_2} 
\eea\right.
\]
Hence, for the mode change, $\Vars^-=\{w_1,w_2.\sigma_1,\sigma_2\}$.
The mode change array for $\guard:\fff\ra\ttt$ rewrites
\[\left\{\bea{l}
0=\dot{w_1}{+}\dot{\sigma_1}-(\sigma_1{-}w_1{-}a(\sigma_1{+}w_1)) \\
0=\dot{w_2}{+}\dot{\sigma_2}-(\sigma_2{-}w_2{-}a(\sigma_2{+}w_2)) \\
\remph{0=({w_1}{+}{\sigma_1})-({w_2}{+}{\sigma_2})} \\
\remph{0= (\sigma_1{+}\sigma_2)-(w_1{+}w_2)}\\
\remph{0=(\dot{w_1}{+}\dot{\sigma_1})-(\dot{w_2}{+}\dot{\sigma_2})} \\
\remph{0= (\dot{\sigma_1}{+}\dot{\sigma_2})-(\dot{w_1}{+}\dot{w_2})}
\\ [2mm]
0=\postset{\dot{w_1}}{+}\postset{\dot{\sigma_1}}-(\postset{\sigma_1}{-}\postset{w_1}{-}a(\postset{\sigma_1}{+}\postset{w_1})) \\
0=\postset{\dot{w_2}}{+}\postset{\dot{\sigma_2}}-(\postset{\sigma_2}{-}\postset{w_2}{-}a(\postset{\sigma_2}{+}\postset{w_2})) \\
0=(\postset{w_1}{+}\postset{\sigma_1})-(\postset{w_2}{+}\postset{\sigma_2}) \\
0= (\postset{\sigma_1}{+}\postset{\sigma_2})-(\postset{w_1}{+}\postset{w_2})\\
0=(\postset{\dot{w_1}}{+}\postset{\dot{\sigma_1}})-(\postset{\dot{w_2}}{+}\postset{\dot{\sigma_2}}) \\
0= (\postset{\dot{\sigma_1}}{+}\postset{\dot{\sigma_2}})-(\postset{\dot{w_1}}{+}\postset{\dot{w_2}})
\eea\right.\]
Keeping only the active equations, removing the leading equations of the last block, and adding the Euler identities yields
\[\left\{\bea{l}
0=\pmatch{\dot{w_1}}{+}\dot{\sigma_1}-(\sigma_1{-}w_1{-}a(\sigma_1{+}w_1)) \\
0=\pmatch{\dot{w_2}}{+}\dot{\sigma_2}-(\sigma_2{-}w_2{-}a(\sigma_2{+}w_2)) 
\\ [1mm]
0=({\postset{w_1}}{+}\pmatch{\postset{\sigma_1}})-({\postset{w_2}}{+}{\postset{\sigma_2}}) \\
0= ({\postset{\sigma_1}}{+}\pmatch{\postset{\sigma_2}})-(\postset{w_1}{+}\postset{w_2})
\\ [1mm]
0=\vsmall{\times}\dot{w_1}-(\pmatch{\postset{w_1}}{-}w_1) \\
0=\vsmall{\times}\dot{w_2}-(\pmatch{\postset{w_2}}{-}w_2) \\
0=\vsmall{\times}\pmatch{\dot{\sigma_1}}-(\postset{\sigma_1}{-}\sigma_1) \\
0=\vsmall{\times}\pmatch{\dot{\sigma_2}}-(\postset{\sigma_2}{-}\sigma_2) 
\eea\right.\]
The rescaling calculus is
\[\left\{\bea{rcl}
\rescaling{\dot{w_1}} = \rescaling{f_1} &\geq& \rescaling{\dot{\sigma_1}} \\
\rescaling{\dot{w_2}} = \rescaling{f_2} &\geq& \rescaling{\dot{\sigma_2}} \\
\rescaling{\postset{\sigma_1}} = \rescaling{f_3} &\geq& \max(\rescaling{\postset{w_1}},\rescaling{\postset{w_2}},\rescaling{\postset{\sigma_2}}) \\
\rescaling{\postset{\sigma_2}} = \rescaling{f_4} &\geq& \max(\rescaling{\postset{w_1}},\rescaling{\postset{w_2}},\rescaling{\postset{\sigma_1}}) \\
1{+}\rescaling{\postset{w_1}}=\rescaling{f_5} &\geq& \rescaling{\dot{w_1}}\\
1{+}\rescaling{\postset{w_2}}=\rescaling{f_6} &\geq& \rescaling{\dot{w_2}} \\
\rescaling{\dot{\sigma_1}}=\rescaling{f_7} &\geq& 1{+}\rescaling{\postset{\sigma_1}}\\
\rescaling{\dot{\sigma_2}}=\rescaling{f_8} &\geq& 1{+}\rescaling{\postset{\sigma_2}}
\eea\right.\]
Its solution is the following, where we list the variables and equations with rescaling offsets $2,1,0$:
\[\bea{rcl}
1&\negesp:\negesp& {\dot{\sigma_1}},{\dot{\sigma_2}},{\dot{w_1}},{\dot{w_2}}\,;\,{f_1},{f_2},{f_7},{f_8}\\
0&\negesp:\negesp& \mbox{other variables and equations}
\eea\]
This solution satisfies the Conditions (\ref{jsdyhgcdfasjyht}--\ref{klifgyegrujytfgeikuy}). Hence, we can rescale the array, thus obtaining:
\[\left\{\bea{l}
0=(\pmatch{\postset{w_1}}{-}w_1){+}(\postset{\sigma_1}{-}\sigma_1) \\
0=(\pmatch{\postset{w_2}}{-}w_2){+}(\postset{\sigma_2}{-}\sigma_2) 
\\ [1mm]
0=({\postset{w_1}}{+}\pmatch{\postset{\sigma_1}})-({\postset{w_2}}{+}{\postset{\sigma_2}}) \\
0= ({\postset{\sigma_1}}{+}\pmatch{\postset{\sigma_2}})-(\postset{w_1}{+}\postset{w_2})
\eea\right.\]
At this point we have a problem: the above system is \emph{structurally} nonsingular; nevertheless, it is \emph{numerically} singular. The first equation yields $\postset{w_1}=w_1+\sigma_1-\postset{\sigma_1}$ and replacing $\postset{w_1}$ by $w_1+\sigma_1-\postset{\sigma_1}$ in the third equation cancels $\postset{\sigma_1}$, which cannot be a pivot.

Maybe the change of variables was responsible for this; we modify its coefficients:
\[
\omega_1=w_1{+}2\sigma_1, \omega_2=w_2{+}\sigma_2, \tau_1=\sigma_1{-}3w_1, \tau_2=2\sigma_2{-}w_2\,.
\]
The structural analysis is not modified. So we end up with the rescaled array
\[\left\{\bea{l}
0=(\pmatch{\postset{w_1}}{-}w_1){+}2(\postset{\sigma_1}{-}\sigma_1) \\
0=(\pmatch{\postset{w_2}}{-}w_2){+}(\postset{\sigma_2}{-}\sigma_2) 
\\ [1mm]
0=({\postset{w_1}}{+}2\pmatch{\postset{\sigma_1}})-({\postset{w_2}}{+}{\postset{\sigma_2}}) \\
0= ({\postset{\sigma_1}}{+}2\pmatch{\postset{\sigma_2}})-(3\postset{w_1}{+}\postset{w_2})
\eea\right.\]
which suffers from the same problem.

The reason for this pitfall is that our change of state basis resulted in combining non-impulsive variables with impulsive ones. This would not be a trouble to the method developed by Trenn et al.~\cite{TrennPhD2009,Tren09b,LibeTren12} for linear systems, since the Weierstrass decomposition that is applied first, recovers the basis in which clean separation is made between non-impulsive and impulsive variables. We do not do this.

\end{document}